\documentclass[fleqn,a4paper,12pt]{article}

\usepackage[UKenglish]{babel} % Note the order (last is default)
\usepackage{graphicx,lscape}
\usepackage{enumerate}
\usepackage[figuresright]{rotating}
\usepackage[utf8]{inputenc}
\usepackage{private}
\usepackage{epstopdf}
\usepackage{hyperref}
\usepackage{bm}
\usepackage{url}
\usepackage{setspace}
\usepackage{float}
\usepackage{amsmath}
\usepackage{apacite}
\usepackage{natbib}
\usepackage{dcolumn}
\usepackage{booktabs}
\usepackage{tabularx}
\usepackage[labelfont=bf]{caption}
\newcolumntype{.}{D{.}{.}{-1}}
\usepackage[flushleft]{threeparttable}
\usepackage{color,psfrag}
\usepackage{soul}
\usepackage{nccmath}
\usepackage{subcaption}
\usepackage{longtable}

\usepackage{chngcntr}
\usepackage{apptools}
\usepackage{soul}
\usepackage[table]{xcolor}

\AtAppendix{\counterwithin{theorem}{section}}
\AtAppendix{\counterwithin{lemma}{section}}

\input{ee.sty}
\usepackage[margin=1in]{geometry}

\newtheorem{theorem}{Theorem}
\newtheorem{assumption}{Assumption}
\newtheorem{proposition}{Proposition}
\newtheorem{lemma}{Lemma}
\newtheorem{corollary}{Corollary}

\newtheorem{example}{Example}

\newcommand{\Var}{\operatorname{Var}}

\long\def\symbolfootnote[#1]#2{\begingroup%
\def\thefootnote{\fnsymbol{footnote}}\footnote[#1]{#2}\endgroup}

\graphicspath{{Figures/}}

\begin{document}
\begin{titlepage}

\title{Jackknife Instrumental Variable Inference\thanks{\scriptsize Federico Crudu's research benefited from the financial support provided by the University of Siena via the F-CUR grant 2274-2022-CF-PSR2021-FCUR 001 and by the National Recovery and Resilience Plan (NRRP), Mission 4, Component 2, Investment 1.1, call for tender No. 104 published on 2.2.2022 by the Italian Ministry of University and Research (MUR), funded by the European Union--NextGenerationEU--Project Title Stem in Higher Education \& Women INequalitieS [SHE WINS], CUP I53D23004810006, Grant Assignment Decree No. 1060 adopted on 07/17/2023 by the Italian Ministry of University and Research (MUR). Giovanni Mellace has benefited from financial support from the Independent Research Fund Denmark via the grant 
903800031B. This research has been conducted using the UK Biobank Resource under application number 183120. This work uses data provided by patients and collected by the NHS as part of their care and support. We are grateful to all UK Biobank participants who contributed their data to make this research possible. Individual-level data were accessed by Giovanni Mellace and Magnus Jørck-Thomsen at the University of Southern Denmark. The authors are grateful for helpful comments and suggestions from Jean-Marie Dufour, as well as from participants at the 2024 International Association for Applied Econometrics Conference, the Aarhus Workshop in Econometrics VII, the 10th edition of the Annual Scientific Conference of Romanian Academic Economists from Abroad in Cluj-Napoca,  the Rome Workshop on Econometrics: Frontiers in Causal Inference at EIEF, the Siena Workshop on Econometric Theory and Applications, the Tenth Italian Congress of Econometrics and Empirical Economics, and the Workshop on Causal Inference + Machine Learning at the University of Groningen, and from seminar participants at CEBA St. Petersburg State University, the Gran Sasso Science Institute, the University of Bristol, the University of Genova, and the University of Southern Denmark. We also thank Magnus Jørck-Thomsen for excellent research assistance.
}
}
%\title{\href{https://www.youtube.com/watch?v=yhS9LnDoo_w&ab_channel=ChemicalBrothersVEVO}{The Test}}
\vspace{1cm}
\author{Federico Crudu\footnote{\scriptsize Department of Economics and Statistics, Piazza San Francesco 7/8, 53100 Siena, Italy, federico.crudu@unisi.it}  \\
{\small Universit\`a di Siena and CRENoS}
\\
\and
Giovanni Mellace\footnote{\scriptsize Department of Economics, Campusvej 55, 5230 Odense M, Denmark, giome@sam.sdu.dk} \\
{\small University of Southern Denmark}
\\
\and
Zsolt S\'andor\footnote{\scriptsize Department of Business Sciences, Piata Libertătii 1,
530104 Miercurea Ciuc, Romania, sandorzsolt@uni.sapientia.ro} \\
{\small Sapientia Hungarian University of Transylvania}
  \\}
\vspace{1cm}
\date{\today}

\maketitle
%\vspace{10cm}

\thispagestyle{empty}
%\end{titlepage}
%%%%%%%%%%%%%%%%%%%%%%%%%%%%%%%%%%%%%%%%%%%%%%
%\doublespacing
\begin{abstract}
\noindent
This paper introduces a class of jackknife-based test statistics for linear regression models with endogeneity and heteroskedasticity in the presence of many potentially weak instrumental variables. The tests may be used when considering hypotheses on the full parameter vector or hypotheses defined as linear restrictions. We show that in the limit and under the null the proposed statistics are distributed as a combination of chi squares but by modifying the objective function we derive more familiar chi square limits. An extensive simulation study shows the competitive finite sample properties of the proposed tests in particular against Anderson-Rubin-type of statistics. Finally, we provide an empirical illustration that applies the proposed tests to study the effect of alcohol consumption on body mass index using genetic variants as instrumental variables using the UK Biobank.

\end{abstract}

\noindent
{\small
{\it Key words}: inference, many instruments, weak instruments, jackknife, heteroskedasticity, endogeneity, Mendelian randomisation.\\
{\it JEL classification}: C12, C13, C23
}

\end{titlepage}
\newpage
\doublespacing
%%%%%%%%%%%%%%%%%%%%%%%%%%%%%%%%%%%%%%%%%%%%%%%%%%%%%
\section{Introduction}

As emphasized by \cite{MS22}, many contemporary empirical applications of instrumental variables (IV) estimation involve a large number of instruments, often with limited identifying content per instrument. Beyond the settings discussed by these authors, similar many-instrument environments arise in shift-share (Bartik) designs that exploit finely disaggregated shocks \citep{BorusyakHullJaravel22,GoldsmithPinkhamSorkinSwift20}, in Mendelian randomization studies that use large sets of genetic variants as instruments \citep{Davies15,Sanderson22,PLB24}, and in recent machine learning–assisted IV approaches that construct instruments from high-dimensional predictors \citep{BelloniChernozhukovHansen12,ChernozhukovChetverikovDemirer18}. In such settings, instrument proliferation combined with weak first-stage relationships can undermine conventional IV inference, creating a demand for robust procedures that address not only estimation but also hypothesis testing. We address this need by developing a unified trinity of Wald, Lagrange multiplier, and distance-type tests for linear IV models with many potentially weak instruments.

The construction and asymptotic analysis of estimators and test statistics often rely on the definition of a given objective function. In this context, a typical analytical strategy consists of using some stochastic expansion argument and the application of a suitable central limit theorem (CLT) \citep[see, e.g.,][]{NM94}.
 Classical likelihood theory \citep{Engle84} and inference based on the generalised method of moments \citep[henceforth GMM; see][]{NW87, NW09} exploit this strategy to derive the well-known {\it trinity} of tests, Wald ($W$), Lagrange multiplier ($LM$) and likelihood ratio ($LR$) in maximum likelihood, and the corresponding distance statistic ($D$) in GMM. In line with this idea, we develop an analogous trinity of tests for linear IV models with many weak instruments. 

 In particular, this paper proposes a unified approach to inference for linear regression models with multiple endogenous and exogenous variables, heteroskedastic disturbances and many (potentially weak) instrumental variables (IVs). A jackknife device is typically applied to remove the noise component due to the presence of heteroskedasticity while, ideally, retaining the whole signal (matrix) of the instruments  \citep[see, among others,][]{HNWCS12,CHNSW14,BC15,CMS21,MS22,Yap23,MO22}.\footnote{See \cite{PH77} and \cite{AIK99} for earlier applications of the jackknife to IV models and \cite{BD99} and \cite{DM06} for extensive simulation studies on the properties of jackknife IV estimators in finite samples.} 
 In our setup, we allow the number of instruments, say, $k$ to grow proportionally with the sample size, say, $n$ and $k\le n$. This is a standard assumption in this literature. There are, however, some studies that allow the number of instruments to be larger than the sample size \citep[see, for example,][]{HK14,DKM24}. Furthermore, while the number of instruments is allowed to diverge, we impose a restriction on the minimal amount of signal the instruments should collectively carry. As shown in \cite{MS22}, this is a necessary condition to obtain consistent IV estimators and tests.\footnote{The sufficiency part was proven in \cite{CS05}.}
 Our test statistics are derived from two types of objective functions.
 The first type is defined by quadratic forms, such as those underlying the jackknife IV estimators JIVE1 and JIVE2 \citep[][]{AIK99, CMS21,MS22}.
 The second type is defined by ratios of quadratic forms, such as the objective function for the limited information maximum likelihood (LIML) estimator \citep[see, among others,][]{AR49, Bekker94, HHN08} and the more general jackknife-based alternatives heteroskedasticity-robust LIML (HLIM) and symmetric JIVE (SJIVE) of \cite{HNWCS12} and \cite{BC15}, respectively.\footnote{The LIML case is the object of a companion paper.} We show that the tests directly related to objective functions are asymptotically distributed as a weighted average of chi-squares, sometimes referred to as chi-bar-square distribution \citep{Vuong89, Hansen21}. The weights of this distribution can be estimated and p values can be computed \citep{Farebrother84}. However, by appropriately modifying the objective function, it is possible to obtain more conventional chi-square distributions in the limit. 
 
Via extensive Monte Carlo experiments, we investigate the tests' finite-sample properties in terms of size and power. Although the simulations do not display a clear ranking of the tests, we can still identify specific scenarios where certain statistics outperform the others. For example, when instruments are weak, some $LM$ statistics suffer from significant loss of power against distant alternatives, whereas $W$ may have low power near the null. By contrast, $D$ tends to have good size and power properties across a range of designs. Furthermore, compared to the Anderson-Rubin ($AR$) tests of \cite{CMS21} and \cite{MS22}, $D$ and $W$ exhibit higher power in our experiments. In the JIVE case, these statistics are relatively easy to compute (compared to SJIVE and HLIM) and, in some cases in our simulations, they outperform $AR$ both in terms of size and power. Moreover, our tests are computationally less burdensome in our implementations then their $AR$ counterparts.

We further illustrate the practical relevance of our methodology through an empirical application that studies the causal effect of alcohol consumption on body mass index using genetic variants as instrumental variables in the UK Biobank. This setting naturally involves a large number of instruments, a feature that raises concerns about instrument strength and the reliability of conventional IV inference. In this context, the proposed statistics can be implemented straightforwardly and allow us to conduct inference without relying on standard approximations that may perform poorly when instruments are numerous or weak. The empirical results therefore complement the simulation findings by showing that the proposed procedures can be effectively applied in a realistic setting with many potentially weak instruments.

The main contribution of this paper is fourfold: first, we introduce a trinity of tests for linear IV models robust to many weak instruments; second, we establish the limiting distribution of these tests under the null; third, we compare their finite-sample performance and highlighting scenarios of dominance; four, we illustrate the applicability of the proposed tests in a realistic setting with many potentially weak instruments through an empirical study that uses genetic variants to estimate the effect of alcohol consumption on body mass index.
 Alongside the main contributions, we provide two auxiliary results that could be of independent interest to the readership: first, we prove a general consistency result for constrained and unconstrained jackknife-based estimators; second, we introduce a cross-fit variance estimator for the $AR$ test of \cite{CMS21} along the lines of that proposed by \cite{MS22}.\footnote{See Proposition \ref{Pcons} in Appendix \ref{Auxres} and equation \eqref{Phicf} in Section \ref{implem}, respectively.}

 The results in this paper are largely new, particularly for tests based on ratios of quadratic forms and the jackknife approach of \cite{BC15}. They also generalise earlier results, such as the $LM$ test of \cite{HHN08} and Bekker's LIML statistic \citep{Kleibergen02}, which are developed under the assumption of homoskedasticity. There are, however, some interesting overlaps. In particular, the $LM$ test built on the JIVE2 objective function coincides with that of \cite{MO22}. Moreover, the Wald test based on the JIVE2 objective function is related to the results in \cite{Yap23}, while the Wald test based on the HLIM objective function was first introduced by \cite{HNWCS12}. This unified treatment of $W$, $LM$, and $D$ tests (the latter being novel) under many weak instruments is, to our knowledge, new. Finally, the modification applied to the objective functions to obtain a standard chi-square limit is reminiscent of that used in \cite{Kleibergen02} to obtain the $K$ statistic \citep[see also][for a many instrument version of the $K$ statistic]{BK03}. Also in our case, the degrees of freedom of the chi-square distribution do not depend on the number of instruments.

%\textcolor{red}{\bf add one sentence about the connection of our results with those in the early literature of the 90/00 and Kean and Neal. Cite also Bound et al. (1995) and Nelson and Startz (1990)?}
Our analysis also relates to the weak instrument literature of the 1990s and early 2000s, which showed that standard Wald-type procedures can break down under weak identification and thereby spurred the development of robust alternatives \citep[among others]{NS90,BJB95,Dufour97,SS97,WZ98,Kleibergen02}. The present paper complements this literature, and recent contributions such as \cite{KN23}, by showing how analogous concerns can be addressed in linear IV models with many potentially weak instruments through a unified testing framework.

 As noted, several studies have considered $W$ and $LM$ statistics, but, to the best of our knowledge, the $D$ statistic in this paper, both in the chi-bar-square version and in the chi-square version, is new. This fills a gap in the weak-instrument literature, which so far has mostly developed $AR$-, $W$- and $LM$-type tests but lacked an analogue of the $LR$ test. The $D$ statistic we propose provides such an analogue, with promising size and power properties when compared to its competitors.

% %In this context, a relevant aspect is that of JIVE objective functions and for the chi-bar-square limit the $D$ statistic is numerically equal to the $W$ statistic \textcolor{red}{\bf it should be $D=W=LM$ but only $D$ and $W$ produce equal p values, this should be due to the fact that the nuisance parameters of the chi-bar-square distribution are estimated differently for $D$/$W$ and $LM$, I have to double check that to be sure}. For the chi-square limit, this is not the case, but, from our simulations we see that $D$ and $LM$ tend to be numerically very close. 

 There are clear similarities between the conditional $LR$ ($CLR$) statistic introduced by \cite{Moreira03} and its more recent versions designed to allow for instrument numerosity \citep[see][]{AMO22,LWZ22}, in that both $D$ and $CLR$ are based on a discrepancy between objective functions evaluated at different values. However, there is also a difference, namely that $D$ does not rely on a conditional argument. 

 Moreover, to the best of our knowledge, no previous work in this setting explicitly considers general linear restrictions (including the important case of testing a subset of parameters). This may be an important aspect, as in a conventional linear regression model one may partial out variables, via the Frisch-Waugh-Lovell (FWL) theorem, and conduct inference on the remaining regressors. However, as pointed out by \cite{CSW23} and \cite{MS24}, the FWL theorem prevents the jackknife device from removing the noise component \citep[see also][for a similar result in the context of least squares estimation and time series data]{MS25}. Introducing an inferential method for linear restrictions avoids possible issues connected to the interplay between the jackknife approach and the FWL theorem.

The remainder of the paper is organised as follows. Section \ref{model} introduces the model and the test statistics for simple hypotheses and linear restrictions and shows the asymptotic behaviour of the tests. Section \ref{chi2section} describes how to modify our statistics to have a chi square limiting distribution. Section \ref{numerical} studies the finite sample properties of the tests in a Monte Carlo experiment and in an empirical application using genetic instruments. Finally, Section \ref{conclusion} offers some conclusions. Proofs and auxiliary results are deferred to the Appendix.

 Throughout the paper we use the following notational conventions: unless differently stated $a$, $\va$ and $\mA$ denote a scalar, a vector and a matrix, respectively; $a_i$ is the $i$-th element of vector $\va$ and $A_{ij}$ is the $(i,j)$-th element of matrix $\mA$. $\va'$ and $\mA'$ are the transposes of $\va$ and $\mA$. For a square matrix $\mA$, $\lambda(\mA)$ denotes the vector of eigenvalues of matrix $\mA$, while  $\lambda_{\min}(\mA)$ and $\lambda_{\max}(\mA)$ are the smallest and largest eigenvalues of $\mA$ respectively. Moreover, $\rk(\mA)$ denotes the rank of matrix $\mA$. Finally, $\mI$ is an identity matrix and when we need to be more specific, we include a subindex. Thus, $\mI_n$ denotes a $n\times n$ identity matrix; similarly, $\viota$ and $\viota_n$ denote vectors of ones. The symbol $*$ is used for elementwise or Hadamard multiplication and $^{(n)}$ denotes elementwise power. Hence,
\begin{align*}
\underbrace{\va*\va*\dots*\va}_{n\; times}=\va^{(n)}\;\;
\mbox{and}\;\;
\underbrace{\mA*\mA*\dots*\mA}_{n\; times}=\mA^{(n)}.
\end{align*} 
Analogously, $\mA^{(-1)}$ is used for the elementwise inverse of $\mA$, where the generic $(i,j)$ entry of $\mA^{(-1)}$ is equal to $A_{ij}^{-1}$.

%%%%%%%%%%%%%%%%%%%%%%%%%%%%%%%%%%%%%%%%%%%%%%%%%%%%%%

\section{Model and test statistics}\label{model}
Let us consider the model
\begin{align*}
\vy&=\mX\vbeta+\vvarepsilon\\
\mX&=\mZ\mPi+\mV
\end{align*}
where $\mX$ is a $n\times g$ matrix containing potentially both endogenous and exogenous variables and $\mZ$ is a $n\times k$ (nonstochastic) matrix of instruments and $\E[\mX]=\mZ\mPi$, where the components of $\mPi$ are allowed to vary with the sample size $n$. Such assumptions are made for convenience and may be generalised.\footnote{
We may, for example, consider $\mZ$ to be stochastic and in this case $\E[\mX]$ should be interpreted as a conditional expectation with respect to $\mZ$. The linearity of $\E[\mX]$ may also be relaxed as suggested in, e.g.,  \cite{Bekker94} and \cite{CHNSW14}.}
%See e.g. \cite{Bekker94}, \cite{HNWCS12}, \cite{CHNSW14}. 
The rows of the disturbance couple $(\vvarepsilon,\mV)$, say $(\varepsilon_i,\mV_i')$ $i=1,\dots,n$, are independent with zero mean and covariance matrices
\begin{align}\label{Sigmai}
\mSigma_i=
\begin{pmatrix}
\sigma_i^2 & \vsigma_{i12}\\
\vsigma_{i21} & \mSigma_{i22}
\end{pmatrix}.
\end{align}
%while the covariance matrix of the rows $(y_i,\mX_i')$ are 
% \begin{align}\label{Omegai}
%\mOmega_i=
%\begin{pmatrix}
%1 & \vbeta'\\
%\vzeros & \mI_g
%\end{pmatrix}
%\mSigma_i
%\begin{pmatrix}
%1 & \vzeros\\
%\vbeta & \mI_g
%\end{pmatrix}.
%\end{align}

\subsection{Test for the whole parameter vector}\label{tests}
In this section we are interested in test statistics for the null hypothesis
\begin{align}\label{simplenull}
H_0:\vbeta=\vbeta_0
\end{align}
based on a certain objective function $Q_n(\vbeta)$ that is assumed to produce consistent estimators in the many instruments sense and to be robust to the presence of heteroskedasticity. We consider two types of objective functions. The first type is based on the ratio of two quadratic forms as in \cite{HNWCS12} and \cite{BC15}; specifically,
\begin{align}\label{objfun}
Q_n(\vbeta)=\frac{(\vy-\mX\vbeta)'\mC(\vy-\mX\vbeta)}{\frac{1}{\tr(\mB)}(\vy-\mX\vbeta)'\mB(\vy-\mX\vbeta)}
\end{align}
where, for the SJIVE estimator of \cite{BC15}, we have
\begin{align}\label{jackmat}
\mC&=\mP+
\left(\mP\mD(\mI-\mD)^{-1}\mP-\frac{1}{2}\mP\mD(\mI-\mD)^{-1}-\frac{1}{2}\mD(\mI-\mD)^{-1}\mP\right)-\mB\\\nonumber
\mB&=(\mI-\mP)\mD(\mI-\mD)^{-1}(\mI-\mP)
\end{align}
where $\mD=\diag(\mP)$ is a diagonal matrix that contains the diagonal elements of $\mP=\mZ(\mZ'\mZ)^{-1}\mZ'$. We note that in this case $\tr(\mB)=\tr(\mP)=k$. For the HLIM estimator of \cite{HNWCS12} we specify
\begin{align}\label{jackmat2}
\mC=\mP- \mD, \;\; \mB=\mI.
\end{align}
Such estimators mimic the ratio of quadratic forms structure typical of LIML. Another possibility would be to define an objective function that only considers the numerator of equation \eqref{objfun}:
\begin{align}\label{objfun2}
Q_n(\vbeta)=(\vy-\mX\vbeta)'\mC(\vy-\mX\vbeta).
\end{align}
The resulting estimator is the JIVE1 if $\mC$ is chosen as in equation \eqref{jackmat} or the JIVE2 if $\mC$ corresponds to the specification in equation \eqref{jackmat2}.\footnote{Both specifications of $\mC$ remove the source of bias that would make an estimator inconsistent in the many instruments sense when heteroskedasticity is present. However, they treat the information contained in the IVs differently. In fact, for $\mC$ defined as in  equation \eqref{jackmat} $\E[\mX'\mC\mX]=\mPi'\mZ'\mZ\mPi$, while the specification in equation \eqref{jackmat2} produces $\E[\mX'\mC\mX]=\mPi'\mZ'\mZ\mPi-\mPi'\mZ'\mD\mZ\mPi$. See \cite{BC15} for a discussion.} \footnote{We label the JIVE case with $\mC$ as in equation \eqref{jackmat} as JIVE1. However, it is important to notice that this is a symmetric version of the JIVE1 proposed in \cite{AIK99} and it mimics the structure of the SJIVE objective function in \cite{BC15}. To the best of our knowledge, the corresponding estimator was first introduced in \cite{CMS21}.}
Whether we use the objective functions in equation \eqref{objfun} or equation \eqref{objfun2}, the estimator
\begin{align}\label{estim}
\widehat\vbeta=\arg\min_{\vbeta}Q_n(\vbeta)
\end{align}
is consistent for $\vbeta$ under $H_0$.  

To test the null hypothesis in equation \eqref{simplenull} we consider the following statistics
\begin{align}\label{D}
D &= -r_{\min}\frac{\sigma^2}{k}\left(Q_n(\widehat\vbeta)-Q_n(\vbeta_0)\right)\\\label{LM}
LM &= r_{\min} \vtau_n'\mH^{-1}\vtau_n\\\label{W}
W &= r_{\min} \vpsi_n'\mH^{-1}\vpsi_n
\end{align}
where $\mH=\mPi'\mZ'\mC\mZ\mPi$, $r_{\min}=\lambda_{\min}(\mH)$ is the smallest eigenvalue of $\mH$ and $\sigma^2$ is the top left entry of matrix $\mSigma=\lim\frac{1}{\tr(\mB)}\sum_{i=1}^nB_{ii}\mSigma_i$ as $n\to\infty$.\footnote{We note that in the SJIVE case $\mH=\mPi'\mZ'\mZ\mPi$ while in the HLIM case $\mH=\mPi'\mZ'\mZ\mPi-\mPi'\mZ'\mD\mZ\mPi$. Notice also that $\mSigma$ depends on the choice of $\mB$. To avoid notation clutter, we omit this dependency.} Furthermore, 
%\begin{align}\label{tau}
%\vtau_n=\textcolor{red}{+}\frac{1}{\sqrt k} \mX'\widehat\mC(\vbeta_0)\vvarepsilon, \text{ \ where}
%\end{align}
%\begin{align}\label{C-hat}
%\widehat\mC(\vbeta)=\mC-\lambda(\vbeta)\mB\text{ \ with } \lambda(\vbeta)=\frac{1}{\tr(\mB)}Q_n(\vbeta) \text{ \ and}
%\end{align}
%\begin{align}\label{psi}
% \vpsi_n=\frac{1}{\sqrt k} \mH(\widehat\vbeta-\vbeta_0).
%\end{align}

\begin{align}
\vtau_{n} &  = \frac{1}{\sqrt{k}}\mX^{\prime}\widehat{\mC}%
(\vbeta_{0})\vvarepsilon,\text{ \ where}\label{tau}\\
\widehat{\mC}(\vbeta) &  =\mC-\lambda(\vbeta)\mB\text{ \ with }\lambda
(\vbeta)=\frac{1}{\tr(\mB)}Q_{n}(\vbeta)\text{ \ and}\label{C-hat}\\
\vpsi_{n} &  =\frac{1}{\sqrt{k}}\mH(\widehat{\vbeta}-\vbeta_{0}).\label{psi}%
\end{align}

When $Q_n(\vbeta)$ is the objective function of a JIVE estimator (equation \eqref{objfun2}), $\sigma^2=1$ and $\lambda(\vbeta)=0$.
The limiting distribution of the tests in equations \eqref{D} to \eqref{W} is derived in Theorem \ref{basicresult} below under a set of assumptions. 

The assumptions we use match those in \cite{BC15} but different versions can be found in the literature. \cite{CMS21} provide a discussion of some of the assumptions; below we discuss the other assumptions. In what follows it is understood that $c_u$ is a generic positive constant that may take different values in different instances.
\begin{assumption}\label{ass1}
The generic diagonal element $P_{ii}$ of the projection matrix $\mP$ satisfies $\max_i{P_{ii}}\leq 1-1/c_u$, with $1<c_u<\infty$. In addition, $k\to\infty$ as $n\to\infty$. 
\end{assumption}
\begin{assumption}\label{ass2}
$\E[\varepsilon^4_i]\le c_u$ and $\E\left[ \left\Vert \mV_{i}\right\Vert ^{4}\right] \leq c_{u}$ with $0<c_u<\infty$, for any $i$. 
\end{assumption}
Let $r_{\max}=\lambda_{\max}(\mH)$ denote the largest eigenvalue of $\mH$. %Moreover, let us define a constant $\kappa$ such that $0\le\kappa<\infty$.
\begin{assumption}\label{ass5}
$\sqrt{k}/r_{\min}\to0$ and $r_{\max}/k$ is bounded when $n\to\infty$. %$k/r_{\max}\to\infty$, $\sqrt{k}/r_{\min}\to 0$ \textcolor{red}{\bf there is a different set of assumptions in the published version of the BC paper which may be more general for the estimation part: }.
\end{assumption}
%\noindent Assumption \ref{ass5} formalizes the fact that we are dealing with potentially weak instruments. 

\begin{assumption}\label{ass3}
The covariance matrices of the disturbances %are bounded, $0\le\mSigma_i\le c_u\mI_{1+g}$ and 
satisfy $\frac{1}{\tr(\mB)}\sum_{i=1}^nB_{ii}\mSigma_i\to\mSigma$ as $n\to\infty$ where $\mSigma=\left(
\begin{array}
[c]{cc}%
\sigma^{2} & \vsigma_{12}\\
\vsigma_{21} & \mSigma_{22}%
\end{array}
\right)  $ is positive definite.%\footnote{\textcolor{blue}{This assumption does not allow for the possibility of including exogenous regressors in the model due to the fact that it requires $\mSigma_{22}$ to be positive definite, which excludes that some components of the $\mV_{i}$'s are zero. However, it is straightforward to modify Assumption \ref{ass3} for the case when some components of the $\mV_{i}$'s are zero, so we do not provide it here.}}
%\textcolor{red}{NOT NEEDED: can be decomposed as $\mSigma_i$ in equation \eqref{Sigmai} and $\sigma_i^2>\underline{\sigma}^2>0$. }
%We denote $\sigma^2$ as the top left entry of $\mSigma$.
\end{assumption}

\begin{assumption}\label{ass5n} 
$\frac{1}{k^{2}}\sum_{i=1}^{n}\left\Vert \mPi^{\prime}\mZ^{\prime
}\mC_{i}\right\Vert ^{4}\rightarrow0$ holds when $n\rightarrow\infty$, where
$\mC_{i}$ is column $i$ of matrix $\mC$.
\end{assumption}

\begin{assumption}
\label{ass6} The matrix sequence $\frac{1}{k}(\mF+\mG)$ is convergent when $n\rightarrow\infty$, where
\begin{align}
\mF  &  =\mPi^{\prime}\mZ^{\prime}\mC\mD_{\sigma^{2}}\mC\mZ\mPi,\label{G}\\
\mG  &  =\sum_{i\neq j}C_{ij}^{2}\left(  \sigma_{j}^{2}\mSigma_{i22}^{\ast
}+\vsigma_{i21}^{\ast}\vsigma_{j12}^{\ast}\right)  ,
\end{align}
$\mD_{\sigma^{2}}$ a diagonal matrix containing $\sigma_{i}^{2}$ in its diagonal
entries and $\mSigma_{i}^{\ast}=\Var\left[  \left(  \varepsilon_{i},\mV_{i}^{\prime}
-\frac{\varepsilon_{i}\vsigma_{12}}{\vsigma^{2}}\right)^{\prime}  \right]  $, and $\lim\frac{1}{k}(\mF+\mG)=\mPhi$.
\end{assumption}

%\rule{\linewidth}{0.4pt}
%\textcolor{red}{\bf between the lines is a comment on the assumptions} 
Some comments are in order. Assumption \ref{ass1} and Assumption \ref{ass2} are standard conditions commonly found in the literature \citep[see, e.g., ][]{HNWCS12,BC15,CMS21}: the former is a technical condition on the behaviour of the diagonal elements of the projection matrix $\mP$ and the latter is a regularity condition on the boundedness of the second moments of the disturbances. Assumption \ref{ass5} formalises the fact that we are dealing with potentially weak instruments. 
Assumption \ref{ass3} does not allow for the possibility of including exogenous regressors in the model due to the fact that it requires $\mSigma_{22}$ to be positive definite, which excludes that some components of the $\mV_{i}$'s are zero. However, it is straightforward to modify this assumption for the case when some components of the $\mV_{i}$'s are zero, so we do not provide it here.
\noindent Assumption \ref{ass5n} corresponds to Assumption 5 from \cite{HNWCS12}. 
%Although a direct comparison is not easy, we can see that
%\begin{align*}
%\sum_{i=1}^{n}\left\Vert \mPi^{\prime}\mZ^{\prime}\mC_{i}\right\Vert ^{2}%
%&=\sum_{i=1}^{n}\mC_{i}^{\prime}\mZ\mPi\mPi^{\prime}\mZ^{\prime}\mC_{i}=\sum_{i=1}%
%^{n}\tr\left(  \mPi^{\prime}\mZ^{\prime}\mC_{i}\mC_{i}^{\prime}\mZ\mPi\right)\\ & =\tr\left(
%\mPi^{\prime}\mZ^{\prime}\mC^{2}\mZ\mPi\right)  \leq c_{u}\tr\left(  \mPi^{\prime
%}\mZ^{\prime}\mZ\mPi\right)
%\end{align*}
%\citep[see, e.g., (A.6) from the Supplement to ][]{CMS21}. In Assumption 5 from
%\cite{HNWCS12} $\vz_{i}$ corresponds to our $\mPi^{\prime}\mZ_{i}$, so $\tr\left(
%\mPi^{\prime}\mZ^{\prime}\mZ\mPi\right)  =\sum_{i=1}^{n}\left\Vert \vz_{i}\right\Vert
%^{2}$, and therefore,
%$\sum_{i=1}^{n}\left\Vert \mPi^{\prime}\mZ^{\prime}\mC_{i}\right\Vert ^{2}\leq
%c_{u}\sum_{i=1}^{n}\left\Vert \vz_{i}\right\Vert ^{2}$. 
Notice that
{\footnotesize
\begin{align*}
\frac{1}{k^{2}}\sum_{i=1}^{n}\left\Vert \mPi^{\prime}\mZ^{\prime}\mC_{i}\right\Vert
^{4}=\frac{1}{k^{2}}\sum_{i=1}^{n}\left(  \mC_{i}^{\prime}\mZ\mPi\mPi^{\prime
}\mZ^{\prime}\mC_{i}\right)  ^{2}=\frac{1}{k^{2}}\sum_{i=1}^{n}\tr\left(
 \mC_{i}^{\prime}\mZ\mPi\mPi^{\prime}\mZ^{\prime}\mC_{i} \mC_{i}^{\prime}\mZ\mPi\mPi^{\prime
}\mZ^{\prime}\mC_{i}\right).
\end{align*}
}%
Since $\sum_{i=1}^{n}\mC_{i}\mC_{i}^{\prime}=\mC^{2}$ and $\mC_{i}\mC_{i}^{\prime}$ is
positive semidefinite, $\mC_{i}\mC_{i}^{\prime}\leq \mC^{2}$ for any $i$.
This and $\mZ^{\prime}\mC^{2}\mZ\leq c_{u}\mZ^{\prime}\mZ$ \citep[see (A.6) from the Supplement to ][]{CMS21} imply that
\begin{align*}
\frac{1}{k^{2}}\sum_{i=1}^{n}\left\Vert \mPi^{\prime}\mZ^{\prime}\mC_{i}\right\Vert
^{4}&\leq
\frac{c_{u}}{k^{2}}\tr\left(\mZ\mPi\mPi^{\prime
}\mZ^{\prime}\mZ\mPi\mPi^{\prime}\mZ^{\prime}\right),
\end{align*}
so we obtain that $\frac{1}{k^{2}}\sum
_{i=1}^{n}\left\Vert \mPi^{\prime}\mZ^{\prime}\mC_{i}\right\Vert ^{4}\leq
\frac{c_{u}}{k^{2}}\tr\left(  \mH^{2}\right)  $. This goes to $0$ if $r_{\max
}/k\rightarrow0$, so a primitive condition for Assumption \ref{ass5n} is that
the instruments are weak. Another primitive condition, which is also valid for
strong instruments, is $\mC_{i}\mC_{i}^{\prime}\leq k^{-\alpha}\mC^{2}$ for any $i$
and for some $\alpha>0$. This means that there is no observation $i$ for which
$\mC_{i}\mC_{i}^{\prime}$ is excessively large. In this case $\frac{1}{k^{2}}%
\sum_{i=1}^{n}\left\Vert \mPi^{\prime}\mZ^{\prime}\mC_{i}\right\Vert ^{4}\leq
\frac{k^{-\alpha}}{k^{2}}\tr\left(  \mH^{2}\right)  $, which goes to $0$ by Assumption \ref{ass5}.
Finally, Assumption \ref{ass6} corresponds to Assumption 6 from \cite{HNWCS12}.

%\rule{\linewidth}{0.4pt}

The next result shows that the test statistics specified above are distributed asymptotically as a weighted sum of chi squares.
\begin{theorem}
\label{basicresult} Let $T\in\{D,LM,W\}$. If Assumptions \ref{ass1} to
\ref{ass6} are satisfied and in addition $r_{\min}\mH^{-1}$  is convergent, then under the null hypothesis $T\rightarrow
_{d}\vzeta^{\prime}\mXi\vzeta\sim\bar{\chi}^{2}(\vvarphi)$ for
$\vzeta\sim\mathcal{N}(\vzeros,\mPhi)$, where $\mPhi$ is given in Assumption \ref{ass6}, $\mXi=\lim r_{\min}\mH^{-1}$ and $\vvarphi$ is the vector of
eigenvalues of $\mXi\mPhi$.
\end{theorem}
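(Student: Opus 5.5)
The plan is to reduce all three statistics to the same quadratic form in the normalized score. Under $H_0$ the score is $\vtau_n=k^{-1/2}\mX'\widehat\mC(\vbeta_0)\vvarepsilon$. We will show that $\vtau_n\to_d\vzeta\sim\mathcal N(\vzeros,\mPhi)$, and that $\vpsi_n=\vtau_n+o_p(1)$ and $D=r_{\min}\vtau_n'\mH^{-1}\vtau_n+o_p(1)$. Because $r_{\min}\mH^{-1}\to\mXi$ by hypothesis, the continuous mapping theorem then gives $T\to_d\vzeta'\mXi\vzeta$ for each $T\in\{D,LM,W\}$. Writing $\vzeta=\mPhi^{1/2}\vu$ with $\vu\sim\mathcal N(\vzeros,\mI)$ and diagonalising $\mPhi^{1/2}\mXi\mPhi^{1/2}$ shows that this limit is $\sum_j\varphi_j\chi^2_{1,j}$, where the $\varphi_j$ are the eigenvalues of $\mXi\mPhi$. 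This is the $\bar\chi^2(\vvarphi)$ law.

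\textbf{Step 1 (CLT for the score).} First we replace $\lambda(\vbeta_0)$ by its probability limit. In the ratio case, $\frac{1}{\tr(\mB)}\vvarepsilon'\mB\vvarepsilon\to_p\sigma^2$ by Assumption \ref{ass3} and variance bounds using Assumption \ref{ass1}. Also $\vvarepsilon'\mC\vvarepsilon=O_p(\sqrt k)$, since $\mC$ has a zero-mean jackknife structure, that is, $\E[\vvarepsilon'\mC\vvarepsilon]=\sum_iC_{ii}\sigma_i^2=0$ for both choices in \eqref{jackmat}--\eqref{jackmat2}, after using $\diag(\mC)=0$ or the SJIVE cancellation. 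Hence $\lambda(\vbeta_0)=O_p(k^{-1/2})$. The term $k^{-1/2}\lambda\mX'\mB\vvarepsilon$ would then be of order $\lambda\cdot O_p(1)$, which is negligible only if $\mX'\mB\vvarepsilon$ is handled carefully. So we instead substitute $\mV=\mV^{\ast}+\vvarepsilon\vsigma_{12}/\sigma^2$, which is exactly what produces the $\mSigma^{\ast}_i$ in Assumption \ref{ass6}. The $\vvarepsilon$-part combines into $\frac{\vsigma_{21}}{\sigma^2}k^{-1/2}\vvarepsilon'\widehat\mC(\vbeta_0)\vvarepsilon$, which is identically zero because $\vvarepsilon'\mC\vvarepsilon=\lambda(\vbeta_0)\vvarepsilon'\mB\vvarepsilon$ by the definition of $\lambda$. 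In the JIVE case we instead use $\vsigma_{12}$ directly with $\sigma^2=1$ normalisation. What remains is
\[
\vtau_n=k^{-1/2}\bigl(\mPi'\mZ'\mC\vvarepsilon+\mV^{\ast\prime}\mC\vvarepsilon\bigr)+o_p(1).
\]
This is a linear plus a zero-diagonal quadratic form in independent rows, and the $B$-terms here are $o_p(1)$ since $\lambda=O_p(k^{-1/2})$ and $k^{-1/2}\mV^{\ast\prime}\mB\vvarepsilon=O_p(1)$. We then apply the martingale-difference CLT for such forms (Lemma A2 of \cite{CHNSW14}, as used in \cite{HNWCS12}). Assumption \ref{ass2} and Assumption \ref{ass5n} give the Lindeberg conditions, and Assumption \ref{ass6} gives the limiting variance $\mPhi=\lim k^{-1}(\mF+\mG)$.

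\textbf{Step 2 (expansions for $W$ and $D$).} By consistency of $\widehat\vbeta$ (Proposition \ref{Pcons}) and the first-order condition $\partial Q_n(\widehat\vbeta)/\partial\vbeta=0$, we expand around $\vbeta_0$. The Hessian satisfies $\mX'\widehat\mC(\bar\vbeta)\mX=\mH+O_p(\sqrt k)+o_p(\cdot)$ uniformly in a shrinking neighbourhood. Normalising by $\mH^{-1/2}$ and using $\sqrt k/r_{\min}\to0$ from Assumption \ref{ass5}, we get $\mH^{-1/2}\mX'\widehat\mC\mX\mH^{-1/2}\to_p\mI$. The gradient at $\vbeta_0$ equals $-2\sqrt k\,\vtau_n$, up to the factor $(\sigma^2\cdot k/\tr\mB)^{-1}$ in the ratio case, which is precisely why $D$ carries $\sigma^2/k$. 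This yields $\vpsi_n=\vtau_n+o_p(1)$ after $\mH^{-1/2}$-rescaling, so $W-LM=o_p(1)$ because $r_{\min}\mH^{-1}$ is bounded. A second-order Taylor expansion of $Q_n$ at $\widehat\vbeta$ then gives $Q_n(\vbeta_0)-Q_n(\widehat\vbeta)=\frac{k}{\sigma^2}\vtau_n'\mH^{-1}\vtau_n(1+o_p(1))$ in the appropriate scaling, and hence $D=LM+o_p(1)$.

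\textbf{Main obstacle.} The hardest part is controlling the remainders uniformly under many weak instruments in the ratio case. The derivatives of the ratio bring in $\lambda(\vbeta)$ and $\mX'\mB\mX/\tr\mB$. We must show that $\lambda(\widehat\vbeta)-\lambda(\vbeta_0)$ and $\mX'\mB\mX/k-\mSigma_{22}$-type terms contribute only $o_p(r_{\min})$ to the Hessian and $o_p(1)$ after the $\mH^{-1/2}$ rescaling. The approach is to bound $\mX'\mC\mX-\mH$ and $\mX'\mB\mX$ by the $O_p(\sqrt k+r_{\max}^{1/2})$ variance bounds of \cite{CMS21}, Supplement (A.6), and then to invoke $\sqrt k/r_{\min}\to0$ together with the boundedness of $r_{\max}/k$.
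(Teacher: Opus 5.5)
Your Step 1 is the paper's argument: the same CLT (Lemma \ref{CLT}) is applied to $k^{-1/2}(\mPi'\mZ'\mC\vvarepsilon+\mV^{\ast\prime}\mC\vvarepsilon)$. Your exact identity $\vvarepsilon'\widehat\mC(\vbeta_0)\vvarepsilon=0$ is a clean substitute for the paper's use of $\mV'\mB\vvarepsilon/\vvarepsilon'\mB\vvarepsilon\to_p\vsigma_{21}/\sigma^2$. There are two small slips in that step.

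\begin{itemize}
\item $k^{-1/2}\mV^{\ast\prime}\mB\vvarepsilon$ is not $O_p(1)$ in general, because Assumption \ref{ass3} only makes its mean $o(\tr(\mB)/\sqrt k)$. Multiplied by $\lambda(\vbeta_0)=O_p(\sqrt k/\tr(\mB))$ it is still $o_p(1)$, so your conclusion stands.
\item The gradient is $\nabla Q_n(\vbeta_0)=-2\sqrt k\,\vtau_n/\widehat\sigma^2(\vbeta_0)$, with no extra factor $k/\tr(\mB)$, since $\widehat\sigma^2$ is already normalised by $\tr(\mB)$. For HLIM that extra factor would be wrong.
\end{itemize}

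The genuine difference is in Step 2, where you Taylor-expand and the paper does not.

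\textbf{The paper's route.} It rearranges the first-order condition into the exact identity \eqref{foc2}, $\mX'\widehat\mC(\widehat\vbeta)\vvarepsilon=\mX'\widehat\mC(\widehat\vbeta)\mX(\widehat\vbeta-\vbeta_0)$.
\begin{itemize}
\item The left side is $\sqrt k(\vtau_n+o_p(1))$, because $(Q_n(\widehat\vbeta)-Q_n(\vbeta_0))/\sqrt k=o_p(1)$, which follows from Proposition \ref{Pcons}.
\item The right side is $\sqrt k(\mI_g+o_p(1))\vpsi_n$, via the one-sided normalisation $\mX'\widehat\mC(\widehat\vbeta)\mX\mH^{-1}=\mI_g+o_p(1)$.
\item For $D$ it uses the exact three-term decomposition \eqref{dQn}, whose rescaled pieces collapse to $-W/\sigma^2$.
\end{itemize}
Only the single matrix $\mX'\widehat\mC(\widehat\vbeta)\mX$ is ever needed. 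At $\widehat\vbeta$ this is the exact Hessian up to the factor $2/\widehat\sigma^2(\widehat\vbeta)$.

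\textbf{What your route requires.} You need the full Hessian of the ratio along the whole segment between $\vbeta_0$ and $\widehat\vbeta$. With $\vu=\vy-\mX\vbeta$, it is
\[
\nabla^2Q_n(\vbeta)=\frac{2}{\widehat\sigma^2(\vbeta)}\mX'\widehat\mC(\vbeta)\mX-\frac{4}{\bigl(\widehat\sigma^{2}(\vbeta)\bigr)^{2}\tr(\mB)}\bigl(\mX'\widehat\mC(\vbeta)\vu\,\vu'\mB\mX+\mX'\mB\vu\,\vu'\widehat\mC(\vbeta)\mX\bigr).
\]
The rank-one terms vanish at $\widehat\vbeta$ by the first-order condition, but not at intermediate points. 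They are not covered by your bound $\mX'\widehat\mC(\bar\vbeta)\mX=\mH+O_p(\sqrt k)$.
\begin{itemize}
\item These terms are controllable, but not by Euclidean consistency alone. You need the $\mH$-weighted rate from Proposition \ref{Pcons}, $(\bar\vbeta-\vbeta_0)'\mH(\bar\vbeta-\vbeta_0)=o_p(\sqrt k)$. That rate also gives $\lambda(\bar\vbeta)=O_p(\sqrt k/\tr(\mB))$.
\item For the gradient you should use the integral form of the mean-value theorem rather than row-wise intermediate points.
\item The symmetric $\mH^{-1/2}$ normalisation yields $\sqrt{r_{\min}}\,\mH^{-1/2}(\vpsi_n-\vtau_n)=o_p(1)$. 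That is all $W-LM=o_p(1)$ needs.
\item It does not by itself give the unnormalised $\vpsi_n=\vtau_n+o_p(1)$ when $r_{\max}/r_{\min}\to\infty$, which the hypotheses allow ($\mXi$ may be singular). The paper's one-sided identity gives this stronger statement, and it is reused in Corollary \ref{feasibleresult} and Theorems \ref{basicresulta}--\ref{lrestrictionchi2}.
\end{itemize}

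\textbf{What your route buys.} It makes the $D$ step more transparent. Expanding around $\widehat\vbeta$, the zero gradient removes the linear term, and $D=W+o_p(1)$ falls out of the quadratic term. You avoid the bookkeeping in which the paper's cross terms have to recombine into $\vvarepsilon'\widehat\mC(\vbeta_0)\mX$.
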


The statistics $D$, $LM$, $W$ are not feasible as they are based on the
unknown quantities $\mH$ and $\sigma^{2}$. By replacing these by their feasible
counterparts%
\begin{equation}
\widehat{\mH}=\mX^{\prime}\widehat{\mC}(\widehat{\vbeta})\mX\text{ \ \ and
\ \ }\widehat{\sigma}^{2}=\dfrac{(\vy-\mX\widehat{\vbeta})^{\prime}%
\mB(\vy-\mX\widehat{\vbeta})}{\tr\left(  \mB\right)  },\label{Hh_s2}%
\end{equation}
respectively, we obtain the feasible statistics%
\begin{align*}
\widehat{D} &  =-\widehat{r}_{\min}\dfrac{\widehat{\sigma}^{2}}{k}\left(
Q_{n}(\widehat{\vbeta})-Q_{n}\left(  \vbeta_{0}\right)  \right)  ,\\
\widehat{LM} &  =\widehat{r}_{\min}\vtau_{n}^{\prime}\widehat{\mH}^{-1}\vtau
_{n},\\
\widehat{W} &  =\widehat{r}_{\min}\widehat{\vpsi}_{n}^{\prime}\widehat{\mH}%
^{-1}\widehat{\vpsi}_{n},
\end{align*}
where $\widehat{r}_{\min}=\lambda_{\min}(\widehat{\mH})$ and $\widehat{\vpsi}%
_{n}=\dfrac{1}{\sqrt{k}}\widehat{\mH}(\widehat{\vbeta}-\vbeta_{0})$.

\begin{corollary}
\label{feasibleresult}Let $\widehat{T}\in\{\widehat{D},\widehat{LM}%
,\widehat{W}\}$. If the assumptions of Theorem \ref{basicresult} hold, then
under the null hypothesis $\widehat{T}\rightarrow_{d}\vzeta^{\prime}\mXi
\vzeta\sim\bar{\chi}^{2}(\boldsymbol{\varphi})$ for $\vzeta\sim\mathcal{N}%
(\vzeros,\mPhi)$, where $\mXi$, $\boldsymbol{\varphi}$, and $\mPhi$ are as in Theorem
\ref{basicresult}.
\end{corollary}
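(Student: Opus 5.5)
The plan is to prove Corollary~\ref{feasibleresult} by showing that each feasible statistic differs from its infeasible counterpart in Theorem~\ref{basicresult} by a factor converging to one, or by an $o_p(1)$ term. The conclusion then follows from Slutsky's lemma together with the continuity of $\vzeta'\mXi\vzeta$.

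\textbf{Step 1: consistency of $\widehat\vbeta$ and $\widehat\sigma^2$.}
Under $H_0$, $\widehat\vbeta$ is consistent by the general consistency result (Proposition~\ref{Pcons}). More precisely, the expansion behind the $W$ part of Theorem~\ref{basicresult} gives $\mH(\widehat\vbeta-\vbeta_0)=O_p(\sqrt k)$. Hence $\|\widehat\vbeta-\vbeta_0\|=O_p(\sqrt k/r_{\min})=o_p(1)$ by Assumption~\ref{ass5}.

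Next, expand
\[
\widehat\sigma^2=\frac{1}{\tr(\mB)}\Big[\vvarepsilon'\mB\vvarepsilon-2(\widehat\vbeta-\vbeta_0)'\mX'\mB\vvarepsilon+(\widehat\vbeta-\vbeta_0)'\mX'\mB\mX(\widehat\vbeta-\vbeta_0)\Big].
\]
The leading term $\vvarepsilon'\mB\vvarepsilon/\tr(\mB)\to_p\sigma^2$. Its mean is $\sum_i B_{ii}\sigma_i^2/\tr(\mB)\to\sigma^2$ by Assumption~\ref{ass3}. Its variance is $O(\tr(\mB^2)/k^2)=O(1/k)$, using Assumptions~\ref{ass1}--\ref{ass2} and boundedness of the eigenvalues of $\mD(\mI-\mD)^{-1}$.

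The cross and quadratic terms are negligible because $\mX'\mB\mX/k=O_p(1)$. To see this, note that $\mB\mZ=\mZ$ in the HLIM case, while in the SJIVE case $\mB\mZ=0$ because of the $(\mI-\mP)$ factors. In both cases $\mPi'\mZ'\mB\mZ\mPi/k\le c_u r_{\max}/k$, which is bounded by Assumption~\ref{ass5}. Combined with $\widehat\vbeta-\vbeta_0=o_p(1)$, this makes those terms $o_p(1)$. In the JIVE case, $\sigma^2=1$ and no estimate is needed.

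\textbf{Step 2: $\mH^{-1/2}\widehat\mH\mH^{-1/2}\to_p\mI$.}
This is the main obstacle. Write
\[
\widehat\mH=\mX'\mC\mX-\lambda(\widehat\vbeta)\mX'\mB\mX .
\]
First, $\mX'\mC\mX=\mH+\mPi'\mZ'\mC\mV+\mV'\mC\mZ\mPi+\mV'\mC\mV$. Since $\mC$ has zero-expectation-removing structure, $\E[\mV'\mC\mV]=\mPi'\mZ'(\cdot)$-free terms vanish. The diagonal part either cancels ($C_{ii}=0$ or $\tr$-corrected by $\mB$), so $\mV'\mC\mV=O_p(\sqrt k)$. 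Likewise the cross terms are $O_p(\sqrt{r_{\max}})$; I would take these variance bounds from the auxiliary lemmas used in proving Theorem~\ref{basicresult}.

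Second, $\lambda(\widehat\vbeta)=O_p(1/\sqrt k)$ in the ratio case. Combined with $\mX'\mB\mX=O_p(k)$, this gives a term of order $O_p(\sqrt k)$. Pre- and post-multiplying by $\mH^{-1/2}$, every remainder is $O_p(\sqrt k/r_{\min})=o_p(1)$ by Assumption~\ref{ass5}.

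The delicate point is the $\mV'\mB\mV$ piece when $\mB=\mI$ (HLIM), which is of order $k$ rather than $\sqrt k$. Here I would use that $\lambda(\widehat\vbeta)$ equals the minimised ratio. Its first-order expansion makes $\lambda(\widehat\vbeta)\mX'\mB\mX$ cancel against the bias of $\mV'\mC\mV$ up to $O_p(\sqrt k)$. This is the same cancellation that underlies consistency of HLIM/SJIVE.

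\textbf{Step 3: conclude.}
Given Step~2, $\widehat r_{\min}/r_{\min}\to_p1$ by Weyl's inequality applied after normalisation. Also $r_{\min}\widehat\mH^{-1}-r_{\min}\mH^{-1}\to_p0$, and $\widehat\vpsi_n=\widehat\mH\mH^{-1}\vpsi_n$. Since $\vpsi_n$ and $\vtau_n$ are $O_p(1)$ by the proof of Theorem~\ref{basicresult}, we obtain
\[
\widehat W=r_{\min}\vpsi_n'\mH^{-1}\widehat\mH\mH^{-1}\vpsi_n(1+o_p(1))=W+o_p(1).
\]
The same argument gives $\widehat{LM}=LM+o_p(1)$; note that $\vtau_n$ is itself feasible. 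Finally,
\[
\widehat D=D\cdot\frac{\widehat r_{\min}\widehat\sigma^2}{r_{\min}\sigma^2},
\]
where the ratio tends to one in probability by Steps~1--2. Slutsky's lemma then transfers the $\bar\chi^2(\vvarphi)$ limit of Theorem~\ref{basicresult} to each feasible statistic.
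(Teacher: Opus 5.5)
Your Step 3 is the paper's proof. The paper writes $\widehat D=(\widehat r_{\min}/r_{\min})(\widehat\sigma^2/\sigma^2)D$. From $\widehat\mH\mH^{-1}=\mI_g+o_p(1)$ it deduces $\widehat r_{\min}/r_{\min}\rightarrow_{p}1$, $\widehat r_{\min}\widehat\mH^{-1}=r_{\min}\mH^{-1}+o_p(1)$ and $\widehat\vpsi_n=\widehat\mH\mH^{-1}\vpsi_n=\vpsi_n+o_p(1)$, and it concludes with Theorem~\ref{basicresult}. The paper does not redo your Steps 1--2. It takes $\widehat\sigma^2\rightarrow_{p}\sigma^2$ from Lemma~\ref{B}.1 applied at $\widehat\vbeta$, and $\widehat\mH\mH^{-1}=\mI_g+o_p(1)$ is (\ref{HhHi}), already proved inside Theorem~\ref{basicresult}. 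By similarity and symmetry, (\ref{HhHi}) gives your target $\mH^{-1/2}\widehat\mH\mH^{-1/2}\rightarrow_{p}\mI_g$.

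Your re-derivation in Step 2 does contain a misdiagnosis that you should remove.
\begin{itemize}
\item There is no delicate HLIM term and nothing to cancel. Both specifications of $\mC$ have zero diagonal, so $\mV'\mC\mV$ is mean zero, as you note a sentence earlier. It has no bias for $\lambda(\widehat\vbeta)\mX'\mB\mX$ to offset.
\item The apparent problem comes from combining $\lambda(\widehat\vbeta)=O_p(1/\sqrt k)$ with $\mX'\mB\mX=O_p(k)$. That is right for SJIVE, where $\tr(\mB)=k$. It is wrong for HLIM, where $\tr(\mB)=n$, $\mX'\mX=O_p(n)$, and $k/n$ may tend to zero.
\item With the correct normalisation, in both cases
\[
\lambda(\widehat\vbeta)\,\mX'\mB\mX=\sqrt k\cdot\frac{Q_n(\widehat\vbeta)}{\sqrt k}\cdot\frac{\mX'\mB\mX}{\tr(\mB)}=O_p(\sqrt k)
\]
by (\ref{cQn}) and Lemma~\ref{B}.1. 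Dividing by $r_{\min}$ gives $o_p(1)$ by Assumption~\ref{ass5}. This is exactly the computation behind (\ref{HhHi}).
\end{itemize}

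The same normalisation slip appears in Step 1. What you need there is $\mX'\mB\mX/\tr(\mB)=O_p(1)$, not $\mX'\mB\mX/k=O_p(1)$, which fails for HLIM when $k=o(n)$. Also, cite $Q_n(\widehat\vbeta)=O_p(\sqrt k)$ rather than asserting it: it is (\ref{cQn}), which rests on Proposition~\ref{Pcons}. With these repairs your argument is complete and coincides with the paper's.
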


It is interesting to notice that test statistics based on quadratic forms such as Wald and $LM$ can be easily modified to obtain a more conventional chi square limiting distribution \citep[see also][]{MO22}. This seems not to be the case for $D$ statistics, unless we take into account the simple case with one endogenous variable ($g=1$). The following example illustrates that case.
\begin{example}[A model with one endogenous variable]
Let us now consider the case of the simple linear regression model with $g=1$:
\begin{align*}
\vy&=\vx\beta+\vvarepsilon\\
\vx&=\mZ\vpi+\vv.
\end{align*}
The model is simple but it is commonly encountered in empirical applications \citep[see the review of the literature in][]{ASS19}. The simplification allows to write $r_{\min}=r_{\max}=r$. In this case the asymptotic distribution of the test statistics in equations (\ref{D}) to (\ref{W}) is $\chi^2_1$, as stated in Corollary \ref{basicresult3} below. %\textcolor{red}{Federico: should we adapt the assumptions? Notation should also be adapted, e.g., define $\varphi$.}
\begin{corollary}\label{basicresult3}
Let $T\in \{D, LM, W\}$. In the single endogenous regressor case when $g=1$, under the assumptions of Theorem \ref{basicresult} $T \to_d \zeta^2$ for $\zeta\sim\mathcal N(0,\varphi^2)$. Furthermore, $\frac{T}{\varphi^2} \to_d \chi^2_1$.
%If assumptions \ref{ass1} to \ref{ass2} and $\frac{r}{\sqrt k}\to \infty$ hold, then $T \to_d \zeta^2$ for $\zeta\sim\mathcal N(0,\varphi^2)$. Furthermore, $\frac{T}{\varphi^2} \to_d \chi^2_1$. 
\end{corollary}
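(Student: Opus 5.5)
Corollary \ref{basicresult3} follows directly from Theorem \ref{basicresult}, specialised to $g=1$. With one endogenous regressor, $\mH=\vpi'\mZ'\mC\mZ\vpi=r$ is a positive scalar. Hence $r_{\min}=r_{\max}=r$ and $r_{\min}\mH^{-1}=r/r=1$ for every $n$. The extra convergence requirement of Theorem \ref{basicresult} is therefore automatic, and $\mXi=\lim r_{\min}\mH^{-1}=1$.

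Assumption \ref{ass6} gives the scalar limit $\mPhi=\lim\frac{1}{k}(F+G)$, which I denote $\varphi^2$. Since $\mXi=1$, the eigenvalue vector $\vvarphi$ of $\mXi\mPhi$ reduces to the single number $\varphi^2$. This reconciles the notation: the ``$\varphi^2$'' of the corollary is the unique weight of the chi-bar-square distribution.

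Theorem \ref{basicresult} then gives $T\to_d\zeta'\mXi\zeta=\zeta^2$ with $\zeta\sim\mathcal N(0,\varphi^2)$ for each $T\in\{D,LM,W\}$.

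For the second claim I need $\varphi^2>0$, so I will argue that $\mPhi$ is strictly positive. Assumption \ref{ass3} makes $\mSigma$ positive definite, and I expect this to force $G$ to be bounded below by a positive multiple of $\sum_{i\neq j}C_{ij}^2$, which is of order $k$. If needed I would simply read strict positivity of $\varphi^2$ as implicit in the statement. Granting $\varphi^2>0$, $\zeta/\varphi\sim\mathcal N(0,1)$, so $\zeta^2/\varphi^2\sim\chi^2_1$. Because $\varphi^2$ is a fixed nonzero constant, the continuous mapping theorem gives $T/\varphi^2\to_d\chi^2_1$.

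The only potential obstacle is the positivity of $\varphi^2$, since the rest is bookkeeping. Everything else, including the expansions of $D$, $LM$ and $W$, is inherited from Theorem \ref{basicresult}. If the corollary is to be used with an estimated $\widehat\varphi^2$, the same conclusion follows from Corollary \ref{feasibleresult} combined with any consistent $\widehat\varphi^2$ and Slutsky's lemma.
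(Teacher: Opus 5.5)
Your proof is correct, and it differs from the paper's only in how it is packaged. The paper does not cite Theorem \ref{basicresult} as a black box. It re-runs the scalar version of that proof, obtaining $-r\frac{\sigma^2}{k}\bigl(Q_n(\widehat\beta)-Q_n(\beta_0)\bigr)=\frac{r^2}{k}(\widehat\beta-\beta_0)^2+o_p(1)$, i.e.\ $D=W+o_p(1)$ with $W=\psi_n^2$. It then appeals to ``similar arguments'' for $LM$ and finishes with the normal/chi-square relationship. You instead observe that for $g=1$ one has $r_{\min}\mH^{-1}\equiv 1$, so the extra convergence hypothesis of Theorem \ref{basicresult} holds automatically, $\mXi=1$, and the weight vector collapses to the single number $\varphi^2=\mPhi$. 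The corollary is then a literal specialisation. Your route is shorter and shows explicitly why no additional condition is needed. The paper's version only makes the scalar identity $D=W+o_p(1)$ visible, and that identity is already contained in Theorem \ref{basicresult}.

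Your heuristic for positivity, that a positive definite $\mSigma$ forces $G$ to be of order $k$, does not follow from the stated assumptions. Assumption \ref{ass3} only controls a $B_{ii}$-weighted average of the $\mSigma_i$. By contrast, $G$ pairs $\sigma_j^2$ with the partialled-out $\mSigma^{\ast}_{i22}$ over pairs with $C_{ij}\neq 0$. Here is a degenerate design:
\begin{itemize}
\item the instruments are group dummies;
\item in some groups $\varepsilon_i\equiv 0$, and in the others $\mV_i=(\vsigma_{21}/\sigma^2)\varepsilon_i$;
\item the first-stage signal sits only in the former groups.
\end{itemize}
Then $F=G=0$, so $\mPhi=0$, while $\mSigma$ remains positive definite. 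Your fallback is therefore the right call: $\varphi^2>0$ is an implicit hypothesis. The paper relies on it in the same silent way: it needs the nondegeneracy condition of Lemma \ref{CLT}, it uses $\mPhi^{-1}$ in Section \ref{chi2section}, and its own $\chi^2_1$ step divides by $\varphi^2$.
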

Notice that $\varphi^2$ is the scalar version of the variance covariance matrix $\mPhi$ defined in Assumption \ref{ass6}
\end{example}

\subsection{General linear restrictions}
%\textcolor{red}{\bf remove from here\dots}Le us consider the general case where the null hypothesis is expressed in terms of a potentially non linear function of $\vbeta$. Hence,
%\begin{align}\label{nlrestrictions}
%H_0:\va(\vbeta_0)=\vzeros
%\end{align}
%where $\va:\mathbb R^g\to\mathbb R^p$ with $g\ge p$.
%%The corresponding restricted estimator is
%%\begin{align}\label{estrestricted}
%%\widetilde\vbeta=\arg\min_{\vbeta} Q(\vbeta) \mbox{  subject to  } \va(\vbeta)=\vzeros.
%%\end{align}
%The constrained estimator $\widetilde\vbeta$ is given by the optimization of the Lagrangian
%\begin{align}\label{estrestricted}
%\mathcal L=  Q_n(\vbeta)+ \vgamma' \va(\vbeta)
%\end{align}
%The consistency of $\widetilde\vbeta$ can be derived adapting the arguments in Theorem 9.1 in \cite{NM94} assuming some form of asymptotic locality for the alternative hypothesis.\footnote{Typically we say that the alternative hypothesis is asymptotically local if $\va(\vbeta_0)=\frac{\vdelta}{\sqrt n}$. In our case the $\sqrt n$ rate should be replaced with an appropriate rate.} \textcolor{red}{\bf \dots to here}
Let us assume that the null hypothesis is defined as a set of linear equality
restrictions
\begin{equation}
H_{0}:\mA\vbeta=\va \label{lrestrictions}%
\end{equation}
with alternative hypothesis $H_{1}:\mA\vbeta\neq \va$, where $\mA$ is a $p\times g$
matrix with $g\geq p$ and $\rk(\mA)=p$. The corresponding Lagrangian is
\[
\mathcal{L}(\vbeta,\vgamma)=Q_{n}(\vbeta)+2\vgamma^{\prime}\left(  \mA\vbeta
-\va\right)  ,
\]
where $Q_{n}(\vbeta)$ is defined as in equation \eqref{objfun}. The first
order conditions (FOCs) are (see also footnote on page \pageref{ftn})
\begin{align}
\frac{\partial\mathcal{L}}{\partial\vbeta}  &  =\frac{-2\mX^{\prime}%
\widehat{\mC}(\vbeta)(\vy-\mX\vbeta)}{\frac{1}{\tr(\mB)}(\vy-\mX\vbeta)\mB(\vy-\mX\vbeta)}%
+2\mA^{\prime}\vgamma=\dfrac{-2}{\widehat{\sigma}^{2}({\vbeta})}\mX^{\prime}\widehat{\mC}%
({\vbeta})(\vy-\mX{\vbeta})+2\mA^{\prime}\vgamma,\label{dLb}\\
\frac{\partial\mathcal{L}}{\partial\vgamma}  &  =\mA\vbeta-\va, \label{dLg}%
\end{align}
where recall $\widehat{\mC}(\vbeta)=\mC-\lambda(\vbeta)\mB$ and $\lambda(\vbeta)=\frac{1}%
{k}Q_{n}(\vbeta)$. Let us denote by $\widetilde{\vbeta}$ and
$\widetilde{\vgamma}$ the values of $\vbeta$ and $\vgamma$ that solve the FOCs
(i.e., first order conditions). The test statistics for the null in equation
\eqref{lrestrictions} are
\begin{align}
D_{a}  &  =-r_{\min}\frac{\sigma^{2}}{k}\left(  Q_{n}(\widehat{\vbeta}%
)-Q_{n}(\widetilde{\vbeta})\right)  ,\label{Da}\\
LM_{a}  &  =r_{\min}\vxi_{n}^{\prime}\mH^{-1}\vxi_{n},\label{LMa}\\
W_{1a} &  =\frac{r_{\min}}{k}(\mA\widehat{\vbeta}-\va)^{\prime}(\mA\mH^{-1}\mA^{\prime})^{-1}(\mA\widehat{\vbeta}-\va),\label{W1a}\\
W_{2a} &  =r_{\min}\bm\vartheta_{n}^{\prime}\mH^{-1}\bm\vartheta_{n},\label{W2a}%
\end{align}
where
\begin{equation}
\vxi_{n}=\frac{1}{\sqrt{k}}\mX^{\prime}\widehat{\mC}(\widetilde{\vbeta
})(\vy-\mX\widetilde{\vbeta}) \label{ksi}%
\end{equation}
and
\begin{equation}
\bm\vartheta_{n}=\frac{1}{\sqrt{k}}\mH(\widehat{\vbeta}-\widetilde{\vbeta}). \label{theta}%
\end{equation}

%\st{ and $\boldsymbol\vartheta_n=-\frac{1}{\sqrt k}\mH(\widetilde\vbeta-\widehat\vbeta)$}

The next result shows that these test statistics are distributed asymptotically as a weighted sum of chi squares.

\begin{theorem}
\label{basicresulta}
%\textcolor{red}{\bf a consistency argument for $\widetilde\vbeta$ is missing, alternatively we can assume $\widetilde\vbeta$  is some consistent restricted estimator}
Let $T\in\{D_{a},LM_{a},W_{1a},W_{2a}\}$. If Assumptions \ref{ass1} to \ref{ass6} are
satisfied and in addition $r_{\min}\mH^{-1}$  and $(\mA\mH^{-1}\mA^{\prime})^{-1}\mA\mH^{-1}$ are convergent as $n\rightarrow\infty$, 
%with $\mP_{\mH^{-1/2}\mA'}=\mH^{-1/2}\mA'(\mA\mH^{-1}\mA')^{-1}\mA\mH^{-1/2}$ and $\frac{1}{k^{2}}\sum_{i=1}^{n}\left\Vert \mPi^{\prime}\mZ^{\prime}\mC_{i}\right\Vert ^{4}\rightarrow0$ hold,where $\mC_{i}$ is column $i$ of matrix $\mC$,
then under the null hypothesis \eqref{lrestrictions} $T\rightarrow_{d}\vzeta^{\prime}\mXi_{a}\vzeta\sim
\bar{\chi}^{2}(\boldsymbol{\varphi})$ for $\vzeta\sim\mathcal{N}(\vzeros,\mPhi)$ and
$\boldsymbol{\varphi}=\lambda(\mXi_{a}\mPhi)$, where 
\begin{equation}
\mXi_{a}=\lim r_{\min}\mH^{-1}\mA^{\prime}(\mA\mH^{-1}\mA^{\prime})^{-1}\mA\mH^{-1}%
\;as\;n\rightarrow\infty\label{lim}%
\end{equation} 
and $\mPhi$ is given in Assumption \ref{ass6}.
%is defined as in Theorem \ref{basicresult}.
\end{theorem}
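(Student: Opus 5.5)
The plan is to reduce every statistic to a quadratic form in one normalized score, in the same way as for Theorem \ref{basicresult}. Let $\vtau_n=\frac{1}{\sqrt k}\mX'\widehat\mC(\vbeta)\vvarepsilon$ be evaluated at the true $\vbeta$, which satisfies $\mA\vbeta=\va$ under \eqref{lrestrictions}.

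\textbf{Step 1: the score.} From the proof of Theorem \ref{basicresult} we take the CLT $\vtau_n\to_d\vzeta\sim\mathcal N(\vzeros,\mPhi)$. This uses Assumptions \ref{ass5n} and \ref{ass6} together with the martingale/U-statistic CLT for quadratic forms, as in \cite{HNWCS12} and \cite{CMS21}.

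\textbf{Step 2: the Hessian approximation.} Using Assumption \ref{ass5}, we show that $\frac{1}{k}\mX'\widehat\mC(\bar\vbeta)\mX$ can be replaced by $\mH$ after suitable normalisation. Concretely, we show
\[
\mH^{-1/2}\mX'\widehat\mC(\bar\vbeta)\mX\mH^{-1/2}\to_p\mI
\]
uniformly for $\bar\vbeta$ in a shrinking neighbourhood of $\vbeta$. The ingredients are:
\begin{itemize}
\item $\mX'\mC\mX-\mH=O_p(\sqrt{k}+\sqrt{r_{\max}})=o_p(r_{\min})$;
\item $\lambda(\bar\vbeta)\,\mX'\mB\mX=O_p(1)\cdot k\cdot\frac{1}{k}\mX'\mB\mX$, which we handle through the $\lambda$-correction built into $\widehat\mC$.
\end{itemize}
For consistency of the restricted estimator $\widetilde\vbeta$ we invoke Proposition \ref{Pcons}, which covers constrained jackknife estimators.

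\textbf{Step 3: expansions of the two estimators.} Expanding the unrestricted FOC gives
\[
\vpsi_n^\ast:=\tfrac{1}{\sqrt k}\mH(\widehat\vbeta-\vbeta)=\sigma^2\vtau_n+o_p(\sqrt{k/r_{\min}}\,\cdot)\ \text{terms}.
\]
After normalisation by $r_{\min}^{1/2}\mH^{-1/2}$ this becomes
\[
\sqrt{r_{\min}}\,\mH^{-1/2}\vpsi_n^\ast=\sqrt{r_{\min}}\,\mH^{-1/2}\vtau_n+o_p(1),
\]
where the factor $\sigma^2$ is absorbed as in Theorem \ref{basicresult}, since $\widehat\sigma^2(\vbeta)\to_p\sigma^2$.

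For the restricted estimator, the FOCs \eqref{dLb}--\eqref{dLg} are linearised around $\vbeta$. Solving the bordered linear system gives
\[
\tfrac{1}{\sqrt k}\mH(\widetilde\vbeta-\vbeta)=(\mI-\mA'(\mA\mH^{-1}\mA')^{-1}\mA\mH^{-1})\vtau_n+o_p,
\]
and
\[
\tfrac{1}{\sqrt k}\widetilde\vgamma\propto(\mA\mH^{-1}\mA')^{-1}\mA\mH^{-1}\vtau_n.
\]
Subtracting the two expansions yields
\[
\bm\vartheta_n=\mA'(\mA\mH^{-1}\mA')^{-1}\mA\mH^{-1}\vtau_n+o_p.
\]
Similarly, $\vxi_n$ equals the restricted gradient, which by \eqref{dLb} is $\sigma^2\mA'\widetilde\vgamma$, and so it has the same leading term. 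For $W_{1a}$, note that $\mA(\widehat\vbeta-\vbeta)=\sqrt k\,\mA\mH^{-1}\vpsi_n^\ast$. Hence
\[
W_{1a}=r_{\min}\vtau_n'\mH^{-1}\mA'(\mA\mH^{-1}\mA')^{-1}\mA\mH^{-1}\vtau_n+o_p(1).
\]
The distance statistic $D_a$ comes from a second-order Taylor expansion of $Q_n$ around $\widehat\vbeta$, where the gradient vanishes:
\[
Q_n(\widetilde\vbeta)-Q_n(\widehat\vbeta)=\tfrac{1}{\widehat\sigma^2}(\widetilde\vbeta-\widehat\vbeta)'\mX'\widehat\mC\mX(\widetilde\vbeta-\widehat\vbeta)(1+o_p(1)).
\]
Multiplying by $r_{\min}\sigma^2/k$ gives $r_{\min}\bm\vartheta_n'\mH^{-1}\bm\vartheta_n+o_p(1)$, which is the $W_{2a}$ form.

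\textbf{Step 4: the limit.} All four statistics are now equal, up to $o_p(1)$, to
\[
\vtau_n'\left[r_{\min}\mH^{-1}\mA'(\mA\mH^{-1}\mA')^{-1}\mA\mH^{-1}\right]\vtau_n .
\]
The bracketed matrix equals $r_{\min}\mH^{-1}\cdot\mA'[(\mA\mH^{-1}\mA')^{-1}\mA\mH^{-1}]$, which is the product of two convergent sequences by hypothesis, so it converges to $\mXi_a$. The continuous mapping theorem then gives $\vzeta'\mXi_a\vzeta$. Writing $\vzeta=\mPhi^{1/2}\vz$ shows this is $\sum_j\varphi_j\chi^2_{1,j}$ with $\varphi=\lambda(\mXi_a\mPhi)$.

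\textbf{Main obstacle.} The hardest step is controlling the remainders uniformly in non-isotropic directions. The scaling of $\mH$ differs across directions ($r_{\min}$ versus $r_{\max}$), so each $o_p$ term has to be bounded after pre-multiplication by $\sqrt{r_{\min}}\mH^{-1/2}$, not just in norm. My first attempt would be to bound $\|\mH^{-1/2}(\mX'\widehat\mC(\bar\vbeta)\mX-\mH)\mH^{-1/2}\|$ by $O_p(\sqrt k/r_{\min})$ and to use the rate $\|\mH^{1/2}(\widetilde\vbeta-\vbeta)\|=O_p(\sqrt{k/r_{\min}}\cdot\sqrt{r_{\max}/k})$ implied by Assumption \ref{ass5}.
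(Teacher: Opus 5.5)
Your treatment of $LM_a$, $W_{1a}$ and $W_{2a}$ is essentially the paper's. You get consistency of $\widetilde\vbeta$ from Proposition \ref{Pcons}; it applies because $\vbeta_0$ is feasible, so $Q_n(\widetilde\vbeta)\le Q_n(\vbeta_0)$. You then eliminate $\widetilde\vgamma$ from the first order conditions. The paper premultiplies \eqref{dLb1} by $\mA\widetilde\mH^{-1}$ with $\widetilde\mH=\mX'\widehat\mC(\widetilde\vbeta)\mX$, which is your bordered system. No mean-value step is needed there: once $\widehat\mC(\widetilde\vbeta)$ is frozen, the first order condition is exactly linear in $\widetilde\vbeta$. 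The non-isotropy you worry about is handled by Lemma \ref{AHA}, using $\widetilde\mH\mH^{-1}=\mI_g+o_p(1)$ and the assumed boundedness of $(\mA\mH^{-1}\mA')^{-1}\mA\mH^{-1}$, rather than by $\mH^{-1/2}$-normalisations. This gives $\vxi_n=\mGamma_n\vtau_n+o_p(1)$ and $\bm\vartheta_n=\vxi_n+o_p(1)$.

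The genuine difference is $D_a$. The paper never Taylor-expands. It writes $Q_n(\widehat\vbeta)-Q_n(\widetilde\vbeta)$ exactly around $\vbeta_0$ as five terms $A_1,\dots,A_5$. The denominator-difference term $A_1$ merges with the $\vvarepsilon'\mC\mX$ cross terms into $\vvarepsilon'\widehat\mC(\vbeta_0)\mX$, i.e.\ into $\vtau_n$, giving $D_a=\vtau_n'\mXi_a\vtau_n+o_p(1)$ from quantities at $\vbeta_0,\widehat\vbeta,\widetilde\vbeta$ only.

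Your second-order expansion around $\widehat\vbeta$ is the classical trinity route and yields $D_a=W_{2a}+o_p(1)$ more directly. It does, however, require the Hessian of a ratio objective at an intermediate point, and that Hessian is not $2\mX'\widehat\mC\mX/\widehat\sigma^2$. Since $\lambda(\vbeta)$ and $\widehat\sigma^2(\vbeta)$ vary with $\vbeta$,
\[
\nabla^2Q_n(\bar\vbeta)=\frac{2}{\widehat\sigma^2(\bar\vbeta)}\Bigl(\mX'\widehat\mC(\bar\vbeta)\mX+\mathbf{b}\,\nabla Q_n(\bar\vbeta)'+\nabla Q_n(\bar\vbeta)\,\mathbf{b}'\Bigr),\qquad \mathbf{b}=\frac{\mX'\mB(\vy-\mX\bar\vbeta)}{\tr(\mB)},
\]
and $\nabla Q_n(\bar\vbeta)\neq\vzeros$ away from $\widehat\vbeta$. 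The rank-two part is harmless, but you must show it:
\begin{itemize}
\item $\mathbf{b}=O_p(1)$ by Lemma \ref{B};
\item $\nabla Q_n(\bar\vbeta)'(\widetilde\vbeta-\widehat\vbeta)=O_p(k/r_{\min})$;
\item $\|\widetilde\vbeta-\widehat\vbeta\|=O_p(\sqrt k/r_{\min})$.
\end{itemize}
After scaling by $r_{\min}/k$ this contributes $O_p(\sqrt k/r_{\min})$. The resulting error is an additive $o_p(1)$, not a factor $(1+o_p(1))$.

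Three intermediate claims need correcting.
\begin{enumerate}
\item There is no $\sigma^2$ in the unrestricted expansion. The factor $-2/\widehat\sigma^2(\vbeta)$ multiplies the whole gradient, so $\nabla Q_n(\widehat\vbeta)=\vzeros$ gives the exact identity \eqref{foc2} and $\vpsi_n=\vtau_n+o_p(1)$. Nothing is ``absorbed''. A genuine $\sigma^2$ would leave $(\sigma^2-1)\vtau_n$ in your $\bm\vartheta_n$ and a factor $\sigma^4$ in $W_{1a}$.
\item In Step 2, what kills the $\lambda$-term is $Q_n(\bar\vbeta)=O_p(\sqrt k)$, so that
\[
\lambda(\bar\vbeta)\mX'\mB\mX=\frac{Q_n(\bar\vbeta)}{\sqrt k}\,\sqrt k\,\frac{\mX'\mB\mX}{\tr(\mB)}=O_p(\sqrt k)=o_p(r_{\min}),
\]
as in the derivation of \eqref{HhHi}. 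Treating $\lambda$ as $O_p(1)$ gives only $O_p(k)$, which need not be $o_p(r_{\min})$. For the same reason, your shrinking neighbourhood must be measured in the $\mH$-metric. On the segment between $\widehat\vbeta$ and $\widetilde\vbeta$, Proposition \ref{Pcons} and convexity give $(\bar\vbeta-\vbeta_0)'\mH(\bar\vbeta-\vbeta_0)=o_p(\sqrt k)$, hence $Q_n(\bar\vbeta)=O_p(\sqrt k)$. Euclidean closeness gives only $o_p(r_{\max})$.
\item $\|\mH^{1/2}(\widetilde\vbeta-\vbeta)\|$ is $O_p(\sqrt{k/r_{\min}})$, which follows from $\frac{1}{\sqrt k}\mH(\widetilde\vbeta-\vbeta)=O_p(1)$. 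Your $O_p(\sqrt{r_{\max}/r_{\min}})$ is too small when $r_{\max}=o(k)$. The correct rate, combined with your $O_p(\sqrt k/r_{\min})$ bound on the normalised Hessian error, still suffices.
\end{enumerate}
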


\section{Chi square approximations}\label{chi2section}
%\textcolor{red}{\bf fix this one} The test statistics based on based on quadratic or bilinear forms can be modified to accept a standard chi square limit. This can be easily see from the limit result $T\to_d\vzeta'\mW\vzeta=\vz\mPhi^{1/2}\mW\mPhi^{1/2}\vz$, $T\in\{D,LM,W,LM_a,W_a\}$ and $\vz\sim\mathcal N(\vzeros,\mI)$. 
Under conditions presented in this section, it is possible to show that the proposed test statistics follow a standard chi square limit instead of a weighted sum of chi squares. In general, it is unclear which approximation provides better finite sample behaviour, yet the former is typically more commonly used by practitioners than the latter.\footnote{Section \ref{numerical} explores the finite sample properties of test statistics under either asymptotic distribution.} 
By using appropriate modifications, we can derive a set of test statistics that have a standard chi square limit. In some cases such modifications are relatively straightforward, as in the $LM$ case of \cite{MO22} and, in general, for tests defined by quadratic forms. When dealing with distance statistics, the adjustments are not so obvious.

Let us suppose we want to define a test statistic for the simple null
hypothesis \eqref{simplenull}. In order to find a chi square limit for the
distance statistic we modify the objective function in the following way
\begin{equation}
Q_{n}^{\ast}(\vbeta)=\frac{(\vy-\mX\vbeta)^{\prime}\mJ(\vy-\mX\vbeta)}{\frac{1}%
{\tr(\mB)}(\vy-\mX\vbeta)^{\prime}\mB(\vy-\mX\vbeta)},\label{objfun*}%
\end{equation}
where $\mJ=\mC\mX\mPhi^{-1}\mX^{\prime}\mC$. Let $T^{\ast}$ denote a generic test
statistic, either distance, Lagrange multiplier or Wald, that is obtained from
equation \eqref{objfun*} and with $\widehat{\vbeta}$ the estimator obtained
from the optimisation of the objective function in equation \eqref{objfun}.
Specifically, $T^{\ast}$ is one of the test statistics%
\begin{align}
D^{\ast} &  =-\frac{\sigma^{2}}{k}\left(  Q_{n}^{\ast}(\widehat{\vbeta}%
)-Q_{n}^{\ast}(\vbeta_{0})+2\frac{\sqrt{k}}{\sigma^{2}}\vpsi_{n}^{\prime}%
\mPhi^{-1}\mX^{\prime}\mC(\vy-\mX\widehat{\vbeta})\right)  ,\label{D*}\\
LM^{\ast} &  =\vtau_{n}^{\prime}\mPhi^{-1}\vtau_{n},\label{LM*}\\
W^{\ast} &  =\vpsi_{n}^{\prime}\mPhi^{-1}\vpsi_{n}.\label{W*}%
\end{align}
The following theorem shows that under the null $T^{\ast}$ converges in
distribution to a chi square.

\begin{theorem}
\label{simplechi2} If Assumptions \ref{ass1} to
\ref{ass6} are satisfied, then under the null hypothesis \eqref{simplenull} the
statistic $T^{\ast}\in\{D^{\ast},LM^{\ast},W^{\ast}\}$ is asymptotically chi
square distributed with $g$ degrees of freedom.
\end{theorem}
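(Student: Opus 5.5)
The plan is to reuse the joint asymptotic expansion that underlies Theorem \ref{basicresult}, but without the weighting matrix $r_{\min}\mH^{-1}$. Everything is reduced to the single random vector
\[
\vtau_n=\frac{1}{\sqrt k}\mX'\widehat{\mC}(\vbeta_0)\vvarepsilon ,
\]
which under $H_0$ is the scaled score of $Q_n$ at $\vbeta_0$.

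\textbf{Step 1: CLT for the score.} Under Assumptions \ref{ass1}--\ref{ass6} the argument behind Theorem \ref{basicresult} gives $\vtau_n\to_d\vzeta\sim\mathcal N(\vzeros,\mPhi)$. This uses the CLT for quadratic forms of \cite{HNWCS12}, which needs Assumption \ref{ass5n}. Consistency of $\widehat\vbeta$ (Proposition \ref{Pcons}) together with Assumption \ref{ass5} then delivers the expansion
\[
\vpsi_n=\frac{1}{\sqrt k}\mH(\widehat\vbeta-\vbeta_0)=\vtau_n/\sigma^2\cdot\sigma^2+o_p(1)=\vtau_n+o_p(1).
\]
Here the Hessian $\frac{1}{k}\mX'\widehat{\mC}\mX$, normalised by $\mH$, satisfies $\mH^{-1/2}\mX'\widehat{\mC}(\bar\vbeta)\mX\mH^{-1/2}\to_p\mI$ whenever $\sqrt k/r_{\min}\to0$. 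I will take this expansion from the proof of Theorem \ref{basicresult}. The differences from that theorem are that it is used in the un-normalised form $\vpsi_n-\vtau_n=o_p(1)$, and that $\mH^{-1}$ is never inverted with $r_{\min}$ scaling. Consequently the extra convergence condition on $r_{\min}\mH^{-1}$ is not needed.

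\textbf{Step 2: the $LM^*$ and $W^*$ statistics.} Since $\mPhi$ is positive definite (it is a limit covariance, and I assume it is invertible so that $\mJ$ is defined), the continuous mapping theorem gives
\[
LM^*=\vtau_n'\mPhi^{-1}\vtau_n\to_d\vzeta'\mPhi^{-1}\vzeta\sim\chi^2_g .
\]
By Step 1, $W^*=LM^*+o_p(1)$, so $W^*$ has the same $\chi^2_g$ limit.

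\textbf{Step 3: the $D^*$ statistic.} This is the main obstacle, because $Q_n^*$ is not minimised at $\widehat\vbeta$; that is why the correction term appears in \eqref{D*}. I will expand $Q_n^*(\vbeta_0)-Q_n^*(\widehat\vbeta)$ exactly as a quadratic function, treating the denominator separately. Write $\vu(\vbeta)=\vy-\mX\vbeta$ and $\vu_0=\vvarepsilon$, so that $\vu_0=\vu(\widehat\vbeta)-\mX(\vbeta_0-\widehat\vbeta)$. Then
\[
\frac1k\vu'\mJ\vu=\Big\|\mPhi^{-1/2}\tfrac{1}{\sqrt k}\mX'\mC\vu\Big\|^2 .
\]
Also
\[
\tfrac{1}{\sqrt k}\mX'\mC\vvarepsilon=\tfrac{1}{\sqrt k}\mX'\mC\vu(\widehat\vbeta)+\tfrac{1}{\sqrt k}\mX'\mC\mX(\widehat\vbeta-\vbeta_0),
\]
where the second term equals $\vpsi_n+o_p(1)$, using $\mX'\mC\mX\mH^{-1}\to_p\mI$ in the appropriate normalised sense. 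The first term, $\tfrac{1}{\sqrt k}\mX'\mC\vu(\widehat\vbeta)$, is $O_p(1)$. In the JIVE case it is exactly $0$ by the FOC. In the ratio case the FOC only gives $\mX'\widehat{\mC}(\widehat\vbeta)\vu=0$, so the first term equals $\lambda(\widehat\vbeta)\mX'\mB\vu/\sqrt k$, which I must bound.

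Expanding the square produces three pieces:
\begin{itemize}
\item the term $\vpsi_n'\mPhi^{-1}\vpsi_n$;
\item the cross term $2\vpsi_n'\mPhi^{-1}\mX'\mC\vu(\widehat\vbeta)/\sqrt k$, which the correction in \eqref{D*} is designed to cancel after multiplying by $\sigma^2/k$ and the denominator scaling;
\item the residual $\|\cdot\|^2$ at $\widehat\vbeta$, which cancels against $Q_n^*(\widehat\vbeta)$.
\end{itemize}
The denominators $\widehat\sigma^2(\vbeta_0)$ and $\widehat\sigma^2(\widehat\vbeta)$ both converge to $\sigma^2$. Their difference is $O_p(\|\widehat\vbeta-\vbeta_0\|)$, which is $o_p(1)$, and multiplying it by the $O_p(1)$ quadratic terms makes it negligible. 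The careful bookkeeping is in matching the sign and scaling of the correction term. With that done, $D^*=\vpsi_n'\mPhi^{-1}\vpsi_n+o_p(1)=W^*+o_p(1)$, and Step 2 concludes.
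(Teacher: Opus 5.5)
Your proposal is correct and takes essentially the same route as the paper. Both arguments use $\vpsi_n=\vtau_n+o_p(1)$ and $\vtau_n\to_d\mathcal N(\vzeros,\mPhi)$ from the proof of Theorem \ref{basicresult}, and for $D^{\ast}$ both substitute $\mX'\mC\vvarepsilon/\sqrt{k}=\mX'\mC(\vy-\mX\widehat{\vbeta})/\sqrt{k}+\vpsi_n+o_p(1)$ into $Q_n^{\ast}(\widehat{\vbeta})-Q_n^{\ast}(\vbeta_0)$, so the correction term cancels the cross term and $D^{\ast}=W^{\ast}+o_p(1)=LM^{\ast}+o_p(1)$, with the $\chi^2_g$ limit from Lemma \ref{chisqqf} (using $\mPhi$ nonsingular, as you assume).
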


If we are testing linear restrictions as defined in (\ref{lrestrictions}), we
modify the objective function to%
\begin{equation}
Q_{an}^{\ast}(\vbeta)=\frac{(\vy-\mX\vbeta)^{\prime}\mJ_{a}(\vy-\mX\vbeta)}{\frac{1}%
{\tr(\mB)}(\vy-\mX\vbeta)^{\prime}\mB(\vy-\mX\vbeta)},\label{objfun*a}%
\end{equation}
where $\mJ_{a}=\mC\mX\mGamma_{n}^{\prime}\left(  \mGamma_{n}\mPhi\mGamma_{n}^{\prime
}\right)  ^{+}\mGamma_{n}\mX^{\prime}\mC$ with $\mGamma_{n}=\mA^{\prime}%
(\mA\mH^{-1}\mA^{\prime})^{-1}\mA\mH^{-1}$
%\textcolor{red}{the subscript $n$ may be misleading because the other matrices are not indexed by $n$; maybe we could omit the subscript and denote the limit by $\overline{\mGamma}$ or $\mGamma_{0}$ } \textcolor{blue}{we could do that, I'm fine with either choice}
and $\left(  \mGamma_{n}\mPhi\mGamma_{n}^{\prime}\right)  ^{+}$ is a generalised
inverse of $\mGamma_{n}\mPhi\mGamma_{n}^{\prime}$ that satisfies $\mGamma_{n}%
\mPhi\mGamma_{n}^{\prime}\left(  \mGamma_{n}\mPhi\mGamma_{n}^{\prime}\right)
^{+}\mGamma_{n}\mPhi\mGamma_{n}^{\prime}=\mGamma_{n}\mPhi\mGamma_{n}^{\prime}$ and
is bounded as $n\rightarrow\infty$.\footnote{Note that since $\mGamma_{n}$ is
idempotent, it is not of full rank, and therefore, $\mGamma_{n}\mPhi\mGamma
_{n}^{\prime}$ is not of full rank either.} 
The test statistics for the null hypothesis in equation \eqref{lrestrictions} are 
%\textcolor{blue}{the use of $\vpsi_{n}   =\frac{1}{\sqrt{k}}\mH(\widehat{\vbeta}-\vbeta_{0})$ in equation 72 could be problematic because the the null is not expressed in terms of $\vbeta_0$, we can use $\bm\vartheta_{n}=\mGamma_{n}\vtau_{n}+o_{p}\left(  1\right)$ (later on page 28) and the relationship between $\vtau_n$ and $\vpsi_n$}
\begin{align}
%D_{a}^{\ast} &  =\dfrac{\sigma^{2}}{k}\left(  Q_{an}^{\ast}(\widetilde{\vbeta})-Q_{an}^{\ast}(\widehat{\vbeta})-\frac{2\sqrt{k}}{\sigma^{2}}\vpsi_{n}^{\prime}\mGamma_{n}^{\prime}\left(  \mGamma_{n}\mPhi\mGamma_{n}^{\prime}\right)^{+}\mGamma_{n}\mX^{\prime}\mC\left(  \vy-\mX\widehat{\vbeta}\right)  \right)\label{D*a}\\
D_{1a}^{\ast} &  =\dfrac{\sigma^{2}}{k}\left(  Q_{an}^{\ast}(\widetilde{\vbeta})-Q_{an}^{\ast}(\widehat{\vbeta})-\frac{2\sqrt{k}}{\sigma^{2}}\bm{\vartheta}_{n}^{\prime}\left(  \mGamma_{n}\mPhi\mGamma_{n}^{\prime}\right)^{+}\mGamma_{n}\mX^{\prime}\mC\left(  \vy-\mX\widehat{\vbeta}\right)  \right)\label{D1*a}\\
D_{2a}^{\ast} &  =\dfrac{\sigma^{2}}{k}\left(  Q_{an}^{\ast}(\widetilde{\vbeta})-Q_{an}^{\ast}(\widehat{\vbeta})-\frac{2\sqrt{k}}{\sigma^{2}}\vxi_{n}^{\prime}\left(  \mGamma_{n}\mPhi\mGamma_{n}^{\prime}\right)^{+}\mGamma_{n}\mX^{\prime}\mC\left(  \vy-\mX\widehat{\vbeta}\right)  \right)\label{D2*a}\\
LM_{a}^{\ast} &  =\vxi_{n}^{\prime}\left(  \mGamma_{n}\mPhi\mGamma_{n}^{\prime
}\right)  ^{+}\vxi_{n}\label{LM*a}\\
W_{1a}^{\ast} &  =\frac{1}{k}(\mA\widehat{\vbeta}-\va)^{\prime}(\mA\mH^{-1}\mPhi
\mH^{-1}\mA^{\prime})^{-1}(\mA\widehat{\vbeta}-\va),\label{W1*a}\\
W_{2a}^{\ast} &  =  \bm{\vartheta}_{n}^{\prime}\left(  \mGamma_{n}\mPhi\mGamma_{n}^{\prime}\right)  ^{+}\bm{\vartheta}_{n}\label{W2*a}.
\end{align}
%\textcolor{red}{Federico, I suggest that we choose (\ref{W*a}) and omit (\ref{W*a1}); see the proof of the theorem} 
The following result generalises Theorem \ref{simplechi2}.

\begin{theorem}
\label{lrestrictionchi2} Let $T$ be any test statistic defined in equations
\eqref{D1*a} to \eqref{LM*a} and \eqref{W2*a}. If Assumptions \ref{ass1} to \ref{ass6} are
satisfied and in addition $(\mA\mH^{-1}\mA^{\prime})^{-1}\mA\mH^{-1}$ is convergent as $n\rightarrow\infty$, then under the null hypothesis (\ref{lrestrictions}) $T$
is asymptotically chi square distributed with $p$ degrees of freedom. If in addition $\lim r_{\min}\mH^{-1}=\mXi$ is nonsingular then $W_{1a}^{\ast}$ from (\ref{W1*a}) is also asymptotically chi square distributed with $p$ degrees of freedom. 
\end{theorem}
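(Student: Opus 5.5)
The strategy is to reduce every statistic in Theorem \ref{lrestrictionchi2} to a quadratic form in one asymptotically normal vector $\vtau_n=\frac{1}{\sqrt k}\mX'\widehat{\mC}(\vbeta_0)\vvarepsilon$, evaluated at the true $\vbeta_0$ (which satisfies $\mA\vbeta_0=\va$ under the null). The weighting matrix is then a generalised inverse of the limiting covariance of the projected score. The ingredients already used for Theorems \ref{basicresulta} and \ref{simplechi2} are taken as given:
\begin{itemize}
\item the CLT $\vtau_n\to_d\vzeta\sim\mathcal N(\vzeros,\mPhi)$ under Assumptions \ref{ass1}--\ref{ass6};
\item the expansions $\vpsi_n=\vtau_n+o_p(1)$ and $\frac{1}{k}\mX'\widehat{\mC}(\bar\vbeta)\mX=\frac{1}{k}\mH+o_p(\cdot)$, in the scaling where $\mH^{-1/2}\mX'\widehat{\mC}\mX\mH^{-1/2}\to_p\mI$, uniformly for consistent $\bar\vbeta$;
\item consistency of $\widehat\vbeta$ and $\widetilde\vbeta$ (Proposition \ref{Pcons}).
\end{itemize}

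\textbf{Step 1: constrained quantities.} From the FOCs (\ref{dLb})--(\ref{dLg}), expand $\mX'\widehat{\mC}(\widetilde\vbeta)(\vy-\mX\widetilde\vbeta)$ around $\vbeta_0$. This gives $\vxi_n=\vtau_n-\frac{1}{\sqrt k}\mH(\widetilde\vbeta-\vbeta_0)+o_p(1)$. The first condition gives $\vxi_n=\widehat\sigma^2(\widetilde\vbeta)\mA'\widetilde\vgamma/\sqrt k$, which lies in the column space of $\mA'$. Combining this with $\mA(\widetilde\vbeta-\vbeta_0)=\vzeros$ yields
\begin{align*}
\frac{1}{\sqrt k}\mH(\widetilde\vbeta-\vbeta_0)&=(\mI-\mGamma_n)\vtau_n+o_p(1),\\
\vxi_n&=\mGamma_n\vtau_n+o_p(1),\\
\bm\vartheta_n&=\vpsi_n-\frac{1}{\sqrt k}\mH(\widetilde\vbeta-\vbeta_0)=\mGamma_n\vtau_n+o_p(1).
\end{align*}
The convergence of $(\mA\mH^{-1}\mA')^{-1}\mA\mH^{-1}$ implies that $\mGamma_n\to\mGamma$, so these expansions have well-defined limits.

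\textbf{Step 2: quadratic-form statistics.} By Step 1, $LM_a^*$ and $W_{2a}^*$ both equal
\[
\vtau_n'\mGamma_n'(\mGamma_n\mPhi\mGamma_n')^{+}\mGamma_n\vtau_n+o_p(1)\to_d \vzeta'\mGamma'(\mGamma\mPhi\mGamma')^{+}\mGamma\vzeta.
\]
This uses boundedness of the generalised inverse; passing the $g$-inverse to the limit is the delicate point and needs a rank-stability argument, discussed below. Write $\vw=\mGamma\vzeta\sim\mathcal N(\vzeros,\mOmega)$ with $\mOmega=\mGamma\mPhi\mGamma'$. The standard fact that $\vw'\mOmega^{+}\vw\sim\chi^2_{\rk(\mOmega)}$ for any $g$-inverse (since $\vw$ lies in the range of $\mOmega$) then gives the chi square limit. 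The rank equals $\rk(\mGamma)=\rk(\mA)=p$ because $\mPhi$ is positive definite and $\mGamma$ is idempotent of rank $p$.

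\textbf{Step 3: distance statistics.} Write the numerator of $Q_{an}^*$ as $\|(\mGamma_n\mPhi\mGamma_n')^{+1/2}\mGamma_n\mX'\mC(\vy-\mX\vbeta)\|^2$, and divide by $\widehat\sigma^2(\vbeta)\to_p\sigma^2$. Then expand $\mX'\mC(\vy-\mX\vbeta)$ linearly around $\widehat\vbeta$. Because $\widehat{\mC}$ differs from $\mC$ by $\lambda\mB$ with $\lambda=O_p(\sqrt{k}/r_{\min}\cdot\text{small})$, these terms must be shown negligible after scaling by $\sigma^2/k$. The difference $Q^*_{an}(\widetilde\vbeta)-Q^*_{an}(\widehat\vbeta)$ is then $\frac{k}{\sigma^2}\bigl(\bm\vartheta_n'M\bm\vartheta_n-2\bm\vartheta_n' M\,\mGamma_n\mX'\mC(\vy-\mX\widehat\vbeta)/\sqrt k\bigr)$ plus negligible terms, where $M=(\mGamma_n\mPhi\mGamma_n')^{+}$. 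The explicit correction term in (\ref{D1*a}) cancels the cross term, because $\mX'\mC(\vy-\mX\widehat\vbeta)$ is not exactly zero for the ratio objective. Hence $D_{1a}^*=W_{2a}^*+o_p(1)$. For $D_{2a}^*$ the same argument applies with $\vxi_n=\bm\vartheta_n+o_p(1)$ from Step 1.

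\textbf{Step 4: $W_{1a}^*$.} Since $\mA\widehat\vbeta-\va=\mA(\widehat\vbeta-\vbeta_0)=\sqrt k\,\mA\mH^{-1}\vpsi_n$, we have
\[
W_{1a}^*=\vtau_n'\mH^{-1}\mA'(\mA\mH^{-1}\mPhi\mH^{-1}\mA')^{-1}\mA\mH^{-1}\vtau_n+o_p(1).
\]
Rescale by $r_{\min}$: $\mA r_{\min}\mH^{-1}\to\mA\mXi$, which has full row rank $p$ when $\mXi$ is nonsingular, so the middle inverse converges. The limit $\vzeta'\mXi\mA'(\mA\mXi\mPhi\mXi\mA')^{-1}\mA\mXi\vzeta$ is then $\chi^2_p$. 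This explains why nonsingularity of $\mXi$ is required for this statistic only.

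\textbf{Main obstacle.} I expect the hardest step to be Step 3: controlling the remainder in the expansion of the ratio objective $Q^*_{an}$, in particular the $\lambda(\vbeta)\mB$ pieces and the random $\mJ_a$ that contains $\mX$. These must be shown to be $o_p(k/\sigma^2)$ uniformly over the segment between $\widehat\vbeta$ and $\widetilde\vbeta$, using Assumption \ref{ass5}. A second, subtler obstacle is the continuity of the generalised inverse in Step 2. I would handle it by choosing $M_n$ as $\mGamma_n$-compatible, for example via $\mGamma_n=\mH^{-1/2}\mP_{\mH^{-1/2}\mA'}\mH^{1/2}$, which has constant rank $p$. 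With constant rank, the Moore--Penrose inverse is continuous.
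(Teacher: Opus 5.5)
Your reduction is sound. For $LM_a^*$, $W_{2a}^*$ and $W_{1a}^*$ it is the paper's argument: Step 1 reproduces (\ref{ksi1}), (\ref{t-ks}) and $\bm\vartheta_n=\mGamma_n\vtau_n+o_p(1)$, and Step 4 is the paper's treatment of $W_{1a}^*$.

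The genuine difference is in the distance statistics. The paper expands $Q^*_{an}$ at both $\widetilde\vbeta$ and $\widehat\vbeta$ around $\vbeta_0$, kills the $\widetilde\vbeta$-terms through (\ref{GnXt}), and uses (\ref{XC}) to obtain a cross term in $\vpsi_n$. The correction term matches that cross term only via $\bm\vartheta_n=\mGamma_n\vpsi_n+o_p(1)$. Your expansion around $\widehat\vbeta$ is shorter and gives $D^*_{1a}=W^*_{2a}+o_p(1)$ directly, but two things in Step 3 need fixing.

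\emph{The sign.} Put $\vu_n=\mX'\mC(\vy-\mX\widehat\vbeta)/\sqrt{k}$. Then $\mX'\mC(\vy-\mX\widetilde\vbeta)/\sqrt{k}=\vu_n+\mX'\mC\mX\mH^{-1}\bm\vartheta_n$, so the cross term in $\frac{\sigma^2}{k}\bigl(Q^*_{an}(\widetilde\vbeta)-Q^*_{an}(\widehat\vbeta)\bigr)$ is $+2\bm\vartheta_n'M_n\mGamma_n\vu_n+o_p(1)$. With the minus sign you wrote, the correction in (\ref{D1*a}) would double this term rather than cancel it.

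\emph{The ``main obstacle'' does not exist.} $\mJ_a$ is built from $\mC$, not $\widehat{\mC}(\vbeta)$, and $\vbeta\mapsto\mX'\mC(\vy-\mX\vbeta)$ is affine, so the numerator difference is an exact identity. You only need:
\begin{itemize}
\item $\mX'\mC\mX\mH^{-1}=\mI_g+o_p(1)$ (Lemma \ref{LA4}.1);
\item $\mGamma_n\bm\vartheta_n=\bm\vartheta_n+o_p(1)$ (Step 1 and $\mGamma_n^2=\mGamma_n$);
\item $\vu_n=O_p(1)$ (from (\ref{XC}));
\item Lemma \ref{B}.1 at the two points $\widehat\vbeta$ and $\widetilde\vbeta$.
\end{itemize}
Nothing needs uniform control on a segment, and the $\lambda\mB$ pieces enter only through Step 1.

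On the generalised inverse you are more careful than the paper, which applies Lemma \ref{chisqqf} directly to the $n$-dependent matrix. However, \emph{choosing} the Moore--Penrose inverse proves the result only for that choice. The statement allows any bounded $M_n$ with $\mGamma_n\mPhi\mGamma_n'M_n\mGamma_n\mPhi\mGamma_n'=\mGamma_n\mPhi\mGamma_n'$. The missing line is this: since $\mPhi$ is positive definite, $\mGamma_n\vtau_n$ lies in the range of $\mGamma_n\mPhi\mGamma_n'$ for every $n$. Hence $\vtau_n'\mGamma_n'M_n\mGamma_n\vtau_n$ takes the same value for every generalised inverse. Boundedness of $M_n$ absorbs the $o_p(1)$ remainders, and the Moore--Penrose inverse then converges because the rank stays at $p$. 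Also, your factorisation gives $\mGamma_n'$; in fact $\mGamma_n=\mH^{1/2}\mP_{\mH^{-1/2}\mA'}\mH^{-1/2}$.

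Lastly, nonsingularity of $\mXi$ is needed by the $r_{\min}$-normalisation in Step 4, not by $W_{1a}^*$ itself. Let $\mathbf{L}_n=(\mA\mH^{-1}\mA')^{-1}\mA\mH^{-1}$. Under $H_0$ one has exactly $W_{1a}^*=\vpsi_n'\mathbf{L}_n'(\mathbf{L}_n\mPhi\mathbf{L}_n')^{-1}\mathbf{L}_n\vpsi_n$, and $\mathbf{L}_n\mA'=\mI_p$ keeps the limit of full row rank. So your remark that this ``explains why'' the extra hypothesis is required overstates the case.
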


\section{Numerical results}\label{numerical}
%Throughout the paper, we have described the unfeasible version of the proposed statistics. 
In this section, 
%we will show how to practically compute the feasible version of the proposed statistics by replacing certain unknown quantities with suitable plug-in estimators. 
we first describe how to compute feasible versions of the proposed statistics by plugging in estimators for the unknown quantities.
Moreover, we investigate the statistics' finite sample properties in terms of size and power in the context of two data generating processes (DGPs). DGP1 is similar to that in \cite{MO22}. The model in DGP2 can be seen as a stylised version of a Cobb-Douglas production function with two endogenous variables corresponding to production factors, say labour and capital, as in \cite{NR08} \citep[see also][for a DGP with two endogenous variables]{CMS21}. Finally, we discuss the application of our tests to a real data example that employs genetic variants as instruments to study the effect of alcohol consumption on body mass index. The results of our tests are compared against the AR tests of \cite{CMS21} and \cite{MS22} with both naive and cross-fitted variance estimators. % for DGP1. In the case of DGP2, we compare the trinity only against the AR statistic with the naive variance estimator.\footnote{DGP2 includes two endogenous variables and, to the best of our knowledge, there is no cross-variance estimator that contemplates this scenario.} 
The general expression of the AR test is 
\begin{align*}
    AR(\vbeta)=\frac{1}{\sqrt k}\frac{\vvarepsilon'\mC \vvarepsilon}{\sqrt{\omega(\vbeta)}}
\end{align*}
where $\vvarepsilon=\vy-\mX\vbeta$, while $\mC$ in equation \eqref{jackmat} produces the AR test of \cite{CMS21} and $\mC$ in equation \eqref{jackmat2} produces the AR test of \cite{MS22}. 
The naive variance estimator for $\omega(\vbeta)$ is 
\begin{align*}
    \widehat\omega_{naive}(\vbeta)=\frac{2}{k}\vvarepsilon^{(2)'}\mC^{(2)} \vvarepsilon^{(2)}
\end{align*}
while the cross-fitted alternative is 
\begin{align*}
    \widehat\omega_{cf}(\vbeta)=\frac{2}{k}\vvarepsilon'\mB\mD_{\vvarepsilon}\mM\mD_{\vvarepsilon}\mB\vvarepsilon
\end{align*}
with 
\begin{align}\label{cfweights}
\mM=\left(\mbox{diag}(\mB)\mbox{diag}(\mB)'+\mB^{(2)}\right)^{(-1)}*\mC^{(2)},
\end{align}
where $\mbox{diag}(\mB)$ is a diagonal matrix containing the diagonal entries of $\mB$ in the main diagonal. For the version of \cite{CMS21} $\mB$ is as in equation \eqref{jackmat}, while for the version of \cite{MS22}, $\mB=\mI-\mP$. 

\subsection{Implementation}\label{implem}
To make the tests operational, we first need to estimate the relevant quantities $\mH$,  $\sigma^2$, $\mPhi$. Estimators for $\mH$,  $\sigma^2$, $\mPhi$ are already defined in \cite{BC15} \citep[see also][]{HNWCS12}. Specifically, we have $\widehat\mH(\vbeta)=\mX'\widehat\mC(\vbeta)\mX$, $\widehat\sigma^2(\vbeta)=\frac{1}{\tr(\mB)}(\vy-\mX\vbeta)'\mB(\vy-\mX\vbeta)$ and
{\small
\begin{align}\label{Phihat}
\widehat\mPhi(\vbeta)=\frac{1}{k}\left(\widehat\mF(\vbeta)+\widehat\mG(\vbeta)\right)=
\frac{1}{k}\left(\widetilde\mX(\vbeta)'\mC\mD^2_{\vvarepsilon}\mC\widetilde\mX(\vbeta)+ \widetilde\mX(\vbeta)'\mD_{\vvarepsilon}\mC^{(2)}\mD_{\vvarepsilon}\widetilde\mX(\vbeta)\right)
\end{align}
}%
with $\widetilde\mX(\vbeta)=\mX-\frac{\vvarepsilon\widehat\vsigma_{12}(\vbeta)}{\widehat\sigma^2(\vbeta)}$ and $\mD_{\vvarepsilon}$ a diagonal matrix with $\vvarepsilon=\vy-\mX\vbeta$ along the main diagonal.  We further define the estimator for the variance covariance matrix $\mSigma$ as 
 \begin{align*}
\widehat\mSigma(\vbeta)=
\begin{pmatrix}
1 & -\vbeta'\\
 \vzeros& \mI_g
 \end{pmatrix}
\widehat\mOmega
\begin{pmatrix}
1 & \vzeros\\
-\vbeta & \mI_g
\end{pmatrix}=
\frac{1}{k}
\begin{pmatrix}
1 & -\vbeta'\\
 \vzeros& \mI_g
 \end{pmatrix}
(\vy,\mX)'\mB(\vy,\mX)
\begin{pmatrix}
1 & \vzeros\\
-\vbeta & \mI_g
\end{pmatrix}
\end{align*}
where $\widehat\mOmega$ is an estimator of $\mOmega=\E[(\vy,\mX)'\mB(\vy,\mX)]$ \citep[see][for additional details and a proof of consistency for the estimator in equation \eqref{Phihat}]{BC15}. Thus, $\widehat\vsigma_{12}(\vbeta)$ and $\widehat\sigma^2(\vbeta)$ correspond to the off-diagonal entry and the top left entry of $\widehat\mSigma(\vbeta)$, respectively.
These quantities depend on the parameter vector $\vbeta$, which needs to be replaced by a suitable plug-in estimator. Wald statistics use the unrestricted estimator $\widehat\vbeta$, Lagrange multiplier statistics use either $\vbeta_0$ or the restricted estimator $\widetilde\vbeta$, depending on the type of null hypothesis being tested. Finally, distance statistics use the objective function evaluated at both the restricted and unrestricted estimators \citep[see, for example,][]{Engle84}; however, the multiplicative coefficient $r_{\min}\sigma^2$ is evaluated at the unrestricted estimator, say, $\widehat r_{\min}\widehat \sigma^2(\widehat \vbeta)$ with $\widehat r_{\min}=\lambda_{\min}\left(\widehat \mH(\widehat\vbeta)\right)$. The correction term in the distance statistic, when the test is chi-square distributed, is evaluated at $\widehat\vbeta$. 

When the asymptotic distribution is a weighted average of chi-squares, the calculation of the weights $\vvarphi$ is based on $\widehat\mH(\widehat\vbeta)$ and $\widehat\mPhi(\widehat\vbeta)$ for distance and Wald statistics and on $\widehat\mH(\vbeta_0)$ and $\widehat\mPhi(\vbeta_0)$ or $\widehat\mH(\widetilde\vbeta)$ and $\widehat\mPhi(\widetilde\vbeta)$ for the Lagrange multiplier statistic. Similarly, for distance and Wald statistics, the minimum eigenvalue $r_{\min}$ is estimated, as seen above, by $\widehat r_{\min}=\lambda_{\min}\left(\widehat \mH(\widehat\vbeta)\right)$, while for the $LM$ statistic we find $\widetilde r_{\min}=\lambda_{\min}\left(\widehat \mH(\vbeta_0)\right)$ or $\widetilde r_{\min}=\lambda_{\min}\left(\widehat \mH(\widetilde\vbeta)\right)$, depending on the type of null hypothesis under study.

The plug-in estimators presented in this section are general in the sense that apply both to tests based on quadratic forms (JIVE1, JIVE2) and tests based on ratios of quadratic forms (SJIVE, HLIM). However, for the former case and in light of the discussion in Section \ref{JIVE}, some simplifications apply. First of all, $\widehat\mH(\vbeta)=\widehat\mH=\mX'\mC\mX$, $\widehat r_{\min}=\widetilde r_{\min}=\lambda_{\min}\left(\widehat\mH\right)$ and $\widetilde\mX(\vbeta)=\mX$. This implies that the variance covariance matrix $\widehat\mPhi(\vbeta)$ depends on $\vbeta$ only through the matrix $\mD_{\vvarepsilon}$. Hence,
\begin{align*}
\widehat\mPhi(\vbeta)=\frac{1}{k}\left(\widehat\mF(\vbeta)+\widehat\mG(\vbeta)\right)=
\frac{1}{k}\left(
\mX'\mC\mD^2_{\vvarepsilon}\mC \mX+ \mX'\mD_{\vvarepsilon}\mC^{(2)}\mD_{\vvarepsilon}\mX\right).
\end{align*}
Tests that require a plug-in for $r_{\min}$ use $\widehat r_{\min}(=\widetilde r_{\min})$.
Moreover, $\sigma^2=1$ and no other quantity related to $\mSigma$ is required. It is also easy to see that the correction factors in the chi-square distributed distance statistics are zero. As explained in Section \ref{JIVE}, certain tests produce the same value. However, their corresponding p values are computed using different plug-ins for $\vvarphi$, thus generating different rejection rates in the numerical exercises. Specifically, recall from Theorem \ref{basicresult} that $\vvarphi$ is the vector of eigenvalues of $\mXi\mPhi$. Thus, for the $LM$ statistic $\vvarphi$ is computed by evaluating the estimators of $\mXi$ and $\mPhi$ at $\vbeta_0$ or $\widetilde\vbeta$. For distance and Wald statistics the estimators of $\mXi$ and $\mPhi$ are evaluated at $\widehat\vbeta$.   
Furthermore, the chi-square tests for the general linear restrictions require the generalised inverse of $\mGamma_n\mPhi\mGamma_n'$, for example,
{\small
\begin{align}\label{ginv}
\left(\mGamma_n \mPhi \mGamma_n'\right)^+&=\mA'(\mA\mA')^{-1}\left(\mA\mH^{-1}\mA'\right)\left(\mA\mH^{-1}\mPhi \mH^{-1}\mA'\right)^{-1}\left(\mA\mH^{-1}\mA'\right)(\mA\mA')^{-1}\mA\\
&=\mA'(\mA\mA')^{-1}\left(\left(\mA\mH^{-1}\mA'\right)^{-1}\mA\mH^{-1}\mPhi \mH^{-1}\mA'\left(\mA\mH^{-1}\mA'\right)^{-1}\right)^{-1}(\mA\mA')^{-1}\mA.
\end{align}
}%
Estimation of $\mH$ and $\mPhi$ follows the same guidelines as described above in this section. In Appendix \ref{completesim}, we also consider the case in which the test statistics use a cross-fit estimator of the variance instead of the expression in equation \eqref{Phihat}. The cross-fit estimator of $\mPhi$ is adapted from \cite{MS24}:
\begin{align}\label{Phicf}
\widehat\mPhi_{cf}(\vbeta)=\frac{1}{k}\left(
\widetilde{\mX}(\vbeta)'\mC\mD_{\vvarepsilon}\diag(\mB)^{-1}\mD_{\vv}\mC\widetilde{\mX}(\vbeta)
+
\widetilde{\mX}(\vbeta)'\mD_{\vv}\mM\mD_{\vv}\widetilde{\mX}(\vbeta)
\right)
\end{align}
where $\vv=\mB\vvarepsilon$. The definition of $\widetilde{\mX}(\vbeta)$ follows that of the standard case.\footnote{A proof of consistency for the estimator in equation \eqref{Phicf} is available upon request.} 

Finally, to make our statistics operational, we need  feasible counterparts of $\bm\vartheta_n$ and $\vxi_n$ (or $\vpsi_n$ and $\vtau_n$). It is easy to see that also in this case we can use suitable plug-in estimators for $\mH$ and $\vbeta$.

\subsection{Monte Carlo simulations}\label{MonteCarlo}
This section investigates the finite sample properties of the proposed test statistics for four specifications of the objective function and in the context of two DGPs. DGP1 considers a model with one endogenous variable, while DGP2 focusses on a model with two endogenous variables. Since under the alternative we allow for the presence of estimable nuisance parameters, the test statistics used in the simulations are those described in Theorem \ref{basicresulta} and Theorem \ref{lrestrictionchi2}.\footnote{To avoid notation clutter, in tables, figures and the discussion of the results we drop the subscript $a$.}
We study both size and power properties. For size, each experiment uses 5000 replications, whereas power experiments use 1000 replications. For the sake of conciseness, the power comparisons in Figure \ref{fig:mainDGP1_200_0.05_32} to Figure \ref{fig:mainDGP1_200_0.1_0.1} involve tests based on SJIVE and JIVE1. As mentioned later in this section, the competing alternatives (based on HLIM and JIVE2) tend to be dominated by SJIVE and JIVE1 statistics, particularly in the case of Lagrange multiplier statistics. 
Since there are no noticeable differences between $W_1/W_1^*$ and $W_2/W_2^*$, Table \ref{tab:DGP1_200} and Table \ref{tab:DGP2_200} feature only $W_1/W_1^*$. There are some very small differences between $D_1^*$ and $D_2^*$ that do not change the interpretation of the results, so in the tables and figures of this section we drop $D_2^*$. It is worth noting that the results in Section \ref{JIVE} on the behaviour of JIVE1/JIVE2-based tests find confirmation in our simulations. Furthermore, by direct inspection of the respective formulae, it is easy to see that, in general, the proposed statistics are computationally simpler to evaluate than AR statistics mainly due to the Hadamard square in the expression of the variance.  
The complete set of results can be found in Appendix \ref{completesim}. 
In what follows, we provide a description of the DGPs and some comments on the simulation results.

\subsubsection{DGP1}
This simulation setup considers the model 
\begin{align}\label{mcstruct}
\vy&=\vx_1\beta+\mX_2\vbeta_2+\vvarepsilon\\\label{mcfirst}
\vx_1&=\mZ\vpi+\vv=\mZ_1\vpi_1+\mZ_2\vpi_2+\vv.
\end{align} 
Here, $\vx_1$ and $\vy$ are $n$-dimensional vectors with $n=200$. Moreover, $\mX_2=\mZ_2$ is a
$(n\times g_2)$-dimensional matrix with $g=g_1+g_2=1+g_2$ being the number of regressors in equation \eqref{mcstruct}. We choose $g_2=5$; the first column of $\mX_2$ contains a vector of ones and the remaining columns are sampled independently from a standard normal. The $n\times k_1$ instrument matrix $\mZ_1$ contains in the first to third column the variables $\vz_1$, $\vz_1^{(2)}$ and $\vz_1^{(3)}$ where $\vz_1\sim\mathcal N(\vzeros,\mI_n)$. The
remaining columns are sampled from a standard normal. The number of
instruments $k=k_1+g_2$ with $k_1=\alpha n$. Clearly, the parameter $\alpha$ is the share of instruments with respect to the sample size and is chosen as
$\alpha\in\{0.05, 0.1\}$. The disturbance couple $(\vvarepsilon, \vv)$
is defined by the relationship \begin{align*}
\vv&=\rho_1\vvarepsilon+\sqrt{1-\rho_1^2}\vu_1\\
\vvarepsilon&=\left(\viota_n+\rho_2 \vz_1^{(2)}\right)*\vu_2%\\ % \rho_2 used to be \psi
\end{align*} 
with $\vu_1$ and $\vu_2$
being mutually independent and distributed as a standard normal. We set
$\beta=1$, $\vbeta_2=\viota_5$, $\rho_1=0.3$ and $\rho_2=0.2$. We also define
$\vpi=\pi\viota_k$ for $\viota_k$ a scalar $\pi$ chosen as \begin{align*}
\pi=\sqrt{\frac{(1+3\rho_1^2\rho_2^2)r}{k}}
\end{align*} with $r\in\{32,64\}$ being a measure of instrument strength. The null hypothesis corresponds to the linear restrictions 
\begin{align*}
H_0:\mA\vbeta=\va %\beta_1=\beta_{10}
\end{align*} 
with $\vbeta=(\beta,\vbeta_2')'$, $\mA=(1,\vzeros')$, a row vector, and $\va=\beta_{0}=1$.

\begin{table}[ht]
\centering
\caption{Size results for DGP1, $5\%$ nominal level and $n=200$. Results based on 5000 repetitions.}
\label{tab:DGP1_200}
\begin{footnotesize}
\begin{tabular}{lrrrrrrrrrrr}
\toprule
Method & $\alpha$ & $r$ & $D$ & $W_1$ & $LM$ & $D_1^{*}$ & $W_1^{*}$ & $LM^{*}$ & $AR_{naive}$ & $AR_{cf}$ \\
\midrule
\rowcolor{gray!6} SJIVE & 0.05 & 32 & 0.082 & 0.055 & 0.054 & 0.028 & 0.055 & 0.054 &       &      \\
SJIVE & 0.05 & 64 & 0.080 & 0.066 & 0.051 & 0.035 & 0.066 & 0.051 &       &      \\
\rowcolor{gray!6} SJIVE & 0.10 & 32 & 0.073 & 0.051 & 0.050 & 0.030 & 0.051 & 0.050 &       &      \\
SJIVE & 0.10 & 64 & 0.065 & 0.059 & 0.048 & 0.036 & 0.059 & 0.048 &       &      \\
\addlinespace
\rowcolor{gray!6} HLIM & 0.05 & 32 & 0.076 & 0.054 & 0.050 & 0.026 & 0.054 & 0.050 &       &      \\
HLIM & 0.05 & 64 & 0.073 & 0.062 & 0.051 & 0.033 & 0.062 & 0.051 &       &      \\
\rowcolor{gray!6} HLIM & 0.10 & 32 & 0.069 & 0.049 & 0.046 & 0.026 & 0.049 & 0.046 &       &      \\
HLIM & 0.10 & 64 & 0.060 & 0.052 & 0.046 & 0.031 & 0.052 & 0.046 &       &      \\
\addlinespace
\rowcolor{gray!6} JIVE1 & 0.05 & 32 & 0.028 & 0.028 & 0.054 & 0.054 & 0.028 & 0.054 & 0.008 & 0.019 \\
JIVE1 & 0.05 & 64 & 0.049 & 0.049 & 0.052 & 0.052 & 0.049 & 0.052 & 0.008 & 0.019 \\
\rowcolor{gray!6} JIVE1 & 0.10 & 32 & 0.019 & 0.019 & 0.051 & 0.051 & 0.019 & 0.051 & 0.015 & 0.038 \\
JIVE1 & 0.10 & 64 & 0.036 & 0.036 & 0.048 & 0.048 & 0.036 & 0.048 & 0.015 & 0.038 \\
\addlinespace
\rowcolor{gray!6} JIVE2 & 0.05 & 32 & 0.023 & 0.023 & 0.057 & 0.057 & 0.023 & 0.057 & 0.007 & 0.011 \\
JIVE2 & 0.05 & 64 & 0.043 & 0.043 & 0.050 & 0.050 & 0.043 & 0.050 & 0.007 & 0.011 \\
\rowcolor{gray!6} JIVE2 & 0.10 & 32 & 0.016 & 0.016 & 0.054 & 0.054 & 0.016 & 0.054 & 0.014 & 0.018 \\
JIVE2 & 0.10 & 64 & 0.031 & 0.031 & 0.047 & 0.047 & 0.031 & 0.047 & 0.014 & 0.018 \\
\bottomrule
\end{tabular}
\end{footnotesize}
\end{table}

\begin{figure}[ht]
    \centering
    % Replace 'chibar2' with 'chi2' for the second set of figures as needed
%    \begin{subfigure}[b]{0.47\textwidth}
%        \includegraphics[width=\textwidth]{mainTrinity_chibar2_DGP1_200_0.05_32_SJIVE.pdf}
%        \caption{SJIVE, $\bar\chi^2$}
%    \end{subfigure}
%    \hfill
%    \begin{subfigure}[b]{0.47\textwidth}
%        \includegraphics[width=\textwidth]{mainTrinity_chibar2_DGP1_200_0.05_32_JIVE1.pdf}
%        \caption{JIVE1, $\bar\chi^2$}
%    \end{subfigure}
    
    \begin{subfigure}[b]{0.47\textwidth}
        \includegraphics[width=\textwidth]{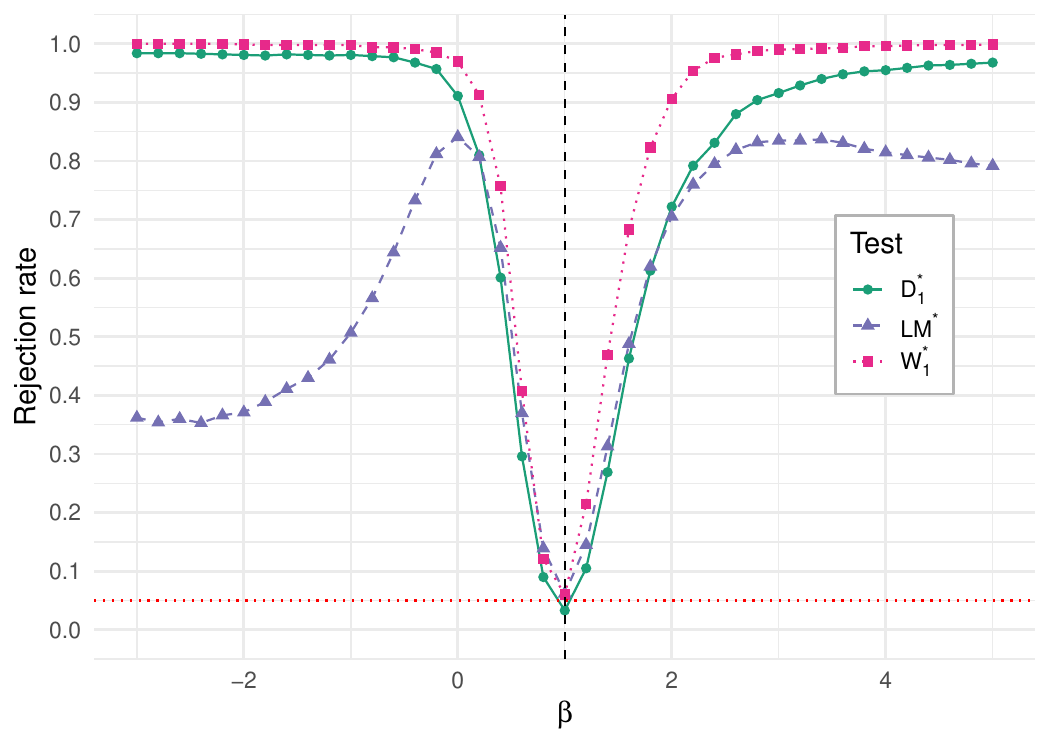}
        \caption{SJIVE}
    \end{subfigure}
    \hfill
    \begin{subfigure}[b]{0.47\textwidth}
       \includegraphics[width=\textwidth]{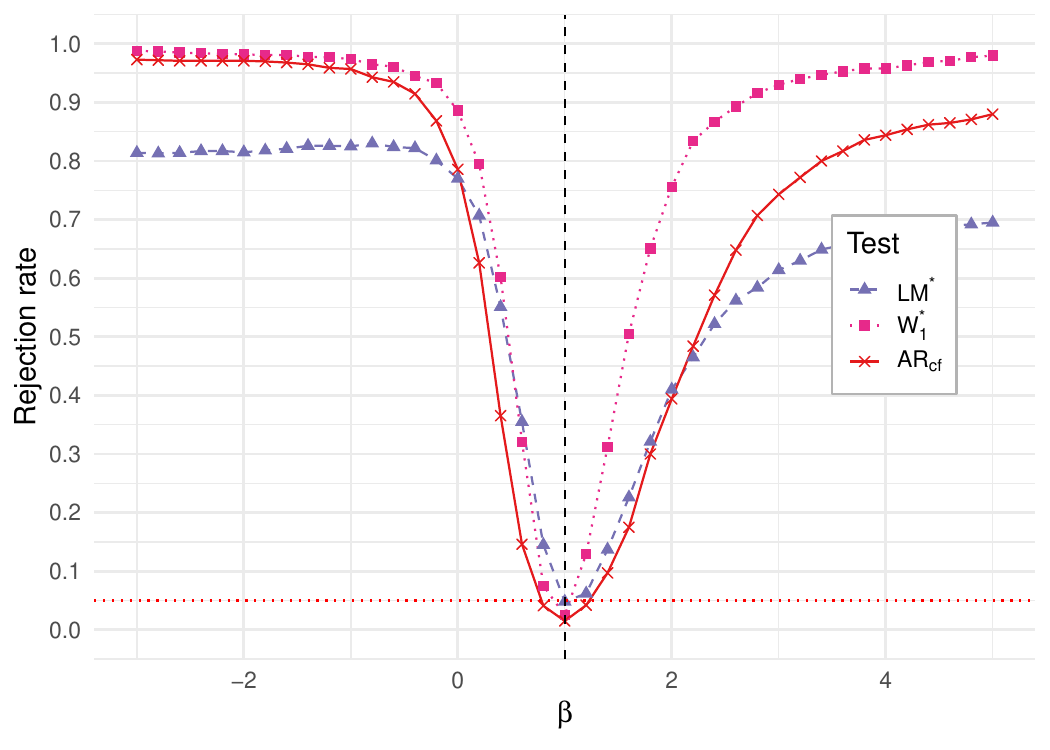}
       \caption{JIVE1}
    \end{subfigure}
    %\begin{subfigure}[b]{0.47\textwidth}
    %    \includegraphics[width=\textwidth]{mainTrinity_chi2_DGP1_200_0.05_32_JIVE1.pdf}
    %    \caption{JIVE1, $\chi^2$}
    %\end{subfigure}

    %\begin{subfigure}[b]{0.47\textwidth}
    %   \includegraphics[width=\textwidth]{mainAR_N_DGP1_200_0.05_32_JIVE1.pdf}
    %    \caption{JIVE1, $\mathcal{N}(0,1)$}
    %\end{subfigure}
    
    \caption{Power curves for DGP1 ($n=200$, $\alpha = 0.05$, $r = 32$). Results based on 1000 repetitions. The horizontal dotted red line denotes the $5\%$ nominal rejection level, while the vertical dotted black line corresponds to $\beta=1$. Panel (a) plots $D_1^*$, $LM^*$, $W_1^*$ based on the SJIVE objective function; panel (b) plots $W_1^*$, $LM^*$, $AR_{cf}$ based on the JIVE1 objective function. The statistics are all asymptotically $\chi^2_1$ distributed, besides $AR_{cf}$ which is $\mathcal N(0,1)$ in the limit.
    }
    \label{fig:mainDGP1_200_0.05_32}
\end{figure}

\begin{figure}[ht]
    \centering
    % Replace 'chibar2' with 'chi2' for the second set of figures as needed
    %\begin{subfigure}[b]{0.47\textwidth}
    %    \includegraphics[width=\textwidth]{mainTrinity_chibar2_DGP1_200_0.1_32_SJIVE.pdf}
    %    \caption{SJIVE, $\bar\chi^2$}
    %\end{subfigure}
    %\hfill
    %\begin{subfigure}[b]{0.47\textwidth}
    %    \includegraphics[width=\textwidth]{mainTrinity_chibar2_DGP1_200_0.1_32_JIVE1.pdf}
    %    \caption{JIVE1, $\bar\chi^2$}
    %\end{subfigure}
    
    \begin{subfigure}[b]{0.47\textwidth}
        \includegraphics[width=\textwidth]{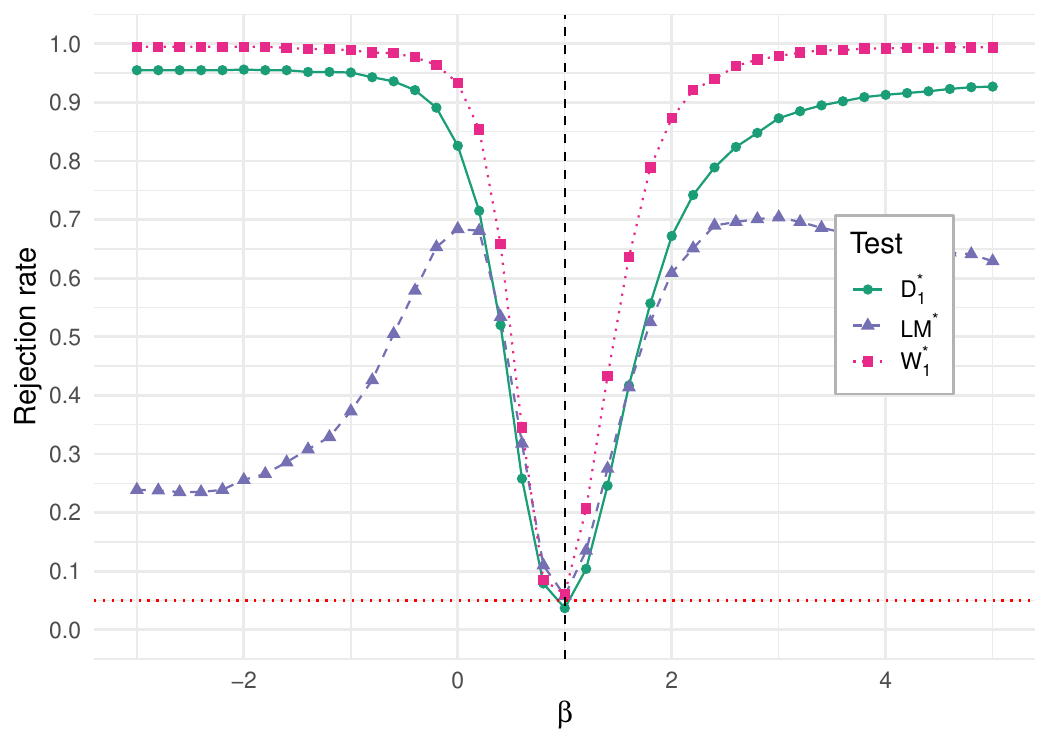}
        \caption{SJIVE}
    \end{subfigure}
    \hfill
    \begin{subfigure}[b]{0.47\textwidth}
       \includegraphics[width=\textwidth]{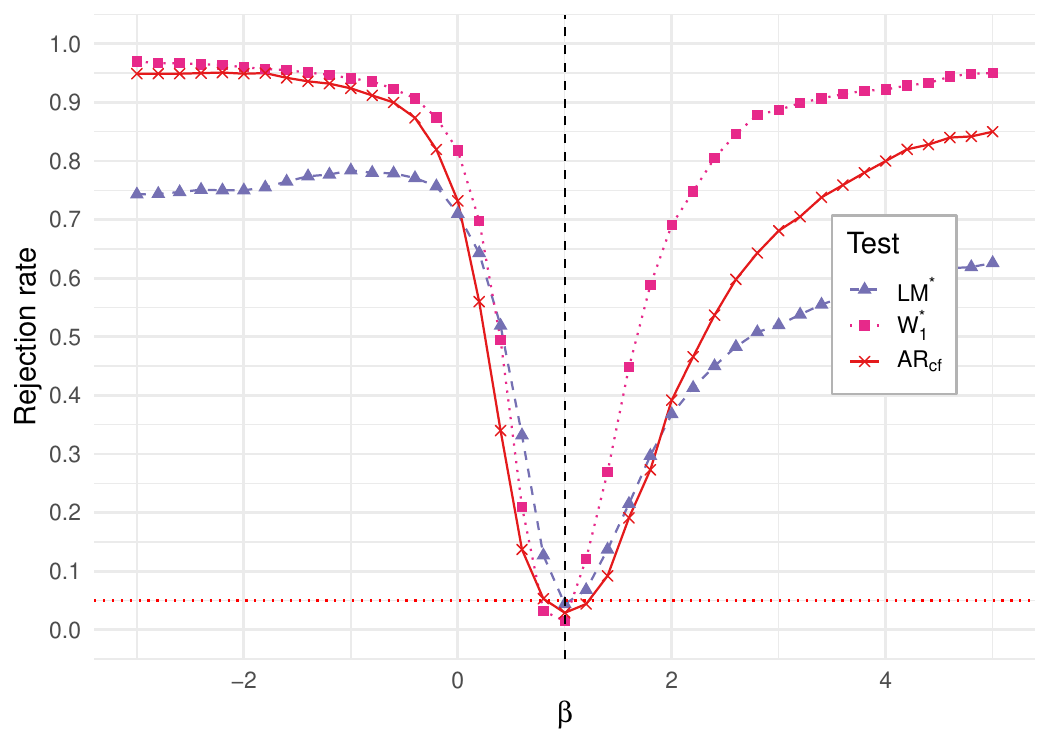}
       \caption{JIVE1}
    \end{subfigure}
    %\begin{subfigure}[b]{0.47\textwidth}
    %    \includegraphics[width=\textwidth]{mainTrinity_chi2_DGP1_200_0.1_32_JIVE1.pdf}
    %    \caption{JIVE1, $\chi^2$}
    %\end{subfigure}

    %\begin{subfigure}[b]{0.47\textwidth}
    %   \includegraphics[width=\textwidth]{mainAR_N_DGP1_200_0.1_32_JIVE1.pdf}
    %    \caption{JIVE1, $\mathcal{N}(0,1)$}
    %\end{subfigure}
    
    \caption{Power curves for DGP1 ($n=200$, $\alpha = 0.1$, $r = 32$). Results based on 1000 repetitions. The horizontal dotted red line denotes the $5\%$ nominal rejection level, while the vertical dotted black line corresponds to $\beta=1$. Panel (a) plots $D_1^*$, $LM^*$, $W_1^*$ based on the SJIVE objective function; panel (b) plots $W_1^*$, $LM^*$, $AR_{cf}$ based on the JIVE1 objective function. The statistics are all asymptotically $\chi^2_1$ distributed, besides $AR_{cf}$ which is $\mathcal N(0,1)$ in the limit.
    }
    \label{fig:mainDGP1_200_0.1_32}
\end{figure}

\clearpage

\subsubsection{DGP2}
The model in the second DGP consists of two endogenous variables and reads
\begin{align}
\vy = \vx_1\beta_1 + \vx_2\beta_2 + \mX_3\vbeta_3 + \vvarepsilon
\end{align}
where  $\mX_3$ is a $n\times 2$ matrix with the first column being a vector of ones and the second being sampled from a standard normal, $\vvarepsilon=0.2\vu+\ve$ with $\ve\sim\mathcal N(\vzeros,0.3^2\mI_n)$ and $\vu\sim\mathcal N(\vzeros,\mI_n)$. Furthermore, 
\begin{align}
\vx_j = \mZ_j\vpi_j + \delta_j\vu+\vv_j, \;\;\; j=1,2
\end{align}
where $\delta_1=0.5$, $\delta_2=0.4$, $\vv_j\sim\mathcal N(\vzeros, \sigma_{v_j}^2\mI_n)$ with $\sigma_{v_j}=0.3$. Furthermore, $\mZ_j\sim\mathcal N(\vzeros,\mI_n)$. Finally, we have $\vpi_j=\sqrt{\frac{r}{k_j(1-r)}}\viota_{k_j}$ and $k_j=\alpha n$, $n=200$ and $\alpha\in\{0.05,0.1\}$. In this case, the null hypothesis 
\begin{align*}
H_0:\mA\vbeta=\va 
\end{align*}
has $\mA=(1,1,\vzeros')$, $\vbeta=(\beta_1,\beta_2,\vbeta_3')'$ and $\va=1$. The true values for the parameters of the endogenous variables are $\beta_1=0.3$ and $\beta_2=0.7$, while $\vbeta_3=\viota_2$.

\begin{table}[ht]
\centering
\caption{Size results for DGP2, $5\%$ nominal level and $n=200$. Results based on 5000 repetitions.}
\label{tab:DGP2_200}
\begin{footnotesize}
\begin{tabular}{lrrrrrrrrrrr}
\toprule
Method & $\alpha$ & $r$ & $D$ & $W_1$ & $LM$ & $D^*_1$ & $W_1^*$ & $LM^*$ & $AR_{naive}$ & $AR_{cf}$ \\
\midrule
\cellcolor{gray!6}SJIVE & \cellcolor{gray!6}0.05 & \cellcolor{gray!6}0.1 & \cellcolor{gray!6}0.066 & \cellcolor{gray!6}0.055 & \cellcolor{gray!6}0.054 & \cellcolor{gray!6}0.047 & \cellcolor{gray!6}0.055 & \cellcolor{gray!6}0.054 & \\
SJIVE & 0.05 & 0.2 & 0.062 & 0.056 & 0.056 & 0.051 & 0.056 & 0.056 & \\
\cellcolor{gray!6}SJIVE & \cellcolor{gray!6}0.10 & \cellcolor{gray!6}0.1 & \cellcolor{gray!6}0.067 & \cellcolor{gray!6}0.052 & \cellcolor{gray!6}0.051 & \cellcolor{gray!6}0.045 & \cellcolor{gray!6}0.052 & \cellcolor{gray!6}0.051 & \\
SJIVE & 0.10 & 0.2 & 0.061 & 0.052 & 0.051 & 0.048 & 0.052 & 0.051 & \\
\addlinespace
\cellcolor{gray!6}HLIM & \cellcolor{gray!6}0.05 & \cellcolor{gray!6}0.1 & \cellcolor{gray!6}0.063 & \cellcolor{gray!6}0.052 & \cellcolor{gray!6}0.052 & \cellcolor{gray!6}0.044 & \cellcolor{gray!6}0.052 & \cellcolor{gray!6}0.052 & \\
HLIM & 0.05 & 0.2 & 0.060 & 0.056 & 0.051 & 0.046 & 0.056 & 0.051 & \\
\cellcolor{gray!6}HLIM & \cellcolor{gray!6}0.10 & \cellcolor{gray!6}0.1 & \cellcolor{gray!6}0.072 & \cellcolor{gray!6}0.062 & \cellcolor{gray!6}0.057 & \cellcolor{gray!6}0.044 & \cellcolor{gray!6}0.062 & \cellcolor{gray!6}0.057 & \\
HLIM & 0.10 & 0.2 & 0.063 & 0.061 & 0.055 & 0.050 & 0.061 & 0.055 & \\
\addlinespace
\cellcolor{gray!6}JIVE1 & \cellcolor{gray!6}0.05 & \cellcolor{gray!6}0.1 & \cellcolor{gray!6}0.030 & \cellcolor{gray!6}0.030 & \cellcolor{gray!6}0.056 & \cellcolor{gray!6}0.056 & \cellcolor{gray!6}0.030 & \cellcolor{gray!6}0.056 & \cellcolor{gray!6}0.028& \cellcolor{gray!6}0.045 \\
JIVE1 & 0.05 & 0.2 & 0.044 & 0.044 & 0.053 & 0.053 & 0.044 & 0.053 & 0.028 & 0.046 \\
\cellcolor{gray!6}JIVE1 & \cellcolor{gray!6}0.10 & \cellcolor{gray!6}0.1 & \cellcolor{gray!6}0.035 & \cellcolor{gray!6}0.035 & \cellcolor{gray!6}0.060 & \cellcolor{gray!6}0.060 & \cellcolor{gray!6}0.035 & \cellcolor{gray!6}0.060 & \cellcolor{gray!6}0.036 & \cellcolor{gray!6}0.085 \\
JIVE1 & 0.10 & 0.2 & 0.040 & 0.040 & 0.056 & 0.056 & 0.040 & 0.056 & 0.030 & 0.085 \\
\addlinespace
\cellcolor{gray!6}JIVE2 & \cellcolor{gray!6}0.05 & \cellcolor{gray!6}0.1 & \cellcolor{gray!6}0.030 & \cellcolor{gray!6}0.030 & \cellcolor{gray!6}0.056 & \cellcolor{gray!6}0.056 & \cellcolor{gray!6}0.030 & \cellcolor{gray!6}0.056 & \cellcolor{gray!6}0.028 & \cellcolor{gray!6}0.034 \\
JIVE2 & 0.05 & 0.2 & 0.043 & 0.043 & 0.053 & 0.053 & 0.043 & 0.053 & 0.028 & 0.034 \\
\cellcolor{gray!6}JIVE2 & \cellcolor{gray!6}0.10 & \cellcolor{gray!6}0.1 & \cellcolor{gray!6}0.035 & \cellcolor{gray!6}0.035 & \cellcolor{gray!6}0.059 & \cellcolor{gray!6}0.059 & \cellcolor{gray!6}0.035 & \cellcolor{gray!6}0.059 & \cellcolor{gray!6}0.037  & \cellcolor{gray!6}0.046\\
JIVE2 & 0.10 & 0.2 & 0.041 & 0.041 & 0.057 & 0.057 & 0.041 & 0.057 & 0.031 & 0.046\\
\bottomrule
\end{tabular}
\end{footnotesize}

\vspace{1ex}
%{\raggedright \footnotesize {\it Note}: Only $AR_{naive}$ is reported as corrected Anderson--Rubin statistics are unavailable when two endogenous variables are present. \par}
\end{table}

\begin{figure}[ht]
    \centering
    % Replace 'chibar2' with 'chi2' for the second set of figures as needed
    %\begin{subfigure}[b]{0.47\textwidth}
    %    \includegraphics[width=\textwidth]{mainTrinity_chibar2_DGP2_200_0.05_0.1_SJIVE.pdf}
    %    \caption{SJIVE, $\bar\chi^2$}
    %\end{subfigure}
    %\hfill
    %\begin{subfigure}[b]{0.47\textwidth}
    %    \includegraphics[width=\textwidth]{mainTrinity_chibar2_DGP2_200_0.05_0.1_JIVE1.pdf}
    %    \caption{JIVE1, $\bar\chi^2$}
    %\end{subfigure}
    
    \begin{subfigure}[b]{0.47\textwidth}
        \includegraphics[width=\textwidth]{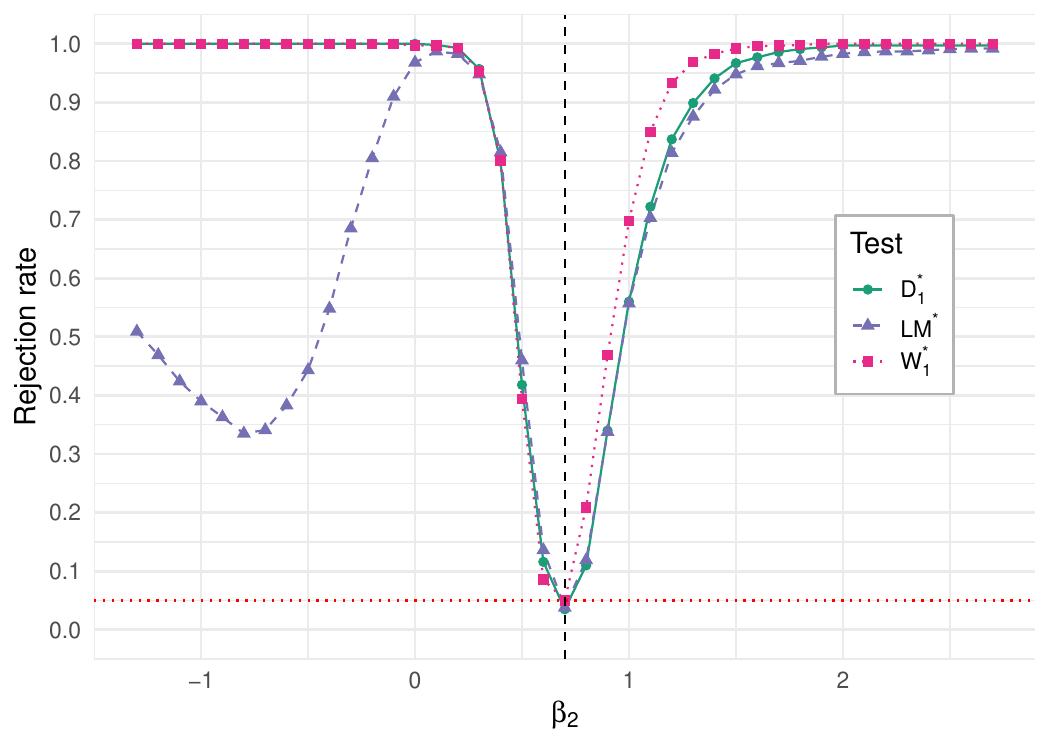}
        \caption{SJIVE}
    \end{subfigure}
    \hfill
    \begin{subfigure}[b]{0.47\textwidth}
        \includegraphics[width=\textwidth]{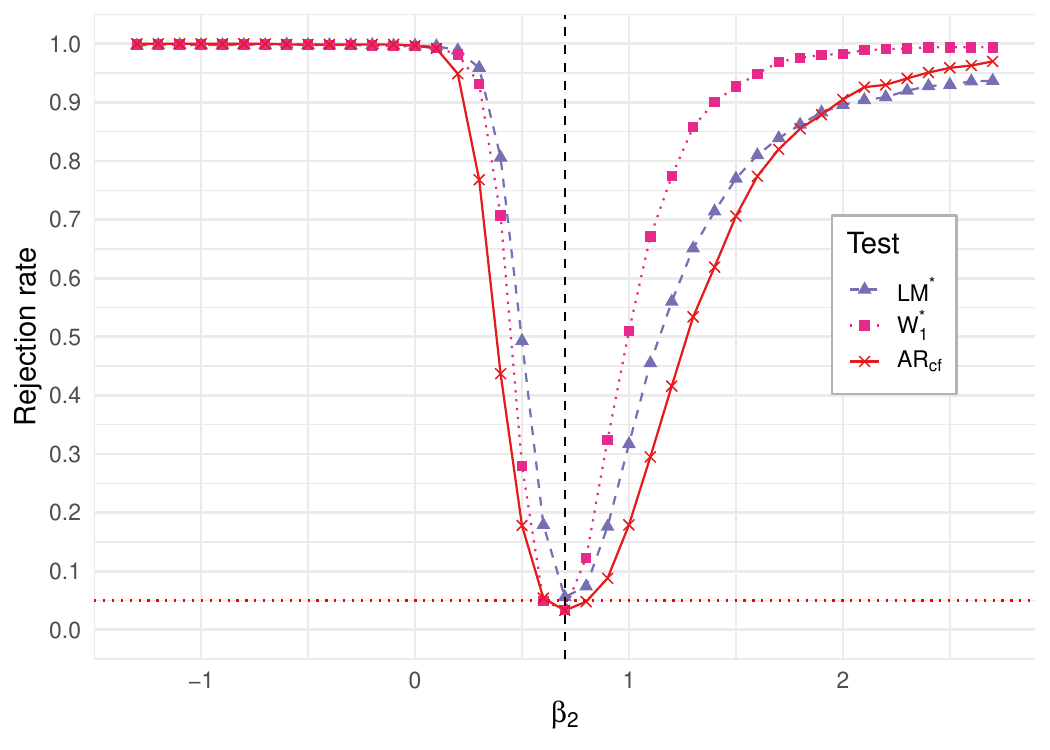}
        \caption{JIVE1}
    \end{subfigure}
    %\begin{subfigure}[b]{0.47\textwidth}
    %    \includegraphics[width=\textwidth]{mainTrinity_chi2_DGP2_200_0.05_0.1_JIVE1.pdf}
    %    \caption{JIVE1, $\chi^2$}
    %\end{subfigure}

    %\begin{subfigure}[b]{0.47\textwidth}
    %   \includegraphics[width=\textwidth]{mainAR_N_DGP2_200_0.05_0.1_JIVE1.pdf}
    %    \caption{JIVE1, $\mathcal{N}(0,1)$}
    %\end{subfigure}
    
    \caption{Power curves for DGP2 ($n=200$, $\alpha = 0.05$, $r = 0.1$). Results based on 1000 repetitions. The horizontal dotted red line denotes the $5\%$ nominal rejection level, while the vertical dotted black line corresponds to $\beta_2=0.7$. Panel (a) plots $D_1^*$, $LM^*$, $W_1^*$ based on the SJIVE objective function; panel (b) plots $W_1^*$, $LM^*$, $AR_{cf}$ based on the JIVE1 objective function. The statistics are all asymptotically $\chi^2_1$ distributed, besides $AR_{cf}$ which is $\mathcal N(0,1)$ in the limit.
    }
    \label{fig:mainDGP2_200_0.05_0.1}
\end{figure}

\begin{figure}[ht]
    \centering
    % Replace 'chibar2' with 'chi2' for the second set of figures as needed
    %\begin{subfigure}[b]{0.47\textwidth}
    %    \includegraphics[width=\textwidth]{mainTrinity_chibar2_DGP2_200_0.1_0.1_SJIVE.pdf}
    %    \caption{SJIVE, $\bar\chi^2$}
    %\end{subfigure}
    %\hfill
    %\begin{subfigure}[b]{0.47\textwidth}
    %    \includegraphics[width=\textwidth]{mainTrinity_chibar2_DGP2_200_0.1_0.1_JIVE1.pdf}
    %    \caption{JIVE1, $\bar\chi^2$}
    %\end{subfigure}
    
    \begin{subfigure}[b]{0.47\textwidth}
        \includegraphics[width=\textwidth]{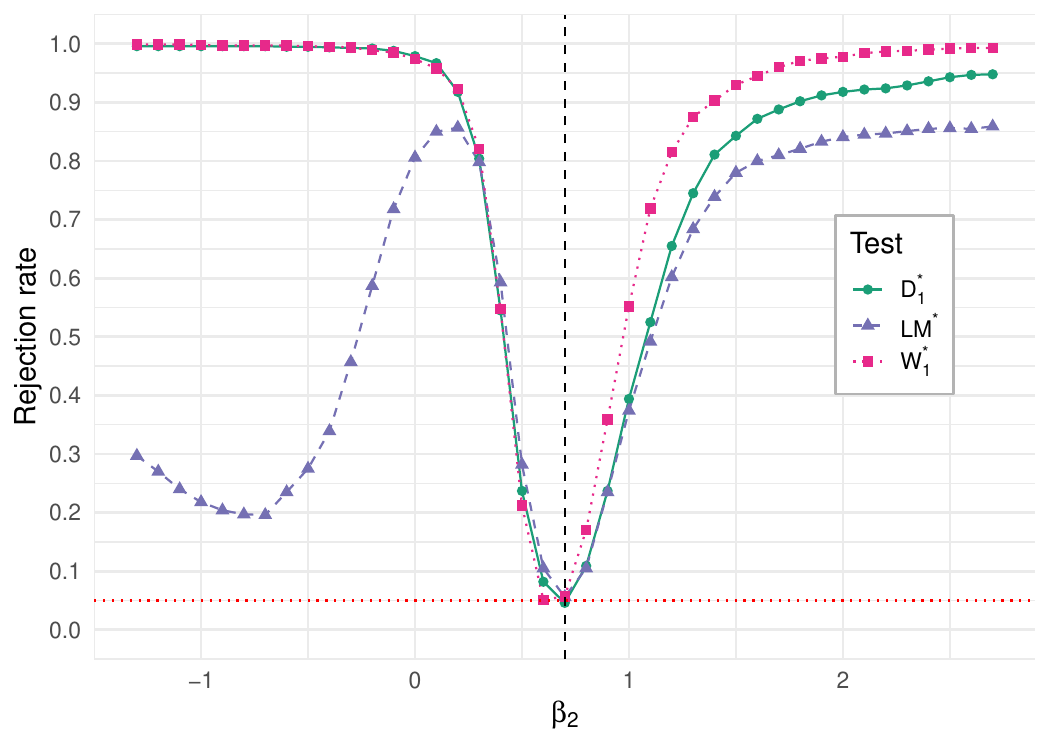}
        \caption{SJIVE}
    \end{subfigure}
    \hfill
    \begin{subfigure}[b]{0.47\textwidth}
        \includegraphics[width=\textwidth]{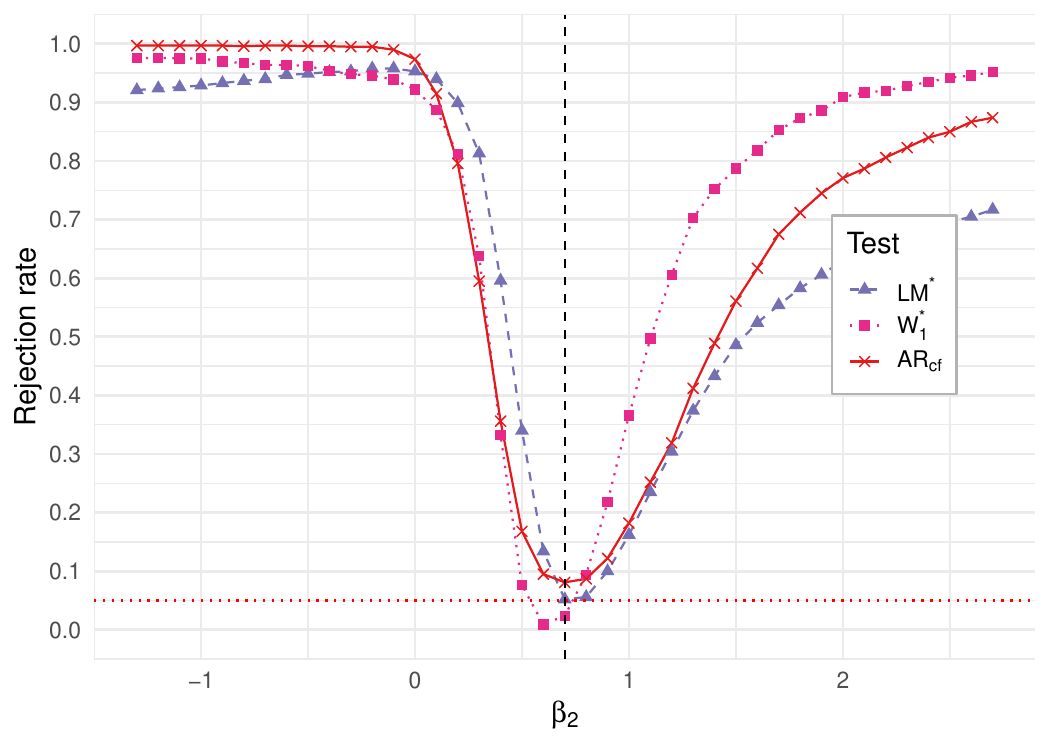}
        \caption{JIVE1}
    \end{subfigure}    
    %\begin{subfigure}[b]{0.47\textwidth}
    %    \includegraphics[width=\textwidth]{mainTrinity_chi2_DGP2_200_0.1_0.1_JIVE1.pdf}
    %    \caption{JIVE1, $\chi^2$}
    %\end{subfigure}

    %\begin{subfigure}[b]{0.47\textwidth}
    %   \includegraphics[width=\textwidth]{mainAR_N_DGP2_200_0.1_0.1_JIVE1.pdf}
    %    \caption{JIVE1, $\mathcal{N}(0,1)$}
    %\end{subfigure}
    
    \caption{Power curves for DGP2 ($n=200$, $\alpha = 0.1$, $r = 0.1$). Results based on 1000 repetitions. The horizontal dotted red line denotes the $5\%$ nominal rejection level, while the vertical dotted black line corresponds to $\beta=0.7$. Panel (a) plots $D_1^*$, $LM^*$, $W_1^*$ based on the SJIVE objective function; panel (b) plots $W_1^*$, $LM^*$, $AR_{cf}$ based on the JIVE1 objective function. The statistics are all asymptotically $\chi^2_1$ distributed, besides $AR_{cf}$ which is $\mathcal N(0,1)$ in the limit.}
    \label{fig:mainDGP1_200_0.1_0.1}
\end{figure}

\subsubsection{Comments on simulations}\label{comments}

The performance of the tests is measured in terms of rejection rates compared to a 5\% nominal rate. The asymptotically chi-square distributed tests follow a $\chi^2_1$, while the estimated version of $\vvarphi$ has a positive number in the first position and $g-1$ zeros in the remaining entries. The p values for the chi-bar-squared tests are computed using the {\tt R} function {\tt farebrother} \citep{DL10}. For power,  we focus on weak instruments. Namely, $r=32$ for DGP1 and $r=0.1$ for DGP2. The power plots for experiments with stronger instruments can be found in Appendix \ref{completesim} along with results based on tests that use cross-fit variances. As previously mentioned in this section, since some test statistics show the same type of behaviour, they are excluded from the plots in Figures \ref{fig:mainDGP1_200_0.05_32} to \ref{fig:mainDGP1_200_0.1_0.1}.
Also in this case, the full set of simulation results can be found in Appendix \ref{completesim}.

\paragraph{Size}
The size results in Table \ref{tab:DGP1_200} and Table \ref{tab:DGP2_200} show similar scenarios. 
In general we notice that size distortions oscillate between about $+3\%$ to about $-4\%$ with respect to the $5\%$ nominal size. The former result is observed for the $D$ statistic while the latter is observed for the $AR$ statistic with naive variance estimator. Both cases pertain to DGP1.

More specifically, we notice that the $LM$ and $LM^*$ tests show the same rejection rates. They consistently maintain good size properties relative to other statistics under the SJIVE/HLIM specification, particularly in the case of DGP1; under DGP2, they have a slight tendency to overreject.  Wald statistics perform similarly under both asymptotic approximations, yet they tend to overreject under the SJIVE/HLIM objective functions. The opposite occurs when JIVE1 or JIVE2 is used. As in the case of Lagrange multiplier tests both Wald statistics show the same rejection rates.
In contrast, statistics $D$ and $D^*$ display noticeable differences: $D$ tends to overreject relative to the nominal 5\% level, whereas $D^*$ remains conservative, but close to the 5\% nominal level. The proposed tests are also compared to Anderson-Rubin statistics $AR_{naive}$ and $AR_{cf}$ \citep{MS22, CMS21}. We observe that the Anderson-Rubin statistics consistently underreject. For DGP1, the tests with cross-fit variance ($AR_{cf}$) remain conservative but show some size improvements with respect to the naive counterpart.%\footnote{Table \ref{tab:DGP2_200} does not feature a $AR_{cf}$ test because, to the best of our knowledge, there are no such tests for models with multiple endogenous regressors.}

\paragraph{Power}
The power results are shown in Figure \ref{fig:mainDGP1_200_0.05_32} to Figure \ref{fig:mainDGP1_200_0.1_0.1}. The figures include results for tests based on the SJIVE and JIVE1 objective functions only, as they are similar to and, in some cases, improve over tests based on HLIM and JIVE2. For similar reasons, we exclude tests that are asymptotically distributed as a chi-bar-square. In our experiments, these power differences are noticeable but generally small (see figures in Appendix \ref{completesim}), with the exception of the Lagrange multiplier tests. It is known that Lagrange multiplier tests may lose power away from the true value.\footnote{\cite{Andrews16} reports this phenomenon in the context of combination of tests.} In our simulations, we observe a significant power drop for $LM$ tests based on both the SJIVE and HLIM objective functions and for both asymptotic distributions (see panels (a) and (b) of the figures in Appendix \ref{completesim}). $LM$ tests based on JIVE1 and JIVE2 show some power loss but not as large as the aforementioned cases (see panels (c) and (d) of the figures in Appendix \ref{completesim}). Interestingly, the power loss observed in the case of JIVE1 and in particular in the case of SJIVE is visibly smaller than the corresponding cases for JIVE2 and HLIM, respectively. Intuitively, this may be attributed to the different jackknife scheme used for SJIVE/JIVE1 tests compared to HLIM/JIVE2 tests.

Of the three types of statistics in Figure \ref{fig:mainDGP1_200_0.05_32} to Figure \ref{fig:mainDGP1_200_0.1_0.1}, we notice that the Wald tests tend to have the largest power in general as we move away from the truth. The Lagrange multiplier tests, despite having good size properties in some cases, tend to have the worst power performance. The distance statistics seem to be somewhere in the middle, showing in a number of cases power properties comparable to those of the Wald tests and good size properties in the case of $D^*$.

Interestingly, while Wald statistics generally demonstrate good size properties and high power away from the true parameter, their power may be low near the true value. This is particularly evident for Wald tests based on JIVE1 and JIVE2 objective functions (see panels (b) and (d) in Figure \ref{fig:mainDGP1_200_0.05_32} to Figure \ref{fig:mainDGP1_200_0.1_0.1}. This phenomenon is much less visible for SJIVE and HLIM objective functions. In general, the $D^*$ statistic appears to strike a good balance between size control and power. When compared to Anderson-Rubin statistics, apart from the pathological case of $LM$ and $LM^*$, our proposed tests generally show clear improvements in terms of power. 

It is useful to contrast our results with some of the findings of the weak instrument literature developed in the 1990s and early 2000s (and related recent contributions). A key result is that conventional IV estimation and inference can be unreliable under weak identification. In particular, under different definitions of weakness -- whether near non-identification as in \cite{Dufour97} or local-to-zero as in \cite{SS97} -- Wald statistics fail to provide valid inference. This has set the ground for the development of alternative statistics that are robust to the presence of weak instruments.
Under the local-to-zero asymptotic framework of \cite{SS97}, \cite{WZ98} show that certain LM and LR statistics can be boundedly pivotal, yielding conservative yet asymptotically valid inference. A complementary approach is to construct statistics with pivotal limiting distributions as in the case of Kleibergen's K statistic \citep{Kleibergen02} that also address the limitations of the AR test in the overidentified case. 
More recently, \cite{KN23} highlight additional failures of 2SLS-based t-tests and recommend robust alternatives such as the AR test in exactly identified models and LIML estimation and CLR tests in overidentified settings. Our contribution can be also interpreted as a complement to this literature in a many instruments context.

\subsection{Empirical application: the effect of alcohol consumption on BMI in the UK Biobank}

\subsubsection{Data}

We apply our proposed test statistics to examine the causal effect of alcohol consumption on body mass index (BMI) using data from the UK Biobank (UKB). The UK Biobank is a large-scale biomedical database containing extensive genetic and health information for approximately 500,000 individuals aged 40--69 in the United Kingdom \citep{Sudlow2015, Bycroft2018}. 

Our starting sample consists of 337,260 participants who were included in the genetic principal component analysis and classified as genetically Caucasian, selected from the 501,932 individuals available in the UK Biobank. Mendelian randomisation exploits genetic variants as instrumental variables to identify causal effects. Because genetic variants are randomly allocated at conception, they provide a source of exogenous variation that mimics experimental randomisation and can help overcome confounding and reverse causality that typically bias observational estimates of alcohol's health effects \citep{DaveySmith2003,Burgess2015}. In this application, genetic variants associated with alcohol consumption serve as instruments for alcohol intake.

We initially extract 98 single nucleotide polymorphisms (SNPs) previously identified as associated with alcohol consumption. To reduce linkage disequilibrium and ensure instrument independence, we perform a pruning procedure that removes SNPs with pairwise correlation exceeding 0.15, yielding a final set of 91 instruments.

Following \cite{Topiwala2022}, alcohol consumption is constructed as total weekly intake in grams of ethanol. Participants report consumption either weekly or monthly across several beverage categories, including red wine, champagne or white wine, beer or cider, spirits, fortified wine, and other alcoholic beverages. Monthly quantities are converted to weekly consumption by dividing by 4.3. Beverage-specific unit conversions are then applied, using factors such as 1.7 units per glass for wine, 2.4 units per pint for beer or cider, and one unit per measure of spirits. Alcohol units are converted to grams using the standard conversion of one UK unit corresponding to eight grams of ethanol.

The dataset undergoes several cleaning steps. Observations with missing or unknown alcohol consumption information are removed, resulting in the exclusion of 74,717 participants. Outliers in alcohol consumption are then removed using Tukey's fences with parameter $k = 1.5$, where the first and third quartiles are $Q_1 = 54.4$ and $Q_3 = 200$. This procedure excludes an additional 12,991 observations. After dropping all remaining observations with missing values, the final estimation sample consists of 97,987 individuals.

The final dataset used for estimation includes the outcome variable BMI, the endogenous variable total weekly alcohol consumption measured in grams per week, the 91 genetic variants used as instruments, and the following control variables: sex, age at recruitment, age squared, indicators for the UK Biobank assessment center, and the first ten genetic principal components to account for population stratification.

\subsubsection{Results}
The results in Table \ref{tab:jivi_ar_results} show uniform and extremely strong rejection of the null hypothesis across all methods and test statistics. 
All JIVI-based procedures yield p values at or below $0.0002$, irrespective of whether inference is conducted using the $\bar{\chi}^2$ or the standard $\chi^2$ reference distribution. 
This uniformity indicates that inference in the full sample is largely insensitive to both the choice of statistic and the choice of asymptotic approximation.

Among the jackknife-based estimators, SJIVE and HLIM deliver virtually identical outcomes across all statistics. 
Similarly, JIVE1 and JIVE2 produce indistinguishable results, with p values of $0.0001$ for all trinity statistics and $0.0000$ for both Anderson--Rubin implementations. %\textcolor{red}{\bf the size of the estimates is different between the two types of objective functions.}

Comparisons across statistics within each method reveal negligible differences between $D$, $W_1$, $W_2$, and $LM$, as well as their starred counterparts. 
Switching from the $\bar{\chi}^2$ distribution to the standard $\chi^2$ distribution has no material effect on inference.

Finally, the Anderson--Rubin tests (columns 12--13) yield p values of $0.0000$ for JIVE1 and JIVE2, reinforcing the same qualitative conclusion of strong rejection. 
Overall, the table documents a high degree of robustness across estimators, statistics, and reference distributions in the full sample.\footnote{Under the null, we use the OLS estimator for computational convenience. Although OLS is consistent, it does not preserve the finite-sample equality among the test statistics mentioned in Section \ref{MonteCarlo} and established in Appendix \ref{JIVE}; this discrepancy is purely numerical and has no asymptotic consequence.} 

%\textcolor{red}{\bf the results for JIVE1/JIVE2 in the data example do not match the theoretical results for the equality of tests, this could be due to the fact that the estimator under the null is OLS, we should perhaps mention this in a note, eg. ``For computational simplicity under the null we use the OLS estimator. This is a consistent estimator but it does not deliver the equality of tests discussed in the previous section and in the appendix''. ChatGPT assisted: ``Under the null, we use the OLS estimator for computational convenience. Although OLS is consistent, it does not preserve the finite-sample equality among the test statistics mentioned in the previous section and established in the appendix; this discrepancy is purely numerical and has no asymptotic consequence.''}

\begin{table}[ht]
\tiny \centering
\caption{Trinity and Anderson-Rubin test results.}
\label{tab:jivi_ar_results}
\begin{tabular}{lccccccccccccc}
\toprule
 & & \multicolumn{4}{c}{$\bar{\chi}^2$} & \multicolumn{5}{c}{ $\chi^2$} & \multicolumn{2}{c}{$\mathcal N(0,1)$} \\
\cmidrule(lr){3-6} \cmidrule(lr){7-11} \cmidrule(lr){12-13}
Method & Estimate &$D$ & $W_1$ & $W_2$ & $LM$ & $D_1^*$ & $D_2^*$ & $W_1^*$ & $W_2^*$ & $LM^*$ & $AR_{naive}$& $AR_{cf}$ \\
\midrule
SJIVE& 0.0141&  4.5226 &  4.1727 &  4.1727 &  4.5195 & 30.3355 & 30.3386 & 14.2885 & 14.2599 & 14.7589 & -- & -- \\
P value & --&  0.0001 &  0.0002 &  0.0002 &  0.0001 &  0.0000 &  0.0000 &  0.0001 &  0.0002 &  0.0002
& -- & -- \\ 
HLIM & 0.0140& 4.5226 &  4.1727 &  4.1727 &  4.5195 & 30.3355 & 30.3386 & 14.2885& 14.2599 & 14.7589 & -- & -- \\
P value & --&  0.0001 &  0.0001 &  0.0001 &  0.0001 &  0.0000 &  0.0000 &  0.0001 &  0.0001 &  0.0001 & -- & -- \\ 
JIVE1 & 0.0072&  2.2265 &  2.2265 &  2.2265 &  2.2265 & 14.8291 & --  & 14.4355 & 14.4066 & 14.7995 & 28.4865 & 28.8247 \\
P value & --&  0.0001 &  0.0001 &  0.0001 &  0.0001 &  0.0001 & --  &  0.0001 &  0.0001 &  0.0001&
  0.0000 & 0.0000 \\
JIVE2 &0.0072&  2.1857 &  2.1857 &  2.1857 &  2.1857 & 14.8292 & -- & 14.4354 & 14.5121 & 14.9080
  & 28.4903 & 28.6366 \\
P value & --&  0.0001 &  0.0001 &  0.0001 &  0.0001 &  0.0001 & --  &  0.0001 &  0.0001 &  0.0001
 & 0.0000 & 0.0000 \\ 
\bottomrule
\end{tabular}
\begin{tablenotes}
\small
\item \textit{Notes:} P values for testing $H_0:\beta=0$. Column 2 reports the point estimate. Columns 3--6 report trinity statistics using the $\bar{\chi}^2$ reference distribution. Columns 7--11 report trinity statistics using the standard $\chi^2$ reference distribution. Columns 12--13 report Anderson--Rubin test results.
\end{tablenotes}
\end{table}

To assess whether the uniformity of inference in the full sample is driven by large-sample size, we repeat the analysis using a random subsample of 10{,}000 observations.

\begin{table}[ht]
\tiny \centering
\caption{Trinity and Anderson--Rubin test results (random sample of 10{,}000).}
\label{tab:jivi_ar_results_10k}
\begin{tabular}{lccccccccccccc}
\toprule
 & &\multicolumn{4}{c}{$\bar{\chi}^2$} & \multicolumn{5}{c}{$\chi^2$} & \multicolumn{2}{c}{$\mathcal N(0,1)$} \\
\cmidrule(lr){3-6} \cmidrule(lr){7-11} \cmidrule(lr){12-13}
Method & Estimate & $D$ & $W_1$ & $W_2$ & $LM$ & $D_1^*$ & $D_2^*$ & $W_1^*$ & $W_2^*$ & $LM^*$ & $AR_{naive}$ & $AR_{cf}$ \\
\midrule

SJIVE & 0.0404 & 0.2053 & 0.1139 & 0.1140 & 0.2239 & 16.2205 & 15.9947 & 5.6537 & 5.2307 & 6.3915 & -- & -- \\
P value & -- & 0.0014 & 0.0174 & 0.0174 & 0.0087 & 0.0001 & 0.0001 & 0.0115 & 0.0174 & 0.0222 & -- & -- \\

HLIM  & 0.0400 & 0.1445 & 0.0839 & 0.0840 & 0.1599 & 14.9948 & 14.9374 & 5.1821 & 4.8634 & 5.7860 & -- & -- \\
P value & -- & 0.0028 & 0.0228 & 0.0227 & 0.0130 & 0.0001 & 0.0001 & 0.0162 & 0.0228 & 0.0274 & -- & -- \\

JIVE1 & 0.0244 & 0.0781 & 0.0781 & 0.0781 & 0.0781 & 6.2307 & -- & 4.1337 & 3.8164 & 5.7524 & 3.6399 & 3.7173 \\
P value & -- & 0.0419 & 0.0420 & 0.0419 & 0.0125 & 0.0126 & -- & 0.0165 & 0.0420 & 0.0508 & 0.0001 & 0.0001 \\

JIVE2 & 0.0237 & 0.0579 & 0.0577 & 0.0579 & 0.0579 & 5.9893 & -- & 4.0755 & 3.8230 & 5.6183 & 3.7171 & 3.7294 \\
P value & -- & 0.0432 & 0.0435 & 0.0432 & 0.0143 & 0.0144 & -- & 0.0178 & 0.0435 & 0.0506 & 0.0001 & 0.0001 \\

\bottomrule
\end{tabular}
\begin{tablenotes}
\small
\item \textit{Notes:} P values for testing $H_0:\beta=0$. Column 2 reports the point estimate. Columns 3--6 report trinity statistics using the $\bar{\chi}^2$ reference distribution. Columns 7--11 report trinity statistics using the standard $\chi^2$ reference distribution. Columns 12--13 report Anderson--Rubin test results.
\end{tablenotes}
\end{table}

Table \ref{tab:jivi_ar_results_10k} shows that, in contrast to the full sample, nontrivial differences across test statistics and reference distributions emerge when inference is based on a smaller sample.

Among the jackknife-based estimators, SJIVE and HLIM continue to display very similar behavior. 
Both yield small p values for the $D$ statistic, while the remaining statistics, $W_1$, $W_2$, and $LM$ produce larger but still statistically significant p values in the 0.01--0.03 range. 
Under the standard $\chi^2$ distribution, all starred statistics remain significant, with $LM^*$ delivering the smallest p-values for both methods.

For JIVE1 and JIVE2, dispersion across statistics is more pronounced. 
The weighted-$D$ statistic yields p values around 0.04, whereas $W_2$ and $LM$ are close to 0.05, implying less pronounced rejection under conventional thresholds. 
Under the standard $\chi^2$ distribution, $D_1^*$, $W_1^*$, and $W_2^*$ cluster around 0.04, while $LM^*$ remains around 0.01--0.02.

The Anderson--Rubin tests (columns 12--13) yield p values close to $0.0001$ for both JIVE1 and JIVE2, which are substantially smaller than those produced by the trinity statistics. 
As in the full sample, this illustrates that Anderson--Rubin and trinity procedures can generate materially different degrees of rejection in finite samples.

Overall, the sub-sample results indicate that the near-complete uniformity observed in the full sample does not extend to smaller samples, and that inference can depend on the particular statistic and reference distribution employed.

\clearpage

%\appendix
%\setcounter{table}{0}
%\setcounter{figure}{0}
%\renewcommand\thetable{\Alph{section}.\arabic{table}}
%\renewcommand\thefigure{\Alph{section}.\arabic{figure}}

\section{Conclusions}\label{conclusion}
In analogy with classical likelihood theory, we introduce a unified approach to inference for the parameters of linear regression models in the presence of endogeneity, heteroskedasticity and many, potentially weak, instruments. The test statistics are based on objective functions commonly encountered in the modern literature on many instruments \citep[see][among others]{HNWCS12,BC15}. Our tests can be used for inference on simple hypotheses as well as on hypotheses expressed in terms of linear restrictions. We show that their asymptotic distribution is chi-bar-square but a more conventional chi square limit can be found after appropriate modifications, a result that, in our opinion, is more appealing to practitioners. In an extensive simulation experiment we find that, generally, our methods tend to have better finite sample properties than competing Anderson-Rubin statistics both in terms of size and power. 

We also illustrate the practical relevance of our methodology through an empirical application that studies the causal effect of alcohol consumption on body mass index using genetic variants as instrumental variables in the UK Biobank. This setting naturally gives rise to a large number of instruments, some of which may be weak, making it well suited for the framework developed in this paper. The application demonstrates how the proposed tests can be implemented in a realistic empirical environment with many potentially weak instruments and heteroskedastic disturbances. The results highlight the feasibility of our procedures in large-scale genetic datasets and illustrate how the proposed statistics can provide reliable inference in settings where standard IV methods may perform poorly.

It is reasonable to believe that our methodology can be adapted to models with many instruments with fixed effects as in \cite{CSW23} as well as to models with cluster dependence \cite{FLM25}. A further important extension is to develop confidence sets by inverting our tests. This is especially relevant under weak identification, where informative procedures may yield confidence sets with nonstandard geometry (e.g., unbounded or disconnected regions). An important direction is to characterise these shapes by adapting the geometric results of \cite{DT05}. Some of these problems are part of our ongoing research agenda.

\clearpage

\appendix
\setcounter{table}{0}
\setcounter{figure}{0}
\setcounter{equation}{0}
\setcounter{footnote}{0}
\renewcommand\thetable{\Alph{section}.\arabic{table}}
\renewcommand\thefigure{\Alph{section}.\arabic{figure}}
\renewcommand\theequation{\Alph{section}.\arabic{equation}}
\renewcommand\thefootnote{\Alph{section}.\arabic{footnote}}

\section{Useful lemmata}
For this study it is crucial to have a suitable CLT, the most natural reference in this case is Lemma A2 in \cite{CSHNW12}. 

%\begin{lemma}\label{Hadamard}
%Let $\mA$ be a $n\times n$ matrix, then $\mA\odot\mA=\mK_1\left\{ \diag(\vec(\mA))^2\right\}\mK_2$ where $\mK_1=\viota_n'\otimes \mI_n$ and $\mK_2=\mI_n\otimes \viota_n$. \textcolor{red}{\bf I do not know whether this result is new or not, I guess it must have appeared somewhere before, yet I could not find it anywhere else. The generalization to $n\times m$ matrices seems straightforward, I do not see any particular requirement for matrix $\mA$. I should come up with a formal proof.}
%\end{lemma}

\begin{lemma}
[Lemma A2 in \cite{CSHNW12}]\label{CLT} Assume that the following conditions hold.

\begin{enumerate}
\item $\mC$\ is a symmetric matrix with zero main diagonal elements whose
$\left(  i,j\right)  $ element denoted by $C_{ij}$ satisfies $|C_{ij}|\leq
c_{u}|P_{ij}|$\ for any $i\neq j$, where $P_{ij}$ is element $\left(
i,j\right)  $ of the projection matrix $\mP=\mZ\left(  \mZ^{\prime}\mZ\right)
^{-1}\mZ^{\prime}$;

\item $(\vw_{1n},\vu_{1},\varepsilon_{1}),\dots,(\vw_{nn},\vu_{n}%
,\varepsilon_{n})$ are independent and $\mD_{n}=\sum_{i=1}^{n}\E\left[
\vw_{in}\vw_{in}^{\prime}\right]  $ satisfies $\Vert\mD_{n}\Vert\leq c_{u}$;

\item $\E\left[  \vw_{in}\right]  =\vzeros$, $\E\left[  \vu_{i}\right]
=\vzeros$, $\E\left[  \varepsilon_{i}\right]  =0$ and $\E\left[  \Vert
\vu_{i}\Vert^{4}\right]  \leq c_{u}$ and $\E\left[  \varepsilon_{i}%
^{4}\right]  \leq c_{u}$;

\item $\sum_{i=1}^{n}\E\left[  \Vert\vw_{in}\Vert^{4}\right]  \rightarrow0$;

\item $k\rightarrow\infty$ as $n\rightarrow\infty$.
\end{enumerate}

\noindent Then for
\[
\bar{\mSigma}_{n}=\frac{1}{k}\sum_{i,j=1}^{n}C_{ij}^{2}\left(  \E\left[
\vu_{i}\vu_{i}^{\prime}\right]  \E\left[  \varepsilon_{j}^{2}\right]
+\E\left[  \vu_{i}\varepsilon_{i}\right]  \E\left[  \varepsilon_{j}%
\vu_{j}^{\prime}\right]  \right)
\]
and any sequences $\vc_{1n}$ and $\vc_{2n}$ of conformable vectors with
bounded norm and $\mXi_{n}=\vc_{1n}^{\prime}\mD_{n}\vc_{1n}+\vc_{2n}^{\prime
}\bar{\mSigma}_{n}\vc_{2n}>\frac{1}{c_{u}}$,
\[
Y_{n}=\mXi_{n}^{-1/2}\left(  \vc_{1n}^{\prime}\sum_{i=1}^{n}\vw_{in}%
+\frac{\vc_{2n}^{\prime}}{\sqrt{k}}\sum_{i,j=1}^{n}\vu_{i}C_{ij}%
\varepsilon_{j}\right)  \rightarrow_{d}\mathcal{N}(0,1).
\]
%\textcolor{red}{Why is this needed? }That is, $P(Y_{n}\leq y)\rightarrow\Phi(y)$ for all y. \textcolor{blue}{I think it's not needed, if I remember correctly it's verbatim from Chao's paper}
\end{lemma}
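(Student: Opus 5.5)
The approach is to write $Y_n$ as a sum of a martingale difference array and apply a martingale central limit theorem, for instance Theorem 3.2 of Hall and Heyde (1980). Since $\mC$ has zero diagonal and is symmetric, split the double sum into its $i<j$ and $i>j$ halves and order by the larger index:
\begin{align*}
\frac{1}{\sqrt{k}}\sum_{i,j=1}^{n}\vc_{2n}'\vu_{i}C_{ij}\varepsilon_{j}=\sum_{j=1}^{n}\frac{1}{\sqrt{k}}\left(\vc_{2n}'\vu_{j}\,\bar\varepsilon_{j}+\varepsilon_{j}\,\vc_{2n}'\bar\vu_{j}\right),
\end{align*}
where $\bar\varepsilon_{j}=\sum_{i<j}C_{ij}\varepsilon_{i}$ and $\bar\vu_{j}=\sum_{i<j}C_{ij}\vu_{i}$. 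Set
\begin{align*}
y_{jn}=\mXi_n^{-1/2}\left(\vc_{1n}'\vw_{jn}+\tfrac{1}{\sqrt{k}}\left(\vc_{2n}'\vu_{j}\bar\varepsilon_{j}+\varepsilon_{j}\vc_{2n}'\bar\vu_{j}\right)\right).
\end{align*}
Let $\mathcal F_{j}$ be the $\sigma$-field generated by the first $j$ triples. By independence and condition 3, $\E[y_{jn}\mid\mathcal F_{j-1}]=0$, so $\{y_{jn},\mathcal F_j\}$ is a martingale difference array with $Y_n=\sum_j y_{jn}$. Because $\mXi_n>1/c_u$, the normalisation is harmless, and it suffices to verify two conditions: a Lyapunov condition $\sum_j\E[y_{jn}^4]\to0$, and the conditional variance condition $\sum_j\E[y_{jn}^2\mid\mathcal F_{j-1}]-1\to_p0$.

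\textbf{Lyapunov step.} By the $c_r$ inequality and bounded $\vc_{1n},\vc_{2n}$,
\begin{align*}
\E[y_{jn}^4]\le c_u\left(\E\Vert\vw_{jn}\Vert^4+k^{-2}\E\Vert\vu_j\Vert^4\,\E[\bar\varepsilon_j^4]+k^{-2}\E[\varepsilon_j^4]\,\E\Vert\bar\vu_j\Vert^4\right).
\end{align*}
A Rosenthal-type bound for the independent sum $\bar\varepsilon_j$ gives $\E[\bar\varepsilon_j^4]\le c_u\bigl(\sum_iC_{ij}^2\bigr)^2+c_u\sum_iC_{ij}^4\le c_u\bigl(\sum_iP_{ij}^2\bigr)^2=c_uP_{jj}^2\le c_uP_{jj}$, using condition 1 and idempotency of $\mP$; the same bound holds for $\E\Vert\bar\vu_j\Vert^4$. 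Summing over $j$ and using condition 4 together with $\tr(\mP)=k$ gives $\sum_j\E[y_{jn}^4]\le o(1)+c_u k/k^2\to0$, by condition 5.

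\textbf{Conditional variance step.} This is the main obstacle. Expand $\sum_j\E[y_{jn}^2\mid\mathcal F_{j-1}]$ into three groups.
\begin{enumerate}
\item The pure $\vw$ part is deterministic and equals $\vc_{1n}'\mD_n\vc_{1n}$.
\item The cross terms are $2k^{-1/2}\sum_j\bigl(\vc_{1n}'\E[\vw_{jn}\vu_j']\vc_{2n}\bar\varepsilon_j+\vc_{1n}'\E[\vw_{jn}\varepsilon_j]\vc_{2n}'\bar\vu_j\bigr)$. These have mean zero. Their variance is bounded by $c_u k^{-1}\sum_{i}\bigl(\sum_{j>i}C_{ij}a_j\bigr)^2\le c_uk^{-1}\Vert\mC\Vert^2\sum_j\Vert a_j\Vert^2$ with $\Vert a_j\Vert^2\le c_u(\E\Vert\vw_{jn}\Vert^4)^{1/2}$. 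I would show this is $o(1)$ using $\Vert\mC\Vert\le c_u$, condition 4 and the Cauchy--Schwarz inequality, with the $k^{-1}$ factor supplying the extra room needed.
\item The quadratic part is $k^{-1}\sum_j\bigl(\E[(\vc_{2n}'\vu_j)^2]\bar\varepsilon_j^2+\E[\varepsilon_j^2](\vc_{2n}'\bar\vu_j)^2+2\E[\varepsilon_j\vc_{2n}'\vu_j]\bar\varepsilon_j\vc_{2n}'\bar\vu_j\bigr)$. Its expectation equals $\vc_{2n}'\bar{\mSigma}_n\vc_{2n}$ after symmetrising over $i<j$ and $i>j$.
\end{enumerate}
It remains to show that the variance of the quadratic part is $o(1)$. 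Writing $\bar\varepsilon_j^2=\sum_{i<j}C_{ij}^2\varepsilon_i^2+\sum_{i\ne l<j}C_{ij}C_{lj}\varepsilon_i\varepsilon_l$, the diagonal pieces have variance at most $c_uk^{-2}\sum_i\bigl(\sum_jC_{ij}^2\bigr)^2\le c_u k^{-2}\sum_i P_{ii}\to0$. The off-diagonal pieces have variance at most $c_uk^{-2}\sum_{i,l}\bigl(\sum_{j>\max(i,l)}d_jC_{ij}C_{lj}\bigr)^2$ with bounded weights $d_j$. This is where the triangular truncation prevents a direct use of $\mP^2=\mP$. I would bound this sum by $c_uk^{-2}\tr\bigl((\mC\mD_d\mC)^2\bigr)$-type quantities, handling the truncation by the triangular-part argument of Chao et al., and then obtain $\le c_uk^{-2}\cdot k\to0$. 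Chebyshev's inequality then yields $\sum_j\E[y_{jn}^2\mid\mathcal F_{j-1}]-1\to_p0$, and the martingale CLT gives $Y_n\to_d\mathcal N(0,1)$.
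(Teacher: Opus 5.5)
Your route (order the double sum by the larger index, form the martingale difference array $y_{jn}$, then check a Lyapunov condition and the conditional variance) is the standard martingale-CLT argument behind the cited result of \cite{CSHNW12}. There it is stated for the off-diagonal part of $\mP$ itself; the paper gives no proof of its own. Your decomposition, the martingale property, the Lyapunov step and the computation of the mean of the conditional variance are correct. The gaps are in the conditional-variance step.

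\textbf{First gap: the spectral-norm bound.} You use $\Vert\mC\Vert\le c_u$, but condition 1 only gives $|C_{ij}|\le c_u|P_{ij}|$, and entrywise domination does not bound the spectral norm. For example, take $C_{ij}=|P_{ij}|$ for $i\ne j$ and a generic projection (say $\mZ$ with i.i.d.\ Gaussian entries and $k/n\to\alpha\in(0,1)$); then $\viota'\mC\viota/n\asymp\sqrt{k}$. This is not a technicality. Take $\vw_{in}=\vzeros$ and $(u_i,\varepsilon_i)$ i.i.d.\ standard normal. Conditionally on $\vvarepsilon$, the statistic is $\mathcal N(0,S_n^2)$ with $S_n^2=k^{-1}\Vert\mC\vvarepsilon\Vert^2\ge (kn)^{-1}(\viota'\mC\vvarepsilon)^2$. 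The right-hand side is $\Vert\mC\viota\Vert^2/(kn)\ge c$ times a $\chi^2_1$ variable, while $\mXi_n=k^{-1}\tr(\mC^2)\le 1$. Hence $S_n^2/\mXi_n$ cannot converge to $1$ in probability, and $Y_n$ has a non-degenerate scale-mixture-of-normals limit. So a hypothesis such as $\Vert\mC\Vert\le c_u$ must be added and used explicitly; it does hold for the jackknife matrices in \eqref{jackmat}--\eqref{jackmat2}.

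\textbf{Second gap: the cross term.} Even granting $\Vert\mC\Vert\le c_u$, the inequality $\sum_i(\sum_{j>i}C_{ij}a_j)^2\le\Vert\mC\Vert^2\sum_j\Vert a_j\Vert^2$ is false. The left side is $\Vert\mathbf U\mathbf a\Vert^2$, where $\mathbf U$ is the strictly upper-triangular part of $\mC$. Triangular truncation does not preserve spectral-norm bounds; it can cost a factor of order $\log n$.

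This matters because $\sum_j\Vert a_j\Vert^2\le c_u\tr(\mD_n)$ is only $O(1)$. Condition 4 makes $\max_j\E\Vert\vw_{jn}\Vert^2$ small, not $\tr(\mD_n)$, and it does not control $\sum_j(\E\Vert\vw_{jn}\Vert^4)^{1/2}$. The safe bound $\Vert\mathbf U\Vert^2\le\Vert\mC\Vert_F^2\le c_u k$ then gives a variance of order one. The factor $k^{-1}$ helps only once you prove $\Vert\mathbf U\Vert^2=o(k)$. One way is the Macaev-type bound $\Vert\mathbf U\Vert_{S_p}\le c\,p\,\Vert\mC\Vert_{S_p}$ on Schatten norms with $p\asymp\log k$. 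Combined with $\Vert\mC\Vert\le c_u$ and $\Vert\mC\Vert_F^2\le c_u k$, this gives $\Vert\mathbf U\Vert\le c\log k$.

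\textbf{The quadratic part is fine.} The truncation you flagged there is harmless. For fixed $l$ it is only a diagonal weight $\mS_l=\diag(d_j 1\{j>l\})$ on the summation index, and $\sum_i(\sum_{j>l}d_jC_{ij}C_{lj})^2=(\mC\mS_l\mC^2\mS_l\mC)_{ll}\le\Vert\mC\Vert^2\max_j d_j^2\,(\mC^2)_{ll}$. Summed over $l$ this is $O(k)$, so that variance is $O(k^{-1})$. Again, this holds only under a bound on $\Vert\mC\Vert$.
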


%\begin{lemma}[Lemma A3 in \cite{CSHNW12}]\label{Lemma A3}
%Let $(w_i,u_i)$, $i=1,\dots,n$ be independent, then there exists a positive constant $c_u$ such that
%\begin{align*}
%\left\|\sum_{i\ne j}P_{ij}^2w_iu_j-\E\left[\sum_{i\ne j}P_{ij}^2w_iu_j\right]\right\|^2\le c_uB_n \;\; a.s.
%\end{align*}
%where 
%\begin{align*}
%B_n=k\left\{
%\overline\Var[w_i]\overline\Var[u_i]+
%\overline\Var[w_i]\overline\E[u_i]^2+
%\overline\E[w_i]^2\overline\Var[u_i]
%\right\}
%\end{align*}
%for $\overline\Var[x_i]=\max_i\Var[x_i]$ and $\overline\E[x_i]=\max_i\E[x_i]$ given that $x_i\in\{w_i,u_i\}$.
%\end{lemma}
%
%\begin{lemma}[Lemma A4 in \cite{CSHNW12}]\label{Lemma A4}
%Assume there exists $c_u>0$ such that $(w_1,u_1,\eta_1),\dots,(w_n,u_n,\eta_n)$ are independent with $\E[w_i]=\frac{a_i}{\sqrt n}$, $\E[u_i]=\frac{b_i}{\sqrt n}$, $a_i<c_u$, $b_i<c_u$, $\E[\eta_i^2]c_u$, $\Var[w_i]\le\frac{c_u}{r_{\min}}$, $\Var[u_i]\le\frac{c_u}{r_{\min}}$ and there exists $\vpi_n$ such that $\max_i|a_i-\mZ_i'\vpi_n|\to 0$ and $\frac{\sqrt k}{r_{\min}}\to 0$. Then,
%\begin{align*}
%A_n=\E\left[
%\sum_{i\ne j\ne k} w_iP_{ik}\eta_kP_{kj}u_j
%\right] = O_p(1), \;\;
%\sum_{i\ne j\ne k} w_iP_{ik}\eta_kP_{kj}u_j-A_n\to_p0.
%\end{align*}
%\end{lemma}

There are some simplifications in Lemma \ref{CLT}, in particular, it is omitted the conditioning on the instruments. This is because we assume the instruments to be fixed. However, if it were convenient to assume the instruments as stochastic we may interpret the expectations as conditional expectations. To apply Lemma \ref{CLT} we need to adapt it to our jackknife structure, defined by matrix $\mC$. This is done, e.g., in \cite{BC15} and \cite{CMS21}. 

The next result shows that the distribution of a quadratic form for a normally distributed random vector is chi-bar-square (see Definition 1 in \cite{Hansen21}). The result is a generalisation of Lemma 1 in \cite{Hansen21} to the case when the matrix in the quadratic form is not necessarily invertible.

\begin{lemma}\label{chimix}
If $q=\vx'\mTheta\vx$ where $\vx\sim \mathcal N(\vzeros,\mV)$ and $\mTheta$ symmetric positive semidefinite, then $q\sim \bar\chi^2(\boldsymbol\varphi)$, where $\boldsymbol\varphi$ is the vector of eigenvalues of $\mTheta\mV$. %and the elements of $\boldsymbol\varphi$ are nonnegative. \textcolor{red}{\bf the last sentence is omitted in Hansen's lemma but it is clear from the definition of weighted chi square he gives.}
\end{lemma}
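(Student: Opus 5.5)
The approach is to reduce $q$ to a weighted sum of independent squared standard normals by a suitable factorisation, taking care that neither $\mV$ nor $\mTheta$ is assumed invertible. Write $\mV=\mL\mL'$ with $\mL=\mV^{1/2}$ the symmetric positive semidefinite square root. Then $\vx$ has the same distribution as $\mL\vz$ with $\vz\sim\mathcal N(\vzeros,\mI)$, which holds even when $\mV$ is singular since both are Gaussian with covariance $\mV$. Hence $q\overset{d}{=}\vz'\mL'\mTheta\mL\vz$.

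The matrix $\mM=\mL'\mTheta\mL$ is symmetric positive semidefinite, so it has a spectral decomposition $\mM=\mU\mLambda\mU'$ with $\mU$ orthogonal and $\mLambda=\diag(\varphi_1,\dots,\varphi_m)$, $\varphi_j\ge 0$. Set $\vw=\mU'\vz\sim\mathcal N(\vzeros,\mI)$. Then $q\overset{d}{=}\sum_j\varphi_j w_j^2$ with the $w_j^2$ independent $\chi^2_1$, which is exactly Definition 1 of the chi-bar-square in \cite{Hansen21} with weight vector $\boldsymbol\varphi=\lambda(\mL'\mTheta\mL)$. Zero eigenvalues contribute nothing, so no invertibility is needed at any stage.

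It remains to identify the weights with the eigenvalues of $\mTheta\mV$. Since $\mL'\mTheta\mL=\mL\mTheta\mL$ and $\mTheta\mV=\mTheta\mL\mL$, the identity $\lambda(\mA\mB)=\lambda(\mB\mA)$ for square matrices applies with $\mA=\mL$ and $\mB=\mTheta\mL$. This identity holds including multiplicities and including zero eigenvalues, because $\det(t\mI-\mA\mB)=\det(t\mI-\mB\mA)$. Therefore $\lambda(\mTheta\mV)=\lambda(\mL\mTheta\mL)$, and these eigenvalues are real and nonnegative even though $\mTheta\mV$ is not symmetric.

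The only delicate step is this eigenvalue identification. Hansen's Lemma 1 uses $\mTheta^{1/2}$ or an inverse, and here I avoid that by using the square root of $\mV$ together with the characteristic polynomial argument. Everything else is routine.
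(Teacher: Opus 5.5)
Your proposal is correct and follows essentially the same route as the paper: factor $\mV=\mW\mW'$, diagonalise the positive semidefinite matrix $\mW'\mTheta\mW$ by an orthogonal rotation of the standard normal vector, and match its eigenvalues to those of $\mTheta\mV$ because $\mW'(\mTheta\mW)$ and $(\mTheta\mW)\mW'$ have the same eigenvalues. The only differences are cosmetic: you fix $\mW$ to be the symmetric square root $\mV^{1/2}$ and spell out the characteristic-polynomial justification, whereas the paper allows any square factor $\mW$ and states the eigenvalue equality directly.
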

\begin{proof}
Let $\mW$ be a square matrix such that $\mW\mW'=\mV$. Then $\vx$ can be written as $\vx=\mW\vz$, where $\vz$ is a standard normal vector of the same dimension as $\mV$, so $q=\vx'\mTheta\vx=\vz'\mS'\mLambda\mS\vz$, where $\mS'\mLambda\mS$ is the eigenvalue decomposition of the symmetric positive semidefinite matrix $\mW'\mTheta\mW$ with $\mS$ an orthonormal matrix and $\mLambda$ a diagonal matrix having the eigenvalues of $\mW'\mTheta\mW$ on the main diagonal. Since $\mS$ is orthonormal, $\vt=\mS\vz$ is standard normally distributed, so $q=\vt'\mLambda\vt=\sum_{i} \lambda_i t_{i}^2$, where $t_{i}^2$ are independent $\chi_{1}^2$-distributed. Note that the vector of eigenvalues of $\mW'\mTheta\mW$ is the same as the vector of eigenvalues of $\mTheta\mW\mW'=\mTheta\mV$, so $q$ is weighted chi square according to Definition 1 in \cite{Hansen21}.
\end{proof}

\begin{lemma}[Lemma 9.7 in \cite{NM94}]\label{chisqqf}
If $\vz\sim\mathcal N(\mR\vrho,\mR)$ with $\rk(\mR)=r$ and $\mR^-$ is any symmetric matrix satisfying $\mR\mR^-\mR=\mR$, then $\vz'\mR^-\vz$ is chi square distributed with $r$ degrees of freedom and noncentrality parameter $\vrho'\mR\vrho$. 
\end{lemma}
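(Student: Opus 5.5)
The plan is to reduce the possibly singular Gaussian vector $\vz$ to a nondegenerate $r$-dimensional standard normal vector, in the spirit of the proof of Lemma \ref{chimix}. Then I will show that the generalised inverse $\mR^-$ acts as the identity on that reduced vector.

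First, I factor $\mR$. Since $\mR$ is a covariance matrix, it is symmetric positive semidefinite with $\rk(\mR)=r$. Its eigenvalue decomposition can therefore be written $\mR=\mU_r\mLambda_r\mU_r'$, where $\mU_r$ has $r$ orthonormal columns and $\mLambda_r$ is diagonal with the $r$ positive eigenvalues. Set $\mL=\mU_r\mLambda_r^{1/2}$, so that $\mR=\mL\mL'$ and $\mL$ has full column rank $r$.

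Next, I represent $\vz$. Let $\vw\sim\mathcal N(\mL'\vrho,\mI_r)$. Then $\mL\vw$ is Gaussian with mean $\mL\mL'\vrho=\mR\vrho$ and covariance $\mL\mL'=\mR$. A Gaussian law is determined by its mean and covariance (e.g.\ via characteristic functions), so $\vz$ has the same distribution as $\mL\vw$. This identification is the only delicate point, because $\mR$ may be singular and $\vz$ has no density. Working through the characteristic function avoids having to argue separately that $\vz-\mR\vrho$ lies in the column space of $\mR$ almost surely. Since the statistic $\vz'\mR^-\vz$ is a measurable function of $\vz$, it suffices to study $\vw'\mL'\mR^-\mL\vw$.

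The key algebraic step is to show $\mL'\mR^-\mL=\mI_r$. Start from $\mR\mR^-\mR=\mR$, which reads $\mL\mL'\mR^-\mL\mL'=\mL\mL'$. Premultiply by $(\mL'\mL)^{-1}\mL'$ and postmultiply by $\mL(\mL'\mL)^{-1}$; both are well defined because $\mL$ has full column rank. This cancels the outer factors and gives $\mL'\mR^-\mL=\mI_r$. The conclusion is therefore independent of which generalised inverse is chosen. Symmetry of $\mR^-$ ensures the quadratic form is unambiguous but is not otherwise needed.

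Hence $\vz'\mR^-\vz$ has the same law as $\vw'\vw=\sum_{i=1}^r w_i^2$. The $w_i$ are independent normals with unit variance and mean vector $\mL'\vrho$. By definition, this is a noncentral chi square with $r$ degrees of freedom and noncentrality $\Vert\mL'\vrho\Vert^2=\vrho'\mL\mL'\vrho=\vrho'\mR\vrho$, which is the claim.
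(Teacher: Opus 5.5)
The paper gives no proof of this statement. It quotes it as Lemma 9.7 of \cite{NM94} and uses it as a black box, always with $\vrho=\vzeros$, in the proofs of Theorems \ref{simplechi2} and \ref{lrestrictionchi2}. Your proposal is a correct and complete self-contained proof, and every step checks out:
\begin{itemize}
\item the rank factorisation $\mR=\mathbf{L}\mathbf{L}'$ with $\mathbf{L}$ of full column rank $r$;
\item the identification of $\vz$ with $\mathbf{L}\vw$ through the characteristic function, which correctly handles the degenerate case;
\item the two-sided cancellation giving $\mathbf{L}'\mR^-\mathbf{L}=\mI_r$;
\item the noncentrality $\Vert\mathbf{L}'\vrho\Vert^2=\vrho'\mR\vrho$.
\end{itemize}

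Methodologically, this is the device the paper uses in its own proof of Lemma \ref{chimix}: write the Gaussian vector as a linear image of a nondegenerate one via $\mW\mW'=\mV$. Done your way, it gives two things you would not get by plugging $\mTheta=\mR^-$, $\mV=\mR$ into Lemma \ref{chimix}. First, it covers the noncentral mean $\mR\vrho$, whereas Lemma \ref{chimix} is stated only for mean zero. Second, it never requires $\mR^-$ to be positive semidefinite. Lemma \ref{chimix} assumes this of $\mTheta$, but a symmetric generalised inverse need not satisfy it; for example, $\mR=\diag(1,0)$ and $\mR^-=\diag(1,-1)$.

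Your identity $\mathbf{L}'\mR^-\mathbf{L}=\mI_r$ also shows directly that the result does not depend on which generalised inverse is chosen. You are right that symmetry of $\mR^-$ plays no role, since the quadratic form is well defined for any square $\mR^-$.
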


\begin{lemma} \label{LA4}
Under Assumptions \ref{ass1}, \ref{ass2}, \ref{ass5}, 
\newline 1. ${\mX^{\prime}\mC\mX}/{r_{\min}}=
{\mH}/{r_{\min}}+o_{p}\left(  1\right)$, $\mH^{-1}\mX^{\prime}\mC\mX  =\mI_{g}+o_{p}(1)$.
\newline 2. $\E\left[\mX^{\prime}\mC\vvarepsilon\right]  =\vzeros,~\Var\left[\mX^{\prime}\mC\vvarepsilon\right]  \leq c_{u}r_{\max}\mI_{g}+c_{u}k\mI_{g}$.
\newline 3. $\E\left[\vvarepsilon^{\prime}\mC\vvarepsilon\right]  =\vzeros,~\Var\left[\vvarepsilon^{\prime}\mC\vvarepsilon\right]  \leq c_{u}k$.
\end{lemma}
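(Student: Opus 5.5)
The plan is to expand every quadratic form in $\mX=\mZ\mPi+\mV$. Each piece is then either deterministic (equal to $\mH$), linear in the disturbances, or a zero‑diagonal quadratic form in the disturbances. Each stochastic piece will be shown to have mean zero and a variance that we bound explicitly; Chebyshev's inequality then gives the stochastic orders.

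Two structural facts about $\mC$ are needed first, for both choices \eqref{jackmat} and \eqref{jackmat2}.
\begin{itemize}
\item[(a)] $\mC$ is symmetric with zero main diagonal. This is immediate for $\mP-\mD$. For the SJIVE matrix it is exactly the jackknife construction of \cite{BC15}; I would verify it by computing the diagonal entries directly.
\item[(b)] $\sum_{i,j}C_{ij}^2=\tr(\mC^2)\le c_u k$, and $\mZ'\mC^2\mZ\le c_u\mZ'\mZ$. Both follow from $|C_{ij}|\le c_u|P_{ij}|$ together with Assumption \ref{ass1}, and the matrix inequality is (A.6) of the Supplement to \cite{CMS21}.
\end{itemize}

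The step I expect to be the main obstacle is relating $\mPi'\mZ'\mC^2\mZ\mPi$ to $\mH$, i.e.\ showing $\mPi'\mZ'\mC^2\mZ\mPi\le c_u\mH$, so that everything is controlled by $r_{\max}$.
\begin{itemize}
\item In the SJIVE case $\mH=\mPi'\mZ'\mZ\mPi$, so this is just (b).
\item In the HLIM case $\mH=\mPi'\mZ'(\mI-\mD)\mZ\mPi$. Assumption \ref{ass1} gives $\mI-\mD\ge c_u^{-1}\mI$, hence $\mPi'\mZ'\mZ\mPi\le c_u\mH$, and the inequality follows from (b).
\end{itemize}
Consequently $\mPi'\mZ'\mC^2\mZ\mPi\le c_u r_{\max}\mI_g$.

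Parts 2 and 3 come next, since part 1 reuses them. Write $\mX'\mC\vvarepsilon=\mPi'\mZ'\mC\vvarepsilon+\mV'\mC\vvarepsilon$.
\begin{itemize}
\item The linear term has mean zero. Its variance is $\mPi'\mZ'\mC\mD_{\sigma^2}\mC\mZ\mPi\le c_u\mPi'\mZ'\mC^2\mZ\mPi\le c_u r_{\max}\mI_g$, using bounded $\sigma_i^2$ from Assumption \ref{ass2}.
\item In the quadratic term $\sum_{i\ne j}\mV_iC_{ij}\varepsilon_j$ only $i\neq j$ appears, by (a). Independence across rows then gives mean zero. Its variance is bounded by $\sum_{i\ne j}C_{ij}^2(\E\|\mV_i\|^2\sigma_j^2+|\E[\mV_i\varepsilon_i]|\,|\E[\varepsilon_j\mV_j]|)\le c_u\tr(\mC^2)\le c_u k$, by Assumption \ref{ass2} and (b).
\item Using $\Var[a+b]\le 2\Var[a]+2\Var[b]$ avoids computing cross covariances (these in fact vanish). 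This gives part 2.
\end{itemize}
Part 3 is the scalar analogue for $\vvarepsilon'\mC\vvarepsilon=\sum_{i\ne j}\varepsilon_iC_{ij}\varepsilon_j$: mean zero by (a), and variance $2\sum_{i\ne j}C_{ij}^2\sigma_i^2\sigma_j^2\le c_u k$.

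For part 1 decompose
\[
\mX'\mC\mX-\mH=\mPi'\mZ'\mC\mV+\mV'\mC\mZ\mPi+\mV'\mC\mV .
\]
\begin{itemize}
\item Exactly as above, $\mPi'\mZ'\mC\mV=O_p(\sqrt{r_{\max}})$ elementwise and $\mV'\mC\mV=O_p(\sqrt k)$, again with mean zero by (a).
\item Dividing by $r_{\min}$, $\sqrt{r_{\max}}/r_{\min}\le c_u\sqrt{k}/r_{\min}\to0$ because $r_{\max}/k$ is bounded, and $\sqrt k/r_{\min}\to0$; both limits use Assumption \ref{ass5}.
\item For the second statement, $\|\mH^{-1}\|=1/r_{\min}$, so $\mH^{-1}\mX'\mC\mX-\mI_g=\mH^{-1}(\mX'\mC\mX-\mH)=o_p(1)$ by the same bounds.
\end{itemize}
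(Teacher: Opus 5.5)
Your proposal is correct and follows essentially the route the paper takes by deferring to Lemma A.4 of the Supplement to \cite{CMS21}. That route is to expand $\mX=\mZ\mPi+\mV$, use the zero diagonal of $\mC$ to get mean-zero quadratic pieces, bound the variances via $\tr(\mC^2)\le c_u k$ and $\mPi'\mZ'\mC^2\mZ\mPi\le c_u\mH$, and conclude by Chebyshev and Assumption \ref{ass5}; your handling of the HLIM case through $\mI-\mD\ge c_u^{-1}\mI$ is the adaptation the paper leaves implicit. One small precision: $\mZ'\mC^2\mZ\le c_u\mZ'\mZ$ does not follow from the entrywise bound $|C_{ij}|\le c_u|P_{ij}|$, but from the explicit forms $\mC\mZ=\mZ-\frac{1}{2}(\mI-\mP)\mD(\mI-\mD)^{-1}\mZ$ (SJIVE) and $\mC\mZ=(\mI-\mD)\mZ$ (HLIM), which is the content of (A.6) in \cite{CMS21} that you cite anyway.
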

\begin{proof}
The proof is similar to that of Lemma A.4 from Supplement to \cite{CMS21}.
\end{proof}

\begin{lemma}
\label{B}Suppose that Assumptions \ref{ass1} to \ref{ass3} hold. 

\begin{enumerate}
\item ${\mV^{\prime}\mB\mV}/{\tr\left(  \mB\right)  }\rightarrow_{p}\mSigma_{22}$, $\mX^{\prime}\mB\mX/\tr\left(  \mB\right)=O_{p}\left(  1\right)  $, $\widehat{\sigma}^{2}\left(  \vbeta_{n}\right)  \rightarrow_{p}\sigma^{2}$, \newline
$\mX^{\prime}\mB\left(  \vy-\mX\vbeta_{n}\right)  /\tr\left(  \mB\right)
\rightarrow_{p}\sigma_{21}$ for any $\vbeta_{n}$ consistent estimator of $\vbeta_{0}$.

\item $\widehat{\sigma}^{2}\left(  \vbeta\right)  \geq\sigma^{2}-\vsigma
_{12}\mSigma_{22}^{-1}\vsigma_{21}+o_{p}\left(  1\right)  $ for any $\vbeta$, where $\sigma
^{2}-\vsigma_{12}\mSigma_{22}^{-1}\vsigma_{21}>0$. 
\end{enumerate}
\end{lemma}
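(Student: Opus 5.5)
We prove the two parts of Lemma \ref{B} in turn, treating the consistency statements in part 1 first and then using them for the lower bound in part 2. Throughout, write $\vvarepsilon_n=\vy-\mX\vbeta_n=\vvarepsilon-\mX(\vbeta_n-\vbeta_0)$ under the null.

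\emph{Part 1.} For the first claim we expand
\[
\mV'\mB\mV/\tr(\mB)=\tr(\mB)^{-1}\sum_i B_{ii}\mV_i\mV_i'+\tr(\mB)^{-1}\sum_{i\neq j}B_{ij}\mV_i\mV_j'.
\]
The diagonal part has mean $\tr(\mB)^{-1}\sum_iB_{ii}\mSigma_{i22}$, which converges to $\mSigma_{22}$ by Assumption \ref{ass3}. Its variance is bounded by $c_u\sum_iB_{ii}^2/\tr(\mB)^2\le c_u/\tr(\mB)\to0$, using the fourth moments of Assumption \ref{ass2} and the boundedness of $B_{ii}$. The boundedness of $B_{ii}$ follows from Assumption \ref{ass1}: $\mD(\mI-\mD)^{-1}\le c_u\mI$, so $\mB\le c_u(\mI-\mP)$ and $\|\mB\|\le c_u$.

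The off-diagonal part has mean zero and variance at most $c_u\sum_{i,j}B_{ij}^2/\tr(\mB)^2=c_u\tr(\mB^2)/\tr(\mB)^2\le c_u\|\mB\|/\tr(\mB)\to0$, since $\mB$ is positive semidefinite. Here $\tr(\mB)=k\to\infty$ in the SJIVE case, while $\tr(\mB)=n\ge k$ in the HLIM case. Chebyshev's inequality then gives the first claim. The same argument applied to $(\vvarepsilon,\mV)$ jointly gives $(\vvarepsilon,\mV)'\mB(\vvarepsilon,\mV)/\tr(\mB)\to_p\mSigma$.

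For $\mX'\mB\mX$, write $\mX=\mZ\mPi+\mV$. We need to handle the signal terms. In the SJIVE case, $(\mI-\mP)\mZ=\vzeros$ gives $\mB\mZ\mPi=\vzeros$ exactly. In the HLIM case, $\mB=\mI$ and $\mPi'\mZ'\mZ\mPi/n$ need not vanish, so the statement is only $O_p(1)$. Here we use $r_{\max}/k$ bounded (Assumption \ref{ass5}) together with $\mPi'\mZ'\mZ\mPi\le c_u\mH$ to obtain $\mPi'\mZ'\mZ\mPi/\tr(\mB)=O(1)$. The cross term $\mPi'\mZ'\mB\mV/\tr(\mB)$ has variance at most $c_u\mPi'\mZ'\mB^2\mZ\mPi/\tr(\mB)^2=o(1)$. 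Hence $\mX'\mB\mX/\tr(\mB)=O_p(1)$.

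Finally, we expand
\[
\widehat\sigma^2(\vbeta_n)=\frac{\vvarepsilon'\mB\vvarepsilon}{\tr(\mB)}-2(\vbeta_n-\vbeta_0)'\frac{\mX'\mB\vvarepsilon}{\tr(\mB)}+(\vbeta_n-\vbeta_0)'\frac{\mX'\mB\mX}{\tr(\mB)}(\vbeta_n-\vbeta_0).
\]
The first term tends to $\sigma^2$. The remaining terms are $o_p(1)$ by consistency of $\vbeta_n$ and the $O_p(1)$ bounds, since Cauchy--Schwarz bounds $\mX'\mB\vvarepsilon/\tr(\mB)$. The same expansion gives $\mX'\mB\vvarepsilon_n/\tr(\mB)\to_p\vsigma_{21}$, up to the signal contribution in the HLIM case. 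This is the delicate point: it requires the cross term $\mPi'\mZ'\vvarepsilon/n\to_p0$, which holds by the variance bound above.

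\emph{Part 2.} For arbitrary $\vbeta$, set $\vdelta=\vbeta_0-\vbeta$, so that $\vy-\mX\vbeta=\vvarepsilon+\mX\vdelta$. Define
\[
\mM_n=(\vvarepsilon,\mX)'\mB(\vvarepsilon,\mX)/\tr(\mB).
\]
Then $\widehat\sigma^2(\vbeta)=(1,\vdelta')\mM_n(1,\vdelta')'$, which is minimized over $\vdelta$ at the Schur complement
\[
\frac{\vvarepsilon'\mB\vvarepsilon}{\tr(\mB)}-\frac{\vvarepsilon'\mB\mX}{\tr(\mB)}\left(\frac{\mX'\mB\mX}{\tr(\mB)}\right)^{-1}\frac{\mX'\mB\vvarepsilon}{\tr(\mB)}.
\]
The lower bound therefore holds uniformly in $\vbeta$. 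In the SJIVE case $\mM_n\to_p\mSigma$, so the Schur complement converges to $\sigma^2-\vsigma_{12}\mSigma_{22}^{-1}\vsigma_{21}$, which is positive because $\mSigma$ is positive definite.

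The main obstacle is the HLIM case, where $\mX'\mB\mX/n$ contains the extra positive semidefinite term $\mS_n=\mPi'\mZ'\mZ\mPi/n$. Adding a positive semidefinite matrix to the lower-right block only increases the Schur complement, given that the cross term with $\vvarepsilon$ vanishes. Hence the Schur complement is at least the one computed from $\mSigma$, plus $o_p(1)$. We formalize this monotonicity along subsequences on which $\mS_n$ converges, which keeps the inequality valid without assuming convergence of $\mS_n$.
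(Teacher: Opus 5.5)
Your proof is correct. Part 1 follows essentially the paper's route: you split the SJIVE case ($\mB\mZ=\mO$) from the HLIM case ($\mB=\mI_n$), use mean/variance computations with Chebyshev, and kill the signal cross terms with variance bounds. The differences are cosmetic. You bound $\tr(\mB^2)\leq\|\mB\|\tr(\mB)$ instead of computing $\tr(\mB^2)=\sum_iB_{ii}P_{ii}/(1-P_{ii})$, and you use Cauchy--Schwarz for $\mX'\mB\vvarepsilon/\tr(\mB)$. You also correctly treat the HLIM signal block as $\mPi'\mZ'\mZ\mPi/n\leq c_u\mH/n$; the paper writes it as $\mH/n$, which is not literally the HLIM $\mH$.

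Part 2 is where you genuinely differ. The paper substitutes probability limits into the quadratic expansion in $\vbeta-\vbeta_0$: the limits of $\vvarepsilon'\mB\vvarepsilon/\tr(\mB)$ and $\vvarepsilon'\mB\mX/\tr(\mB)$, and either the limit of $\mX'\mB\mX/\tr(\mB)$ or, for HLIM, the lower bound $\mSigma_{22}+o_p(1)$. It then minimizes the deterministic quadratic $\sigma^2-2\vsigma_{12}\vt+\vt'\mSigma_{22}\vt$. Because those remainders are multiplied by $\vbeta-\vbeta_0$ and its outer product, the paper's $o_p(1)$ is uniform only over bounded sets of $\vbeta$.

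You instead minimize the sample quadratic exactly. This bounds $\widehat\sigma^2(\vbeta)$ below for all $\vbeta$ simultaneously by the Schur complement of $\mM_n$, so only this single random quantity has to be shown to converge. The result is a genuinely uniform bound, which is what Proposition \ref{Pcons} actually requires, since there the lemma is applied at a $\vbeta_n$ not yet known to be consistent or even bounded.

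Two minor points. First, state that $\mX'\mB\mX/\tr(\mB)$ is invertible with probability approaching one before writing the Schur complement; this holds because it is at least $\mSigma_{22}+o_p(1)$ in both cases. Second, in the HLIM case the subsequence device is unnecessary. Operator monotonicity of the inverse turns $\mX'\mX/n\geq\mSigma_{22}+o_p(1)$ into $(\mX'\mX/n)^{-1}\leq(\mSigma_{22}+o_p(1))^{-1}$ with probability approaching one. Combined with $\mX'\vvarepsilon/n\to_p\vsigma_{21}$, this gives the bound directly.
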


\begin{proof}

\begin{enumerate}
\item Consider first $\mX^{\prime}\mB\mX/\tr\left(  \mB\right)  $. From $\mX=\mZ\mPi+\mV$ we
obtain%
\begin{align*}
\dfrac{\mX^{\prime}\mB\mX}{\tr\left(  \mB\right)  } &  =\dfrac{\left(  \mZ\mPi+\mV\right)
^{\prime}\mB\left(  \mZ\mPi+\mV\right)  }{\tr\left(  \mB\right)  }\\
&  =\dfrac{\left(  \mZ\mPi\right)  ^{\prime}\mB\left(  \mZ\mPi\right)  }{\tr\left(
\mB\right)  }+\dfrac{\left(  \mZ\mPi\right)  ^{\prime}\mB\mV}{\tr\left(  \mB\right)
}+\dfrac{\mV^{\prime}\mB\mZ\mPi}{\tr\left(  \mB\right)  }+\dfrac{\mV^{\prime}\mB\mV}{\tr\left(
\mB\right)  }.
\end{align*}
Recall that $\mB$ is either $\mB$ from \cite{BC15} or $\mI_{n}$. In the
former case $\mB\mZ=\mO$, so
\[
\dfrac{\mX^{\prime}\mB\mX}{\tr\left(  \mB\right)  }=\dfrac{\mV^{\prime}\mB\mV}{\tr\left(
\mB\right)  }.
\]
In order to determine the limit of this, note that by (\ref{Sigmai})%
\[
\E\left[  \mV^{\prime}\mB\mV\right]  =\sum_{i=1}^{n}B_{ii}\mSigma_{i22}.
\]
Let $\vv$ be an arbitrary column of $\mV$. Then%
\[
\E\left[  \mV^{\prime}\mB\vv\right]  =\sum_{i=1}^{n}B_{ii}\sigma_{vi22},
\]
where $\sigma_{vi22}$ is the column of $\mSigma_{i22}$ corresponding to $\vv$.
Further,%
\[
\Var\left[  \mV^{\prime}\mB\vv\right]  =\E\left[  \mV^{\prime}\mB\vv\vv^{\prime}\mB\mV\right]
-\E\left[  \mV^{\prime}\mB\vv\right]  \E\left[  \mV^{\prime}\mB\vv\right]  ^{\prime},
\]
where{\small
\begin{align*}
\E\left[  \mV^{\prime}\mB\vv\vv^{\prime}\mB\mV\right]   &  =\sum_{i,j,k,\ell}%
B_{ij}B_{k\ell}\E\left[  \mV_{i}v_{j}v_{k}\mV_{\ell}^{\prime}\right]  \\
&  =\sum_{i,k}B_{ii}B_{kk}\E\left[  \mV_{i}v_{i}v_{k}\mV_{k}^{\prime}\right]
+\sum_{i\neq j}B_{ij}^{2}E\left[  \mV_{i}v_{j}v_{i}\mV_{j}^{\prime}\right]
+\sum_{i\neq j}B_{ij}B_{ji}E\left[  \mV_{i}v_{j}v_{j}\mV_{i}^{\prime}\right]  \\
&  =\sum_{i}B_{ii}^{2}\E\left[  v_{i}^{2}\mV_{i}\mV_{i}^{\prime}\right]  -\sum
_{i}B_{ii}^{2}\vsigma_{vi22}\vsigma_{vi22}^{\prime}\\
&  +\sum_{i,k}B_{ii}B_{kk}\vsigma_{vi22}\vsigma_{vk22}^{\prime}+\sum_{i\neq
j}B_{ij}^{2}\vsigma_{vi22}\vsigma_{vj22}^{\prime}+\sum_{i\neq j}B_{ij}^{2}%
\sigma_{vj22}^{2}\mSigma_{i22}.
\end{align*}
}In the last equality we have used (\ref{Sigmai}) and the symmetry of $\mB$;
$\sigma_{vj22}^{2}$ is the main diagonal element of $\mSigma_{j22}$
corresponding to $\vv$. Note that the third term is just $\E\left[  \mV^{\prime
}\mB\vv\right]  \E\left[  \mV^{\prime}\mB\vv\right]  ^{\prime}$, so%
\[
\Var\left[  \mV^{\prime}\mB\vv\right]  =\sum_{i}B_{ii}^{2}\E\left[  v_{i}^{2}%
\mV_{i}\mV_{i}^{\prime}\right]  -\sum_{i}B_{ii}^{2}\vsigma_{vi22}\vsigma
_{vi22}^{\prime}+\sum_{i\neq j}B_{ij}^{2}\vsigma_{vi22}\vsigma_{vj22}^{\prime
}+\sum_{i\neq j}B_{ij}^{2}\vsigma_{vj22}^{2}\mSigma_{i22}.
\]
Assumption \ref{ass2} implies that $\E\left[  v_{i}^{2}\mV_{i}\mV_{i}^{\prime
}\right]  \leq c_{u}I_{g}$, $\vsigma_{vi22}\vsigma_{vj22}^{\prime}\leq
c_{u}I_{g}$ and $\sigma_{vj22}^{2}\mSigma_{i22}\leq c_{u}I_{g}$ for any $i,j$,
so
\[
\Var\left[  \mV^{\prime}\mB\vv\right]  \leq c_{u}\left(  \sum_{i,j}B_{ij}%
^{2}\right)  I_{g}.
\]
It is known that $\sum_{i,j}B_{ij}^{2}=\tr\left(  \mB^{2}\right)  $ and by the
definition of $\mB$
\begin{align*}
\tr\left(  \mB^{2}\right)    & =\tr\left(  \left(  \mI-\mP\right)  \mD\left(
\mI-\mD\right)  ^{-1}\left(  \mI-\mP\right)  ^{2}\mD\left(  \mI-\mD\right)  ^{-1}\left(
\mI-\mP\right)  \right)  \\
& =\tr\left(  \left(  \mI-\mP\right)  ^{2}\mD\left(  \mI-\mD\right)  ^{-1}\left(
\mI-\mP\right)  ^{2}\mD\left(  \mI-\mD\right)  ^{-1}\right)  =\tr\left(  \mB\mD\left(
\mI-\mD\right)  ^{-1}\right)  ,
\end{align*}
where the latter equality follows from $\left(  \mI-\mP\right)  $ idempotent. By
direct computation we obtain that%
\[
\tr\left(  \mB\mD\left(  \mI-\mD\right)  ^{-1}\right)  =\sum_{i=1}^{n}B_{ii}%
\dfrac{P_{ii}}{1-P_{ii}},
\]
so Assumption \ref{ass1} implies that%
\[
\tr\left(  \mB^{2}\right)  \leq c_{u}tr\left(  \mB\right)  .
\]
Therefore, because $tr\left(  \mB\right)  =k$,%
\[
\Var\left[  \dfrac{\mV^{\prime}\mB\vv}{\tr\left(  \mB\right)  }\right]  \leq\dfrac
{c_{u}}{k}\mI_{g}\rightarrow\mO.
\]
Using Assumption \ref{ass3},%
\[
\E\left[  \dfrac{\mV^{\prime}\mB\mV}{\tr\left(  \mB\right)  }\right]  \rightarrow
\mSigma_{22},
\]
so we obtain the probablity limit%
\[
\dfrac{\mV^{\prime}\mB\mV}{\tr\left(  \mB\right)  }\rightarrow_{p}\mSigma_{22}.
\]
This also shows that
\begin{equation}
\dfrac{\mX^{\prime}\mB\mX}{\tr\left(  \mB\right)  }=O_{p}\left(  1\right)
.\label{xbx0}%
\end{equation}

Take now the case $\mB=\mI_{n}$; then $\tr\left(  \mB\right)  =n$ and%
%\begin{subequations}\label{0}%
\begin{equation}
\dfrac{\mX^{\prime}\mB\mX}{\tr\left(  \mB\right)  }=\dfrac{\mH}{n}+\dfrac{\left(
\mZ\mPi\right)  ^{\prime}\mV}{n}+\dfrac{\mV^{\prime}\mZ\mPi}{n}+\dfrac{\mV^{\prime}%
\mB\mV}{\tr\left(  \mB\right)  }.\label{xbx1}%
\end{equation}
The first term on the right hand side is $O\left(  1\right)  $ by Assumption
\ref{ass5}. Let now $\vv$ be an arbitrary column of $\mV$. Note that%
%\end{subequations}
\[
\E\left[  \dfrac{\left(  \mZ\mPi\right)  ^{\prime}\mV}{n}\right]  =\mO\text{ \ and
\ }\Var\left[  \dfrac{\left(  \mZ\mPi\right)  ^{\prime}\vv}{n}\right]  =\dfrac
{1}{n^{2}}\left(  \mZ\mPi\right)  ^{\prime}\E\left[  \vv\vv^{\prime}\right]  \mZ\mPi
\leq\dfrac{c_{u}}{n^{2}}\mH\rightarrow\mO,
\]
where the inequality follows by Assumption \ref{ass2} and the limit by Assumption \ref{ass5}. Consequently, $\left(
\mZ\mPi\right)  ^{\prime}\mV/n=o_{p}\left(  1\right)  $ and, similarly, the third
term $\mV^{\prime}\mZ\mPi/n=o_{p}\left(  1\right)  $. In order to determine the
limit of the last term, note that by (\ref{Sigmai})%
\[
\E\left[  \mV^{\prime}\mB\mV\right]  =\sum_{i=1}^{n}\mSigma_{i22}\text{ \ and
\ }\E\left[  \mV^{\prime}\mB\vv\right]  =\sum_{i=1}^{n}\vsigma_{vi22},
\]
where $\vv$ is an arbitrary column of $\mV$. Further,%
\[
\Var\left[  \mV^{\prime}\mB\vv\right]  =\E\left[  \mV^{\prime}\mB\vv\vv^{\prime}\mB\mV\right]
-\E\left[  \mV^{\prime}\mB\vv\right]  \E\left[  \mV^{\prime}\mB\vv\right]  ^{\prime},
\]
where%
{\small
\begin{align*}
\E\left[  \mV^{\prime}\mB\vv\vv^{\prime}\mB\mV\right]    & =\sum_{i,j,k,\ell}B_{ij}%
B_{k\ell}E\left[  \mV_{i}v_{j}v_{k}\mV_{\ell}^{\prime}\right]  \\
& =\sum_{i,k}B_{ii}B_{kk}E\left[  \mV_{i}v_{i}v_{k}\mV_{k}^{\prime}\right]
+\sum_{i,j}B_{ij}^{2}\E\left[  \mV_{i}v_{j}v_{i}\mV_{j}^{\prime}\right]
+\sum_{i,j}B_{ij}B_{ji}\E\left[  \mV_{i}v_{j}v_{j}\mV_{i}^{\prime}\right]  \\
& =\sum_{i,k}\vsigma_{vi22}\vsigma_{vk22}^{\prime}+\sum_{i}B_{ii}^{2}%
\vsigma_{vi22}\vsigma_{vj22}^{\prime}+\sum_{i}B_{ii}^{2}\sigma_{vj22}^{2}%
\mSigma_{i22}.
\end{align*}
}%
In the last equality we have used (\ref{Sigmai}) and that the off-diagonal
elements of $\mB$ are 0. Since the first term is just $\E\left[  \mV^{\prime
}\mB\vv\right]  \E\left[  \mV^{\prime}\mB\vv\right]  ^{\prime}$,
\[
\Var\left[  \mV^{\prime}\mB\vv\right]  =\sum_{i}B_{ii}^{2}\vsigma_{vi22}\vsigma
_{vj22}^{\prime}+\sum_{i}B_{ii}^{2}\sigma_{vj22}^{2}\mSigma_{i22}\leq
c_{u}n\mI_{g},
\]
where the inequality follows from Assumption \ref{ass2}. Therefore, because
$\tr\left(  \mB\right)  =n$,%
\[
\Var\left[  \dfrac{\mV^{\prime}\mB\vv}{\tr\left(  \mB\right)  }\right]  \leq\dfrac
{c_{u}}{n}\mI_{g}\rightarrow \mO.
\]
Using Assumption \ref{ass3},%
\[
\E\left[  \dfrac{\mV^{\prime}\mB\mV}{\tr\left(  \mB\right)  }\right]  \rightarrow
\mSigma_{22},
\]
so we obtain the probability limit%
\begin{equation}
\dfrac{\mV^{\prime}\mB\mV}{\tr\left(  \mB\right)  }\rightarrow_{p}\mSigma_{22}%
.\label{vbv}%
\end{equation}
Hence
\begin{equation}
\dfrac{\mX^{\prime}\mB\mX}{\tr\left(  \mB\right)  }=O_{p}\left(  1\right)  .\label{xbx}%
\end{equation}
Next we consider $\mX^{\prime}\mB\left(  \vy-\mX\vbeta_{n}\right)  /\tr\left(  \mB\right)
$. Since $\vy-\mX\vbeta_{n}=\mX\left(  \vbeta_{0}-\vbeta_{n}\right)  +\vvarepsilon$, we
obtain%
\[
\dfrac{\mX^{\prime}\mB\left(  \vy-\mX\vbeta_{n}\right)  }{\tr\left(  \mB\right)  }%
=\dfrac{\mX^{\prime}\mB\mX\left(  \vbeta_{0}-\vbeta_{n}\right)  }{\tr\left(  \mB\right)
}+\dfrac{\mX^{\prime}\mB\vvarepsilon}{\tr\left(  \mB\right)  }.
\]
Based on (\ref{xbx0}) and (\ref{xbx}) and the consistency of $\beta_{n}$, the
first term on the right hand side is $O_{p}\left(  1\right)  \cdot
o_{p}\left(  1\right)  =o_{p}\left(  1\right)  $. The second term is%
\begin{equation}
\dfrac{\mX^{\prime}\mB\vvarepsilon}{\tr\left(  \mB\right)  }=\dfrac{\left(
\mZ\mPi\right)  ^{\prime}\mB\vvarepsilon}{\tr\left(  \mB\right)  }+\dfrac{\mV^{\prime
}\mB\vvarepsilon}{\tr\left(  \mB\right)  }=o_{p}\left(  1\right)  +\vsigma
_{21},\label{xbe0}%
\end{equation}
where we treat the terms $\left(  \mZ\mPi\right)  ^{\prime}\mB\vvarepsilon/\tr\left(
\mB\right)  $ and $\mV^{\prime}\mB\vvarepsilon/\tr\left(  \mB\right)  $ respectively as
$\left(  \mZ\mPi\right)  ^{\prime}\mB\mV/\tr\left(  \mB\right)  $ and $\mV^{\prime
}\mB\mV/\tr\left(  \mB\right)  $ above, and invoke Assumption \ref{ass3}. Hence we
obtain%
\begin{equation}
\dfrac{\mX^{\prime}\mB\left(  \vy-\mX\vbeta_{n}\right)  }{\tr\left(  \mB\right)  }%
=\vsigma_{21}+o_{p}\left(  1\right)  .\label{xbe}%
\end{equation}
Finally we consider $\widehat{\sigma}^{2}\left(  \vbeta_{n}\right)
\equiv\left(  \vy-\mX\vbeta_{n}\right)  ^{\prime}\mB\left(  \vy-\mX\vbeta_{n}\right)
/\tr\left(  \mB\right)  $. Using $\vy-\mX\vbeta_{n}=\mX\left(  \vbeta_{0}-\vbeta
_{n}\right)  +\vvarepsilon$, we obtain%
\[
\dfrac{\left(  \vy-\mX\vbeta_{n}\right)  ^{\prime}\mB\left(  \vy-\mX\vbeta_{n}\right)
}{\tr\left(  \mB\right)  }=\dfrac{\left(  \vbeta_{0}-\vbeta_{n}\right)  ^{\prime
}\mX^{\prime}\mB\left(  \vy-\mX\vbeta_{n}\right)  }{\tr\left(  \mB\right)  }%
+\dfrac{\vvarepsilon^{\prime}\mB\mX\left(  \vbeta_{0}-\vbeta_{n}\right)  }{\tr\left(
\mB\right)  }+\dfrac{\vvarepsilon^{\prime}\mB\vvarepsilon}{\tr\left(  \mB\right)  }.
\]
By (\ref{xbe0}) and the consistency of $\vbeta_{n}$ the first two terms are
$o_{p}\left(  1\right)  $. By Assumption \ref{ass3} and treating
$\vvarepsilon^{\prime}\mB\vvarepsilon/\tr\left(  \mB\right)  $ as $\mV^{\prime
}\mB\mV/\tr\left(  \mB\right)  $ above we have%
\begin{equation}
\dfrac{\vvarepsilon^{\prime}\mB\vvarepsilon}{\tr\left(  \mB\right)  }=\sigma
^{2}+o_{p}\left(  1\right)  ,\label{ebe}%
\end{equation}
so the result follows.

\item Using $\vy=\mX\vbeta_{0}+\varepsilon$ we have
\begin{equation}
\widehat{\sigma}^{2}\left(  \vbeta\right) =\dfrac{\vvarepsilon^{\prime}\mB\vvarepsilon}{\tr\left(
\mB\right)  }-2\dfrac{\vvarepsilon^{\prime}\mB\mX\left(  \vbeta-\vbeta_{0}\right)
}{\tr\left(  \mB\right)  }+\dfrac{\left(  \vbeta-\vbeta_{0}\right)  ^{\prime
}\mX^{\prime}\mB\mX\left(  \vbeta-\vbeta_{0}\right)  }{\tr\left(  \mB\right)
}.\label{sg2}%
\end{equation}
In the case when $\mB$ is as in \cite{BC15}, $\mB\mZ=\mO$, so%
\[
\widehat{\sigma}^{2}\left(  \vbeta\right) =\dfrac{\vvarepsilon^{\prime}\mB\vvarepsilon}{\tr\left(
\mB\right)  }-2\dfrac{\vvarepsilon^{\prime}\mB\mV\left(  \vbeta-\vbeta_{0}\right)
}{\tr\left(  \mB\right)  }+\dfrac{\left(  \vbeta-\vbeta_{0}\right)  ^{\prime
}\mV^{\prime}\mB\mV\left(  \vbeta-\vbeta_{0}\right)  }{\tr\left(  \mB\right)  }.
\]
Using %part 1 with $\vbeta_{n}=\vbeta_{0}$
\ref{ebe}, \ref{xbe0} and \ref{vbv}, we get
\begin{equation}
\widehat{\sigma}^{2}\left(  \vbeta\right)  =\sigma^{2}-2\vsigma_{12}\left(
\vbeta-\vbeta_{0}\right)  +\left(  \vbeta-\vbeta_{0}\right)  ^{\prime}\mSigma
_{22}\left(  \vbeta-\vbeta_{0}\right)  +o_{p}\left(  1\right)  .
\label{s2b}
\end{equation}
Note that the quadratic function in $\vt$%
\[
\sigma^{2}-2\vsigma_{12}\vt+\vt^{\prime}\mSigma_{22}\vt\geq\sigma^{2}-\vsigma
_{12}\mSigma_{22}^{-1}\vsigma_{21},
\]
where the right hand side is attained for $\vt=\mSigma_{22}^{-1}\vsigma_{21}$.
Since by Assumption \ref{ass3}  $\mSigma=\left(
\begin{array}
[c]{cc}%
\sigma^{2} & \vsigma_{12}\\
\vsigma_{21} & \mSigma_{22}%
\end{array}
\right)  $ is positive definite, its determinant%
\[
\left\lvert \mSigma \right\rvert =\left\lvert \mSigma_{22} \right\rvert 
\left(\sigma^{2}-\vsigma_{12}\mSigma_{22}^{-1}\vsigma_{21}\right)  >0,
\]
so
\[
\widehat{\sigma}^{2}\left(  \vbeta\right)  \geq\sigma^{2}-\vsigma_{12}%
\mSigma_{22}^{-1}\vsigma_{21}+o_{p}\left(  1\right)  ,
\]
where $\sigma^{2}-\vsigma_{12}\mSigma_{22}^{-1}\vsigma_{21}>0$. 

In the case when $\mB=\mI_{n}$, using part 1 and $\dfrac{\mH}{n}\geq0$, we have%
\[
\dfrac{\mX^{\prime}\mB\mX}{\tr\left(  \mB\right)  }\geq\mSigma_{22}+o_{p}\left(
1\right)  .
\]
Therefore, employing the results from part 1 to equation (\ref{sg2}) we obtain%
\[
\widehat{\sigma}^{2}\geq\sigma^{2}-2\vsigma_{12}\left(  \vbeta-\vbeta_{0}\right)
+\left(  \vbeta-\vbeta_{0}\right)  ^{\prime}\mSigma_{22}\left(  \vbeta-\vbeta
_{0}\right)  +o_{p}\left(  1\right)  ,
\]
so%
\[
\widehat{\sigma}^{2}\left(  \vbeta\right)  \geq\sigma^{2}-\vsigma_{12}%
\mSigma_{22}^{-1}\vsigma_{21}+o_{p}\left(  1\right)
\]
in this case as well.
\end{enumerate}
\end{proof}

\begin{lemma}
\label{AHA}Let $\widetilde{\mH}$ be a $g\times g$ matrix with $\widetilde{\mH}%
^{-1}\mH=\mI_{g}+o_{p}\left(  1\right)  ,$ $\mA$ as in (\ref{lrestrictions}) and
suppose that $(\mA\mH^{-1}\mA^{\prime})^{-1}\mA\mH^{-1}=O\left(  1\right)  $.
Then $\mA^{\prime}\left(  \mA\widetilde{\mH}^{-1}\mA^{\prime}\right)  ^{-1}%
\mA\widetilde{\mH}^{-1}=\mA^{\prime}\left(  \mA\mH^{-1}\mA^{\prime}\right)  ^{-1}%
\mA\mH^{-1}+o_{p}\left(  1\right)  $.
\end{lemma}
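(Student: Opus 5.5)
The natural route is to write $\widetilde{\mH}^{-1}=\mH^{-1}(\mI_g+\mE_n)$, where $\mE_n=\mH\widetilde{\mH}^{-1}-\mI_g$. We then track how this multiplicative perturbation propagates through the map $\mM\mapsto\mA'(\mA\mM\mA')^{-1}\mA\mM$.

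The first step is to show that $\mE_n=o_p(1)$. From $\widetilde{\mH}^{-1}\mH=\mI_g+o_p(1)$, the matrix $\widetilde{\mH}^{-1}\mH$ is invertible with probability approaching one. Its inverse is $\mH^{-1}\widetilde{\mH}=\mI_g+o_p(1)$, by continuity of inversion at $\mI_g$. Transposing and using symmetry of $\mH$ and $\widetilde{\mH}$ (in the applications $\widetilde{\mH}$ is a symmetric estimator such as $\widehat{\mH}$; otherwise one works with the non-transposed form), we get $\mH\widetilde{\mH}^{-1}=(\widetilde{\mH}^{-1}\mH)'=\mI_g+o_p(1)$. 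Alternatively, $\mH\widetilde{\mH}^{-1}=\mH(\widetilde{\mH}^{-1}\mH)\mH^{-1}$ is similar to $\widetilde{\mH}^{-1}\mH$, but that similarity is not uniform in $n$, so the symmetry route is the one I would use.

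Define $\mG_n=(\mA\mH^{-1}\mA')^{-1}\mA\mH^{-1}$, which is $O(1)$ by assumption. The second step handles the inner matrix:
\[
\mA\widetilde{\mH}^{-1}\mA'=\mA\mH^{-1}\mA'+\mA\mH^{-1}\mE_n\mA'.
\]
Premultiplying by $(\mA\mH^{-1}\mA')^{-1}$ gives
\[
(\mA\mH^{-1}\mA')^{-1}\mA\widetilde{\mH}^{-1}\mA'=\mI_p+\mG_n\mE_n\mA'=\mI_p+o_p(1),
\]
because $\mG_n=O(1)$, $\mE_n=o_p(1)$ and $\mA$ is fixed. Inverting yields
\[
(\mA\widetilde{\mH}^{-1}\mA')^{-1}=\left(\mI_p+o_p(1)\right)(\mA\mH^{-1}\mA')^{-1}.
\]
The key point is that this representation never requires $(\mA\mH^{-1}\mA')^{-1}$ itself to be bounded. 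That matrix diverges, since $\mH$ grows like $r_{\min}$, and this is the main obstacle: naive additive perturbation bounds fail because of the scaling. Writing every perturbation relative to $\mG_n$ avoids the problem.

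The third step assembles the pieces:
\begin{align*}
\mA'(\mA\widetilde{\mH}^{-1}\mA')^{-1}\mA\widetilde{\mH}^{-1}
&=\mA'\left(\mI_p+o_p(1)\right)(\mA\mH^{-1}\mA')^{-1}\mA\mH^{-1}(\mI_g+\mE_n)\\
&=\mA'\left(\mI_p+o_p(1)\right)\mG_n(\mI_g+\mE_n).
\end{align*}
Since $\mG_n=O(1)$ and $\mA$ is bounded, expanding the product leaves $\mA'\mG_n$ plus terms that are each a bounded matrix times $o_p(1)$. This equals $\mA'(\mA\mH^{-1}\mA')^{-1}\mA\mH^{-1}+o_p(1)$, which is the claim.
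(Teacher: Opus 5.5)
Your proof is correct and follows essentially the paper's route. Normalize $\mA\widetilde{\mH}^{-1}\mA'$ by $(\mA\mH^{-1}\mA')^{-1}$, use boundedness of $(\mA\mH^{-1}\mA')^{-1}\mA\mH^{-1}$ to show it is $\mI_p+o_p(1)$, and factor the whole expression as $\mA'(\mI_p+o_p(1))(\mA\mH^{-1}\mA')^{-1}\mA\mH^{-1}(\mI_g+o_p(1))$; as you note, symmetry of $\widetilde{\mH}$ is needed to get $\mH\widetilde{\mH}^{-1}=\mI_g+o_p(1)$, and the paper uses this silently.

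Do drop the aside that a non-symmetric $\widetilde{\mH}$ could be handled ``with the non-transposed form''. Take $\mA=(1,0)$, $\mH=\diag(h_1,h_2)$ with $h_1/h_2\to\infty$, and $\widetilde{\mH}^{-1}=(\mI_2+\mR)\mH^{-1}$, where $\mR$ has the single nonzero entry $R_{12}=(h_2/h_1)^{1/2}$. All hypotheses hold, yet the $(1,2)$ entry of $\mA'(\mA\widetilde{\mH}^{-1}\mA')^{-1}\mA\widetilde{\mH}^{-1}$ is $(h_1/h_2)^{1/2}\to\infty$, so symmetry is genuinely necessary rather than a convenience.
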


\begin{proof}
First we prove that
\begin{equation}
\left(  \mA\widetilde{\mH}^{-1}\mA^{\prime}\right)  ^{-1}\mA\mH^{-1}\mA^{\prime}%
=\mI_{p}+o_{p}\left(  1\right)  .\label{AH1}%
\end{equation}
Indeed, we have%
\begin{align*}
\mA\widetilde{\mH}^{-1}\mA^{\prime}\left(  \mA\mH^{-1}\mA^{\prime}\right)  ^{-1}  
&=\mA \widetilde{\mH}^{-1}\mH\mH^{-1}\mA^{\prime}\left(  \mA \mH^{-1}\mA^{\prime}\right)  ^{-1}\\
& =\mA \left(  \mI_{g}+o_{p}\left(  1\right)  \right)  \mH^{-1}\mA^{\prime
}\left(  \mA \mH^{-1}\mA^{\prime}\right)  ^{-1}\\
& =\mI_{p}+\mA o_{p}\left(  1\right)  \left( \left(  \mA \mH^{-1}\mA^{\prime}\right)  ^{-1} \mA\mH^{-1} \right)^\prime\\
& = \mI_{p}+O\left(  1\right)o_{p}\left(  1\right)O\left(  1\right) =\mI_{p}+o_{p}\left(  1\right)  ,
\end{align*}
which implies (\ref{AH1}). Now note that%
{\footnotesize
\begin{align*}
\mA^{\prime}\left(  \mA\widetilde{\mH}^{-1}\mA^{\prime}\right)  ^{-1}\mA\widetilde{\mH}%
^{-1}  & =\mA^{\prime}\left(  \mA\widetilde{\mH}^{-1}\mA^{\prime}\right)  ^{-1}%
\mA\mH^{-1}\mA^{\prime}\left(  \mA\mH^{-1}\mA^{\prime}\right)  ^{-1}\mA\mH^{-1}\mH\widetilde{\mH}%
^{-1}\\
& =\mA^{\prime}\left(  \mI_{p}+o_{p}\left(  1\right)  \right)  \left(
\mA\mH^{-1}\mA^{\prime}\right)  ^{-1}\mA\mH^{-1}\left(  \mI_{g}+o_{p}\left(  1\right)  \right)
\\
& =\mA^{\prime}\left(  \mA\mH^{-1}\mA^{\prime}\right)  ^{-1}\mA\mH^{-1}+O(1)o_{p}\left(
1\right) +O(1)o_{p}\left( 1\right)O(1)+O(1)o_{p}\left( 1\right)O(1)o_{p}\left( 1\right)
\\
& =\mA^{\prime}\left(  \mA\mH^{-1}\mA^{\prime}\right)  ^{-1}\mA\mH^{-1}+o_{p}\left(
1\right)  .
\end{align*}
}%
\end{proof}

\clearpage

\section{Auxiliary results}\label{Auxres}

\subsection{Theorems and proofs of main results}

\begin{proposition}
\label{Pcons}
Under Assumptions \ref{ass1} to \ref{ass3} any estimator $\vbeta_n$ that satisfies $Q_{n}(\vbeta_n)\leq Q_{n}(\vbeta_{0})$ is
consistent and $(\vbeta_n-\vbeta_{0})^{\prime}\mH(\vbeta_n%
-\vbeta_{0})/\sqrt{k}=o_{p}\left(  1\right)  $.
\end{proposition}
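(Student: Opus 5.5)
Write $\vdelta_n=\vbeta_n-\vbeta_0$, so that $\vy-\mX\vbeta_n=\vvarepsilon-\mX\vdelta_n$. The plan is to expand the inequality $Q_n(\vbeta_n)\le Q_n(\vbeta_0)$ into a quadratic inequality in $\vdelta_n$ and then use the orders of magnitude in Lemma \ref{LA4} and Lemma \ref{B}.

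First treat the JIVE case \eqref{objfun2}. The inequality becomes
\[
\vdelta_n'\mX'\mC\mX\vdelta_n\le 2\vdelta_n'\mX'\mC\vvarepsilon .
\]
By Lemma \ref{LA4}.1 the left side is $\vdelta_n'\mH\vdelta_n(1+o_p(1))$, uniformly in the direction of $\vdelta_n$, because $\mH^{-1/2}\mX'\mC\mX\mH^{-1/2}=\mI_g+o_p(1)$. By Lemma \ref{LA4}.2 and Assumption \ref{ass5}, $\|\mH^{-1/2}\mX'\mC\vvarepsilon\|=O_p(\sqrt{(r_{\max}+k)/r_{\min}})=O_p(k^{1/4})$ when we use $r_{\max}\le c_uk$ and $r_{\min}\ge\sqrt k$ eventually. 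Writing $a_n=\|\mH^{1/2}\vdelta_n\|$, Cauchy--Schwarz gives $a_n^2(1+o_p(1))\le 2a_n\,O_p(k^{1/4})$, hence $a_n=O_p(k^{1/4})$. This yields $\vdelta_n'\mH\vdelta_n/\sqrt k=O_p(1)$, and then $\|\vdelta_n\|^2\le a_n^2/r_{\min}=O_p(\sqrt k/r_{\min})=o_p(1)$, which is consistency.

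The claimed $o_p(1)$ rate needs a sharper bound on the cross term. I would separate $\mX'\mC\vvarepsilon=\mPi'\mZ'\mC\vvarepsilon+\mV'\mC\vvarepsilon$. The first piece has variance at most $c_u\mH$, so after multiplying by $\mH^{-1/2}$ it is $O_p(1)$. The second piece is $O_p(\sqrt k)$ by the quadratic-form variance bound, so after multiplying by $\mH^{-1/2}$ it is $O_p(\sqrt{k/r_{\min}})=o_p(k^{1/4})$. Together these give $a_n=o_p(k^{1/4})$, which is the stated $o_p(1)$ rate.

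For the ratio case \eqref{objfun} I use the same skeleton, but the denominators have to be controlled. The inequality $Q_n(\vbeta_n)\le Q_n(\vbeta_0)$ reads
\[
(\vvarepsilon-\mX\vdelta_n)'\mC(\vvarepsilon-\mX\vdelta_n)\le \frac{\widehat\sigma^2(\vbeta_n)}{\widehat\sigma^2(\vbeta_0)}\,\vvarepsilon'\mC\vvarepsilon .
\]
Lemma \ref{B}.1 gives $\widehat\sigma^2(\vbeta_0)\to_p\sigma^2>0$. Lemma \ref{LA4}.3 gives $\vvarepsilon'\mC\vvarepsilon=O_p(\sqrt k)$. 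The numerator $\widehat\sigma^2(\vbeta_n)$ is a quadratic in $\vdelta_n$ of the form $\widehat\sigma^2(\vbeta_0)-2\vvarepsilon'\mB\mX\vdelta_n/k+\vdelta_n'\mX'\mB\mX\vdelta_n/k$, with $\mX'\mB\mX/k=O_p(1)$ (for HLIM, $\mX'\mX/n\le\mH/n+O_p(1)$). So the right side is at most $O_p(\sqrt k)(1+\|\vdelta_n\|+\|\vdelta_n\|^2)$. Since $\vdelta_n'\mH\vdelta_n\ge r_{\min}\|\vdelta_n\|^2$ and $\sqrt k/r_{\min}\to0$, the $\|\vdelta_n\|^2$ terms on the right are absorbed by the left side. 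The argument above then gives $a_n=O_p(k^{1/4})$, which implies consistency.

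Once consistency holds, Lemma \ref{B}.1 gives $\widehat\sigma^2(\vbeta_n)/\widehat\sigma^2(\vbeta_0)=1+o_p(1)$. The inequality then reduces to
\[
a_n^2\le 2a_n\,o_p(k^{1/4})+o_p(1)\,O_p(\sqrt k)=o_p(\sqrt k),
\]
as before.

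The main obstacle is the ratio case, in two places. The first is the absorption step: the ratio must be handled uniformly over $\vbeta$, because $\vbeta_n$ is not yet known to be consistent. Lemma \ref{B}.2, which bounds $\widehat\sigma^2(\vbeta)$ from below uniformly, is what I would use to keep the denominator away from zero. The second is getting the $o_p$ rather than $O_p$ rate. That requires showing the ratio factor is $1+o_p(1)$ multiplying an $O_p(\sqrt k)$ term, so the second-round argument after consistency is essential.
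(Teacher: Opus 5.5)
Your argument is correct, but it is organised differently from the paper's.

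\textbf{The paper's route.} The paper keeps $\widehat\sigma^2(\vbeta_n)$ in the denominator and expands $Q_n(\vbeta_n)-Q_n(\vbeta_0)$ into a ratio term, a quadratic term and a cross term. It then rescales twice. First it divides by $r_{\min}$, using Lemma \ref{B}.2 to keep $\widehat\sigma^2(\vbeta_n)$ away from zero and a Rayleigh-quotient bound to obtain consistency. Then it divides by $\sqrt k$, using consistency together with $\vvarepsilon'\mC\vvarepsilon/\sqrt k=O_p(1)$ and $\mX'\mC\vvarepsilon/\sqrt k=O_p(1)$ to remove the ratio and cross terms.

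\textbf{Your route.} You clear the denominator, so only $\widehat\sigma^2(\vbeta_0)$ is ever inverted. You bound the cross term in the $\mH$-metric by $a_n\|\mH^{-1/2}\mX'\mC\vvarepsilon\|$, which reduces everything to a scalar quadratic inequality in $a_n$.

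\textbf{What each buys.} Your formulation does not need Lemma \ref{B}.2 at all, contrary to your closing remark; the polynomial upper bound on $\widehat\sigma^2(\vbeta_n)$ you already wrote down suffices. It also avoids a step the paper leaves implicit: that $\|\vbeta_n-\vbeta_0\|/\widehat\sigma^2(\vbeta_n)$ and $\|\vbeta_n-\vbeta_0\|^2/\widehat\sigma^2(\vbeta_n)$ are $O_p(1)$ before consistency is known. And it yields an explicit rate. The paper's version is shorter, and the expansion it produces is reused in the proof of Theorem \ref{basicresult}.

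\textbf{Simplifications available to you.} Your bound is stronger than you exploit. Since $\|\mH^{-1/2}\mX'\mC\vvarepsilon\|=O_p(\sqrt{k/r_{\min}})$ and $\sqrt{k/r_{\min}}=k^{1/4}(\sqrt k/r_{\min})^{1/2}=o(k^{1/4})$, the first pass already gives $a_n^2=O_p(k/r_{\min})=o_p(\sqrt k)$. Replacing $r_{\min}$ by $\sqrt k$ is what discarded the $o$, so splitting $\mX'\mC\vvarepsilon$ into its two pieces is unnecessary. In the ratio case you can also finish in one pass: cancel $\vvarepsilon'\mC\vvarepsilon$ on both sides and use $|\widehat\sigma^2(\vbeta_n)/\widehat\sigma^2(\vbeta_0)-1|\le O_p(1)(\|\vbeta_n-\vbeta_0\|+\|\vbeta_n-\vbeta_0\|^2)$. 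The remaining right side is then $O_p(\sqrt{k/r_{\min}})\,a_n+O_p(\sqrt k/r_{\min})\,a_n^2$, and the same conclusion follows without proving consistency first.

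\textbf{One small point.} Derive $\mH^{-1/2}\mX'\mC\mX\mH^{-1/2}=\mI_g+o_p(1)$ from $\|\mX'\mC\mX-\mH\|=o_p(r_{\min})$, the first claim of Lemma \ref{LA4}.1. Do not derive it from $\mH^{-1}\mX'\mC\mX=\mI_g+o_p(1)$: conjugating by $\mH^{1/2}$ can inflate the error when $r_{\max}/r_{\min}$ is unbounded.
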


\begin{proof}[Proof of Proposition \ref{Pcons}]

Using $\vy-\mX\vbeta_n=\vvarepsilon-\mX(\vbeta_n-\vbeta_{0})$, we
obtain
\begin{equation}
Q_{n}(\vbeta_n)-Q_{n}(\vbeta_{0})=\left(  \dfrac{1}{\widehat{\sigma}^{2}(\vbeta_n)}-\dfrac{1}%
{\widehat{\sigma}^{2}(\vbeta_{0})}\right) \vvarepsilon
^{\prime}\mC\vvarepsilon +\dfrac{(\vbeta_n-\vbeta_{0})^{\prime}\mX^{\prime
}\mC\mX(\vbeta_n-\vbeta_{0})}{\widehat{\sigma}^{2}(\vbeta_n)}%
-\dfrac{2\vvarepsilon^{\prime}\mC\mX(\vbeta_n-\vbeta_{0})}{\widehat{\sigma
}^{2}(\vbeta_n)},\label{dQn}%
\end{equation}
where we use the notation 
\begin{equation} 
\widehat{\sigma}^{2}\left(  \vbeta\right)\equiv 
\dfrac{\left(  \vy-\mX\vbeta\right)  ^{\prime}\mB\left(  \vy-\mX\vbeta\right)}{\tr\left(  \mB\right)}.\label{sh2}
\end{equation} 
Now we rescale (\ref{dQn}) by ${1}/{r_{\min}}$ and observe that by Lemma
\ref{LA4} and Assumptions \ref{ass1}, \ref{ass2}, \ref{ass5},%
\[
\dfrac{1}{r_{\min}}\vvarepsilon^{\prime}\mC\vvarepsilon=o_{p}\left(  1\right)
,~~\dfrac{1}{r_{\min}}\vvarepsilon^{\prime}\mC\mX=o_{p}\left(  1\right)  .
\]
Lemma \ref{B}.1 applied to $\vbeta_{n}=\vbeta_{0}$ implies $\widehat{\sigma}^{2}(\vbeta_{0})\rightarrow_{p}\sigma^{2}$
and by Lemma \ref{B}.2, $\widehat{\sigma}^{2}({\vbeta_n})$ is bounded
away from 0 in the limit, so for the first and the third terms we get%

\begin{align}
\left(  \dfrac{1}{\widehat{\sigma}^{2}(\vbeta_n)}-\dfrac{1}%
{\widehat{\sigma}^{2}(\vbeta_{0})}\right)  \dfrac{\vvarepsilon^{\prime
}\mC\vvarepsilon}{r_{\min}}  & =O_{p}\left(  1\right)  \cdot o_{p}\left(
1\right)  =o_{p}\left(  1\right)  , \label{ece} \\
\dfrac{2\vvarepsilon^{\prime}\mC\mX}{r_{\min}}\dfrac{({\vbeta_n}-\vbeta_{0}%
)}{\widehat{\sigma}^{2}({\vbeta_n})}  & =o_{p}\left(  1\right)  \cdot
O_{p}\left(  1\right)  =o_{p}\left(  1\right)  . \label{ecx}
\end{align}
With respect to the second term on the right hand side of (\ref{dQn}) we first note that from Lemma \ref{LA4}.1
\begin{equation}
\dfrac{\mX^{\prime}\mC\mX}{r_{\min}}=\dfrac{\mH}{r_{\min}}+o_{p}\left(  1\right).  \label{xcx}%
\end{equation}
Substituting this into the rescaled version of (\ref{dQn}) and using (\ref{ece}) and (\ref{ecx}) we obtain
\[
\dfrac{Q_{n}(\vbeta_{n})-Q_{n}(\vbeta_{0})}{r_{\min}}
%\geq\dfrac{1}{c_{u}}
=\dfrac{1}{\widehat{\sigma}^{2}(\vbeta_{n})}(\vbeta_{n}-\vbeta
_{0})^{\prime}\dfrac{\mH}{r_{\min}}(\vbeta_{n}-\vbeta_{0})+o_{p}\left(
1\right)  .
\]
Since $\mH/r_{\min}$ is positive definite in the limit having the eigenvalues $\geq1$, we
obtain from the Rayleigh quotient that%
\[
\dfrac{Q_{n}(\vbeta_{n})-Q_{n}(\vbeta_{0})}{r_{\min}}\geq\dfrac{\left\Vert \vbeta_{n}-\vbeta_{0}\right\Vert ^{2}}{\widehat{\sigma
}^{2}(\vbeta_{n})}+o_{p}\left(  1\right)  .
\]
Together with $Q_{n}(\vbeta_n)\leq Q_{n}(\vbeta_{0})$  this implies%
\[
\dfrac{\left\Vert \vbeta_n-\vbeta_{0}\right\Vert ^{2}}{\widehat{\sigma
}^{2}(\vbeta_n)}=o_{p}\left(  1\right)  .
\]
Since both the numerator and denominator are quadratic in $\left(
\vbeta_n-\vbeta_{0}\right)  $, this can only happen, if
\begin{equation}
\vbeta_n-\vbeta_{0}=o_{p}\left(  1\right)  ,\label{cons}%
\end{equation}
that is, $\vbeta_n$ is consistent.

Now rescaling (\ref{dQn}) by ${1}/{\sqrt{k}}$ and using $Q_{n}(\vbeta_n)\leq Q_{n}(\vbeta_{0})$, we
get%
\begin{equation}
\left(  \dfrac{1}{\widehat{\sigma}^{2}(\vbeta_n)}-\dfrac{1}%
{\widehat{\sigma}^{2}(\vbeta_{0})}\right)  \dfrac{\vvarepsilon^{\prime
}\mC\vvarepsilon}{\sqrt{k}}  %& \dfrac{\left(  \widehat{\sigma}^{2}(\vbeta_{0})-\widehat{\sigma}^{2}%
%(\vbeta_n)\right)  \vvarepsilon^{\prime}\mC\vvarepsilon}{\widehat{\sigma
%}^{2}(\vbeta_{0})\widehat{\sigma}^{2}(\vbeta_n)\sqrt{k}}
+\dfrac
{(\vbeta_n-\vbeta_{0})^{\prime}\mX^{\prime}\mC\mX(\vbeta_n-\vbeta_{0}%
)}{\widehat{\sigma}^{2}(\vbeta_n)\sqrt{k}}-\dfrac{2\vvarepsilon^{\prime
}\mC\mX(\vbeta_n-\vbeta_{0})}{\widehat{\sigma}^{2}(\vbeta_n)\sqrt{k}%
}\leq0. \label{l2p}
\end{equation}
Assumptions \ref{ass1}, \ref{ass2}, \ref{ass5} and Lemma \ref{LA4} imply%
\begin{equation}
\frac{1}{\sqrt{k}}\vvarepsilon^{\prime}\mC\vvarepsilon=O_{p}\left(  1\right)
,~~\frac{1}{\sqrt{k}}\vvarepsilon^{\prime}\mC\mX=O_{p}\left(  1\right)
,\label{1/rk}%
\end{equation}
and in addition, using Lemma \ref{LA4}, Lemma \ref{B}.1 and (\ref{cons}), we obtain that the first and the third terms on the left hand side of (\ref{l2p}) are $o_{p}(1)$. Further, by Lemma \ref{LA4}.1 (\ref{l2p}) becomes
\[
\dfrac{(\vbeta_n-\vbeta_{0})^{\prime}\mH(\vbeta_n-\vbeta_{0})}%
{\sqrt{k}}+o_{p}\left(  1\right)  \leq0.
\]
Since $(\vbeta_n-\vbeta_{0})^{\prime}\mH(\vbeta_n-\vbeta_{0}%
)/\sqrt{k}\geq0$, we find that 
%\textcolor{red}{\bf the result below is part of the proposition's thesis but it seems to be part of the old proof}
\begin{equation}
\dfrac{(\vbeta_n-\vbeta_{0})^{\prime}\mH(\vbeta_n-\vbeta_{0})}%
{\sqrt{k}}=o_{p}\left(  1\right)  .\label{rate}%
\end{equation}
This completes the proof.
\end{proof}

\begin{proof}
[Proof of Theorem \ref{basicresult}]%
%\textcolor{blue}{For JIVE $\vzeros=\nabla Q_{n}(\widehat{\vbeta})=-2\mX'\mC(\vy-\mX\widehat{\vbeta})$, so the result should be the same for $\widehat{\sigma}^{2}(\widehat{\vbeta})=1$ and $\widehat{\mC}(\widehat{\vbeta})=\mC$.} 
From the first order condition of minimising $Q_{n}$ we have $\nabla Q_{n}(\widehat{\vbeta})=\vzeros$. Since 
\begin{equation*}
\nabla Q_{n}(\vbeta) =\dfrac{-2}{\widehat{\sigma}^{2}\left(  \vbeta\right)  }\mX^{\prime
}\widehat{\mC}\left(  \vbeta\right)  \left(  \vy-\mX\vbeta\right) 
\end{equation*}

we obtain\footnote{ The gradient of $Q_{n}(\vbeta)$ is
\begin{align}
\nabla Q_{n}(\vbeta) &  =\frac{-2\mX^{\prime}\mC(\vy-\mX\vbeta)}{(\vy-\mX\vbeta)^{\prime
}\mB(\vy-\mX\vbeta)/\tr(\mB)}-Q_{n}(\vbeta)\frac{-\frac{2}{\tr(\mB)}X^{\prime}\mB(\vy-\mX\vbeta
)}{(\vy-\mX\vbeta)^{\prime}\mB(\vy-\mX\vbeta)/\tr(\mB)}\nonumber\\
&  =\frac{-2\tr(\mB)}{(\vy-\mX\vbeta)^{\prime}\mB(\vy-\mX\vbeta)}\left(  \mX^{\prime}
\mC(\vy-\mX\vbeta)-\frac{Q_{n}(\vbeta)}{\tr(\mB)}\mX^{\prime}\mB(\vy-\mX\vbeta)\right)
\nonumber\\
&  =\dfrac{-2}{\widehat{\sigma}^{2}\left(  \vbeta\right)  }\mX^{\prime
}\widehat{\mC}\left(  \vbeta\right)  \left(  \vy-\mX\vbeta\right). \nonumber
\end{align}}\label{ftn}
\[
\dfrac{-2}{\widehat{\sigma}^{2}(\widehat{\vbeta})}\mX^{\prime}\widehat{\mC}%
(\widehat{\vbeta})(\vy-\mX\widehat{\vbeta})=\vzeros.
\]
Using $\vy-\mX\widehat{\vbeta}=\vvarepsilon-\mX(\widehat{\vbeta}-\vbeta_{0})$, (\ref{C-hat}) and rescaling by $\frac{1}{\sqrt{k}}$, we obtain%
\begin{equation}
\frac{1}{\sqrt{k}}\mX^{\prime}\left(  \mC-\dfrac{Q_{n}(\widehat{\vbeta})}{\tr\left(
\mB\right)  }\mB\right)  \vvarepsilon=\frac{1}{\sqrt{k}}\mX^{\prime}\widehat{\mC}%
(\widehat{\vbeta})\mX(\widehat{\vbeta}-\vbeta_{0}).\label{foc2}%
\end{equation}
From (\ref{dQn}) rescaled by $\frac{1}{\sqrt{k}}$, by (\ref{1/rk}), Lemma \ref{LA4}.1, 
(\ref{rate}), Lemma \ref{B}.1 and the consistency of $\widehat{\vbeta}$, we obtain that
\begin{align}
\frac{1}{\sqrt{k}}\left(  Q_{n}(\widehat{\vbeta})-Q_{n}(\vbeta_{0})\right)   &
=o_{p}\left(  1\right)  \text{ \ and}\label{cdQ}\\
\frac{1}{\sqrt{k}}Q_{n}(\widehat{\vbeta}) &  =O_{p}\left(  1\right)
.\label{cQn}%
\end{align}
Together with Lemma \ref{B}.1, \ref{cQn} implies that the expressions on the two sides of (\ref{foc2}) are
$O_{p}\left(  1\right)  $. Substituting $Q_{n}(\widehat{\vbeta})$ from (\ref{cdQ}) into the left hand side
%and (\ref{cQn}) in the right hand side 
of (\ref{foc2}) and using (\ref{tau}), (\ref{C-hat}) and Lemma \ref{B}.1, we obtain that the left hand side of (\ref{foc2}) is equal to
\[
\frac{1}{\sqrt{k}}\mX^{\prime}\left(  \mC-\dfrac{Q_{n}(\widehat{\vbeta})}{\tr\left(
\mB\right)  }\mB\right)  \vvarepsilon  =\vtau_{n}+o_{p}\left(  1\right)  .
\]
With respect to the right hand side of (\ref{foc2}) we note that from 
(\ref{C-hat}) we have%
\[
\mX^{\prime}\widehat{\mC}%
(\widehat{\vbeta})\mX\mH^{-1}=\mX^{\prime}\left(  \mC-\dfrac{Q_{n}(\widehat{\vbeta})}{\tr\left(
\mB\right)  }\mB\right)  \mX\mH^{-1}=\mX^{\prime}\mC\mX\mH^{-1}-\mX^{\prime}\dfrac
{Q_{n}(\widehat{\vbeta})}{tr\left(  \mB\right)  }\mB\mX\mH^{-1},
\]
where by Lemma \ref{LA4}.1 $\mX^{\prime}\mC\mX\mH^{-1}=\mI_{g}+o_{p}\left(  1\right)  $ and by
Assumption \ref{ass5}, (\ref{cQn}) and Lemma \ref{B}.1 % Assumption \ref{ass6}
\[
\mX^{\prime}\dfrac{Q_{n}(\widehat{\vbeta})}{\tr\left(  \mB\right)  }\mB\mX\mH^{-1}%
=\dfrac{\sqrt{k}}{r_{\min}}\dfrac{Q_{n}(\widehat{\vbeta})}{\sqrt{k}}\dfrac{\mX^{\prime}\mB\mX}{\tr\left(
\mB\right)  }r_{\min}\mH^{-1}=o\left(  1\right)  \cdot O_{p}\left(  1\right)  \cdot O_{p}\left(  1\right)
\cdot O\left(  1\right)  =o_{p}\left(  1\right),
\]
which implies that
\begin{equation}
\mX^{\prime}\widehat{\mC}%
(\widehat{\vbeta})\mX\mH^{-1}=\mI_{g}+o_{p}\left(  1\right)  .\label{HhHi}%
\end{equation}
Therefore, the right hand side of (\ref{foc2}) is 
\begin{equation*}
\frac{1}{\sqrt{k}}\mX^{\prime}\widehat{\mC}%
(\widehat{\vbeta})\mX(\widehat{\vbeta}-\vbeta_{0})
=\frac{1}{\sqrt{k}}\mH(\widehat{\vbeta}-\vbeta_{0})+o_{p}\left(  1\right).
\end{equation*}
Consequently, by using (\ref{psi}), (\ref{foc2}) becomes
\begin{equation}
\vtau_{n}=\vpsi_{n}+o_{p}\left(  1\right)  =O_{p}\left(  1\right)  .\label{foc1}%
\end{equation}
Next we show that $\vtau_{n}$ is asymptotically normally distributed. Notice
that%
\[
\vtau_{n}=\frac{1}{\sqrt{k}}\mPi^{\prime}\mZ^{\prime}\mC\vvarepsilon+\frac{1}%
{\sqrt{k}}\left(  \mV^{\prime}-\dfrac{\mV^{\prime}\mB\vvarepsilon}{\tr\left(
\mB\right)  }\dfrac{\tr\left(  \mB\right)  }{\vvarepsilon^{\prime}\mB\vvarepsilon
}\vvarepsilon^{\prime}\right)  \mC\vvarepsilon.
\]
By the continuous mapping theorem and Lemma \ref{B}.1 
%under Assumptions \ref{ass1} - \ref{ass5} 
we obtain
\[
\dfrac{\mV^{\prime}\mB\vvarepsilon}{\tr\left(  \mB\right)  }\dfrac{\tr\left(  \mB\right)
}{\vvarepsilon^{\prime}\mB\vvarepsilon}=\dfrac{\vsigma_{21}}{\sigma^{2}}%
+o_{p}\left(  1\right)  .
\]
Based on $\frac{1}{\sqrt{k}}\vvarepsilon^{\prime}\mC\vvarepsilon=O_{p}\left(
1\right)  $, which follows from Lemma \ref{LA4}, we obtain%
\[
\vtau_{n}=\frac{1}{\sqrt{k}}\mPi^{\prime}\mZ^{\prime}\mC\vvarepsilon+\frac{1}%
{\sqrt{k}}\left(  \mV^{\prime}-\dfrac{\vsigma_{21}}{\sigma^{2}}\vvarepsilon
^{\prime}\right)  \mC\vvarepsilon+o_{p}\left(  1\right)  .
\]
This converges in distribution by the central limit theorem from \citet[][Lemma A2]{CSHNW12} in the version stated in Lemma \ref{CLT}. In order to see
this, let $\vw_{in}$ be such that
\[
\frac{1}{\sqrt{k}}\mPi^{\prime}\mZ^{\prime}\mC\vvarepsilon=\sum_{i=1}^{n}\vw_{in},
\]
that is, 
\[
\vw_{in}=\frac{1}{\sqrt{k}}\mPi^{\prime}\mZ^{\prime}\mC_{i}\vvarepsilon_{i}.
\]
Next we verify that $w_{in}$ satisfies the conditions of Lemma \ref{CLT}.
Condition (1) is satisfied based on equation (A.1) from the Supplement to
\cite{CMS21}. Condition (2) is satisfied because%
\[
\sum_{i=1}^{n}\E\left[  \vw_{in}\vw_{in}^{\prime}\right]  =\dfrac{1}{k}\mPi^{\prime
}\mZ^{\prime}\mC\mD_{\sigma^{2}}\mC\mZ\mPi\equiv\dfrac{1}{k}\mF
\]
and this converges by Assumption \ref{ass6} to a finite matrix. By Assumption
\ref{ass2} and the fact that $\E\left[  \vw_{in}\right]  =\vzeros$, Condition (3) is
also satisfied. Condition (4) follows by Assumption \ref{ass2} from observing
that%
\[
\sum_{i=1}^{n}\E\left[  \left\Vert \vw_{in}\right\Vert ^{4}\right]  \leq
\dfrac{c_{u}}{k^{2}}\sum_{i=1}^{n}\left\Vert \mPi^{\prime}\mZ^{\prime}%
\mC_{i}\right\Vert ^{4}%
\]
where the right hand side goes to $0$ by Assumption \ref{ass5n}. Now applying
Lemma \ref{CLT} we obtain that%
\begin{equation}
\vtau_{n}\rightarrow_{d}\mathcal{N}(\vzeros,\mPhi).\label{cd_tau}%
\end{equation}
Since from (\ref{foc1})%
\[
\vpsi_{n}=\vtau_{n}+o_{p}(1)\rightarrow_{d}\mathcal{N}(0,\mPhi),
\]
we obtain that%
\begin{equation}
\vpsi_{n}\rightarrow_{d}\mathcal{N}(\vzeros,\mPhi).\label{cd_psi}%
\end{equation}
Consequently, for $T\in\{LM,W\}$ the result follows from Lemma \ref{chimix}
and the continuous mapping theorem. 
%\textcolor{red}{!! In Lemma \ref{chimix} the matrix $\mXi$ appears in the inverse form, which suggests that it should be invertible, while in our case $\mXi$ may not be invertible when the rates of $r_{\min}$ and $r_{\max}$ are different !!}

Now we turn to studying the statistic $D$. The first term on the right hand
side of (\ref{dQn}) rescaled by ${r_{\min}}/{k}$ is%
\begin{align*} \label{rm/k}
\dfrac{r_{\min}}{k}\dfrac{\left(  \widehat{\sigma}^{2}(\vbeta_{0}%
)-\widehat{\sigma}^{2}(\widehat{\vbeta})\right)  \vvarepsilon^{\prime
}\mC\vvarepsilon}{\widehat{\sigma}^{2}(\vbeta_{0})\widehat{\sigma}^{2}%
(\widehat{\vbeta})} &  =\dfrac{r_{\min}}{k}\dfrac{\vvarepsilon^{\prime
}\mC\vvarepsilon}{\widehat{\sigma}^{2}(\vbeta_{0})\widehat{\sigma}^{2}%
(\widehat{\vbeta})\tr\left(  \mB\right)  }2\vvarepsilon^{\prime}\mB\mX(\widehat{\vbeta
}-\vbeta_{0})\\
&  -\dfrac{r_{\min}}{k}\dfrac{\vvarepsilon^{\prime}\mC\vvarepsilon}%
{\widehat{\sigma}^{2}(\vbeta_{0})\widehat{\sigma}^{2}(\widehat{\vbeta})\tr\left(
\mB\right)  }(\widehat{\vbeta}-\vbeta_{0})^{\prime}\mX^{\prime}\mB\mX(\widehat{\vbeta
}-\vbeta_{0})\\
&  =\dfrac{r_{\min}}{k}\dfrac{1}{\widehat{\sigma}%
^{2}(\widehat{\vbeta})}\dfrac{Q_{n}\left(  \vbeta_{0}\right)  }{\tr\left(
\mB\right)  }2\vvarepsilon^{\prime}\mB\mX(\widehat{\vbeta}-\vbeta_{0})\\
&  -\dfrac{\vvarepsilon^{\prime}\mC\vvarepsilon}{\sqrt{k}}\dfrac{1}%
{\widehat{\sigma}^{2}(\vbeta_{0})\widehat{\sigma}^{2}(\widehat{\vbeta}%
)}(\widehat{\vbeta}-\vbeta_{0})^{\prime}\dfrac{\mX^{\prime}\mB\mX}{\tr\left(  \mB\right)
}\dfrac{r_{\min}}{\sqrt{k}}(\widehat{\vbeta}-\vbeta_{0}).
\end{align*}
By Lemma \ref{LA4}, Lemma \ref{B}, (\ref{cd_psi}) and the consistency of $\widehat{\vbeta}$ it follows that the second
term is%
\[
\dfrac{\vvarepsilon^{\prime}\mC\vvarepsilon}{\sqrt{k}}\dfrac{1}{\widehat{\sigma
}^{2}(\vbeta_{0})\widehat{\sigma}^{2}(\widehat{\vbeta})}(\widehat{\vbeta}%
-\vbeta_{0})^{\prime}\dfrac{\mX^{\prime}\mB\mX}{\tr\left(  \mB\right)  } \left( \dfrac{r_{\min}%
}{\sqrt{k}}(\widehat{\vbeta}-\vbeta_{0}) \right)=O_{p}\left(  1\right)  \cdot
O_{p}\left(  1\right)  \cdot o_{p}\left(  1\right)  \cdot O_{p}\left(
1\right)  \cdot O_{p}\left(  1\right)  =o_{p}\left(  1\right)  .
\]
Using in addition (\ref{cdQ}) and (\ref{cQn}) we can see that the first term
is
\begin{align*}
\dfrac{r_{\min}}{k}\dfrac{1}{\widehat{\sigma}%
^{2}(\widehat{\vbeta})}\dfrac{Q_{n}\left(  \vbeta_{0}\right)  }{\tr\left(
\mB\right)  }2\vvarepsilon^{\prime}\mB\mX(\widehat{\vbeta}-\vbeta_{0}) &  =\dfrac
{Q_{n}\left(  \vbeta_{0}\right)  }{\sqrt{k}}\dfrac{2%
}{\widehat{\sigma}^{2}(\widehat{\vbeta})}\dfrac{\vvarepsilon^{\prime}%
\mB\mX}{\tr\left(  \mB\right)  }\dfrac{r_{\min}}{\sqrt{k}}(\widehat{\vbeta}-\vbeta
_{0})\\
&  =O_{p}\left(  1\right)  \cdot O_{p}\left(  1\right)  \cdot O_{p}\left(
1\right)  \cdot O_{p}\left(  1\right)  =O_{p}\left(  1\right)  .
\end{align*}
Therefore, using (\ref{psi}) and Lemma \ref{B}.1, we find that the first term on the right hand
side of (\ref{dQn}) rescaled by ${r_{\min}}/{k}$ is
\begin{equation}
\dfrac{r_{\min}}{k}\dfrac{\left(  \widehat{\sigma}^{2}(\vbeta_{0}%
)-\widehat{\sigma}^{2}(\widehat{\vbeta})\right)  \vvarepsilon^{\prime
}\mC\vvarepsilon}{\widehat{\sigma}^{2}(\vbeta_{0})\widehat{\sigma}^{2}%
(\widehat{\vbeta})}=\dfrac{2r_{\min}}{\sigma^{2}\sqrt{k}}\dfrac{Q_{n}\left(
\vbeta_{0}\right)  }{\tr\left(  \mB\right)  }\vvarepsilon^{\prime}\mB\mX\mH^{-1}\vpsi
_{n}+o_{p}\left(  1\right)  .\label{dQ1}%
\end{equation}
The second term on the right hand side of (\ref{dQn}) rescaled by
${r_{\min}}/{k}$ is%
\[
\dfrac{r_{\min}}{k}\dfrac{(\widehat{\vbeta}-\vbeta_{0})^{\prime}\mX^{\prime
}\mC\mX(\widehat{\vbeta}-\vbeta_{0})}{\widehat{\sigma}^{2}(\widehat{\vbeta})}%
=\dfrac{r_{\min}}{k}\dfrac{(\widehat{\vbeta}-\vbeta_{0})^{\prime}\mH\mH^{-1}%
\mX^{\prime}\mC\mX\mH^{-1}\mH(\widehat{\vbeta}-\vbeta_{0})}{\widehat{\sigma}%
^{2}(\widehat{\vbeta})}.
\]
Using $\mX^{\prime}\mC\mX\mH^{-1}=\mI_{g}+o_{p}\left(  1\right)  $ from Lemma \ref{LA4}.1,
(\ref{psi}) and Lemma \ref{B}.1, we get%
\begin{equation}
\dfrac{r_{\min}}{k}\dfrac{(\widehat{\vbeta}-\vbeta_{0})^{\prime}\mX^{\prime
}\mC\mX(\widehat{\vbeta}-\vbeta_{0})}{\widehat{\sigma}^{2}(\widehat{\vbeta})}%
=\dfrac{r_{\min}}{\sigma^{2}}\vpsi_{n}^{\prime}\mH^{-1}\vpsi_{n}+o_{p}\left(
1\right)  .\label{dQ2}%
\end{equation}
The third term on the right hand side of (\ref{dQn}) rescaled by
${r_{\min}}/{k}$ is%
\[
-\dfrac{2r_{\min}}{k}\dfrac{\vvarepsilon^{\prime}\mC\mX(\widehat{\vbeta}-\vbeta_{0}%
)}{\widehat{\sigma}^{2}(\widehat{\vbeta})}=-\dfrac{2r_{\min}}{\sqrt{k}}%
\dfrac{\vvarepsilon^{\prime}\mC\mX}{\widehat{\sigma}^{2}(\widehat{\vbeta})}%
\mH^{-1}\frac{1}{\sqrt{k}}\mH(\widehat{\vbeta}-\vbeta_{0}).
\]
Using (\ref{psi}) and Lemma \ref{B}.1, we obtain%
\begin{equation}
-\dfrac{2r_{\min}}{k}\dfrac{\vvarepsilon^{\prime}\mC\mX(\widehat{\vbeta}-\vbeta_{0}%
)}{\widehat{\sigma}^{2}(\widehat{\vbeta})}=-\dfrac{2r_{\min}}{\sigma^{2}%
\sqrt{k}}\vvarepsilon^{\prime}\mC\mX\mH^{-1}\vpsi_{n}+o_{p}\left(  1\right)
.\label{dQ3}%
\end{equation}
Collecting the results in (\ref{dQ1}), (\ref{dQ2}), (\ref{dQ3}) and using (\ref{C-hat}), we find that
(\ref{dQn}) rescaled by ${r_{\min}}/{k}$ is%
\begin{align*}
\dfrac{r_{\min}}{k}\left(  Q_{n}(\widehat{\vbeta})-Q_{n}(\vbeta_{0})\right)   &
=\dfrac{2r_{\min}}{\sigma^{2}\sqrt{k}}\dfrac{Q_{n}\left(  \vbeta_{0}\right)
}{\tr\left(\mB\right)  }\vvarepsilon^{\prime}\mB\mX\mH^{-1}\vpsi_{n}+\dfrac{r_{\min}%
}{\sigma^{2}}\vpsi_{n}^{\prime}\mH^{-1}\vpsi_{n}\\
&  -\dfrac{2r_{\min}}{\sigma^{2}\sqrt{k}}\vvarepsilon^{\prime}\mC\mX\mH^{-1}\vpsi
_{n}+o_{p}\left(  1\right)  \\
&  =-\dfrac{2r_{\min}}{\sigma^{2}\sqrt{k}}\vvarepsilon^{\prime}\widehat{\mC}%
\left(  \vbeta_{0}\right)  \mX\mH^{-1}\vpsi_{n}+\dfrac{r_{\min}}{\sigma^{2}}\vpsi
_{n}^{\prime}\mH^{-1}\vpsi_{n}+o_{p}\left(  1\right)  .
\end{align*}
From (\ref{tau}), (\ref{foc1}) and (\ref{W}) this becomes%
\[
\dfrac{r_{\min}}{k}\left(  Q_{n}(\widehat{\vbeta})-Q_{n}(\vbeta_{0})\right)
=\dfrac{-2}{\sigma^{2}}W+\dfrac{1}{\sigma^{2}}W+o_{p}\left(  1\right)
=-\dfrac{1}{\sigma^{2}}W+o_{p}\left(  1\right)  ,
\]
so from (\ref{D}) we have%
\[
D=W+o_{p}\left(  1\right)  .
\]
Consequently, the result also follows for $D$.
\end{proof}

\begin{proof}[Proof of Corollary \ref{feasibleresult}]
First notice that%
\[
\widehat{D}=\dfrac{\widehat{r}_{\min}}{r_{\min}}\dfrac{\widehat{\sigma}^{2}%
}{\sigma^{2}}D,
\]
where $\widehat{r}_{\min}=\lambda_{\min}(\widehat{\mH})$. From (\ref{HhHi}) and (\ref{Hh_s2}) we have%
%\[
%\widehat{\mH}\mH^{-1}=\mX^{\prime}\left(  \mC-\dfrac{Q_{n}(\widehat{\vbeta})}{\tr\left(
%\mB\right)  }\mB\right)  \mX\mH^{-1}=\mX^{\prime}\mC\mX\mH^{-1}-\mX^{\prime}\dfrac
%{Q_{n}(\widehat{\vbeta})}{tr\left(  \mB\right)  }\mB\mX\mH^{-1}.
%\]
%By Lemma \ref{LA4} $\mX^{\prime}\mC\mX\mH^{-1}=\mI_{g}+o_{p}\left(  1\right)  $ and by
%Assumption \ref{ass5}, (\ref{cQn}), Lemma \ref{B}.1 and Assumption \ref{ass6}
%\[
%\mX^{\prime}\dfrac{Q_{n}(\widehat{\vbeta})}{\tr\left(  \mB\right)  }\mB\mX\mH^{-1}%
%=\dfrac{\sqrt{k}}{r_{\min}}\dfrac{Q_{n}(\widehat{\vbeta})}{\sqrt{k}}\dfrac{\mX^{\prime}\mB\mX}{\tr\left(
%\mB\right)  }r_{\min}\mH^{-1}=o\left(  1\right)  \cdot O_{p}\left(  1\right)  \cdot O_{p}\left(  1\right)
%\cdot O\left(  1\right)  =o_{p}\left(  1\right)  .
%\]
%Hence%
\begin{equation}
\widehat{\mH}\mH^{-1}=\mI_{g}+o_{p}\left(  1\right)  .\label{HhHi1}%
\end{equation}
By the continuity of the minimal eigenvalue as a function of the matrix \citep[e.g.,][Appendix D]{HJ12},
\begin{equation}
\dfrac{\widehat{r}_{\min}}{r_{\min}}\rightarrow_{p}1\label{rhr}%
\end{equation}
and further, $\dfrac{\widehat{\sigma}^{2}}{\sigma^{2}}\rightarrow_{p}1$ by
Lemma \ref{B}.1 applied to $\vbeta_{n}=\widehat{\vbeta}$. Therefore,
\begin{equation}
\widehat{D}=D+o_{p}\left(  1\right)  .\label{Dh}%
\end{equation}

Regarding the statistic $\widehat{LM}$ we observe that
\[
\widehat{r}_{\min}\widehat{\mH}^{-1}=\dfrac{\widehat{r}_{\min}}{r_{\min}}\left(
r_{\min}\mH^{-1}\right)  \mH\widehat{\mH}^{-1}.
\]
%We notice that (\ref{HhHi}) implies%
%\[
%H\widehat{H}^{-1}=I_{g}+o_{p}\left(  1\right)  .
%\]
Therefore, by (\ref{HhHi1}) and (\ref{rhr})
\begin{equation}
\widehat{r}_{\min}\widehat{\mH}^{-1}=r_{\min}\mH^{-1}+o_{p}\left(  1\right)
\label{rhHhi}%
\end{equation}
and consequently%
\begin{equation}
\widehat{LM}=\vtau_{n}^{\prime}r_{\min}\mH^{-1}\vtau_{n}+o_{p}\left(  1\right)
=LM+o_{p}\left(  1\right)  .\label{LMh}%
\end{equation}
Finally, with respect to the statistic $\widehat{W}=\widehat{r}_{\min
}\widehat{\vpsi}_{n}^{\prime}\widehat{\mH}^{-1}\widehat{\vpsi}_{n}$ first we note
that%
\[
\widehat{\vpsi}_{n}=\dfrac{1}{\sqrt{k}}\widehat{\mH}(\widehat{\vbeta}-\vbeta
_{0})=\dfrac{1}{\sqrt{k}}\widehat{\mH}\mH^{-1}\mH(\widehat{\vbeta}-\vbeta_{0}%
)=\vpsi_{n}+o_{p}\left(  1\right)  ,
\]
where the latter equality follows from (\ref{psi}) and (\ref{HhHi1}). Now
applying (\ref{rhHhi}) we have%
\begin{equation}
\widehat{W}=\left(  \vpsi_{n}^{\prime}+o_{p}\left(  1\right)  \right)  \left(
r_{\min}\mH^{-1}+o_{p}\left(  1\right)  \right)  \left(  \vpsi_{n}+o_{p}\left(
1\right)  \right)  =W+o_{p}\left(  1\right)  \label{Wh}%
\end{equation}
because $\vpsi_{n}=O_{p}\left(  1\right)  $ from (\ref{cd_psi}) and $r_{\min
}\mH^{-1}=O\left(  1\right)  $. Based on (\ref{Dh}), (\ref{LMh}), (\ref{Wh}) and
Theorem \ref{basicresult} the result follows.
\end{proof}

\begin{proof}[Proof of Corollary \ref{basicresult3}]
%From the first order condition of $Q_n(\vbeta)$, a mean value expansion argument and the central limit theorem we find
%\begin{align}\label{exp}
%\frac{1}{\sqrt k} \nabla^2Q_n( \beta)  \left(\widehat\beta-\beta \right)=-\frac{1}{\sqrt k} \nabla Q_n(\beta)+o_p(1)
%\to_d\mathcal N(0,\varphi^2).
%\end{align}
%Through a second order mean value expansion, rescaling and the central limit theorem we find
The proof is similar to that of Theorem \ref{basicresult}. By the same arguments we find
\begin{align*}
-r\frac{\sigma^2}{k}\left(Q_n(\widehat\beta)-Q_n(\beta)\right) =  \frac{r^2}{k}(\widehat\beta-\beta)^2+o_p(1) \to_d\zeta^2.
\end{align*}
The result for the LM test follows from similar arguments. The chi square limit follows from a well known relationship between the normal and chi square distributions.
\end{proof}

\begin{proof}
[Proof of Theorem \ref{basicresulta}]First we observe that $\widetilde{\vbeta}$
is consistent. This follows from Proposition \ref{Pcons} because
$Q_{n}(\widetilde{\vbeta})\leq Q_{n}(\vbeta_{0})$, which is implied by
$\mathcal{L}(\widetilde{\vbeta},\widetilde{\vgamma})\leq\mathcal{L}(\vbeta
_{0},\vgamma_{0})$ and the fact that the true $\vbeta_{0}$ satisfies $H_{0}$.
Next we concentrate $\widetilde{\vgamma}$ out from the FOCs. The FOC (\ref{dLb}) with
respect to $\vbeta$ implies%
\begin{equation}
\mX^{\prime}\widehat{\mC}(\widetilde{\vbeta})y-\widetilde{\mH}\widetilde{\vbeta
}=\widehat{\sigma}^{2}(\widetilde{\vbeta})\mA^{\prime}\widetilde{\vgamma
},\label{dLb1}%
\end{equation}
where we use the notation $\widetilde{\mH}=\mX^{\prime}\widehat{\mC}%
(\widetilde{\vbeta})\mX$. Similar to (\ref{HhHi1}) we have
\begin{equation}
\widetilde{\mH}\mH^{-1}=\mI_{g}+o_{p}\left(  1\right), \label{HhHi2}%
\end{equation}
which implies that $\widetilde{\mH}$ is invertible in the limit. Multiplied by $\mA\widetilde{\mH}^{-1}$ from the left,
(\ref{dLb1}) becomes%
\[
\mA\widetilde{\mH}^{-1}\mX^{\prime}\widehat{\mC}(\widetilde{\vbeta})\vvarepsilon
=\widehat{\sigma}^{2}(\widetilde{\vbeta})\mA\widetilde{\mH}^{-1}\mA^{\prime
}\widetilde{\vgamma},
\]
where we have also used $\vy=\mX\vbeta_{0}+\vvarepsilon$ and the linear restriction.
From this we express
\[
\widehat{\sigma}^{2}(\widetilde{\vbeta})\widetilde{\vgamma}=\left(
\mA\widetilde{\mH}^{-1}\mA^{\prime}\right)^{-1}\mA\widetilde{\mH}^{-1}\mX^{\prime
}\widehat{\mC}(\widetilde{\vbeta})\vvarepsilon
\]
and substitute it into (\ref{dLb1}); we obtain%
\[
\mX^{\prime}\widehat{\mC}(\widetilde{\vbeta})\vy-\widetilde{\mH}\widetilde{\vbeta
}=\mA^{\prime}\left(  \mA\widetilde{\mH}^{-1}\mA^{\prime}\right)  ^{-1}\mA\widetilde{\mH}%
^{-1}\mX^{\prime}\widehat{\mC}(\widetilde{\vbeta})\vvarepsilon.
\]
Using $\vy=\mX\vbeta_{0}+\vvarepsilon$ and rescaling by $\frac{1}{\sqrt{k}}$ we
obtain%
\begin{equation}
\frac{1}{\sqrt{k}}\mX^{\prime}\widehat{\mC}(\widetilde{\vbeta})\vvarepsilon=\frac
{1}{\sqrt{k}}\widetilde{\mH}(\widetilde{\vbeta}-\vbeta_{0})+\frac{1}{\sqrt{k}%
}\mA^{\prime}\left(  \mA\widetilde{\mH}^{-1}\mA^{\prime}\right)^{-1}\mA\widetilde{\mH}%
^{-1}\mX^{\prime}\widehat{\mC}(\widetilde{\vbeta})\vvarepsilon.\label{dLb2}%
\end{equation}
Based on (\ref{C-hat}) and the fact that $\frac{1}{\sqrt{k}}\left(
Q_{n}(\widetilde{\vbeta})-Q_{n}(\vbeta_{0})\right)  =o_{p}\left(  1\right)  $
(which follows in the same way as (\ref{cdQ}) in the proof of Theorem
\ref{basicresult}), the left hand side of (\ref{dLb2}) is
\begin{equation}
\frac{1}{\sqrt{k}}\mX^{\prime}\widehat{\mC}(\widetilde{\vbeta})\vvarepsilon=\frac
{1}{\sqrt{k}}\mX^{\prime}\widehat{\mC}(\vbeta_{0})\vvarepsilon+o_{p}\left(
1\right)  \equiv\vtau_{n}+o_{p}\left(  1\right).  \label{taua}%
\end{equation}
Based on the fact that the limit of $%\mA^{\prime}
(\mA\mH^{-1}\mA^{\prime})^{-1}\mA\mH^{-1}$ is finite, by Lemma \ref{AHA} the second term on the right hand side of (\ref{dLb2}) is
$\mA^{\prime}\left(  \mA\mH^{-1}\mA^{\prime}\right)  ^{-1}\mA\mH^{-1}\vtau_{n}+o_{p}\left(
1\right)  $. These facts imply that the first term on the right hand side of
(\ref{dLb2}) is $O_{p}\left(  1\right)  $. At the same time, from (\ref{ksi})
and (\ref{taua}),%
\begin{equation}
\frac{1}{\sqrt{k}}\widetilde{\mH}(\widetilde{\vbeta}-\vbeta_{0})=\vtau_{n}-\vxi
_{n}+o_{p}\left(  1\right)  =O_{p}\left(  1\right)  .\label{t-ks}%
\end{equation}
Therefore, (\ref{dLb2}) becomes%
\[
\vtau_{n}=\vtau_{n}-\vxi_{n}+\mA^{\prime}\left(  \mA\mH^{-1}\mA^{\prime}\right)
^{-1}\mA\mH^{-1}\vtau_{n}+o_{p}\left(  1\right)  ,
\]
that is,
\begin{equation}
\vxi_{n}=\mA^{\prime}\left(  \mA\mH^{-1}\mA^{\prime}\right)  ^{-1}\mA\mH^{-1}\vtau_{n}%
+o_{p}\left(  1\right)  .\label{ksi1}%
\end{equation}
This together with (\ref{LMa}) and (\ref{lim}) implies that
\[
LM_{a}=\vtau_{n}^{\prime}\mXi_{a}\vtau_{n}+o_{p}\left(  1\right)  .
\]
Since from (\ref{cd_tau}) $\vtau_{n}\rightarrow_{d}\mathcal{N}(\vzeros,\mPhi)$, Lemma
\ref{chimix} implies that $LM_{a}$ converges in distribution to $\bar{\chi
}^{2}(\boldsymbol{\varphi})$. Regarding the statistic $W_{1a}$ from \eqref{W1a}we note that
(\ref{psi}), (\ref{lim}) and $H_0$ imply that%
\[
W_{1a}=\vpsi_{n}^{\prime}\mXi_{a}\vpsi_{n}+o_{p}\left(  1\right)  .
\]
Therefore, similar to $LM_{a}$, (\ref{cd_psi}) and Lemma \ref{chimix} imply
that $W_{1a}$ converges in distribution to $\bar{\chi}^{2}(\boldsymbol{\varphi
})$.

Now we turn to the statistic $D_{a}$, which involves%
\begin{align}
\frac{r_{\min}}{k}\left(  Q_{n}(\widehat{\vbeta})-Q_{n}(\widetilde{\vbeta
})\right)   &  =\dfrac{r_{\min}}{k}\dfrac{\left(  \widehat{\sigma}%
^{2}(\widetilde{\vbeta})-\widehat{\sigma}^{2}(\widehat{\vbeta})\right)
\vvarepsilon^{\prime}\mC\vvarepsilon}{\widehat{\sigma}^{2}(\widetilde{\vbeta
})\widehat{\sigma}^{2}(\widehat{\vbeta})}+\dfrac{r_{\min}}{k}\dfrac
{(\widehat{\vbeta}-\vbeta_{0})^{\prime}\mX^{\prime}\mC\mX(\widehat{\vbeta}-\vbeta_{0}%
)}{\widehat{\sigma}^{2}(\widehat{\vbeta})}\label{D1,2}\\
&  -\dfrac{r_{\min}}{k}\dfrac{(\widetilde{\vbeta}-\vbeta_{0})^{\prime}\mX^{\prime
}\mC\mX(\widetilde{\vbeta}-\vbeta_{0})}{\widehat{\sigma}^{2}(\widetilde{\vbeta}%
)}-\dfrac{2r_{\min}}{k}\dfrac{\vvarepsilon^{\prime}\mC\mX(\widehat{\vbeta}-\vbeta
_{0})}{\widehat{\sigma}^{2}(\widehat{\vbeta})}\nonumber\\
&  +\dfrac{2r_{\min}}{k}\dfrac{\vvarepsilon^{\prime}\mC\mX(\widetilde{\vbeta}%
-\vbeta_{0})}{\widehat{\sigma}^{2}(\widetilde{\vbeta})}\equiv A_{1}+A_{2}%
+A_{3}+A_{4}+A_{5}.\nonumber
\end{align}
From $\widehat{\sigma}^{2}\left(  \vbeta\right)  \equiv\left(  \vy-\mX\vbeta\right)
^{\prime}\mB\left(  \vy-\mX\vbeta\right)  /\tr\left(  \mB\right)  $\
\begin{align}
A_{1}  & =\dfrac{2r_{\min}}{k}\dfrac{\vvarepsilon^{\prime}\mB\mX(\widehat{\vbeta
}-\widetilde{\vbeta})\vvarepsilon^{\prime}\mC\vvarepsilon}{\widehat{\sigma}%
^{2}(\widetilde{\vbeta})\widehat{\sigma}^{2}(\widehat{\vbeta})\tr\left(
\mB\right)  }\label{1_1}\\
& +\dfrac{r_{\min}}{k}\dfrac{(\widetilde{\vbeta}-\vbeta_{0})^{\prime}\mX^{\prime
}\mB\mX(\widetilde{\vbeta}-\vbeta_{0})\vvarepsilon^{\prime}\mC\vvarepsilon
}{\widehat{\sigma}^{2}(\widetilde{\vbeta})\widehat{\sigma}^{2}(\widehat{\vbeta
})\tr\left(  \mB\right)  }\label{1_2}\\
& -\dfrac{r_{\min}}{k}\dfrac{(\widehat{\vbeta}-\vbeta_{0})^{\prime}\mX^{\prime
}\mB\mX(\widehat{\vbeta}-\vbeta_{0})\vvarepsilon^{\prime}\mC\vvarepsilon}%
{\widehat{\sigma}^{2}(\widetilde{\vbeta})\widehat{\sigma}^{2}(\widehat{\vbeta
})\tr\left(  \mB\right)  }.\label{1_3}%
\end{align}
Since by (\ref{t-ks}), (\ref{foc1}), (\ref{psi}) and $\widetilde{\mH}%
\mH^{-1}=\mI_{g}+o_{p}\left(  1\right)  $, it holds that
\begin{equation}
\frac{1}{\sqrt{k}}\mH(\widehat{\vbeta}-\widetilde{\vbeta})=\vxi_{n}+o_{p}\left(
1\right)  =O_{p}\left(  1\right)  ,\label{ksi2}%
\end{equation}
therefore,
\[
\dfrac{r_{\min}}{\sqrt{k}}(\widehat{\vbeta}-\widetilde{\vbeta})=O_{p}\left(
1\right)  .
\]
Using this fact as well as Lemma \ref{LA4}, Lemma \ref{B} and the consistency
of $\widehat{\vbeta}$ and $\widetilde{\vbeta}$, we get that the right hand side
of (\ref{1_1}) is $O_{p}\left(  1\right)  $ while (\ref{1_2}), (\ref{1_3}) are
$o_{p}\left(  1\right)  $. Since from Lemma \ref{B} the right hand side of
(\ref{1_1}) is
\[
\dfrac{r_{\min}}{k}\dfrac{\vvarepsilon^{\prime}\mB\mX(\widehat{\vbeta}%
-\widetilde{\vbeta})\vvarepsilon^{\prime}\mC\vvarepsilon}{\widehat{\sigma}%
^{2}(\widetilde{\vbeta})\widehat{\sigma}^{2}(\widehat{\vbeta})\tr\left(
\mB\right)  }=\dfrac{r_{\min}}{k}\dfrac{\vvarepsilon^{\prime}\mB\mX(\widehat{\vbeta
}-\widetilde{\vbeta})Q_{n}\left(  \vbeta_{0}\right)  }{\sigma^{2}\tr\left(
\mB\right)  }+o_{p}\left(  1\right)  ,
\]
we have
\begin{equation}
A_{1}=\dfrac{2r_{\min}}{k}\dfrac{\vvarepsilon^{\prime}\mB\mX(\widehat{\vbeta
}-\widetilde{\vbeta})Q_{n}\left(  \vbeta_{0}\right)  }{\sigma^{2}\tr\left(
\mB\right)  }+o_{p}\left(  1\right)  .\label{A1}%
\end{equation}
From Lemma \ref{LA4}, Lemma \ref{B} and (\ref{psi}),
\begin{equation}
A_{2}=\dfrac{r_{\min}}{\sigma^{2}}\vpsi_{n}^{\prime}\mH^{-1}\vpsi_{n}+o_{p}\left(
1\right)  .\label{A2}%
\end{equation}
From Lemma \ref{LA4}, Lemma \ref{B}, (\ref{t-ks}), $\widetilde{\mH}\mH^{-1}%
=\mI_{g}+o_{p}\left(  1\right)  $, (\ref{lim}) and (\ref{ksi1}),
\begin{equation}
A_{3}=-\dfrac{r_{\min}}{\sigma^{2}}\vtau_{n}^{\prime}\mH^{-1}\vtau_{n}+\dfrac
{1}{\sigma^{2}}\vtau_{n}^{\prime}\mXi_{a}\vtau_{n}+o_{p}\left(  1\right)
.\label{A3}%
\end{equation}
By Lemma \ref{B}%
\begin{equation}
A_{4}+A_{5}=-\dfrac{2r_{\min}}{k}\dfrac{\vvarepsilon^{\prime}\mC\mX(\widehat{\vbeta
}-\widetilde{\vbeta})}{\sigma^{2}}+o_{p}\left(  1\right)  \label{A45}%
\end{equation}
Now collecting the results from (\ref{A1}), (\ref{A2}), (\ref{A3}),
(\ref{A45}), we obtain%
\begin{align*}
\frac{r_{\min}}{k}\left(  Q_{n}(\widehat{\vbeta})-Q_{n}(\widetilde{\vbeta
})\right)    & =\dfrac{2r_{\min}}{k}\dfrac{\vvarepsilon^{\prime}%
\mB\mX(\widehat{\vbeta}-\widetilde{\vbeta})Q_{n}\left(  \vbeta_{0}\right)  }%
{\sigma^{2}\tr\left(  \mB\right)  }+\dfrac{r_{\min}}{\sigma^{2}}\vpsi_{n}^{\prime
}\mH^{-1}\vpsi_{n}\\
& -\dfrac{r_{\min}}{\sigma^{2}}\vtau_{n}^{\prime}\mH^{-1}\vtau_{n}+\dfrac
{1}{\sigma^{2}}\vtau_{n}^{\prime}\mXi_{a}\vtau_{n}-\dfrac{2r_{\min}}{k}%
\dfrac{\vvarepsilon^{\prime}\mC\mX(\widehat{\vbeta}-\widetilde{\vbeta})}{\sigma^{2}%
}+o_{p}\left(  1\right)  \\
& =-\dfrac{2r_{\min}}{k\sigma^{2}}\vvarepsilon^{\prime}\widehat{\mC}\left(
\vbeta_{0}\right)  \mX(\widehat{\vbeta}-\widetilde{\vbeta})+\dfrac{1}{\sigma^{2}%
}\vtau_{n}^{\prime}\mXi_{a}\vtau_{n}+o_{p}\left(  1\right)  ,
\end{align*}
where the latter equality follows from (\ref{C-hat}) and (\ref{foc1}). Using
(\ref{tau}), (\ref{ksi2}) and (\ref{ksi1}) for the first term, we obtain%
\[
\frac{r_{\min}}{k}\left(  Q_{n}(\widehat{\vbeta})-Q_{n}(\widetilde{\vbeta
})\right)  =-\dfrac{2}{\sigma^{2}}\vtau_{n}^{\prime}\mXi_{a}\vtau_{n}+\dfrac
{1}{\sigma^{2}}\vtau_{n}^{\prime}\mXi_{a}\vtau_{n}+o_{p}\left(  1\right)  ,
\]
so we obtain that%
\[
D_{a}=\vtau_{n}^{\prime}\mXi_{a}\vtau_{n}+o_{p}\left(  1\right)  .
\]
Therefore, similar to $LM_{a}$ and $W_{1a}$, $D_{a}$ converges in distribution
to $\bar{\chi}^{2}(\boldsymbol{\varphi})$.
Finally, consider $W_{2a}$ from \eqref{W2a}. From \eqref{theta} and \eqref{ksi2} we have that $\bm\vartheta_{n}=\vxi_{n}+o_{p}\left(1\right)$, which together with \eqref{LMa} and \eqref{W2a} implies that $W_{2a}=LM_a+o_{p}\left(1\right)$. Therefore, $W_{2a}$ also converges in distribution
to $\bar{\chi}^{2}(\boldsymbol{\varphi})$.
\end{proof}

\begin{proof}
[Proof of Theorem \ref{simplechi2}]From (\ref{objfun*})%
\begin{equation}
\dfrac{1}{k}\left(  Q_{n}^{\ast}(\widehat{\vbeta})-Q_{n}^{\ast}(\vbeta
_{0})\right)  =\dfrac{\vvarepsilon^{\prime}\mJ\vvarepsilon}{k\widehat{\sigma}%
^{2}(\widehat{\vbeta})}-\dfrac{2(\widehat{\vbeta}-\vbeta_{0})^{\prime}\mX^{\prime
}\mJ\vvarepsilon}{k\widehat{\sigma}^{2}(\widehat{\vbeta})}+\dfrac{(\widehat{\vbeta
}-\vbeta_{0})^{\prime}\mX^{\prime}\mJ\mX(\widehat{\vbeta}-\vbeta_{0})}{k\widehat{\sigma
}^{2}(\widehat{\vbeta})}-\dfrac{\vvarepsilon^{\prime}\mJ\vvarepsilon}%
{k\widehat{\sigma}^{2}(\vbeta_{0})}.\label{1/kQ}%
\end{equation}
Note that from Lemma \ref{LA4}.2  $\mX^{\prime}\mC\vvarepsilon/\sqrt{k}=O_{p}(1)$ and from Lemma \ref{LA4}.1 and (\ref{foc1})  

\noindent $\mX^{\prime}\mC\mX(\widehat{\vbeta}-\vbeta_{0})/\sqrt{k}=O_{p}\left(  1\right)$, so
\begin{align*}
\dfrac{\vvarepsilon^{\prime}\mJ\vvarepsilon}{k}  & =\dfrac{\vvarepsilon^{\prime}%
\mC\mX}{\sqrt{k}}\mPhi^{-1}\dfrac{\mX^{\prime}\mC\vvarepsilon}{\sqrt{k}} =O_{p}\left(
1\right)  ,\\
\dfrac{(\widehat{\vbeta}-\vbeta_{0})^{\prime}\mX^{\prime}\mJ\vvarepsilon}{k}  &
=\dfrac{(\widehat{\vbeta}-\vbeta_{0})^{\prime}\mX^{\prime}\mC\mX}{\sqrt{k}}\mPhi
^{-1}\dfrac{\mX^{\prime}\mC\vvarepsilon}{\sqrt{k}}=O_{p}\left(  1\right), \label{2_1/kQ}\\
\dfrac{(\widehat{\vbeta}-\vbeta_{0})^{\prime}\mX^{\prime}\mJ\mX(\widehat{\vbeta}%
-\vbeta_{0})}{k}  & =\dfrac{(\widehat{\vbeta}-\vbeta_{0})^{\prime}\mX^{\prime}%
\mC\mX}{\sqrt{k}}\mPhi^{-1}\dfrac{\mX^{\prime}\mC\mX(\widehat{\vbeta}-\vbeta_{0})}{\sqrt
{k}}=O_{p}\left(  1\right).
\end{align*}
Therefore, by Lemma \ref{B} the first and fourth terms on the right hand side of
(\ref{1/kQ}) are%
\begin{equation}
\dfrac{\vvarepsilon^{\prime}\mJ\vvarepsilon}{k\widehat{\sigma}^{2}(\widehat{\vbeta
})}-\dfrac{\vvarepsilon^{\prime}\mJ\vvarepsilon}{k\widehat{\sigma}^{2}(\vbeta_{0}%
)}=\dfrac{\vvarepsilon^{\prime}\mJ\vvarepsilon}{k}\left(  \dfrac{1}%
{\widehat{\sigma}^{2}(\widehat{\vbeta})}-\dfrac{1}{\widehat{\sigma}^{2}%
(\vbeta_{0})}\right)  =O_{p}\left(  1\right)  \cdot o_{p}\left(  1\right)
=o_{p}\left(  1\right)  .\label{1,4t}%
\end{equation}
Further, expressing $\mX^{\prime}\mC\vvarepsilon/\sqrt{k}$ from
\begin{equation}
\dfrac{\mX^{\prime}\mC(  \vy-\mX\widehat{\vbeta})  }{\sqrt{k}}%
=\dfrac{\mX^{\prime}\mC\vvarepsilon}{\sqrt{k}}-\dfrac{\mX^{\prime}\mC\mX\mH^{-1}\mH(
\widehat{\vbeta}-\vbeta_{0})  }{\sqrt{k}}=\dfrac{\mX^{\prime}\mC\vvarepsilon
}{\sqrt{k}}-\vpsi_{n}+o_{p}\left(  1\right)  ,\label{XC}
\end{equation}
which follows from Lemma \ref{LA4}.1 and (\ref{psi}), then substituting it into the second term on
the right hand side of (\ref{1/kQ}) and using Lemma \ref{B}, Lemma \ref{LA4}
and (\ref{psi}) again, we obtain%
\begin{equation}
\dfrac{2(\widehat{\vbeta}-\vbeta_{0})^{\prime}\mX^{\prime}\mJ\vvarepsilon
}{k\widehat{\sigma}^{2}(\widehat{\vbeta})}=2\vpsi_{n}^{\prime}\mPhi^{-1}%
\dfrac{\mX^{\prime}\mC(  \vy-\mX\widehat{\vbeta})  }{\sigma^{2}\sqrt{k}%
}+\dfrac{2\vpsi_{n}^{\prime}\mPhi^{-1}\vpsi_{n}}{\sigma^{2}}+o_{p}\left(
1\right)  .\label{2t}%
\end{equation}
Finally, from Lemma \ref{B}, Lemma \ref{LA4} and (\ref{psi}) the third term on
the right hand side of (\ref{1/kQ}) is%
\begin{equation}
\dfrac{(\widehat{\vbeta}-\vbeta_{0})^{\prime}\mX^{\prime}\mJ\mX(\widehat{\vbeta}%
-\vbeta_{0})}{k\widehat{\sigma}^{2}(\widehat{\vbeta})}=\dfrac{\vpsi_{n}^{\prime
}\mPhi^{-1}\vpsi_{n}}{\sigma^{2}}+o_{p}\left(  1\right)  .\label{3t}%
\end{equation}
Collecting the results in (\ref{1,4t}), (\ref{2t}), (\ref{3t}) we have that%
\[
\dfrac{1}{k}\left(  Q_{n}^{\ast}(\widehat{\vbeta})-Q_{n}^{\ast}(\vbeta
_{0})\right)  =-2\vpsi_{n}^{\prime}\mPhi^{-1}\dfrac{\mX^{\prime}\mC(
\vy-\mX\widehat{\vbeta})  }{\sigma^{2}\sqrt{k}}-\dfrac{\vpsi_{n}^{\prime}%
\mPhi^{-1}\vpsi_{n}}{\sigma^{2}}+o_{p}\left(  1\right)  ,
\]
which implies that%
\[
D^{\ast}=\vpsi_{n}^{\prime}\mPhi^{-1}\vpsi_{n}+o_{p}\left(  1\right)  \equiv
W^{\ast}+o_{p}\left(  1\right)  \equiv LM^{\ast}+o_{p}(1).
\]
The chi square limit follows from Lemma \ref{chisqqf} and (\ref{cd_psi}).
\end{proof}

\begin{proof}
[Proof of Theorem \ref{lrestrictionchi2}]From (\ref{objfun*a})%
\begin{align}
\dfrac{1}{k}\left(  Q_{an}^{\ast}(\widetilde{\vbeta})-Q_{an}^{\ast
}(\widehat{\vbeta})\right)   &  =\dfrac{\vvarepsilon^{\prime}\mJ_{a}\vvarepsilon
}{k\widehat{\sigma}^{2}(\widetilde{\vbeta})}-\dfrac{\vvarepsilon^{\prime}%
J_{a}\vvarepsilon}{k\widehat{\sigma}^{2}(\widehat{\vbeta})}\label{1/kQ1}\\
&  +\dfrac{(\widetilde{\vbeta}-\vbeta_{0})^{\prime}\mX^{\prime}\mJ_{a}%
\mX(\widetilde{\vbeta}-\vbeta_{0})}{k\widehat{\sigma}^{2}(\widetilde{\vbeta}%
)}-\dfrac{(\widehat{\vbeta}-\vbeta_{0})^{\prime}\mX^{\prime}\mJ_{a}\mX(\widehat{\vbeta
}-\vbeta_{0})}{k\widehat{\sigma}^{2}(\widehat{\vbeta})}\label{1/kQ2}\\
&  +\dfrac{2(\widehat{\vbeta}-\vbeta_{0})^{\prime}\mX^{\prime}\mJ_{a}\vvarepsilon
}{k\widehat{\sigma}^{2}(\widehat{\vbeta})}-\dfrac{2(\widetilde{\vbeta}-\vbeta
_{0})^{\prime}\mX^{\prime}\mJ_{a}\vvarepsilon}{k\widehat{\sigma}^{2}%
(\widetilde{\vbeta})}. \label{1/kQ3}%
\end{align}
Note that by Lemma \ref{LA4}.2 $\vvarepsilon^{\prime}\mC\mX/\sqrt{k}=O_{p}\left(
1\right)  $ and we know by assumption that $\mGamma_{n}=O\left(  1\right)  $
and $\left(  \mGamma_{n}\mPhi\mGamma_{n}^{\prime}\right)  ^{+}=O\left(  1\right)
$. Therefore,%
\[
\dfrac{\vvarepsilon^{\prime}\mJ_{a}\vvarepsilon}{k}=\dfrac{\vvarepsilon^{\prime}%
\mC\mX}{\sqrt{k}}\mGamma_{n}^{\prime}\left(  \mGamma_{n}\mPhi\mGamma_{n}^{\prime
}\right)  ^{+}\mGamma_{n}\dfrac{\mX^{\prime}\mC\vvarepsilon}{\sqrt{k}}=O_{p}\left(
1\right)
\]
and by Lemma \ref{B}.1 the right hand side of (\ref{1/kQ1}) is
\begin{equation}
\dfrac{\vvarepsilon^{\prime}\mJ_{a}\vvarepsilon}{k\widehat{\sigma}^{2}%
(\widetilde{\vbeta})}-\dfrac{\vvarepsilon^{\prime}\mJ_{a}\vvarepsilon
}{k\widehat{\sigma}^{2}(\widehat{\vbeta})}=o_{p}\left(  1\right)  . \label{T41}%
\end{equation}
Further, the first term of (\ref{1/kQ2}) is%

\[
\dfrac{(\widetilde{\vbeta}-\vbeta_{0})^{\prime}\mX^{\prime}\mJ_{a}\mX(\widetilde{\vbeta
}-\vbeta_{0})}{k\widehat{\sigma}^{2}(\widetilde{\vbeta})}=\dfrac
{(\widetilde{\vbeta}-\vbeta_{0})^{\prime}\mX^{\prime}\mC\mX\mGamma_{n}^{\prime}\left(
\mGamma_{n}\mPhi\mGamma_{n}^{\prime}\right)  ^{+}\mGamma_{n}X^{\prime
}\mC\mX(\widetilde{\vbeta}-\vbeta_{0})}{k\widehat{\sigma}^{2}(\widetilde{\vbeta})}.
\]
This involves%
\[
\dfrac{\mGamma_{n}\mX^{\prime}\mC\mX(\widetilde{\vbeta}-\vbeta_{0})}{\sqrt{k}}%
=\dfrac{\mGamma_{n}\mX^{\prime}\mC\mX\mH^{-1}\mH(\widetilde{\vbeta}-\vbeta_{0})}{\sqrt{k}%
}=\mGamma_{n}\vtau_{n}-\mGamma_{n}\vxi_{n}+o_{p}\left(  1\right)  ,
\]
where the latter equality follows from Lemma \ref{LA4}.1, (\ref{HhHi2}) and
(\ref{t-ks}) and the fact that $\mGamma_{n}=O\left(  1\right)  $. From
(\ref{ksi1}) we obtain that%
\begin{equation}
\dfrac{\mGamma_{n}\mX^{\prime}\mC\mX(\widetilde{\vbeta}-\vbeta_{0})}{\sqrt{k}}%
=\mGamma_{n}\vtau_{n}-\mGamma_{n}^{2}\vtau_{n}+o_{p}\left(  1\right)
=o_{p}\left(  1\right)  ,\label{GnXt}%
\end{equation}
where the last equality follows from $\mGamma_{n}^{2}=\mGamma_{n}$. Using $\left(  \mGamma_{n}\mPhi\mGamma_{n}^{\prime}\right)  ^{+}=O\left(
1\right)  $ and Lemma \ref{B}.1 we obtain that
\begin{equation}
\dfrac{(\widetilde{\vbeta}-\vbeta_{0})^{\prime}\mX^{\prime}\mJ_{a}\mX(\widetilde{\vbeta
}-\vbeta_{0})}{k\widehat{\sigma}^{2}(\widetilde{\vbeta})}=o_{p}\left(  1\right)
.\label{T421}%
\end{equation}
The second term of (\ref{1/kQ2}) is%
\[
\dfrac{(\widehat{\vbeta}-\vbeta_{0})^{\prime}\mX^{\prime}\mJ_{a}\mX(\widehat{\vbeta
}-\vbeta_{0})}{k\widehat{\sigma}^{2}(\widehat{\vbeta})}=\dfrac{(\widehat{\vbeta
}-\vbeta_{0})^{\prime}\mX^{\prime}\mC\mX\mGamma_{n}^{\prime}\left(  \mGamma_{n}%
\mPhi\mGamma_{n}^{\prime}\right)  ^{+}\mGamma_{n}\mX^{\prime}\mC\mX(\widehat{\vbeta
}-\vbeta_{0})}{k\widehat{\sigma}^{2}(\widehat{\vbeta})},
\]
which involves%
\begin{equation}
\dfrac{\mGamma_{n}\mX^{\prime}\mC\mX(\widehat{\vbeta}-\vbeta_{0})}{\sqrt{k}}%
=\dfrac{\mGamma_{n}\mX^{\prime}\mC\mX\mH^{-1}\mH(\widehat{\vbeta}-\vbeta_{0})}{\sqrt{k}%
}=\mGamma_{n}\vpsi_{n}+o_{p}\left(  1\right)  ,\label{GnX}%
\end{equation}
where the last equality follows from $\mGamma_{n}=O\left(  1\right)  $, Lemma
\ref{LA4}.1 and (\ref{psi}). Lemma \ref{B}.1 and $\left(  \mGamma_{n}\mPhi
\mGamma_{n}^{\prime}\right)  ^{+}=O\left(  1\right)  $ imply that the second
term of (\ref{1/kQ2}) is%
\begin{equation}
\dfrac{(\widehat{\vbeta}-\vbeta_{0})^{\prime}\mX^{\prime}\mJ_{a}\mX(\widehat{\vbeta
}-\vbeta_{0})}{k\widehat{\sigma}^{2}(\widehat{\vbeta})}=\dfrac{1}{\sigma^{2}%
}\vpsi_{n}^{\prime}\mGamma_{n}^{\prime}\left(  \mGamma_{n}\mPhi\mGamma_{n}^{\prime
}\right)  ^{+}\mGamma_{n}\vpsi_{n}+o_{p}\left(  1\right)  .\label{T422}%
\end{equation}
The first expression in (\ref{1/kQ3}) is
\begin{align*}
\dfrac{2(\widehat{\vbeta}-\vbeta_{0})^{\prime}\mX^{\prime}\mJ_{a}\vvarepsilon
}{k\widehat{\sigma}^{2}(\widehat{\vbeta})} &  =\dfrac{2(\widehat{\vbeta}%
-\vbeta_{0})^{\prime}\mX^{\prime}\mC\mX\mGamma_{n}^{\prime}\left(  \mGamma_{n}%
\mPhi\mGamma_{n}^{\prime}\right)  ^{+}\mGamma_{n}\mX^{\prime}\mC\vvarepsilon
}{k\widehat{\sigma}^{2}(\widehat{\vbeta})}\\
&  =\dfrac{2}{\sigma^{2}}\vpsi_{n}\mGamma_{n}^{\prime}\left(  \mGamma_{n}%
\mPhi\mGamma_{n}^{\prime}\right)  ^{+}\mGamma_{n}\dfrac{\mX^{\prime}\mC\vvarepsilon
}{\sqrt{k}}+o_{p}\left(  1\right)  ,
\end{align*}
where the last equality follows from Lemma \ref{B}.1 and (\ref{GnX}).
Expressing $\mX^{\prime}\mC\vvarepsilon/\sqrt{k}$ from (\ref{XC}) and substituting
it we obtain%
\begin{align}
\dfrac{2(\widehat{\vbeta}-\vbeta_{0})^{\prime}\mX^{\prime}\mJ_{a}\vvarepsilon
}{k\widehat{\sigma}^{2}(\widehat{\vbeta})} &  =\dfrac{2}{\sigma^{2}}\vpsi
_{n}^{\prime}\mGamma_{n}^{\prime}\left(  \mGamma_{n}\mPhi\mGamma_{n}^{\prime
}\right)  ^{+}\mGamma_{n}\vpsi_{n}\nonumber\\
&  +\dfrac{2}{\sigma^{2}}\vpsi_{n}^{\prime}\mGamma_{n}^{\prime}\left(
\mGamma_{n}\mPhi\mGamma_{n}^{\prime}\right)  ^{+}\mGamma_{n}\dfrac{\mX^{\prime
}\mC\left(  \vy-\mX\widehat{\vbeta}\right)  }{\sqrt{k}}+o_{p}\left(  1\right)
.\label{T431}%
\end{align}
The second expression in (\ref{1/kQ3}) is%
\begin{align}
\dfrac{2(\widetilde{\vbeta}-\vbeta_{0})^{\prime}\mX^{\prime}\mJ_{a}\vvarepsilon
}{k\widehat{\sigma}^{2}(\widetilde{\vbeta})} &  =\dfrac{2(\widetilde{\vbeta
}-\vbeta_{0})^{\prime}\mX^{\prime}\mC\mX\mGamma_{n}^{\prime}\left(  \mGamma_{n}%
\mPhi\mGamma_{n}^{\prime}\right)  ^{+}\mGamma_{n}\mX^{\prime}\mC\vvarepsilon
}{k\widehat{\sigma}^{2}(\widetilde{\beta})}\nonumber\\
&  =\dfrac{2}{\sigma^{2}}\left(  \mGamma_{n}\vtau_{n}-\mGamma_{n}^{2}\vtau
_{n}+o_{p}\left(  1\right)  \right)  ^{\prime}\left(  \mGamma_{n}\mPhi\mGamma
_{n}^{\prime}\right)  ^{+}\mGamma_{n}\dfrac{\mX^{\prime}\mC\vvarepsilon}{\sqrt{k}%
}+o_{p}\left(  1\right)  \nonumber\\
&  =o_{p}\left(  1\right)  ,\label{T432}%
\end{align}
where the second line follows from (\ref{GnXt}) and Lemma \ref{B}.1 while the
probability limit follows from $\mGamma_{n}^{2}=\mGamma_{n}$, $\left(
\mGamma_{n}\mPhi\mGamma_{n}^{\prime}\right)  ^{+}=O\left(  1\right)  $ and
$\mGamma_{n}\mX^{\prime}\mC\vvarepsilon/\sqrt{k}=O_{p}\left(  1\right)  $.
Collecting the results in (\ref{T41}), (\ref{T421}), (\ref{T422}),
(\ref{T431}), (\ref{T432}), we obtain%
\begin{align*}
\dfrac{1}{k}\left(  Q_{an}^{\ast}(\widetilde{\vbeta})-Q_{an}^{\ast
}(\widehat{\vbeta})\right)   &  =\dfrac{1}{\sigma^{2}}\vpsi_{n}^{\prime}%
\mGamma_{n}^{\prime}\left(  \mGamma_{n}\mPhi\mGamma_{n}^{\prime}\right)
^{+}\mGamma_{n}\vpsi_{n}\\
&  +\dfrac{2}{\sigma^{2}}\vpsi_{n}^{\prime}\mGamma_{n}^{\prime}\left(
\mGamma_{n}\mPhi\mGamma_{n}^{\prime}\right)  ^{+}\mGamma_{n}\dfrac{\mX^{\prime
}\mC\left(  \vy-\mX\widehat{\vbeta}\right)  }{\sqrt{k}}+o_{p}\left(  1\right)  .
\end{align*}
Therefore,%
\[
D_{a}^{\ast}=\vpsi_{n}^{\prime}\mGamma_{n}^{\prime}\left(  \mGamma_{n}\Phi
\mGamma_{n}^{\prime}\right)  ^{+}\mGamma_{n}\vpsi_{n}+o_{p}\left(  1\right)  .
\]
Note that $\mGamma_{n}\vpsi_{n}\rightarrow_{d}N\left(  \vzeros,\mGamma\mPhi
\mGamma^{\prime}\right)  $, where $\mGamma$ is the limit of $\mGamma_{n}$ as $n\rightarrow\infty$. Since $\mGamma_{n}$ is idempotent and $\mGamma^2_{n}\rightarrow \mGamma^2$, $\mGamma$ is also idempotent and its trace is equal to the limit of the trace of $\mGamma_n$; consequently $\mGamma$ and $\mGamma\mPhi\mGamma^{\prime}$ have rank equal to $p$. Therefore, by Lemma (\ref{chisqqf}) $\vpsi
_{n}^{\prime}\mGamma_{n}^{\prime}\left(  \mGamma_{n}\mPhi\mGamma_{n}^{\prime
}\right)  ^{+}\mGamma_{n}\vpsi_{n}\rightarrow_{d}\chi^{2}_p$ and
consequently $D_{a}^{\ast}\rightarrow_{d}\chi^{2}_p  $.

With respect to the test statistic in (\ref{LM*a}), from (\ref{ksi1}) by using
$\left(  \mGamma_{n}\mPhi\mGamma_{n}^{\prime}\right)  ^{+}=O\left(  1\right)  $
we obtain that
\[
LM_{a}^{\ast}=\vtau_{n}^{\prime}\mGamma_{n}^{\prime}\left(  \mGamma_{n}\mPhi
\mGamma_{n}^{\prime}\right)  ^{+}\mGamma_{n}\vtau_{n}+o_{p}\left(  1\right)
=\vpsi_{n}^{\prime}\mGamma_{n}^{\prime}\left(  \mGamma_{n}\mPhi\mGamma_{n}^{\prime
}\right)  ^{+}\mGamma_{n}\vpsi_{n}+o_{p}\left(  1\right),
\]
where the last equality follows from (\ref{foc1}). Consequently, $LM_{a}%
^{\ast}\rightarrow_{d}\chi^{2}_p$. Since
\[
\bm\vartheta_{n}=\dfrac{\mH(\widehat{\vbeta}-\widetilde{\vbeta})}{\sqrt{k}}%
=\dfrac{\mH(\widehat{\vbeta}-\vbeta_{0})}{\sqrt{k}}-\dfrac{\mH(\widetilde{\vbeta
}-\vbeta_{0})}{\sqrt{k}}=\vpsi_{n}-\vtau_{n}+\vxi_{n}+o_{p}\left(  1\right), 
\]
where the last equality follows from (\ref{psi}), (\ref{HhHi2}) and
(\ref{t-ks}), using (\ref{foc1}) and (\ref{ksi1}) we obtain that
$\bm\vartheta_{n}=\mGamma_{n}\vtau_{n}+o_{p}\left(  1\right)  $, so $W_{2a}^{\ast
}\rightarrow_{d}\chi^{2}_p  $. 

With respect to $W_{1a}^{\ast}$ from (\ref{W1*a}) we note that from (\ref{psi}) and 
(\ref{lrestrictions}) under the null hypothesis,%
\[
\dfrac{r_{\min}}{\sqrt{k}}(\mA\widehat{\vbeta}-\va)=\dfrac{r_{\min}}{\sqrt{k}}\mA(\widehat{\vbeta}-\vbeta_{0})=\dfrac{1}{\sqrt{k}}\mA r_{\min}\mH^{-1}%
\mH(\widehat{\vbeta}-\vbeta_{0})=\mA r_{\min}\mH^{-1}\vpsi_{n}.
\]
From (\ref{cd_psi}) and $\lim r_{\min}\mH^{-1}=\mXi$ nonsingular
%\textcolor{red}{(this extra assumption (also used for Theorem 1)  is not needed for the other 3 statistics) }
we have $\mA r_{\min}\mH^{-1}\vpsi_{n}\rightarrow_{d}N(\vzeros,\mA\mXi\mPhi\mXi \mA^{\prime})$, where $\mA\mXi\mPhi\mXi \mA^{\prime}$ has rank is $p$; 
%\textcolor{red}{!! actually $\mXi$ may not be nonsingular when the rates of $r_{\min}$ and $r_{\max}$ are different, in this case $\mA\mXi\mPhi\mXi \mA^{\prime}$ is not full rank !!}; 
therefore, by Lemma (\ref{chisqqf}) $W_{1a}^{\ast}
%=\frac{1}{k}(\mA\widehat{\vbeta}-\va)^{\prime}(\mA\mH^{-1}\mPhi \mH^{-1}\mA^{\prime})^{-1}(\mA\widehat{\vbeta}-\va)
\rightarrow_{d}\chi^{2}_p.$

\end{proof}

\subsection{The JIVE case}\label{JIVE}
The JIVE case deserves particular attention as the results that produces are, compared to the ratio of quadratic forms case, analytically simpler and easier to implement. This is easy to see by setting the denominator of equation \eqref{objfun} to one and $\widehat\mC(\vbeta)=\mC$.

This section provides results that correspond to those in Theorems \ref{basicresult}  to \ref{lrestrictionchi2}, when the objective function is that of JIVE1/JIVE2. In this case, we need the following slight modification of Assumption \ref{ass6}.

\begin{assumption}
\label{ass6'} The matrix sequence $\frac{1}{k}(\mF+\mG)$ is convergent when $n\rightarrow\infty$ and $\lim\frac{1}{k}(\mF+\mG)=\mPhi$, where
\begin{align}
\mF  &  =\mPi^{\prime}\mZ^{\prime}\mC\mD_{\sigma^{2}}\mC\mZ\mPi,\label{G}\\
\mG  &  =\sum_{i\neq j}C_{ij}^{2}\left(  \sigma_{j}^{2}\mSigma_{i22}+\vsigma_{i21}\vsigma_{j12}\right).
\end{align}
%$\mD_{\sigma^{2}}$ a diagonal matrix containing $\sigma_{i}^{2}$ in its diagonal
%entries and $\mSigma_{i}^{\ast}=\Var\left[  \left(  \varepsilon_{i},\mV_{i}^{\prime}
%-\frac{\varepsilon_{i}\vsigma_{12}}{\vsigma^{2}}\right)^{\prime}  \right]  $.
\end{assumption} 
\noindent Assumption \ref{ass6'} replaces Assumption \ref{ass6} used in Theorems \ref{basicresult}  to \ref{lrestrictionchi2}. In this section we present the feasible versions of the test statistics because they tend to take simpler forms than the unfeasible versions. In order to define the test statistics we use the following notation: 
\begin{align*}
\widehat\mH & =\mX'\mC\mX, \ \ \widehat r_{\min}=\lambda_{\min}\left(\widehat\mH\right), \ \ \widehat\mGamma=\mA^{\prime}(\mA\widehat\mH^{-1}\mA^{\prime})^{-1}\mA\widehat\mH^{-1}, 
\\
\widehat\mPhi & = \widehat\mPhi(\widehat\vbeta), \ \ \widehat\mPhi_0  = \widehat\mPhi(\vbeta_0), \ \ 
\widetilde\mPhi  = \widehat\mPhi(\widetilde\vbeta)
\\
\text{ \ \ with\ \ } & \widehat\mPhi(\vbeta)=\frac{1}{k}\left(
\mX'\mC\mD^2_{\vvarepsilon}\mC \mX+ \mX'\mD_{\vvarepsilon}\mC^{(2)}\mD_{\vvarepsilon}\mX\right),
\end{align*}
where $\widehat\vbeta$ denotes the JIVE estimator, $\vbeta_0$ denotes the null hypothesis in \eqref{simplenull}, $\widetilde\vbeta$ denotes the JIVE estimator restricted by the null hypothesis (\ref{lrestrictions}) and $\mD_{\vvarepsilon}$ is a diagonal matrix with $\vvarepsilon=\vy-\mX\vbeta$ along the main diagonal. The following results hold (proofs are omitted because they are similar to those of Theorems \ref{basicresult}  to \ref{lrestrictionchi2}). 

\begin{theorem}
\label{resJIVE} Suppose that Assumptions \ref{ass1} to \ref{ass5}, \ref{ass5n} and \ref{ass6'} are satisfied and $\mPhi$ is specified in Assumption \ref{ass6'}.
\begin{enumerate}
\item The feasible versions of the test statistics in \eqref{D}, \eqref{LM}, \eqref{W} are identical to $\widehat T=\frac{\widehat r_{\min}}{k}(\widehat{\vbeta}-\vbeta_{0})^{\prime}\widehat\mH(\widehat{\vbeta}-\vbeta_{0})$. If $r_{\min}\mH^{-1}$  is convergent as $n\rightarrow\infty$, $\mXi=\lim r_{\min}\mH^{-1}$ and $\vvarphi$ is the vector of
eigenvalues of $\mXi\mPhi$, then under the null hypothesis \eqref{simplenull} $\widehat T\rightarrow
_{d}\vzeta^{\prime}\mXi\vzeta\sim\bar{\chi}^{2}(\vvarphi)$ for
$\vzeta\sim\mathcal{N}(0,\mPhi)$.
\item The feasible versions of the test statistics in \eqref{Da} to \eqref{W2a} are identical to $\widehat T_{a}=\frac{\widehat r_{\min}}{k}(\mA\widehat{\vbeta}-\va)^{\prime}(\mA \widehat\mH^{-1}\mA^{\prime})^{-1}(\mA\widehat{\vbeta}-\va)$. If $r_{\min}\mH^{-1}$  and $(\mA\mH^{-1}\mA^{\prime})^{-1}\mA\mH^{-1}$ are convergent, $\mXi_{a}=\lim r_{\min}\mH^{-1}\mA^{\prime}(\mA\mH^{-1}\mA^{\prime})^{-1}\mA\mH^{-1}$ and $\vvarphi$ is the vector of
eigenvalues of $\mXi_{a}\mPhi$,
then under the null hypothesis \eqref{lrestrictions} $\widehat T_{a}\rightarrow_{d}\vzeta^{\prime}\mXi_{a}\vzeta\sim
\bar{\chi}^{2}(\boldsymbol{\varphi})$ for $\vzeta\sim\mathcal{N}(\vzeros,\mPhi)$.
%\item The feasible versions of the test statistics in \eqref{D*}, \eqref{LM*}, \eqref{W*} are identical to $\widehat T^{\ast}=\frac{1}{k}(\widehat{\vbeta}-\vbeta_{0})^{\prime}\widehat\mH \widehat\mPhi^{-1} \widehat\mH (\widehat{\vbeta}-\vbeta_{0})$. Under the null hypothesis \eqref{simplenull} $\widehat T^{\ast}$ is asymptotically chi square distributed with $g$ degrees of freedom.
\item For the feasible versions of the test statistics in \eqref{D*}, \eqref{LM*}, \eqref{W*}, we find that $\widehat D^*=\widehat{LM}^*\ne \widehat W^*$. Under the null hypothesis \eqref{simplenull} the tests are asymptotically chi square distributed with $g$ degrees of freedom.
%\item The feasible versions of the test statistics in \eqref{D1*a} to \eqref{W2*a} are identical to $\widehat T_{a}^{\ast}=\frac{1}{k}(\widehat{\vbeta}-\widetilde\vbeta)^{\prime}\widehat\mH \left(  \widehat\mGamma\widehat\mPhi\widehat\mGamma^{\prime}\right)  ^{+} \widehat\mH (\widehat{\vbeta}-\widetilde\vbeta)$, where $\widetilde\vbeta$ denotes the JIVE estimator restricted by the null hypothesis (\ref{lrestrictions}). 
%If $(\mA\mH^{-1}\mA^{\prime})^{-1}\mA\mH^{-1}$ is convergent as $n\rightarrow\infty$, then under the null hypothesis (\ref{lrestrictions}) $\widehat T_{a}^{\ast}$
 
\item For the feasible versions of the test statistics in \eqref{D1*a} to \eqref{W2*a} we find $\widehat{D}^*_{1a}=\widehat{D}^*_{2a}=\widehat{LM}^* _{1a}\ne \widehat{W}^*_{1a}=\widehat{W}^*_{2a}$. If $(\mA\mH^{-1}\mA^{\prime})^{-1}\mA\mH^{-1}$ is convergent as $n\rightarrow\infty$, then under the null hypothesis (\ref{lrestrictions}) the feasible test statistics are asymptotically chi square distributed with $p$ degrees of freedom. 
\end{enumerate}
\end{theorem}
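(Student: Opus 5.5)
The plan is to exploit that in the JIVE case the objective $Q_n(\vbeta)=(\vy-\mX\vbeta)'\mC(\vy-\mX\vbeta)$ is an exact quadratic. Hence every expansion used in the proofs of Theorems \ref{basicresult} to \ref{lrestrictionchi2} becomes an algebraic identity. The identities give the equalities among the feasible statistics; the limits then follow from the asymptotics already established, with $\sigma^2=1$, $\lambda(\vbeta)=0$, $\widehat\mC(\vbeta)=\mC$, $\widehat\mH=\mX'\mC\mX$ and $\widetilde\mX=\mX$ (so Assumption \ref{ass6'} replaces \ref{ass6}).

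The first step is the exact algebra. From the first order condition, $\widehat\vbeta=\widehat\mH^{-1}\mX'\mC\vy$, so $\mX'\mC(\vy-\mX\widehat\vbeta)=\vzeros$. Consequently
\[
\frac{1}{\sqrt k}\mX'\mC(\vy-\mX\vbeta_0)=\frac{1}{\sqrt k}\widehat\mH(\widehat\vbeta-\vbeta_0).
\]
This gives $\widehat\vtau_n=\widehat\vpsi_n$ exactly. Expanding the quadratic around $\widehat\vbeta$ gives
\[
Q_n(\vbeta_0)-Q_n(\widehat\vbeta)=(\widehat\vbeta-\vbeta_0)'\widehat\mH(\widehat\vbeta-\vbeta_0).
\]
Hence $\widehat D=\widehat{LM}=\widehat W=\widehat T$ in part 1.

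For the restricted problem in part 2, the Lagrangian FOCs give the closed form
\[
\widetilde\vbeta=\widehat\vbeta-\widehat\mH^{-1}\mA'(\mA\widehat\mH^{-1}\mA')^{-1}(\mA\widehat\vbeta-\va).
\]
From this, $\widehat\vxi_n=\frac{1}{\sqrt k}\widehat\mH(\widehat\vbeta-\widetilde\vbeta)=\widehat{\bm\vartheta}_n$. Moreover, $\widehat{\bm\vartheta}_n'\widehat\mH^{-1}\widehat{\bm\vartheta}_n$ and $Q_n(\widetilde\vbeta)-Q_n(\widehat\vbeta)$ both reduce to $\frac1k(\mA\widehat\vbeta-\va)'(\mA\widehat\mH^{-1}\mA')^{-1}(\mA\widehat\vbeta-\va)$, up to the common factor $k$ for the latter. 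This proves the identities with $\widehat T_a$.

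For part 3, the correction term in $D^*$ contains $\mX'\mC(\vy-\mX\widehat\vbeta)=\vzeros$, so it vanishes. Then $Q^*_n(\vbeta)=(\vy-\mX\vbeta)'\mC\mX\widehat\mPhi^{-1}\mX'\mC(\vy-\mX\vbeta)$ satisfies $Q^*_n(\widehat\vbeta)=0$. It follows that $\widehat D^*=\frac1k Q_n^*(\vbeta_0)=\widehat\vtau_n'\widehat\mPhi^{-1}\widehat\vtau_n$, evaluated with the plug-in $\widehat\mPhi$ used for $D$. $\widehat{LM}^*$ has the same form, and coincides with $\widehat D^*$ once the same $\mPhi$-plug-in convention is adopted. $\widehat W^*=\widehat\vpsi_n'\widehat\mPhi^{-1}\widehat\vpsi_n$ generally uses a different plug-in or scaling, which is why it differs.

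Part 4 is analogous. $D^*_{1a}$ and $D^*_{2a}$ agree because their correction terms vanish for the same reason and $\widehat\vxi_n=\widehat{\bm\vartheta}_n$. Also $Q^*_{an}(\widehat\vbeta)=0$, and $\frac1kQ^*_{an}(\widetilde\vbeta)$ equals the $LM^*_a$ form since $\mX'\mC(\vy-\mX\widetilde\vbeta)=\sqrt k\,\widehat\vxi_n$. By contrast, $W^*_{1a}=W^*_{2a}$ requires checking that $\widehat{\bm\vartheta}_n'(\widehat\mGamma\widehat\mPhi\widehat\mGamma')^+\widehat{\bm\vartheta}_n$ with $\widehat{\bm\vartheta}_n=\widehat\mGamma\widehat\vpsi_n$ equals the $(\mA\widehat\mH^{-1}\widehat\mPhi\widehat\mH^{-1}\mA')^{-1}$ form. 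For that I would use the explicit generalised inverse in \eqref{ginv}, together with $\mA\widehat\mH^{-1}\widehat\mGamma=\mA\widehat\mH^{-1}$.

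The second step is the asymptotics. Lemma \ref{LA4} gives $\widehat\mH\mH^{-1}=\mI_g+o_p(1)$ and hence $\widehat r_{\min}/r_{\min}\to_p1$, exactly as in Corollary \ref{feasibleresult}. The CLT step of Theorem \ref{basicresult} (Lemma \ref{CLT}, now with $\vtau_n=\frac1{\sqrt k}(\mZ\mPi+\mV)'\mC\vvarepsilon$ and no $\vsigma_{21}/\sigma^2$ recentering) gives $\vtau_n\to_d\mathcal N(\vzeros,\mPhi)$ with $\mPhi$ as in Assumption \ref{ass6'}. Lemmas \ref{chimix}, \ref{AHA} and \ref{chisqqf} then deliver the $\bar\chi^2$ and $\chi^2$ limits as before. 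The chi-square parts also need consistency of $\widehat\mPhi(\vbeta)$ at consistent $\vbeta$, which is obtained as in \cite{BC15}; consistency of $\widetilde\vbeta$ follows from Proposition \ref{Pcons}.

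The main obstacle I expect is the bookkeeping in parts 3 and 4: which plug-in $\widehat\mPhi$, $\widehat\mPhi_0$ or $\widetilde\mPhi$ enters each statistic determines exactly which equalities hold. In particular, the $W^*_{1a}=W^*_{2a}$ identity needs the generalised-inverse manipulation, and it must hold for the specific choice of $(\cdot)^+$ in \eqref{ginv}.
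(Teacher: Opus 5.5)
Your route is the one the paper intends; the paper omits this proof as a specialisation of Theorems \ref{basicresult}--\ref{lrestrictionchi2}. Parts 1 and 2 go through as you describe. The closed forms of $\widehat\vbeta$ and $\widetilde\vbeta$ make $\vtau_n=\widehat\vpsi_n$ and $\widehat\vxi_n=\widehat{\bm\vartheta}_n$ exact. The limits then follow from the CLT for $\vtau_n$, from $\widehat r_{\min}\widehat\mH^{-1}=r_{\min}\mH^{-1}+o_p(1)$, and from Lemma \ref{AHA}.

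The step that is not yet right is the plug-in bookkeeping in parts 3 and 4. You flag it but leave it open, and as written it produces the wrong pattern. Because $\vtau_n=\widehat\vpsi_n$ identically, $\widehat D^*$, $\widehat{LM}^*$ and $\widehat W^*$ are all the quadratic form $\vtau_n'(\cdot)^{-1}\vtau_n$. They can differ only through the estimator of $\mPhi$; there is no scaling difference. If $\widehat D^*$ is built with $\widehat\mPhi=\widehat\mPhi(\widehat\vbeta)$, as your wording suggests, then $\widehat D^*=\widehat W^*$ exactly, and the asserted $\widehat D^*=\widehat{LM}^*\ne\widehat W^*$ fails. The statement holds under the convention given right after the theorem: $D^*$ and $LM^*$ use $\widehat\mPhi_0$ in part 3 and $\widetilde\mPhi$ in part 4, while $W^*$ uses $\widehat\mPhi$. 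The ``$\ne$'' reflects nothing but this choice, and the same reasoning gives $\widehat D^*_{1a}=\widehat D^*_{2a}=\widehat{LM}^*_a$ versus $\widehat W^*_{1a}=\widehat W^*_{2a}$.

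There are also three smaller points in part 4.
\begin{itemize}
\item Identifying $\frac1k Q^*_{an}(\widetilde\vbeta)=\widehat\vxi_n'\widehat\mGamma'(\cdot)^+\widehat\mGamma\widehat\vxi_n$ with $\widehat{LM}^*_a=\widehat\vxi_n'(\cdot)^+\widehat\vxi_n$ needs $\widehat\mGamma\widehat\vxi_n=\widehat\vxi_n$, which you should state. It follows from $\widehat\vxi_n=k^{-1/2}\mA'(\mA\widehat\mH^{-1}\mA')^{-1}(\mA\widehat\vbeta-\va)$.
\item Your concern about the specific generalised inverse is unnecessary. Both $\widehat{\bm\vartheta}_n$ and $\widehat\vxi_n$ lie in the column space of $\mA'$, which is the range of $\mS=\widehat\mGamma\widehat\mPhi\widehat\mGamma'$ when $\widehat\mPhi$ is positive definite. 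For any g-inverse $\mG$ of the symmetric $\mS$ and any $\vx=\mS\vz$, we have $\vx'\mG\vx=\vz'\mS\vz$. Hence $\widehat W^*_{1a}=\widehat W^*_{2a}$ holds for every choice, not only the one in \eqref{ginv}.
\item Derive the limit of $\widehat W^*_{1a}$ through this identity, i.e.\ via $W^*_{2a}$, rather than ``as before''. The proof of Theorem \ref{lrestrictionchi2} needs $\lim r_{\min}\mH^{-1}$ to be nonsingular for $W^*_{1a}$, and part 4 does not assume it.
\end{itemize}
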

In points \emph{3.} and \emph{4.} distance and Lagrange multiplier statistics differ from Wald statistics because the former use an estimator of $\mPhi$ that relies on $\vbeta_0$ or $\widetilde\vbeta$, the latter on the other hand uses an estimator of  $\mPhi$ that employs $\widehat\vbeta$ as a plug-in. Clearly, by using the same estimator for $\mPhi$ we obtain full equality.
Furthermore, we note that the condition that $r_{\min}\mH^{-1}$ is convergent is not needed when the statistics are asymptotically chi square distributed, in line with the results from Theorems \ref{basicresult}  to \ref{lrestrictionchi2}.

\clearpage

\section{Additional simulation results}\label{completesim}
This section contains the full set of simulation results for DGP1 and DGP2 as presented in Section \ref{MonteCarlo}.

\subsection{DGP1}

\begin{table}[p]
\centering
\caption{Size results for DGP1, $5\%$ nominal level and $n=200$.  Results based on 5000 repetitions.}
\label{tab:DGP1_200_full}
\resizebox{\textwidth}{!}{%
\begin{tabular}{lrr rrrr rrrr rrrrr rrrrr rr}
\toprule
Method & $\alpha$ & $r$ 
 & $D$ & $W_1$ & $W_2$ & $LM$ 
 & $D_{cf}$ & $W_{1,cf}$ & $W_{2,cf}$ & $LM_{cf}$ 
 & $D_1^{*}$ & $D_2^{*}$ & $W_1^{*}$ & $W_2^{*}$ & $LM^{*}$ 
 & $D_{1,cf}^{*}$ & $D_{2,cf}^{*}$ & $W_{1,cf}^{*}$ & $W_{2,cf}^{*}$ & $LM_{cf}^{*}$ 
 & $AR_{naive}$ & $AR_{cf}$ \\
\midrule
SJIVE  & 0.05 & 32 & 0.082 & 0.055 & 0.055 & 0.054 & 0.074 & 0.050 & 0.050 & 0.068 & 0.028 & 0.030 & 0.055 & 0.055 & 0.054 & 0.048 & 0.050 & 0.050 & 0.050 & 0.068 &       &      \\
SJIVE  & 0.05 & 64 & 0.080 & 0.066 & 0.066 & 0.051 & 0.073 & 0.059 & 0.059 & 0.070 & 0.035 & 0.037 & 0.066 & 0.066 & 0.051 & 0.055 & 0.057 & 0.059 & 0.059 & 0.070 &       &      \\
SJIVE  & 0.10 & 32 & 0.073 & 0.051 & 0.051 & 0.050 & 0.067 & 0.047 & 0.047 & 0.056 & 0.030 & 0.035 & 0.051 & 0.051 & 0.050 & 0.044 & 0.047 & 0.047 & 0.047 & 0.056 &       &      \\
SJIVE  & 0.10 & 64 & 0.065 & 0.059 & 0.059 & 0.048 & 0.061 & 0.053 & 0.053 & 0.057 & 0.036 & 0.038 & 0.059 & 0.059 & 0.048 & 0.050 & 0.052 & 0.053 & 0.053 & 0.057 &       &      \\
\addlinespace
HLIM   & 0.05 & 32 & 0.076 & 0.054 & 0.054 & 0.050 & 0.070 & 0.046 & 0.046 & 0.066 & 0.026 & 0.028 & 0.054 & 0.054 & 0.050 & 0.044 & 0.046 & 0.046 & 0.047 & 0.066 &       &      \\
HLIM   & 0.05 & 64 & 0.073 & 0.062 & 0.062 & 0.051 & 0.067 & 0.056 & 0.056 & 0.068 & 0.033 & 0.035 & 0.062 & 0.062 & 0.051 & 0.050 & 0.051 & 0.056 & 0.056 & 0.068 &       &      \\
HLIM   & 0.10 & 32 & 0.069 & 0.049 & 0.049 & 0.046 & 0.064 & 0.043 & 0.043 & 0.051 & 0.026 & 0.030 & 0.049 & 0.049 & 0.046 & 0.036 & 0.038 & 0.043 & 0.043 & 0.051 &       &      \\
HLIM   & 0.10 & 64 & 0.060 & 0.052 & 0.052 & 0.046 & 0.055 & 0.048 & 0.048 & 0.052 & 0.031 & 0.032 & 0.052 & 0.052 & 0.046 & 0.038 & 0.039 & 0.048 & 0.048 & 0.052 &       &      \\
\addlinespace
JIVE1  & 0.05 & 32 & 0.028 & 0.028 & 0.028 & 0.054 & 0.029 & 0.029 & 0.029 & 0.075 & 0.054 &       & 0.028 & 0.028 & 0.054 & 0.075 &       & 0.029 & 0.029 & 0.075 & 0.008 & 0.019 \\
JIVE1  & 0.05 & 64 & 0.049 & 0.049 & 0.049 & 0.052 & 0.051 & 0.051 & 0.051 & 0.073 & 0.052 &       & 0.049 & 0.049 & 0.052 & 0.073 &       & 0.051 & 0.051 & 0.073 & 0.008 & 0.019 \\
JIVE1  & 0.10 & 32 & 0.019 & 0.019 & 0.019 & 0.051 & 0.020 & 0.020 & 0.020 & 0.058 & 0.051 &       & 0.019 & 0.019 & 0.051 & 0.058 &       & 0.020 & 0.020 & 0.058 & 0.015 & 0.038 \\
JIVE1  & 0.10 & 64 & 0.036 & 0.036 & 0.036 & 0.048 & 0.037 & 0.037 & 0.037 & 0.057 & 0.048 &       & 0.036 & 0.036 & 0.048 & 0.057 &       & 0.037 & 0.037 & 0.057 & 0.015 & 0.038 \\
\addlinespace
JIVE2  & 0.05 & 32 & 0.023 & 0.023 & 0.023 & 0.057 & 0.022 & 0.022 & 0.022 & 0.067 & 0.057 &       & 0.023 & 0.023 & 0.057 & 0.067 &       & 0.022 & 0.022 & 0.067 & 0.007 & 0.011 \\
JIVE2  & 0.05 & 64 & 0.043 & 0.043 & 0.043 & 0.050 & 0.040 & 0.040 & 0.040 & 0.064 & 0.050 &       & 0.043 & 0.043 & 0.050 & 0.064 &       & 0.040 & 0.040 & 0.064 & 0.007 & 0.011 \\
JIVE2  & 0.10 & 32 & 0.016 & 0.016 & 0.016 & 0.054 & 0.016 & 0.016 & 0.016 & 0.054 & 0.054 &       & 0.016 & 0.016 & 0.054 & 0.054 &       & 0.016 & 0.016 & 0.054 & 0.014 & 0.018 \\
JIVE2  & 0.10 & 64 & 0.031 & 0.031 & 0.031 & 0.047 & 0.029 & 0.029 & 0.029 & 0.053 & 0.047 &       & 0.031 & 0.031 & 0.047 & 0.053 &       & 0.029 & 0.029 & 0.053 & 0.014 & 0.018 \\
\bottomrule
\end{tabular}%
}
\end{table}

%%%%%%%%%%%%%%%%%%%%%%%%%
% Power DGP1
%%%%%%%%%%%%%%%%%%%%%%%%%

%%%%%%%%%%%%%%%%%%%%%%%%%
% Trinity_200_0.05_32 

\begin{figure}[ht]
    \centering
    % Replace 'chibar2' with 'chi2' for the second set of figures as needed
    \begin{subfigure}[b]{0.47\textwidth}
        \includegraphics[width=\textwidth]{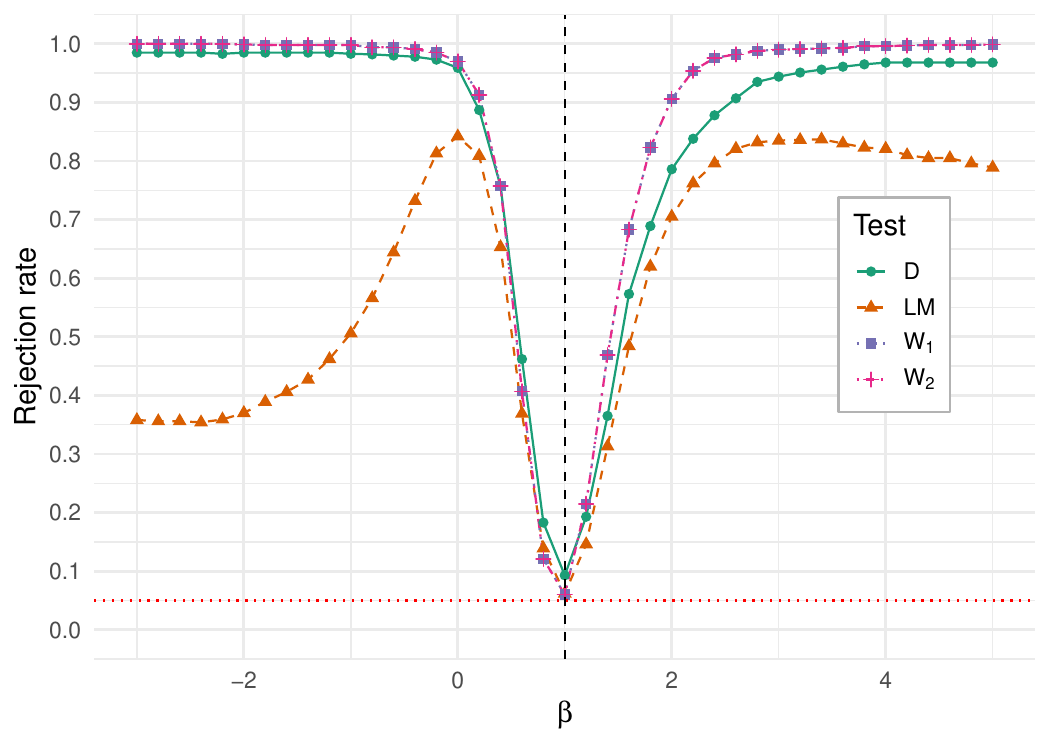}
        \caption{SJIVE}
    \end{subfigure}
    \hfill
    \begin{subfigure}[b]{0.47\textwidth}
        \includegraphics[width=\textwidth]{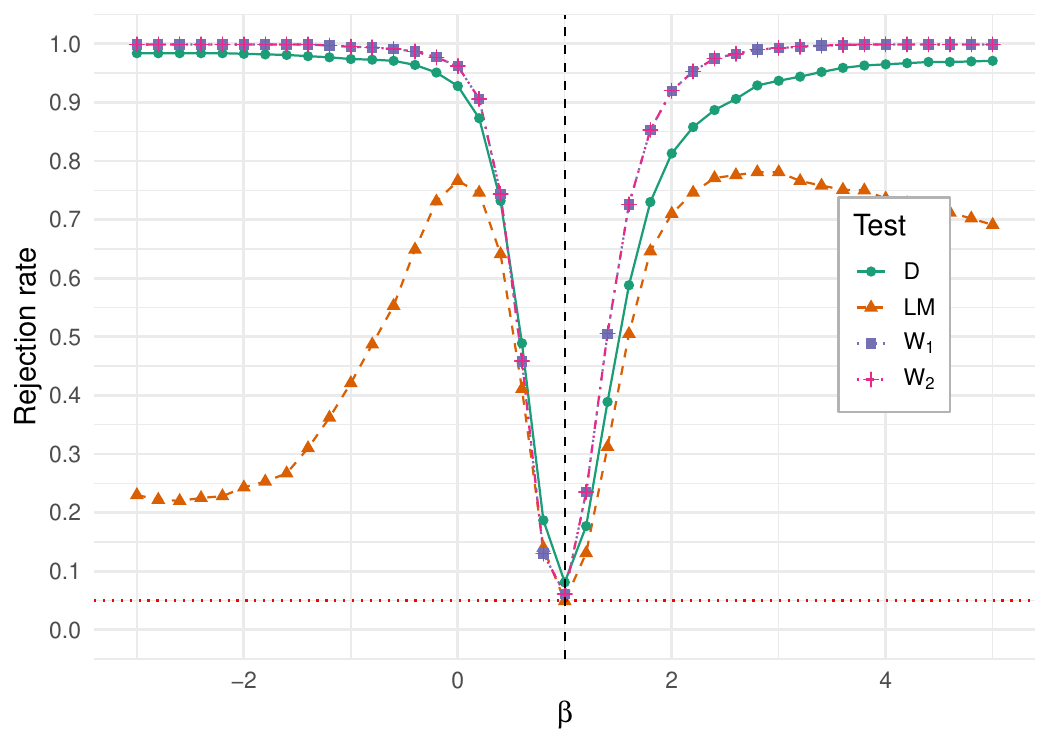}
        \caption{HLIM}
    \end{subfigure}
    
    \begin{subfigure}[b]{0.47\textwidth}
        \includegraphics[width=\textwidth]{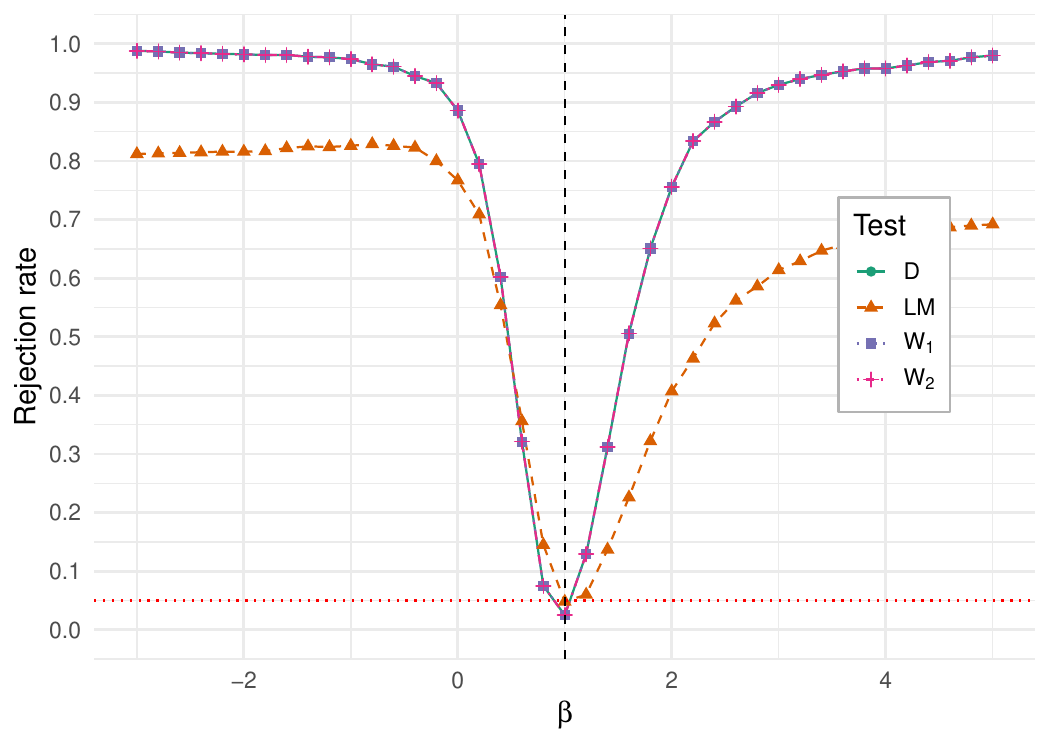}
        \caption{JIVE1}
    \end{subfigure}
    \hfill
    \begin{subfigure}[b]{0.47\textwidth}
        \includegraphics[width=\textwidth]{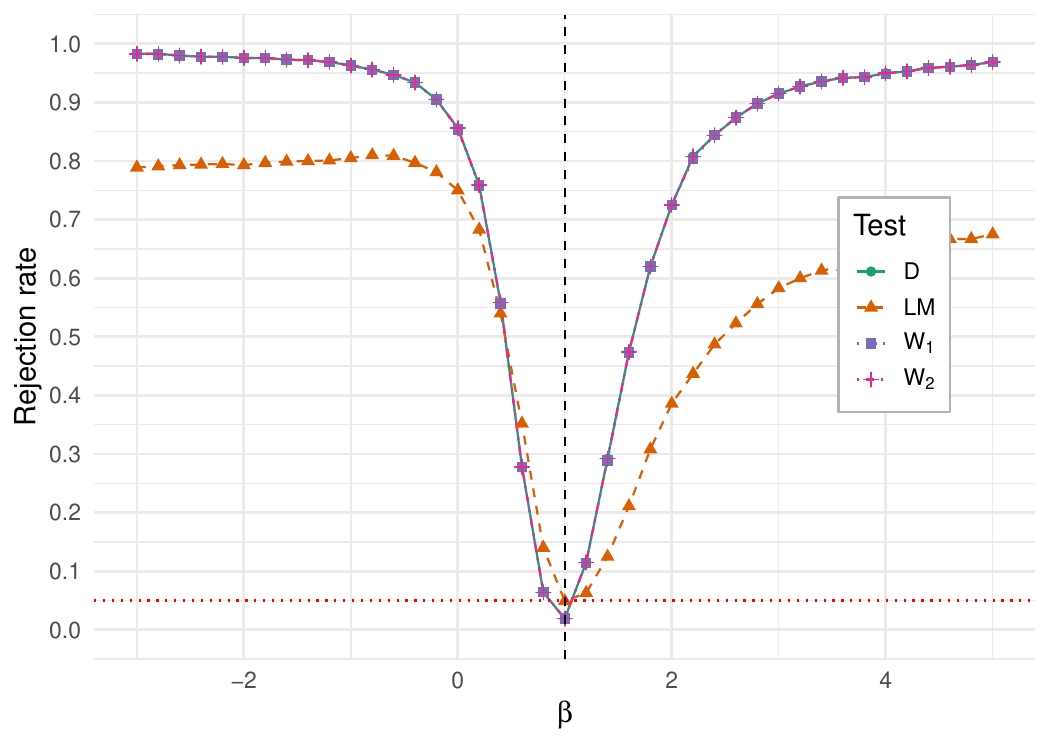}
        \caption{JIVE2}
    \end{subfigure}
    
    \caption{Power curves for DGP1 ($n=200$, $\alpha = 0.05$, $r = 32$). Trinity of test statistics distributed as a $\bar\chi^2$. Results based on 1000 repetitions. The horizontal dotted red line denotes the $5\%$ nominal rejection level, while the vertical dotted black line corresponds to $\beta=1$. Panel (a) plots statistics based on the SJIVE objective function; panel (b) plots statistics based on the HLIM objective function; panel (c) plots statistics based on the JIVE1 objective function; panel (d) plots statistics based on the JIVE2 objective function.}
    \label{fig:Trinity_chibar2_DGP1_200_0.05_32}
\end{figure}

\begin{figure}[ht]
    \centering
    % Replace 'chibar2' with 'chi2' for the second set of figures as needed
    \begin{subfigure}[b]{0.47\textwidth}
        \includegraphics[width=\textwidth]{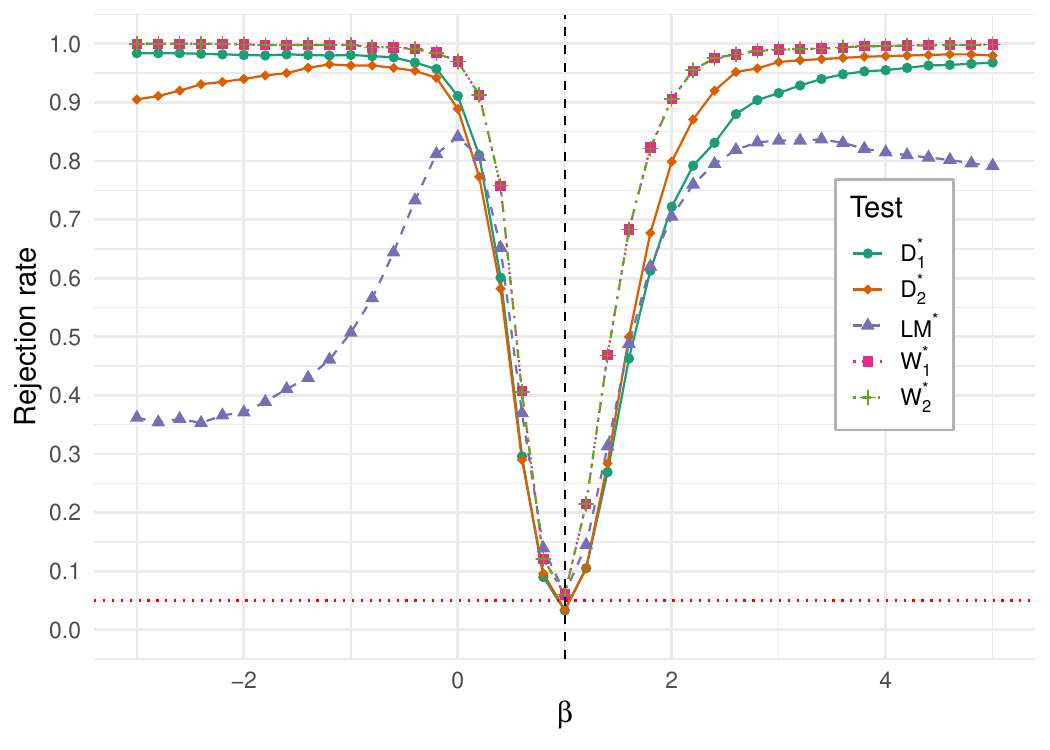}
        \caption{SJIVE}
    \end{subfigure}
    \hfill
    \begin{subfigure}[b]{0.47\textwidth}
        \includegraphics[width=\textwidth]{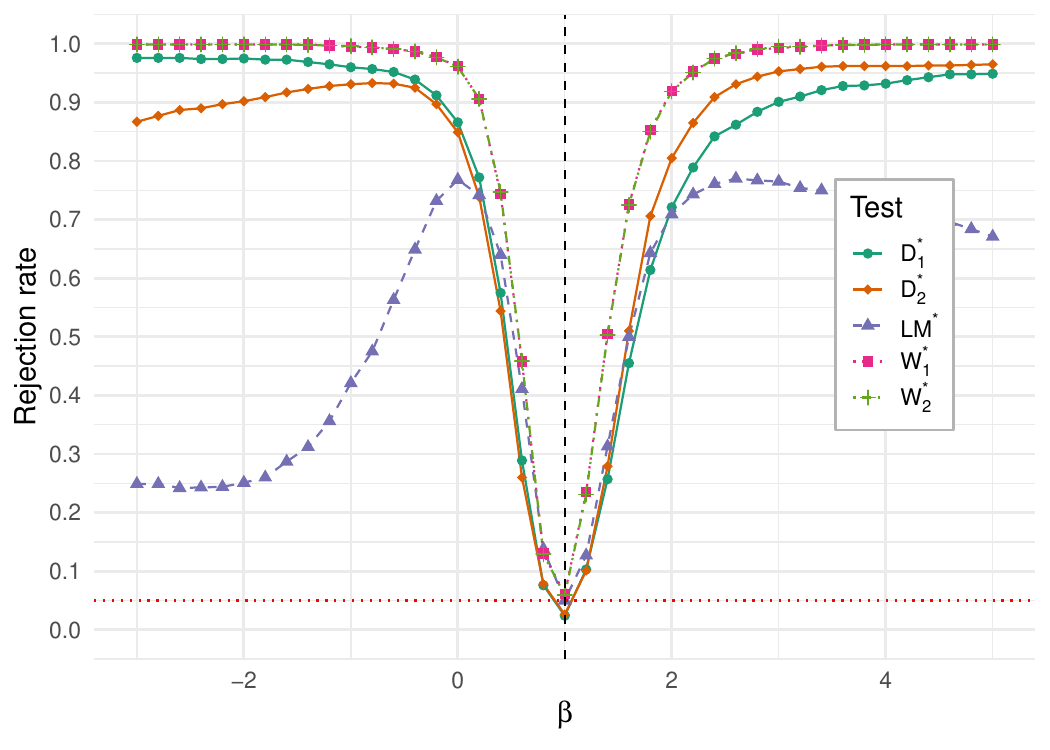}
        \caption{HLIM}
    \end{subfigure}
    
    \begin{subfigure}[b]{0.47\textwidth}
        \includegraphics[width=\textwidth]{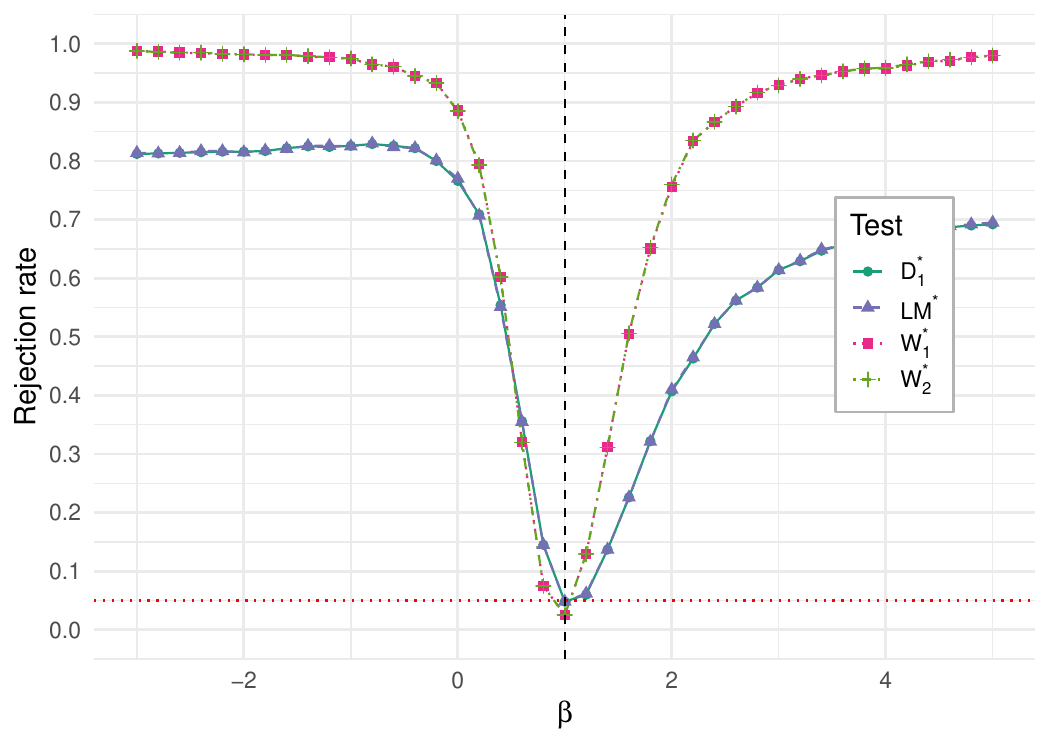}
        \caption{JIVE1}
    \end{subfigure}
    \hfill
    \begin{subfigure}[b]{0.47\textwidth}
        \includegraphics[width=\textwidth]{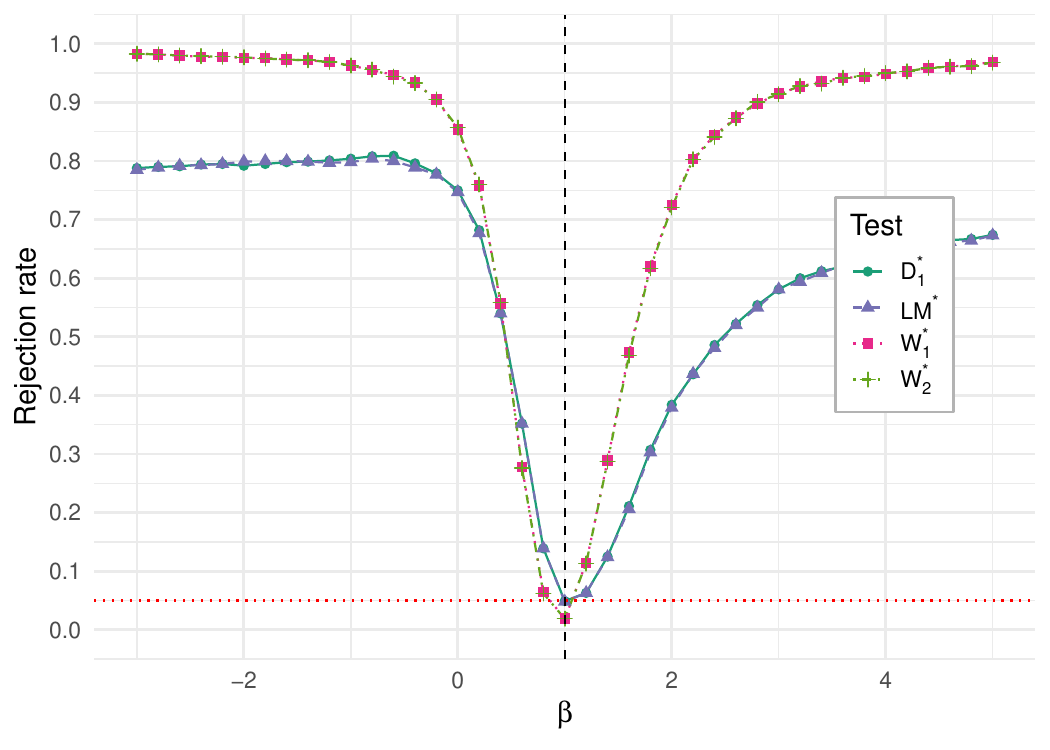}
        \caption{JIVE2}
    \end{subfigure}
    
    \caption{Power curves for DGP1 ($n=200$, $\alpha = 0.05$, $r = 32$). Trinity of test statistics distributed as a $\chi^2$. Results based on 1000 repetitions. The horizontal dotted red line denotes the $5\%$ nominal rejection level, while the vertical dotted black line corresponds to $\beta=1$. Panel (a) plots statistics based on the SJIVE objective function; panel (b) plots statistics based on the HLIM objective function; panel (c) plots statistics based on the JIVE1 objective function; panel (d) plots statistics based on the JIVE2 objective function.}
    \label{fig:Trinity_chi2_DGP1_200_0.05_32}
\end{figure}

% cf variance

\begin{figure}[ht]
    \centering
    % Replace 'chibar2' with 'chi2' for the second set of figures as needed
    \begin{subfigure}[b]{0.47\textwidth}
        \includegraphics[width=\textwidth]{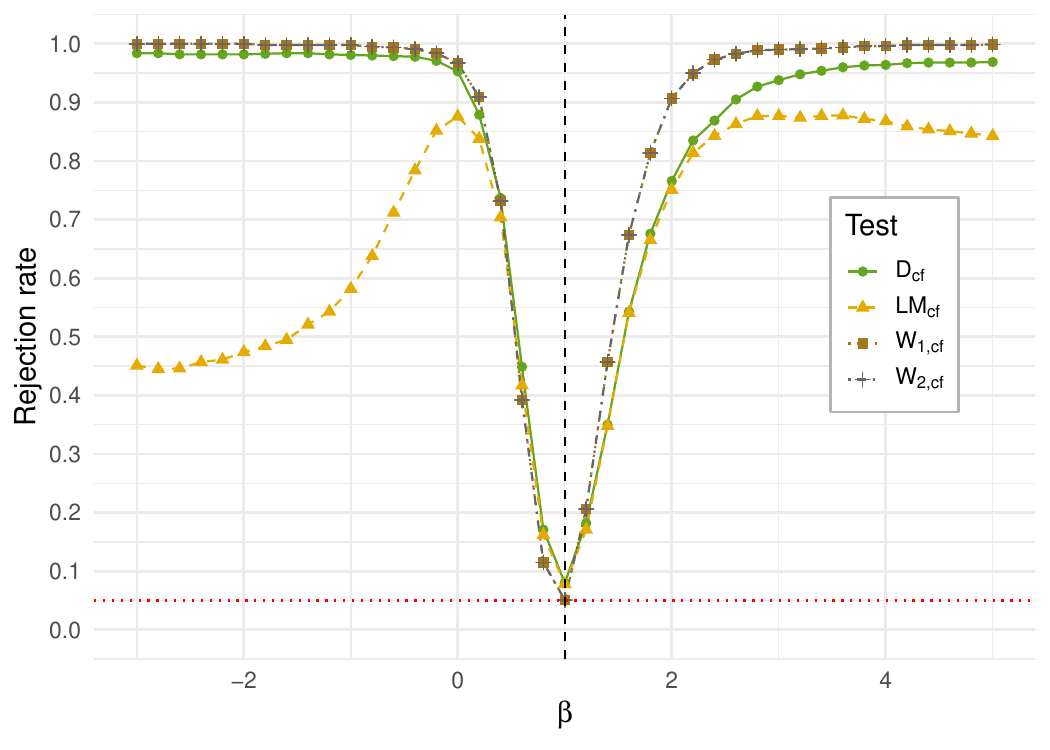}
        \caption{SJIVE}
    \end{subfigure}
    \hfill
    \begin{subfigure}[b]{0.47\textwidth}
        \includegraphics[width=\textwidth]{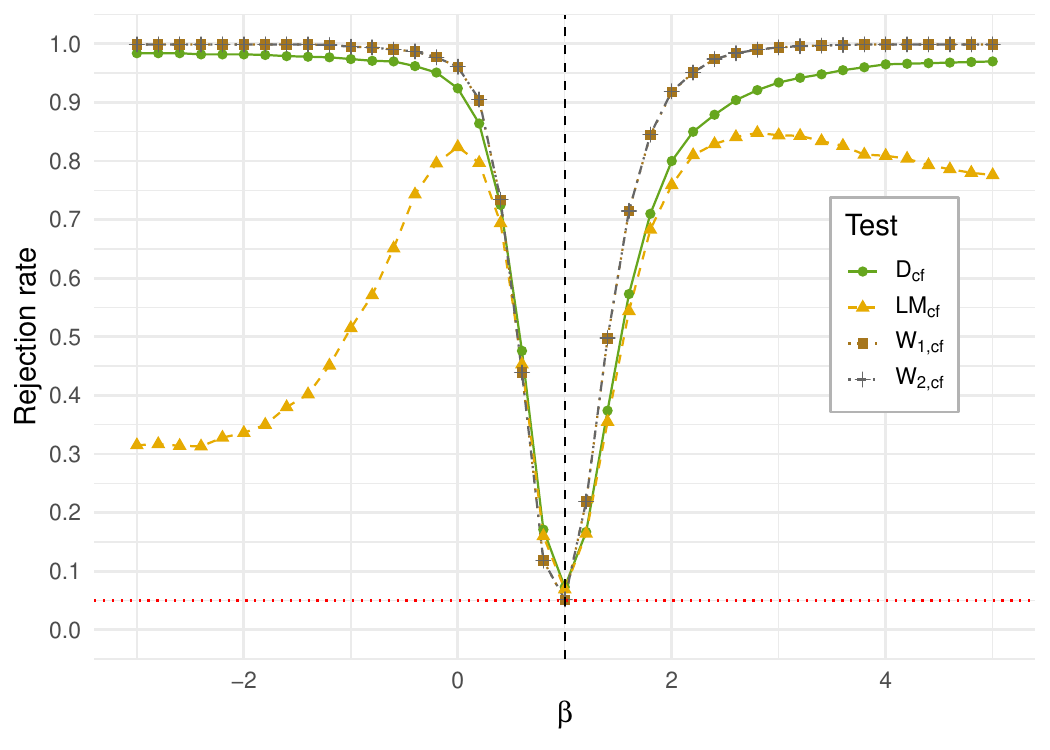}
        \caption{HLIM}
    \end{subfigure}
    
    \begin{subfigure}[b]{0.47\textwidth}
        \includegraphics[width=\textwidth]{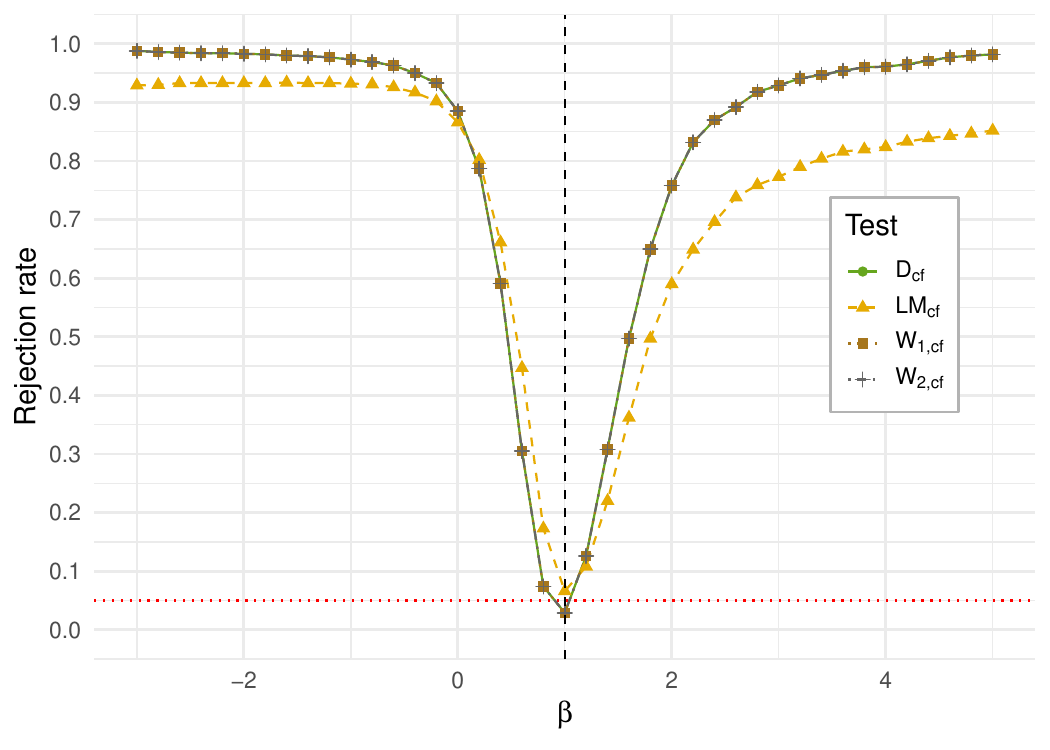}
        \caption{JIVE1}
    \end{subfigure}
    \hfill
    \begin{subfigure}[b]{0.47\textwidth}
        \includegraphics[width=\textwidth]{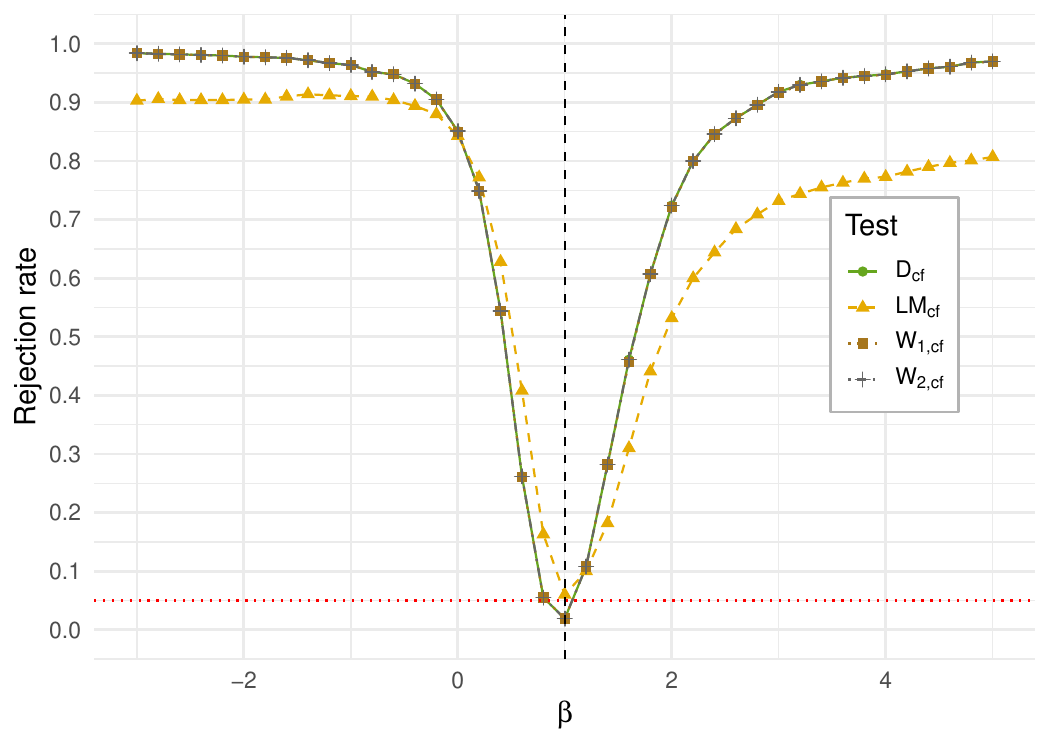}
        \caption{JIVE2}
    \end{subfigure}
    
    \caption{Power curves for DGP1 ($n=200$, $\alpha = 0.05$, $r = 32$). Trinity of test statistics distributed as a $\bar\chi^2$ with cross-fit variance. Results based on 1000 repetitions. The horizontal dotted red line denotes the $5\%$ nominal rejection level, while the vertical dotted black line corresponds to $\beta=1$. Panel (a) plots statistics based on the SJIVE objective function; panel (b) plots statistics based on the HLIM objective function; panel (c) plots statistics based on the JIVE1 objective function; panel (d) plots statistics based on the JIVE2 objective function.}
    \label{fig:Trinity_chibar2_cf_DGP1_200_0.05_32}
\end{figure}

\begin{figure}[ht]
    \centering
    % Replace 'chibar2' with 'chi2' for the second set of figures as needed
    \begin{subfigure}[b]{0.47\textwidth}
        \includegraphics[width=\textwidth]{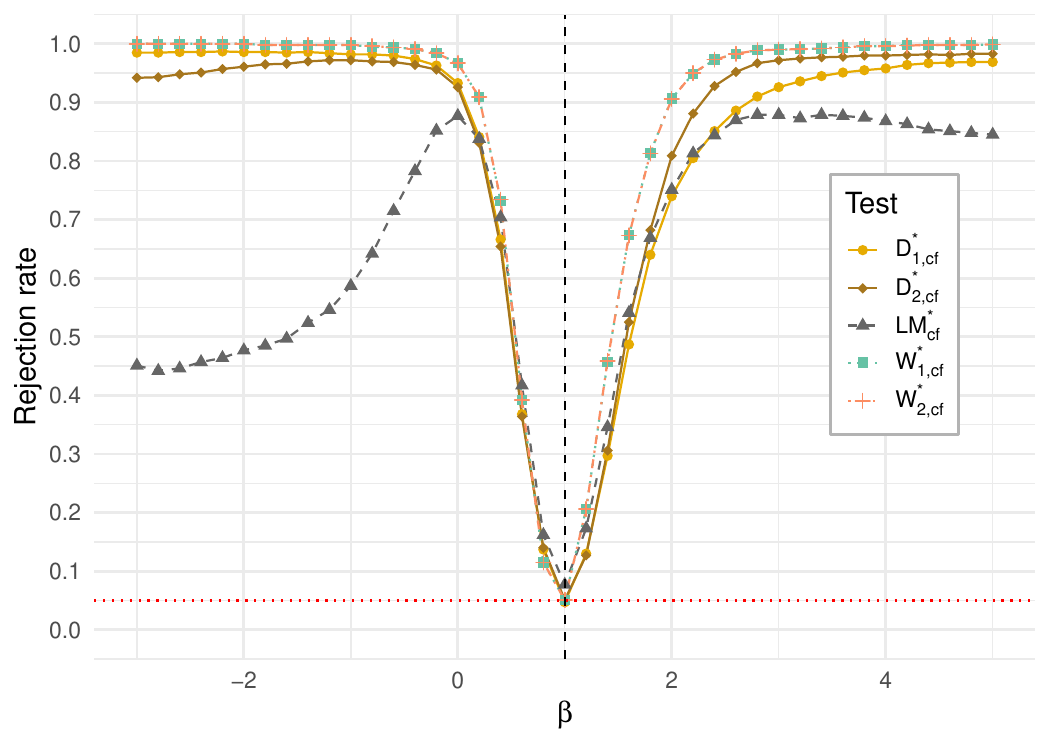}
        \caption{SJIVE}
    \end{subfigure}
    \hfill
    \begin{subfigure}[b]{0.47\textwidth}
        \includegraphics[width=\textwidth]{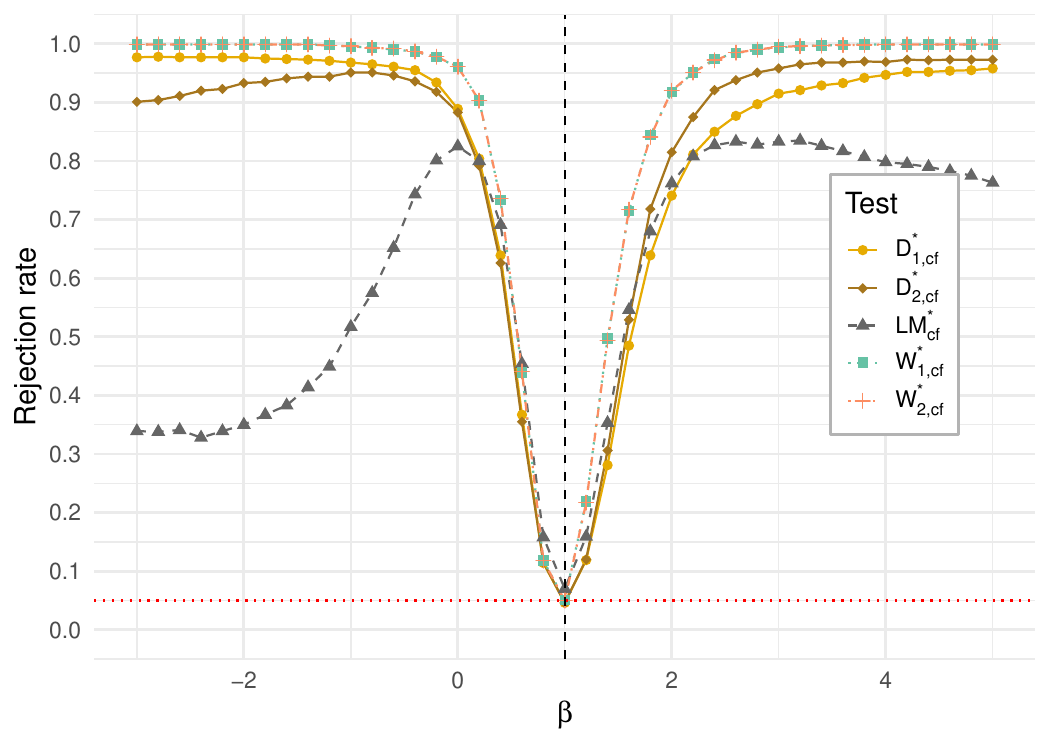}
        \caption{HLIM}
    \end{subfigure}
    
    \begin{subfigure}[b]{0.47\textwidth}
        \includegraphics[width=\textwidth]{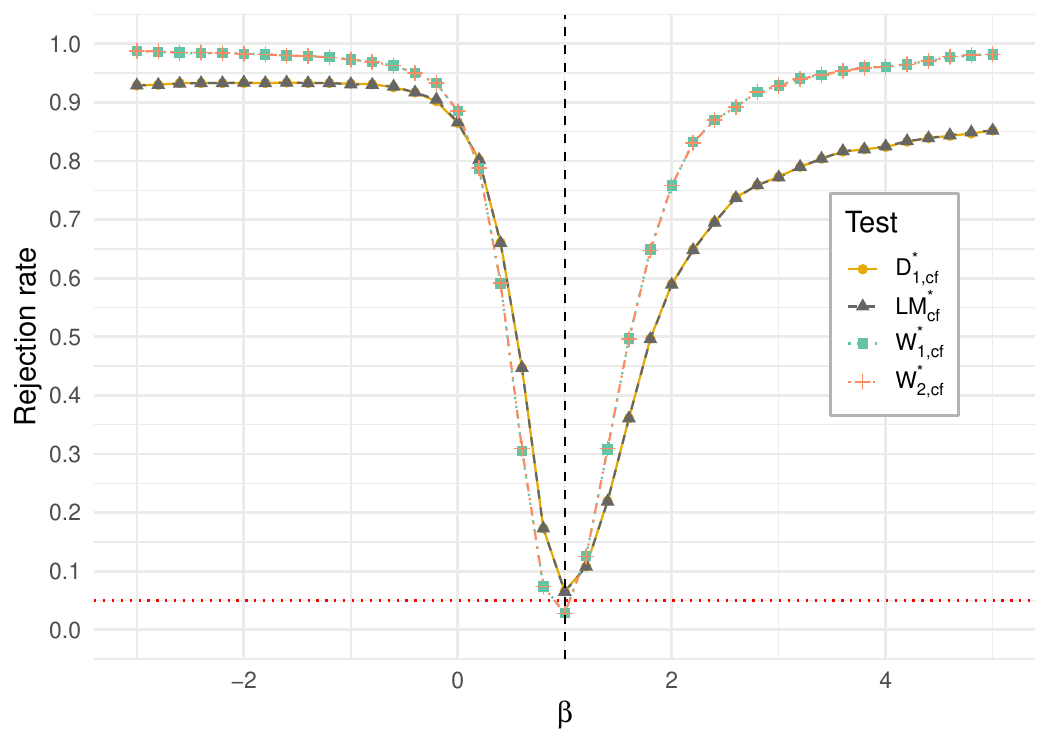}
        \caption{JIVE1}
    \end{subfigure}
    \hfill
    \begin{subfigure}[b]{0.47\textwidth}
        \includegraphics[width=\textwidth]{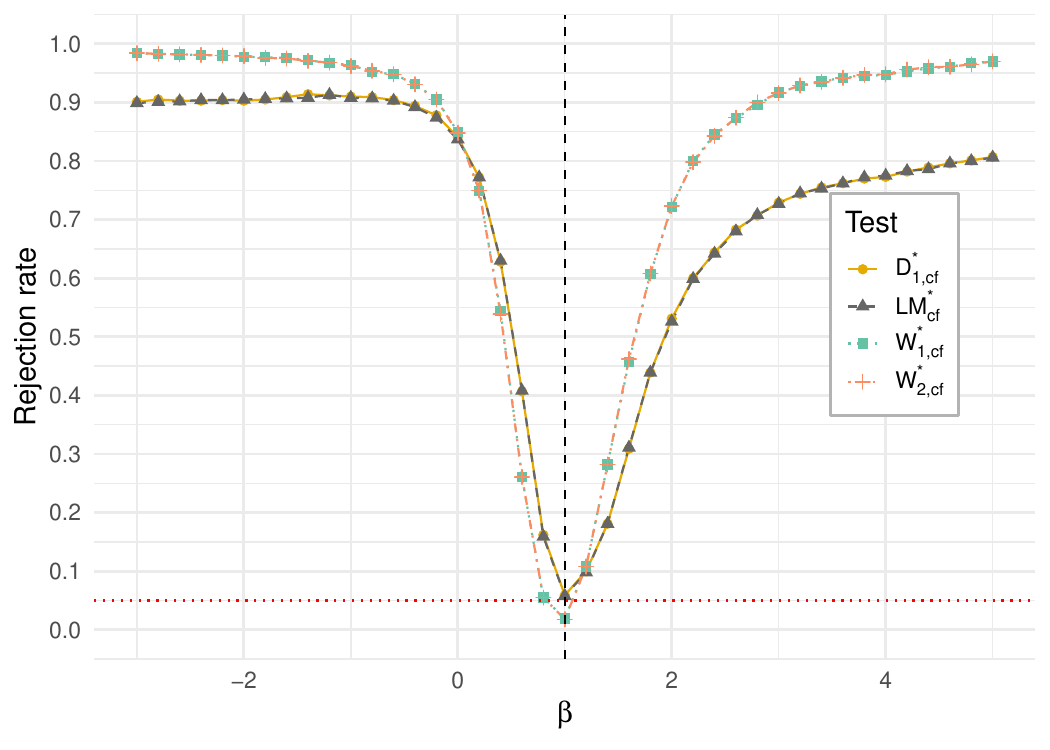}
        \caption{JIVE2}
    \end{subfigure}
    
    \caption{Power curves for DGP1 ($n=200$, $\alpha = 0.05$, $r = 32$). Trinity of test statistics distributed as a $\chi^2$ with cross-fit variance. Results based on 1000 repetitions. The horizontal dotted red line denotes the $5\%$ nominal rejection level, while the vertical dotted black line corresponds to $\beta=1$. Panel (a) plots statistics based on the SJIVE objective function; panel (b) plots statistics based on the HLIM objective function; panel (c) plots statistics based on the JIVE1 objective function; panel (d) plots statistics based on the JIVE2 objective function.}
    \label{fig:Trinity_chi2_cf_DGP1_200_0.05_32}
\end{figure}

%%%%%%%%%%%%%%%%%%%%%%%%%
% Trinity_200_0.05_64 

\begin{figure}[ht]
    \centering
    % Replace 'chibar2' with 'chi2' for the second set of figures as needed
    \begin{subfigure}[b]{0.47\textwidth}
        \includegraphics[width=\textwidth]{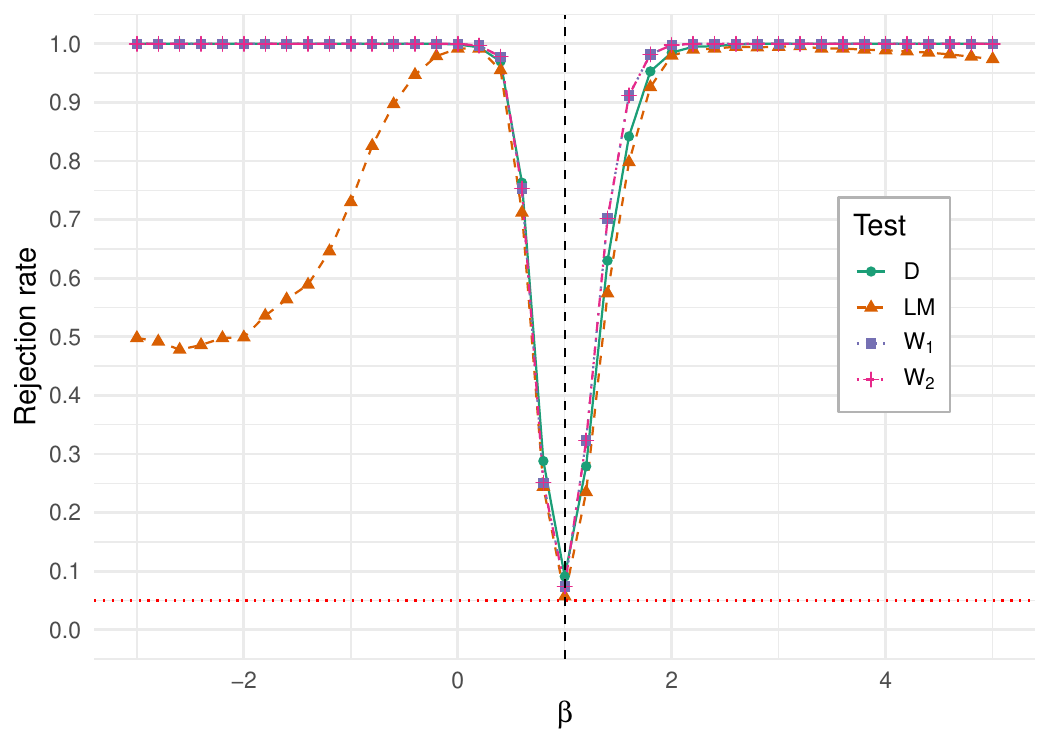}
        \caption{SJIVE}
    \end{subfigure}
    \hfill
    \begin{subfigure}[b]{0.47\textwidth}
        \includegraphics[width=\textwidth]{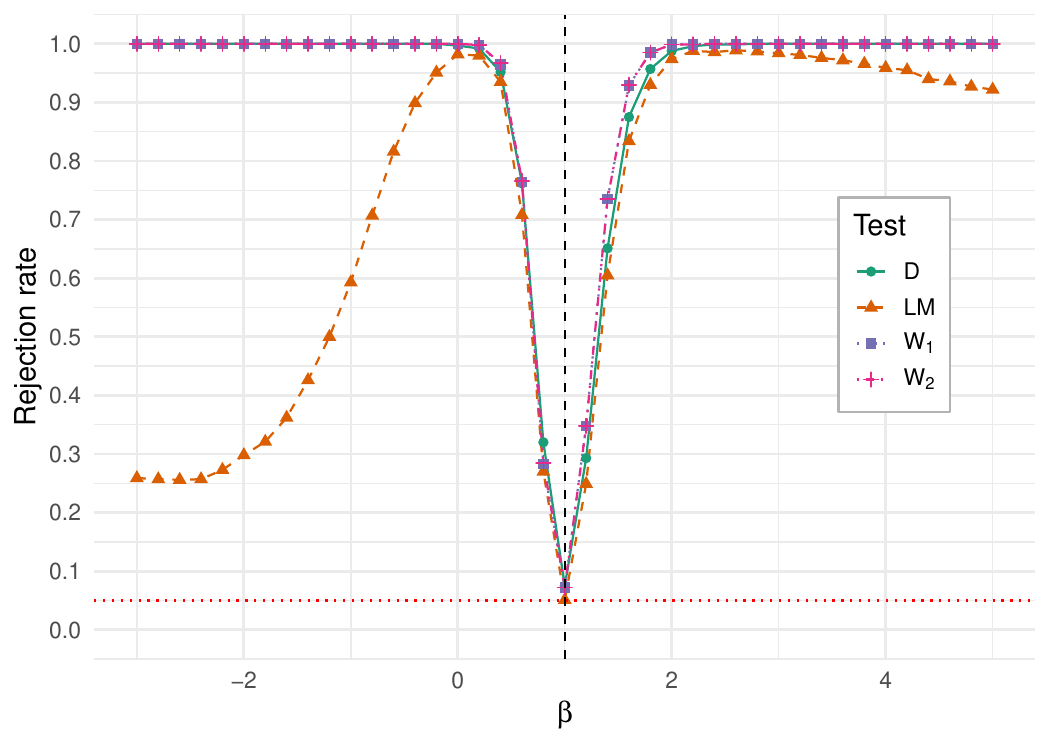}
        \caption{HLIM}
    \end{subfigure}
    
    \begin{subfigure}[b]{0.47\textwidth}
       \includegraphics[width=\textwidth]{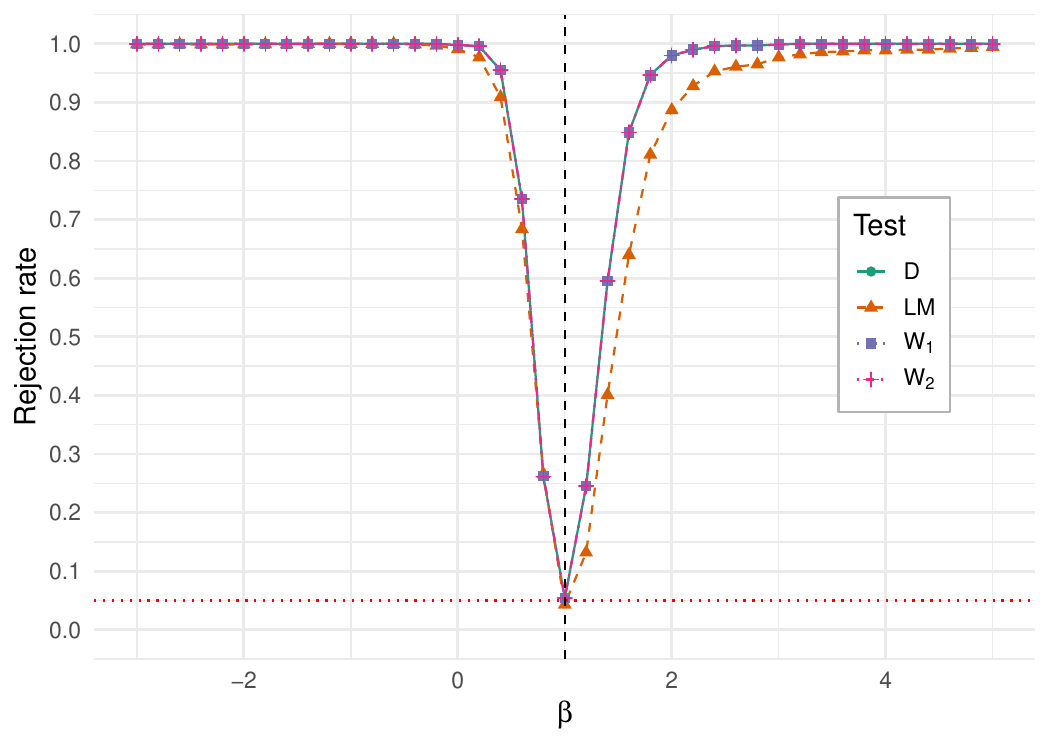}
        \caption{JIVE1}
    \end{subfigure}
    \hfill
    \begin{subfigure}[b]{0.47\textwidth}
        \includegraphics[width=\textwidth]{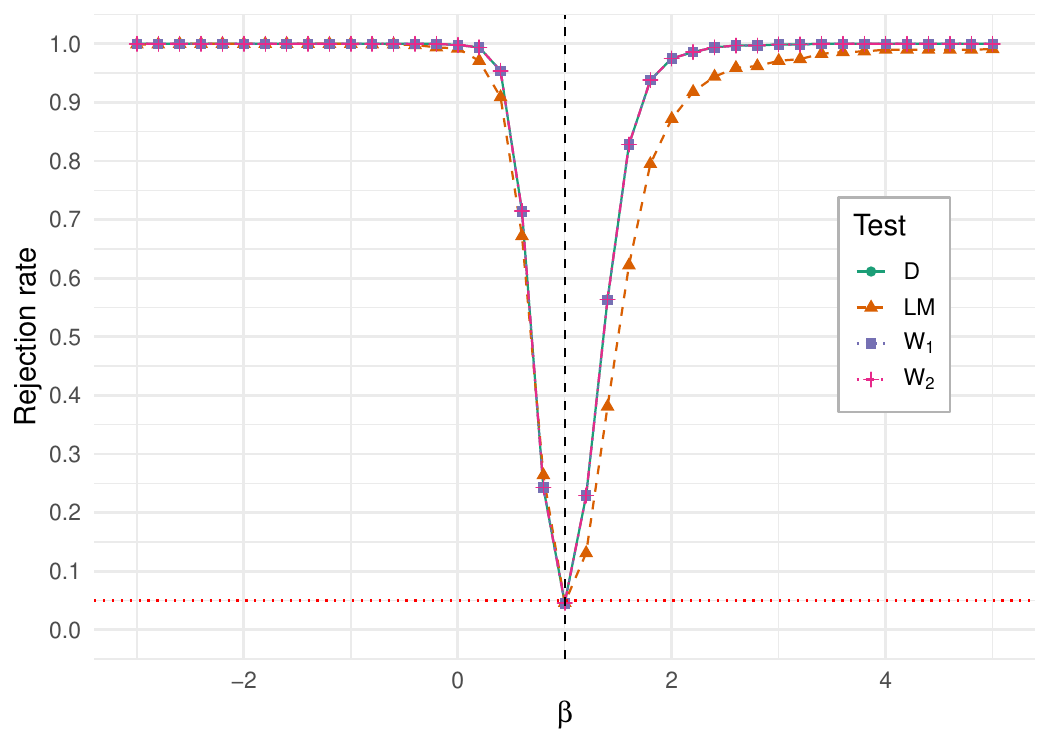}
        \caption{JIVE2}
    \end{subfigure}
    
    \caption{Power curves for DGP1 ($n=200$, $\alpha = 0.05$, $r = 64$). Trinity of test statistics distributed as a $\bar\chi^2$. Results based on 1000 repetitions. The horizontal dotted red line denotes the $5\%$ nominal rejection level, while the vertical dotted black line corresponds to $\beta=1$. Panel (a) plots statistics based on the SJIVE objective function; panel (b) plots statistics based on the HLIM objective function; panel (c) plots statistics based on the JIVE1 objective function; panel (d) plots statistics based on the JIVE2 objective function.}
    \label{fig:Trinity_chibar2_DGP1_200_0.05_64}
\end{figure}

\begin{figure}[ht]
    \centering
    % Replace 'chibar2' with 'chi2' for the second set of figures as needed
    \begin{subfigure}[b]{0.47\textwidth}
        \includegraphics[width=\textwidth]{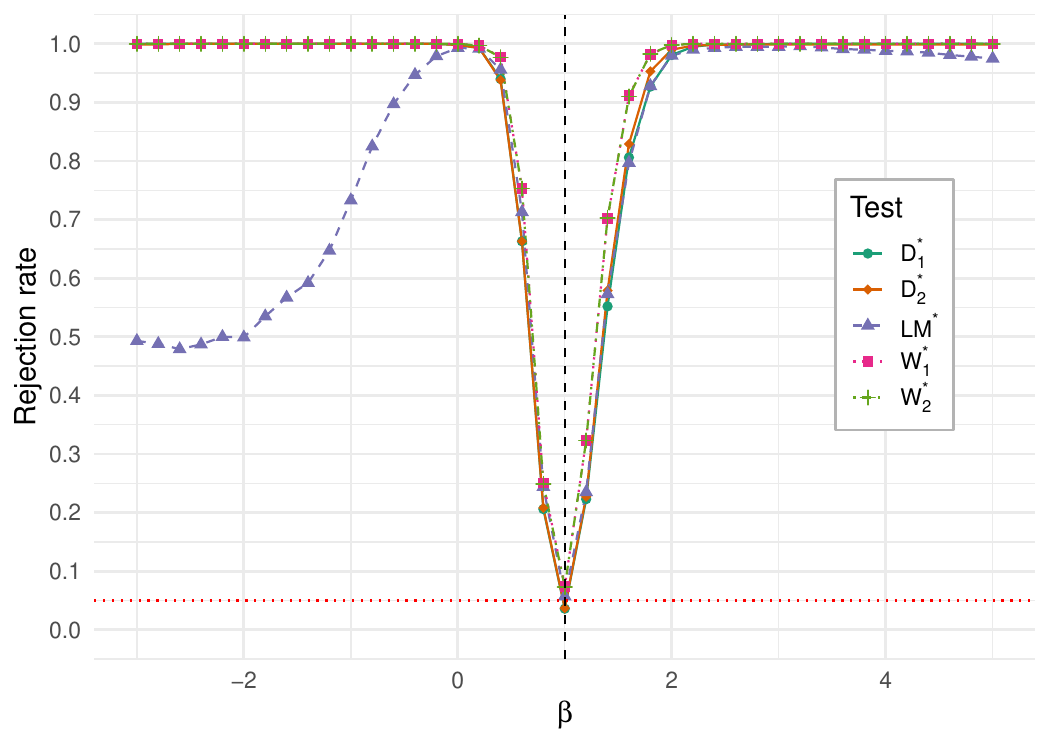}
        \caption{SJIVE}
    \end{subfigure}
    \hfill
    \begin{subfigure}[b]{0.47\textwidth}
        \includegraphics[width=\textwidth]{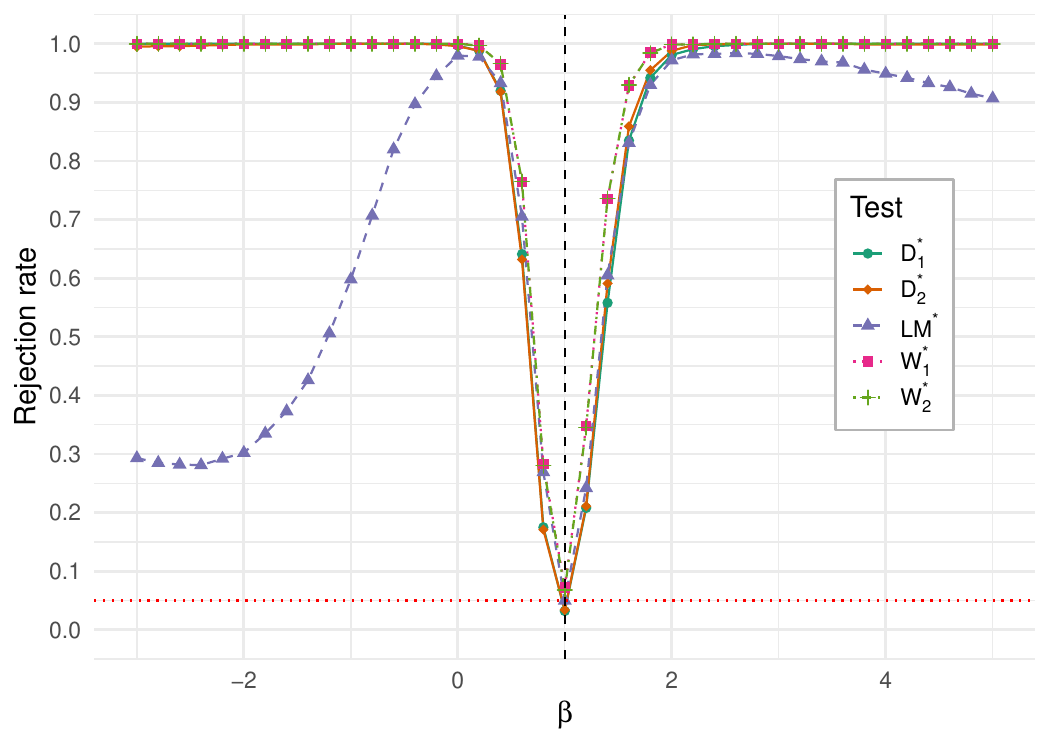}
        \caption{HLIM}
    \end{subfigure}
    
    \begin{subfigure}[b]{0.47\textwidth}
        \includegraphics[width=\textwidth]{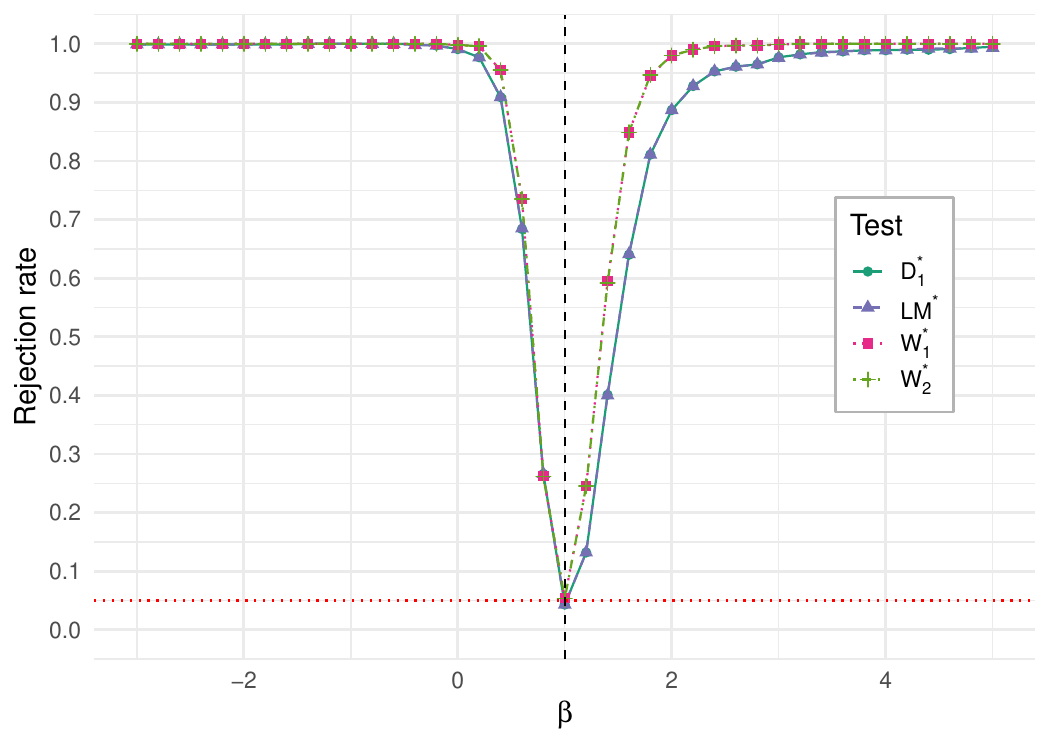}
        \caption{JIVE1}
    \end{subfigure}
    \hfill
    \begin{subfigure}[b]{0.47\textwidth}
        \includegraphics[width=\textwidth]{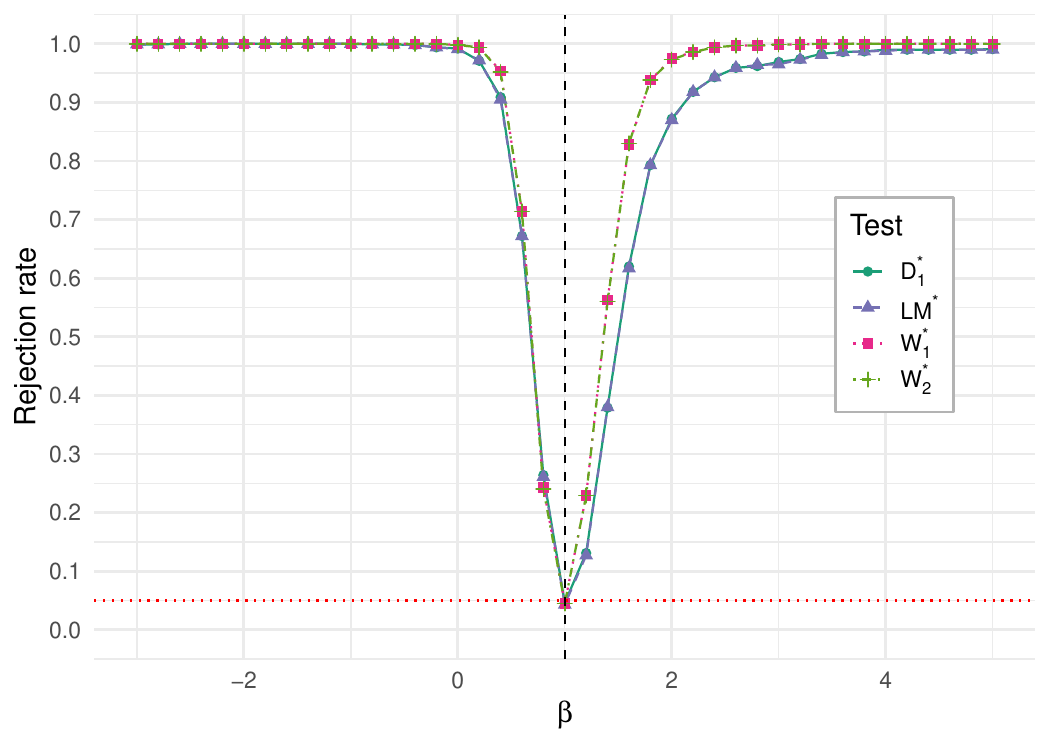}
        \caption{JIVE2}
    \end{subfigure}
    
    \caption{Power curves for DGP1 ($n=200$, $\alpha = 0.05$, $r = 64$). Trinity of test statistics distributed as a $\chi^2$. Results based on 1000 repetitions. The horizontal dotted red line denotes the $5\%$ nominal rejection level, while the vertical dotted black line corresponds to $\beta=1$. Panel (a) plots statistics based on the SJIVE objective function; panel (b) plots statistics based on the HLIM objective function; panel (c) plots statistics based on the JIVE1 objective function; panel (d) plots statistics based on the JIVE2 objective function.}
    \label{fig:Trinity_chi2_DGP1_200_0.05_64}
\end{figure}

% cf variance

\begin{figure}[ht]
    \centering
    % Replace 'chibar2' with 'chi2' for the second set of figures as needed
    \begin{subfigure}[b]{0.47\textwidth}
        \includegraphics[width=\textwidth]{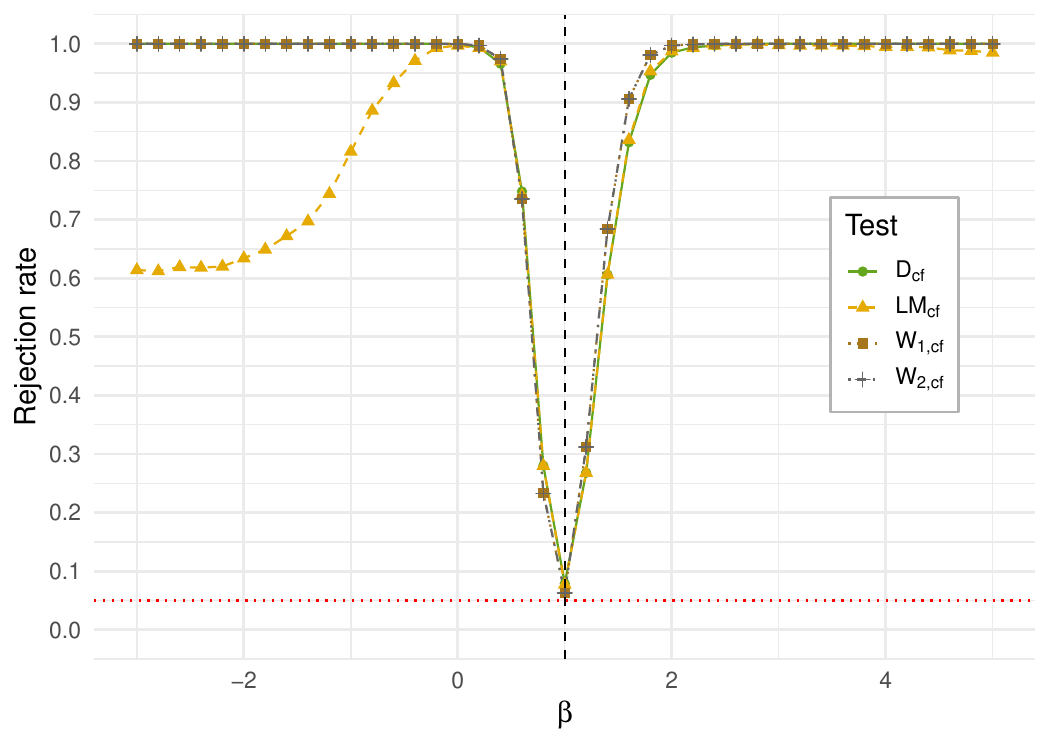}
        \caption{SJIVE}
    \end{subfigure}
    \hfill
    \begin{subfigure}[b]{0.47\textwidth}
        \includegraphics[width=\textwidth]{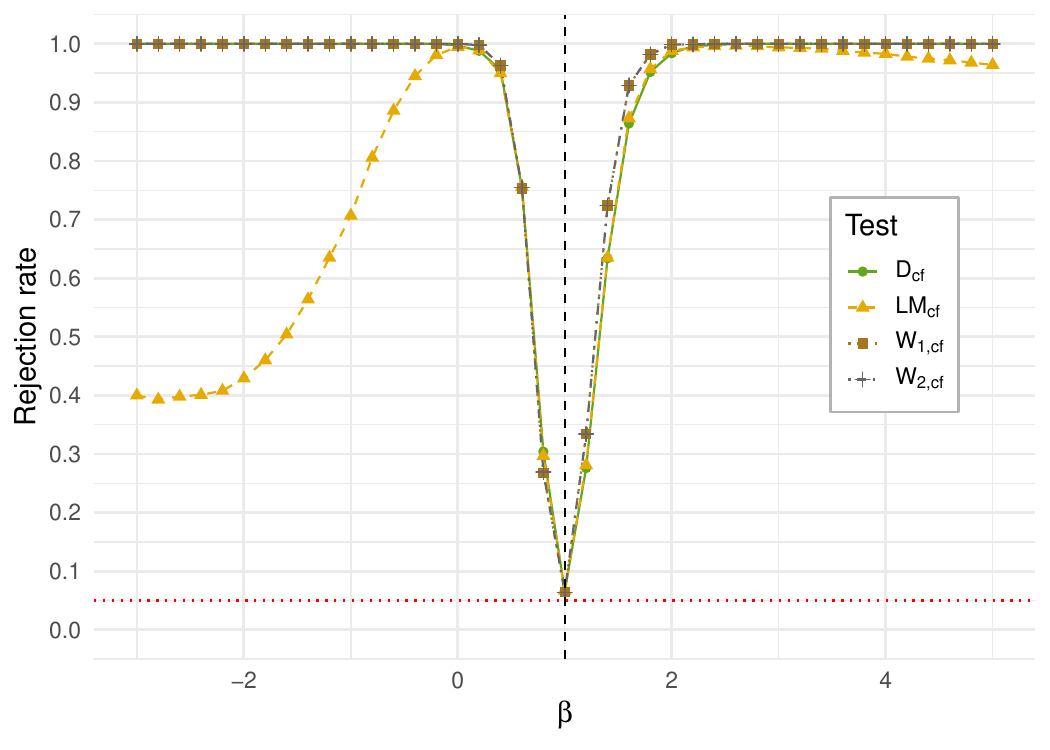}
        \caption{HLIM}
    \end{subfigure}
    
    \begin{subfigure}[b]{0.47\textwidth}
        \includegraphics[width=\textwidth]{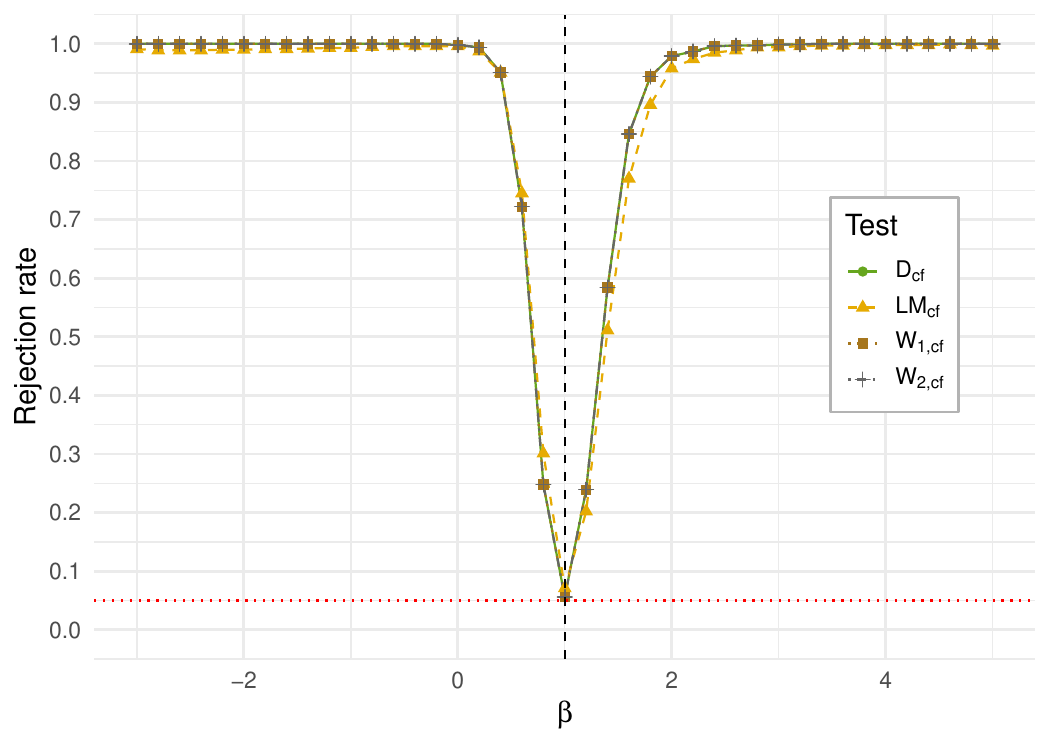}
        \caption{JIVE1}
    \end{subfigure}
    \hfill
    \begin{subfigure}[b]{0.47\textwidth}
       \includegraphics[width=\textwidth]{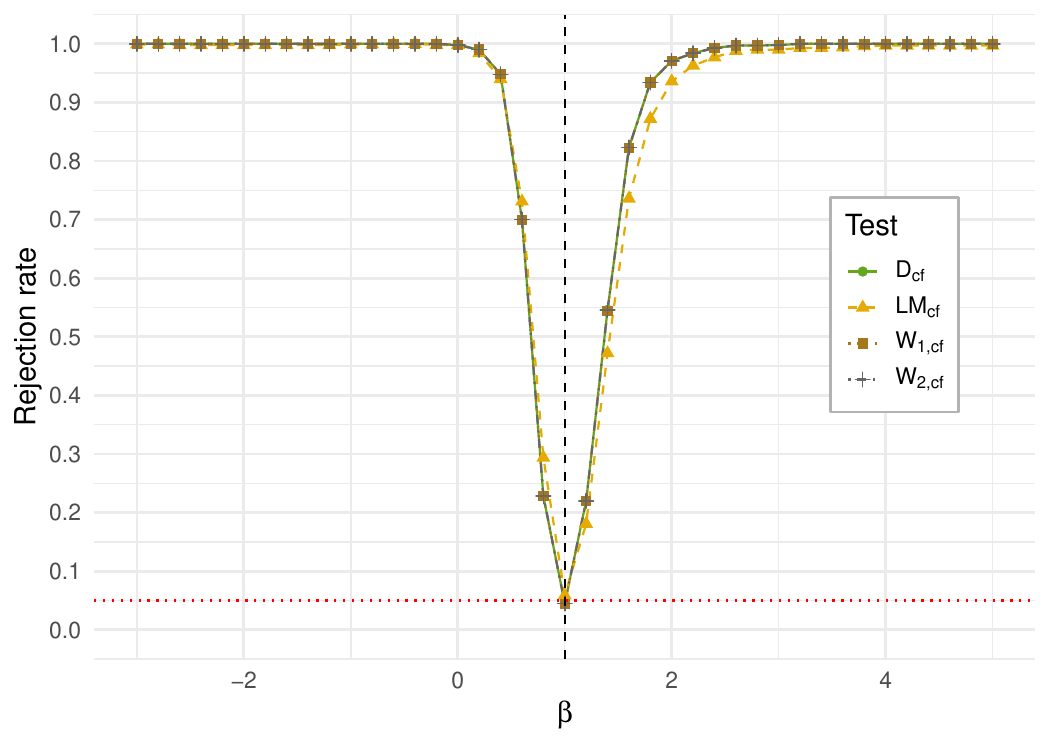}
        \caption{JIVE2}
    \end{subfigure}
    
    \caption{Power curves for DGP1 ($n=200$, $\alpha = 0.05$, $r = 64$). Trinity of test statistics distributed as a $\bar\chi^2$ with cross-fit variance. Results based on 1000 repetitions. The horizontal dotted red line denotes the $5\%$ nominal rejection level, while the vertical dotted black line corresponds to $\beta=1$. Panel (a) plots statistics based on the SJIVE objective function; panel (b) plots statistics based on the HLIM objective function; panel (c) plots statistics based on the JIVE1 objective function; panel (d) plots statistics based on the JIVE2 objective function.}
    \label{fig:Trinity_chibar2_cf_DGP1_200_0.05_64}
\end{figure}

\begin{figure}[ht]
    \centering
    % Replace 'chibar2' with 'chi2' for the second set of figures as needed
    \begin{subfigure}[b]{0.47\textwidth}
        \includegraphics[width=\textwidth]{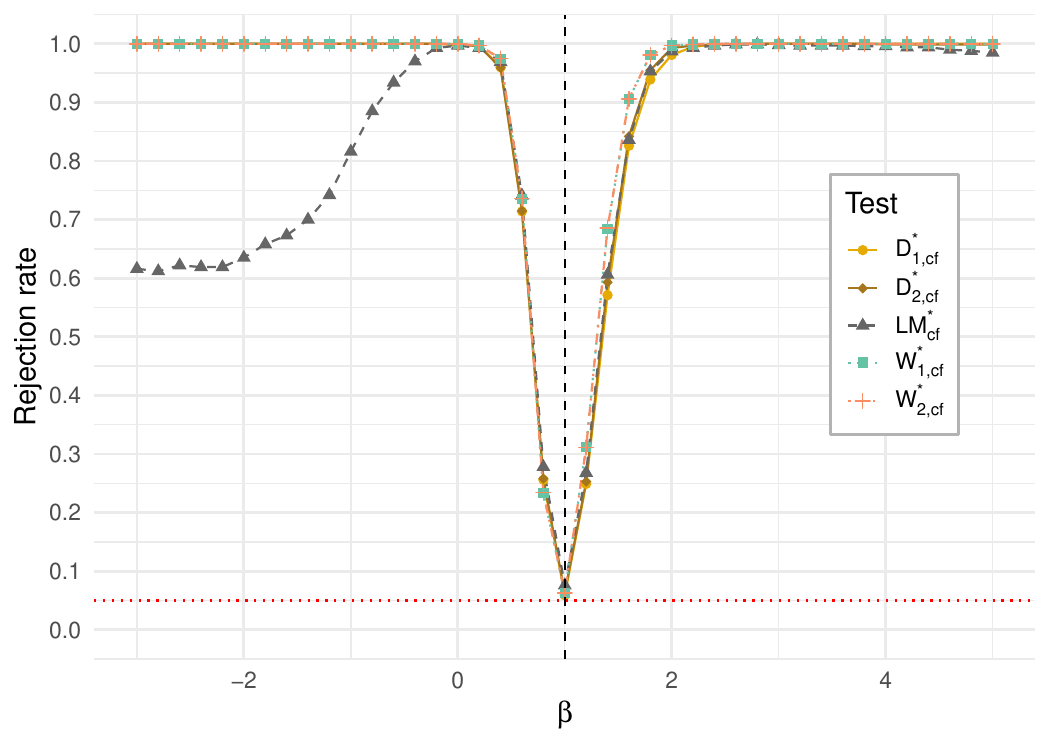}
        \caption{SJIVE}
    \end{subfigure}
    \hfill
    \begin{subfigure}[b]{0.47\textwidth}
        \includegraphics[width=\textwidth]{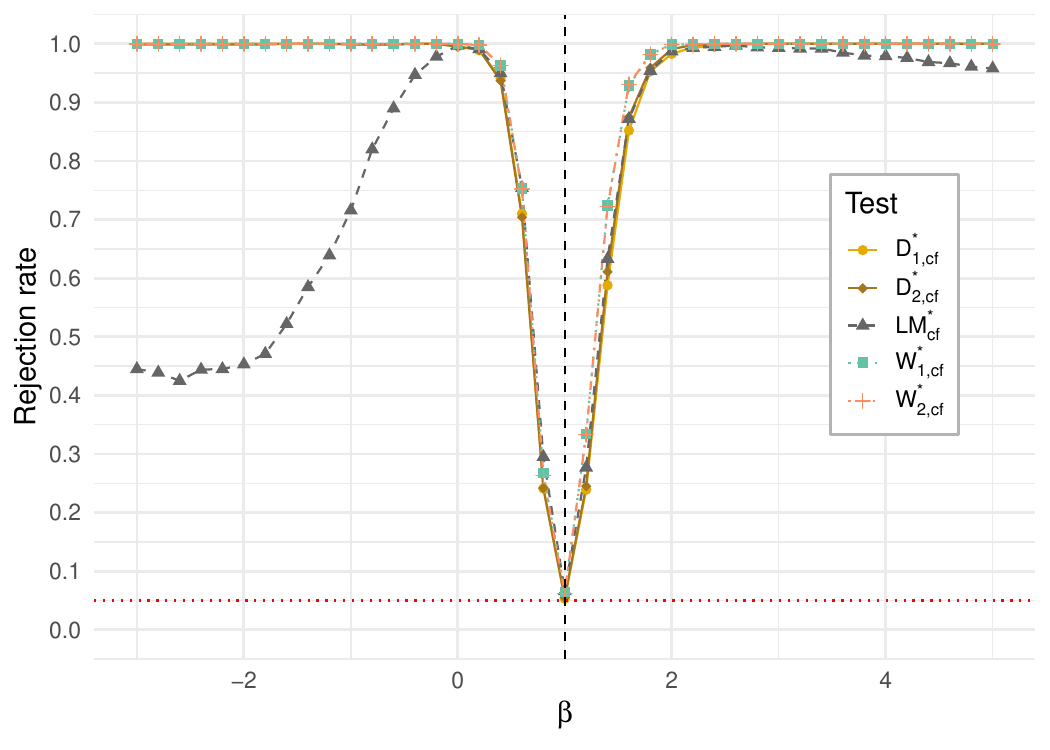}
        \caption{HLIM}
    \end{subfigure}
    
    \begin{subfigure}[b]{0.47\textwidth}
        \includegraphics[width=\textwidth]{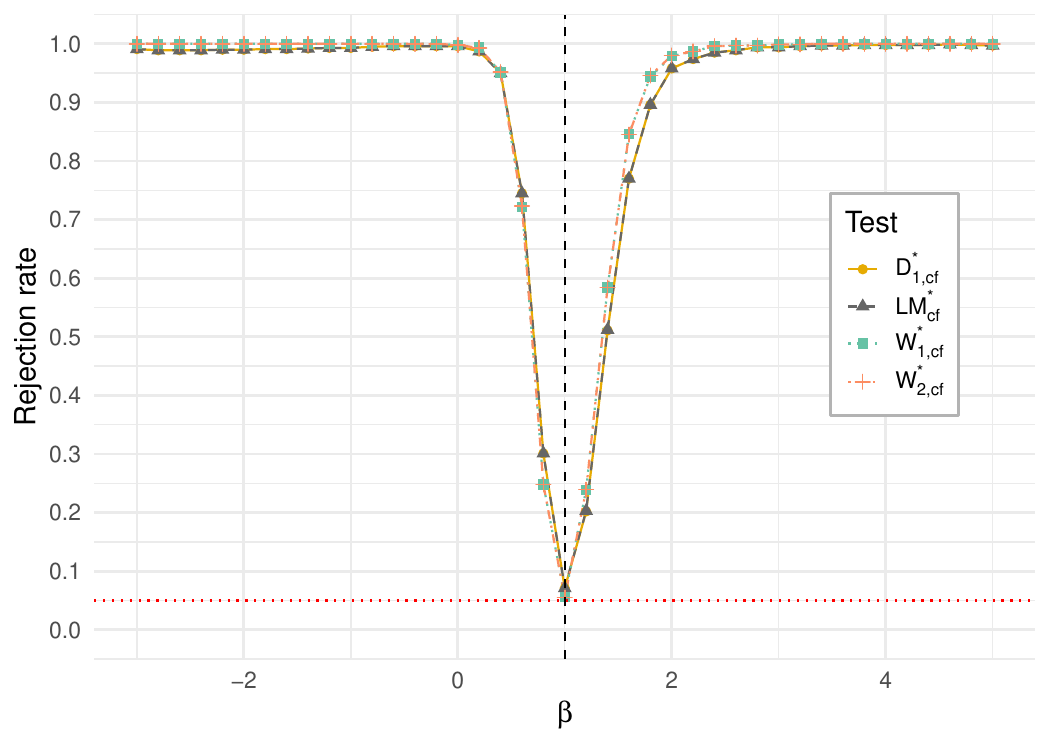}
        \caption{JIVE1}
    \end{subfigure}
    \hfill
    \begin{subfigure}[b]{0.47\textwidth}
        \includegraphics[width=\textwidth]{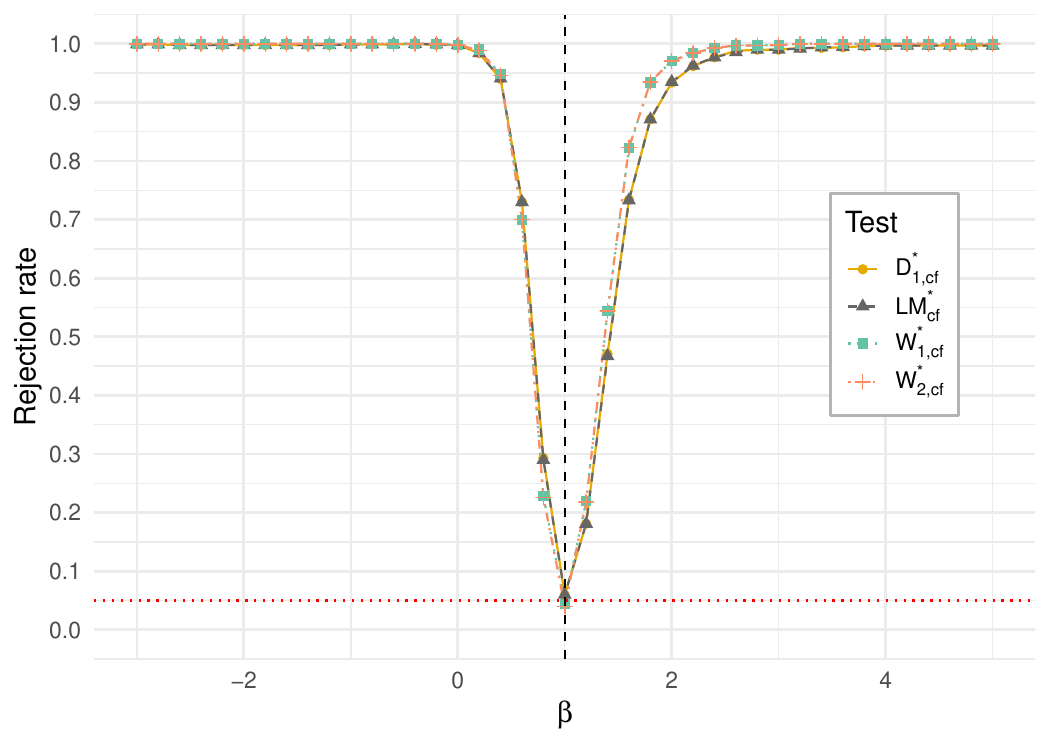}
        \caption{JIVE2}
    \end{subfigure}
    
    \caption{Power curves for DGP1 ($n=200$, $\alpha = 0.05$, $r = 64$). Trinity of test statistics distributed as a $\chi^2$ with cross-fit variance. Results based on 1000 repetitions. The horizontal dotted red line denotes the $5\%$ nominal rejection level, while the vertical dotted black line corresponds to $\beta=1$. Panel (a) plots statistics based on the SJIVE objective function; panel (b) plots statistics based on the HLIM objective function; panel (c) plots statistics based on the JIVE1 objective function; panel (d) plots statistics based on the JIVE2 objective function.}
    \label{fig:Trinity_chi2_cf_DGP1_200_0.05_64}
\end{figure}

%%%%%%%%%%%%%%%%%%%%%%%%%
% Trinity_200_0.1_32 

\begin{figure}[ht]
    \centering
    % Replace 'chibar2' with 'chi2' for the second set of figures as needed
    \begin{subfigure}[b]{0.47\textwidth}
        \includegraphics[width=\textwidth]{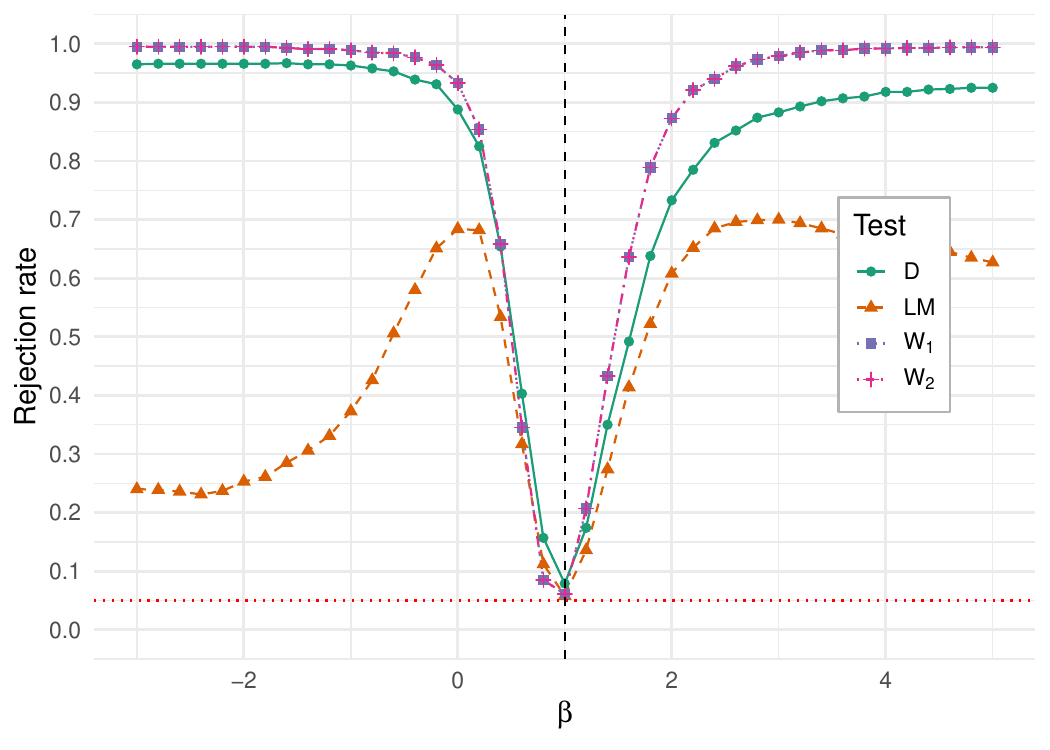}
        \caption{SJIVE}
    \end{subfigure}
    \hfill
    \begin{subfigure}[b]{0.47\textwidth}
        \includegraphics[width=\textwidth]{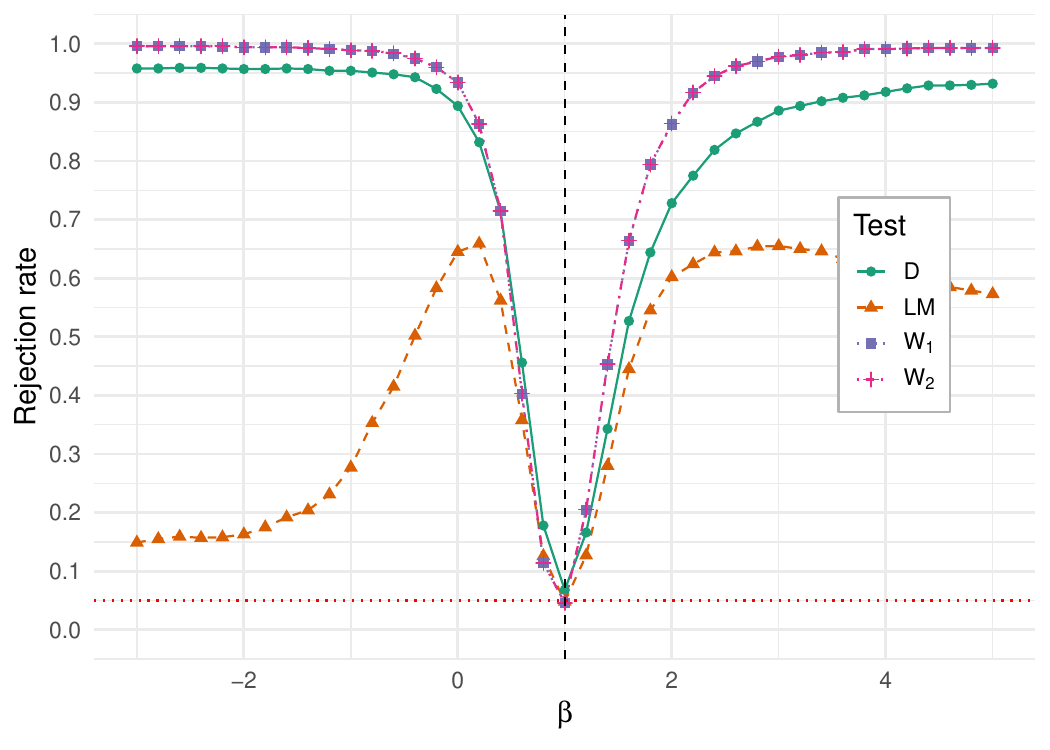}
        \caption{HLIM}
    \end{subfigure}
    
    \begin{subfigure}[b]{0.47\textwidth}
        \includegraphics[width=\textwidth]{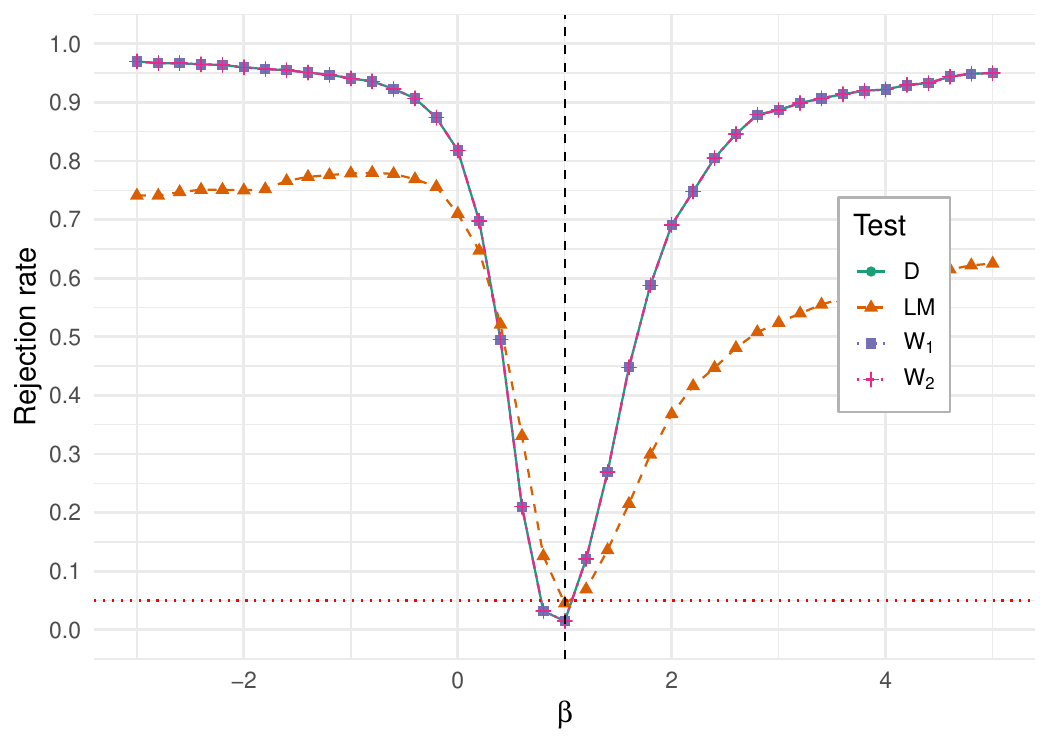}
        \caption{JIVE1}
    \end{subfigure}
    \hfill
    \begin{subfigure}[b]{0.47\textwidth}
        \includegraphics[width=\textwidth]{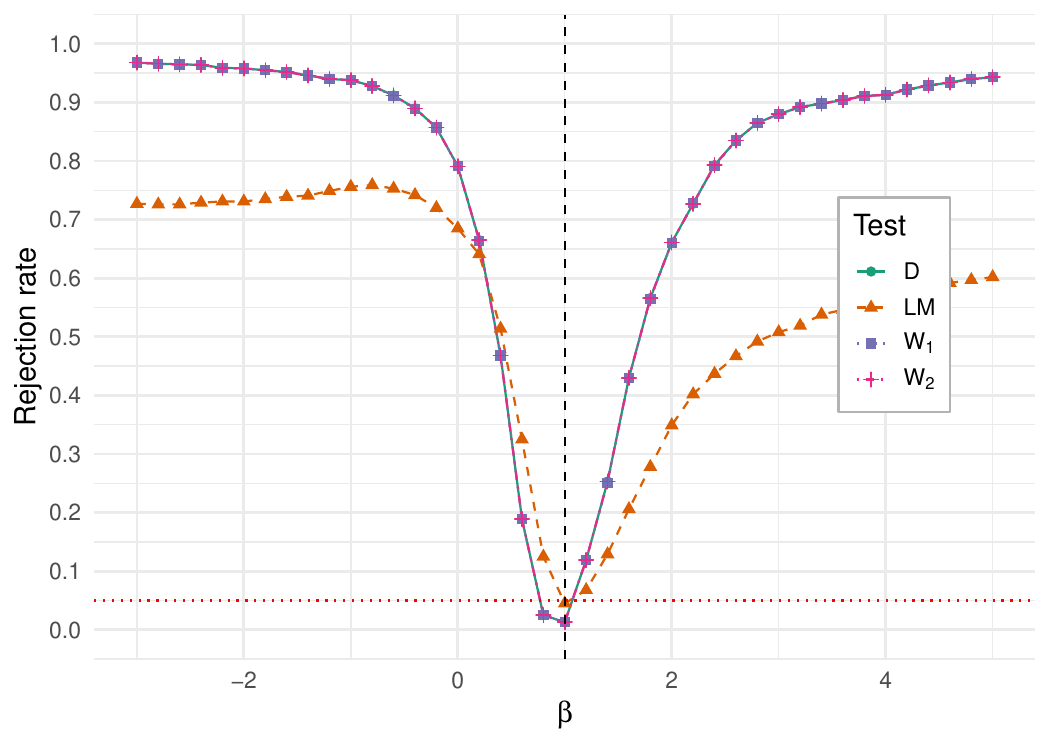}
        \caption{JIVE2}
    \end{subfigure}
    
    \caption{Power curves for DGP1 ($n=200$, $\alpha = 0.1$, $r = 32$). Trinity of test statistics distributed as a $\bar\chi^2$. Results based on 1000 repetitions. The horizontal dotted red line denotes the $5\%$ nominal rejection level, while the vertical dotted black line corresponds to $\beta=1$. Panel (a) plots statistics based on the SJIVE objective function; panel (b) plots statistics based on the HLIM objective function; panel (c) plots statistics based on the JIVE1 objective function; panel (d) plots statistics based on the JIVE2 objective function.}
    \label{fig:Trinity_chibar2_DGP1_200_0.1_32}
\end{figure}

\begin{figure}[ht]
    \centering
    % Replace 'chibar2' with 'chi2' for the second set of figures as needed
    \begin{subfigure}[b]{0.47\textwidth}
        \includegraphics[width=\textwidth]{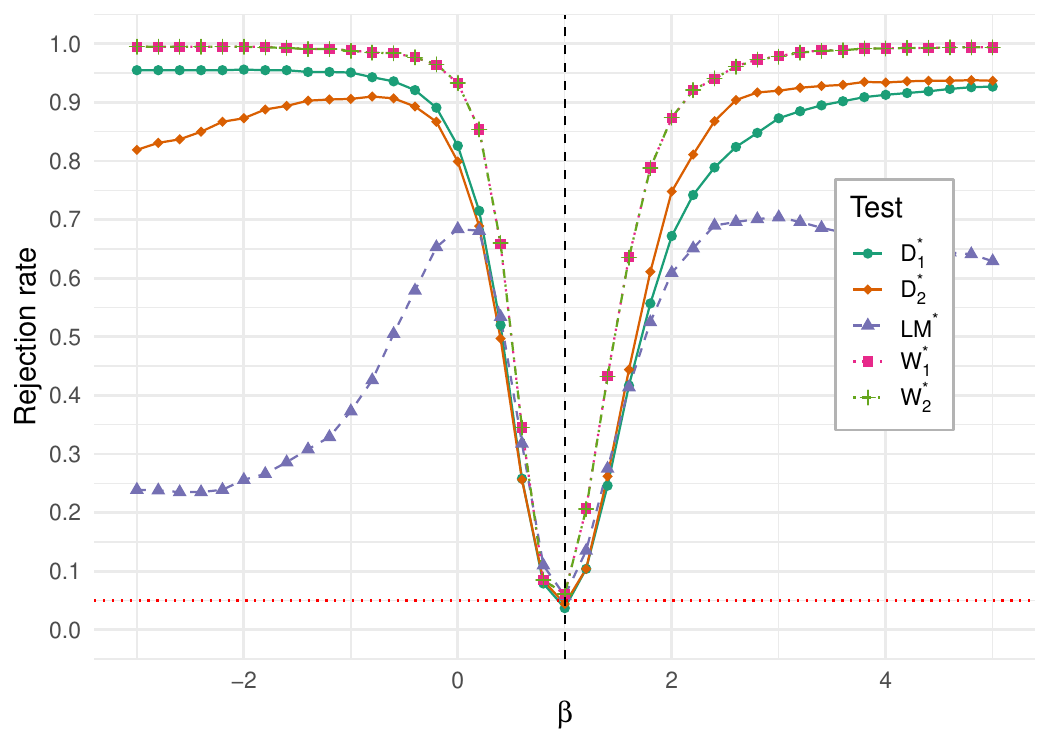}
        \caption{SJIVE}
    \end{subfigure}
    \hfill
    \begin{subfigure}[b]{0.47\textwidth}
        \includegraphics[width=\textwidth]{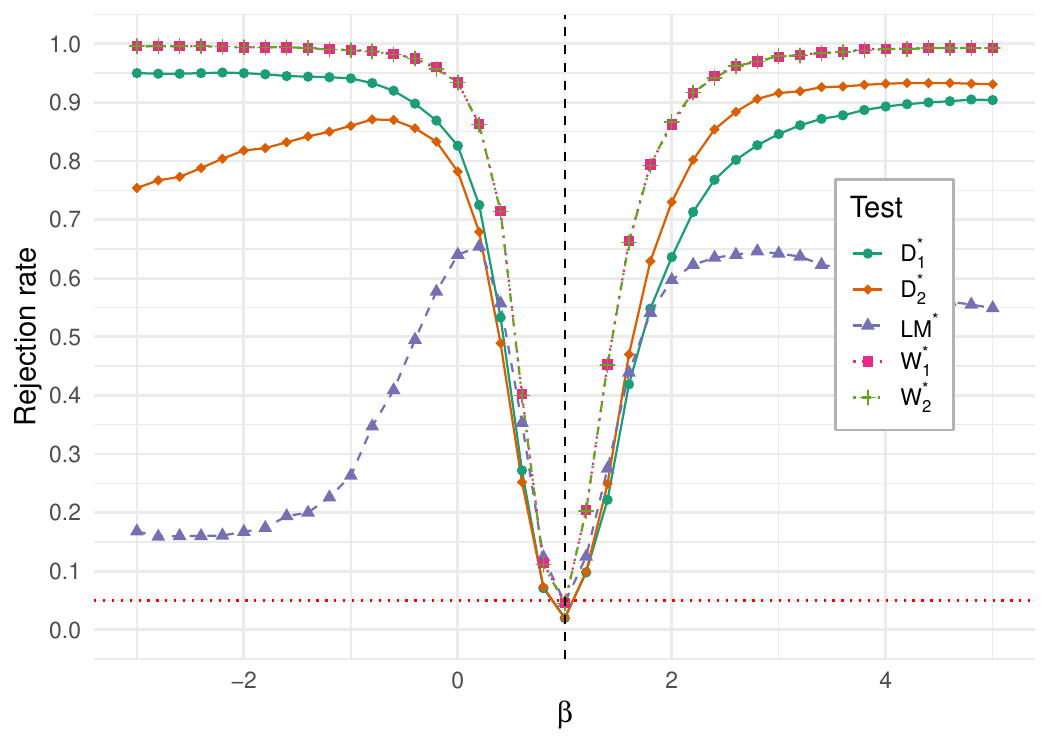}
        \caption{HLIM}
    \end{subfigure}
    
    \begin{subfigure}[b]{0.47\textwidth}
        \includegraphics[width=\textwidth]{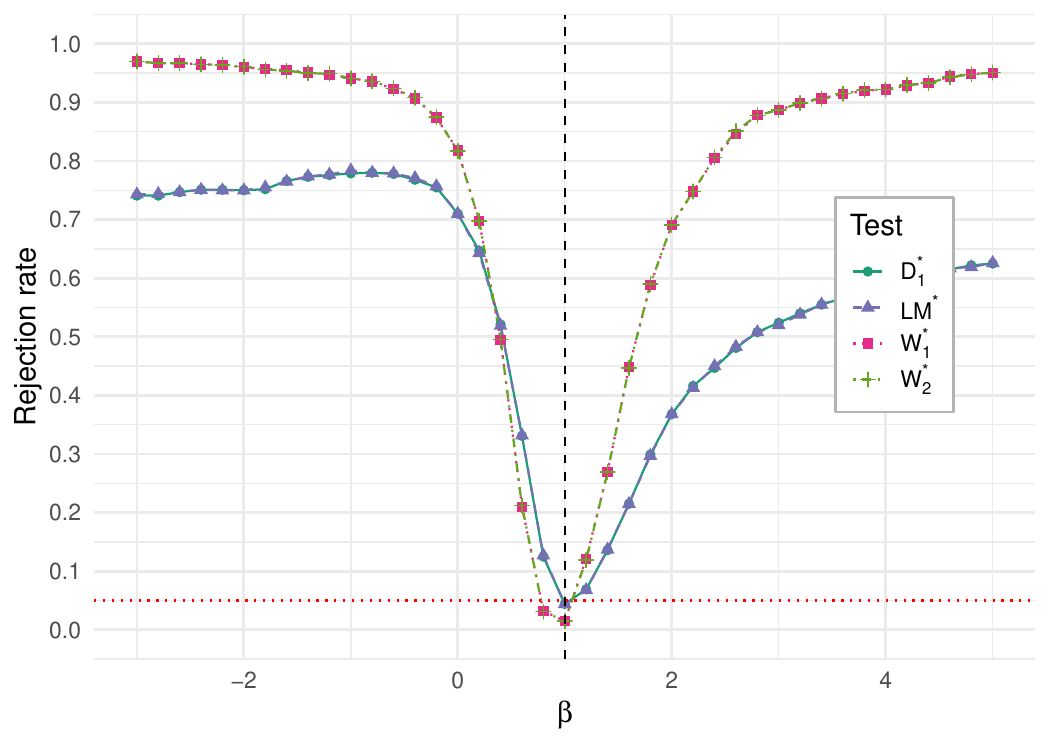}
        \caption{JIVE1}
    \end{subfigure}
    \hfill
    \begin{subfigure}[b]{0.47\textwidth}
        \includegraphics[width=\textwidth]{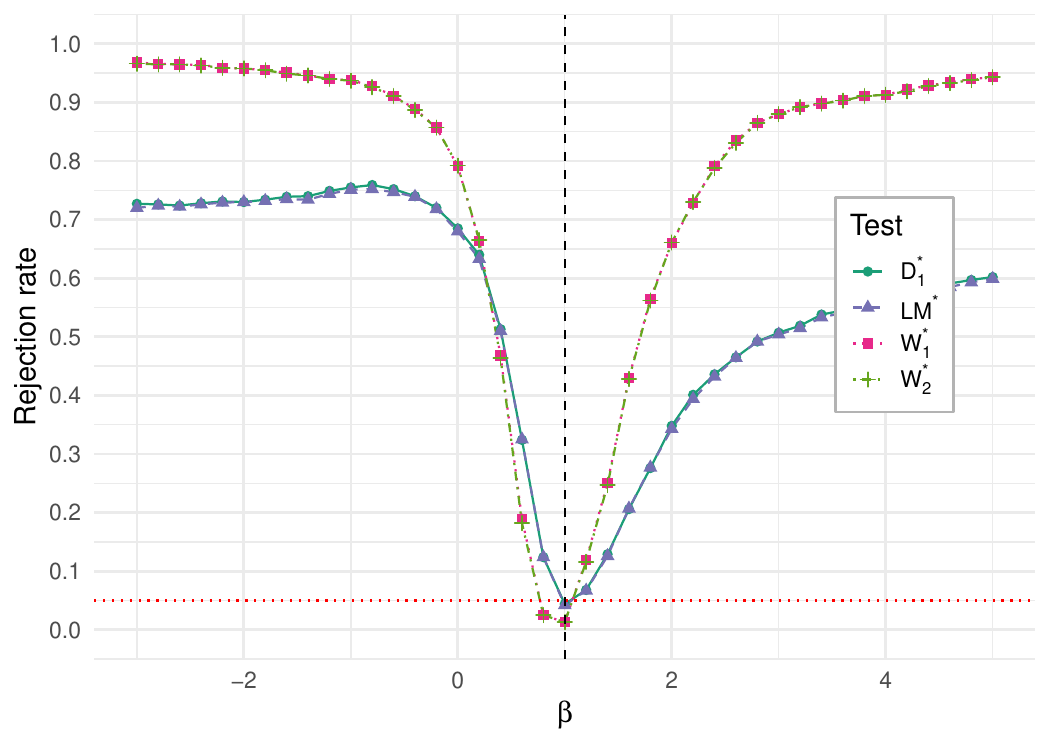}
        \caption{JIVE2}
    \end{subfigure}
    
    \caption{Power curves for DGP1 ($n=200$, $\alpha = 0.1$, $r = 32$). Trinity of test statistics distributed as a $\chi^2$. Results based on 1000 repetitions. The horizontal dotted red line denotes the $5\%$ nominal rejection level, while the vertical dotted black line corresponds to $\beta=1$. Panel (a) plots statistics based on the SJIVE objective function; panel (b) plots statistics based on the HLIM objective function; panel (c) plots statistics based on the JIVE1 objective function; panel (d) plots statistics based on the JIVE2 objective function.}
    \label{fig:Trinity_chi2_DGP1_200_0.1_32}
\end{figure}

% cf variance

\begin{figure}[ht]
    \centering
    % Replace 'chibar2' with 'chi2' for the second set of figures as needed
    \begin{subfigure}[b]{0.47\textwidth}
        \includegraphics[width=\textwidth]{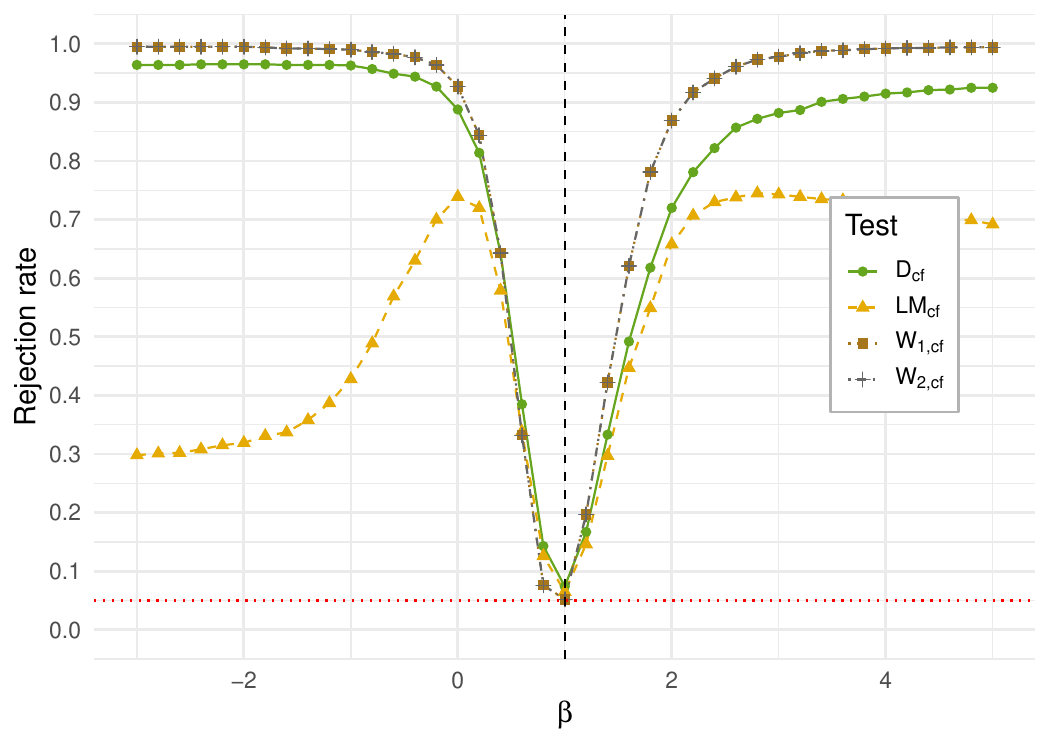}
        \caption{SJIVE}
    \end{subfigure}
    \hfill
    \begin{subfigure}[b]{0.47\textwidth}
        \includegraphics[width=\textwidth]{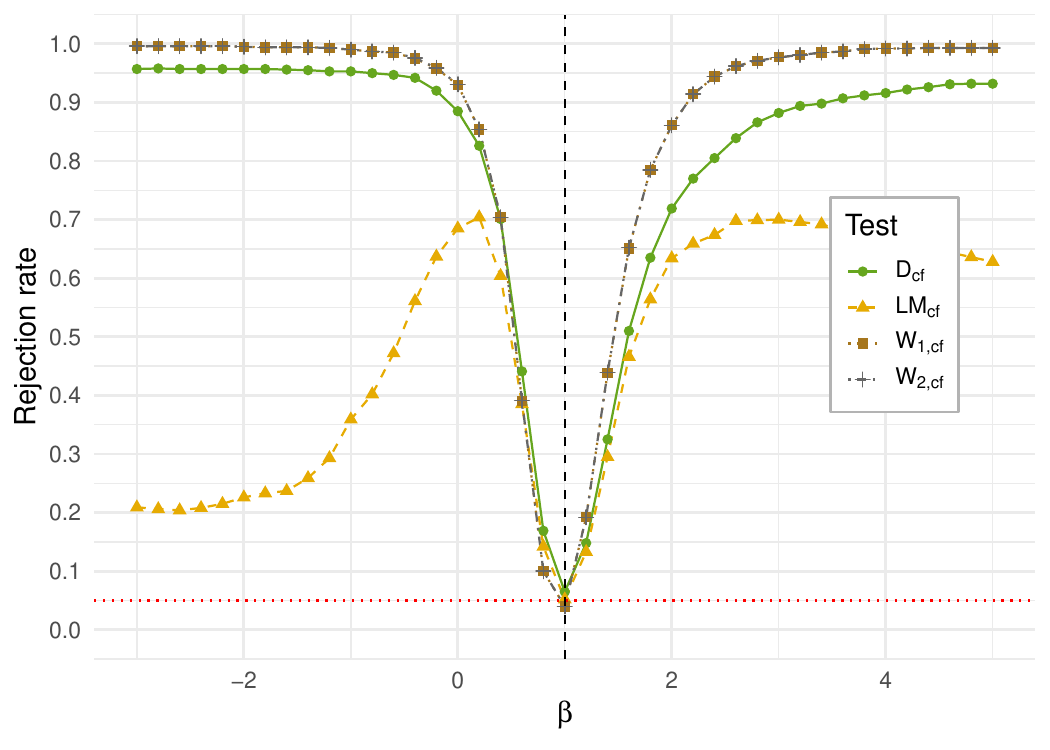}
        \caption{HLIM}
    \end{subfigure}
    
    \begin{subfigure}[b]{0.47\textwidth}
        \includegraphics[width=\textwidth]{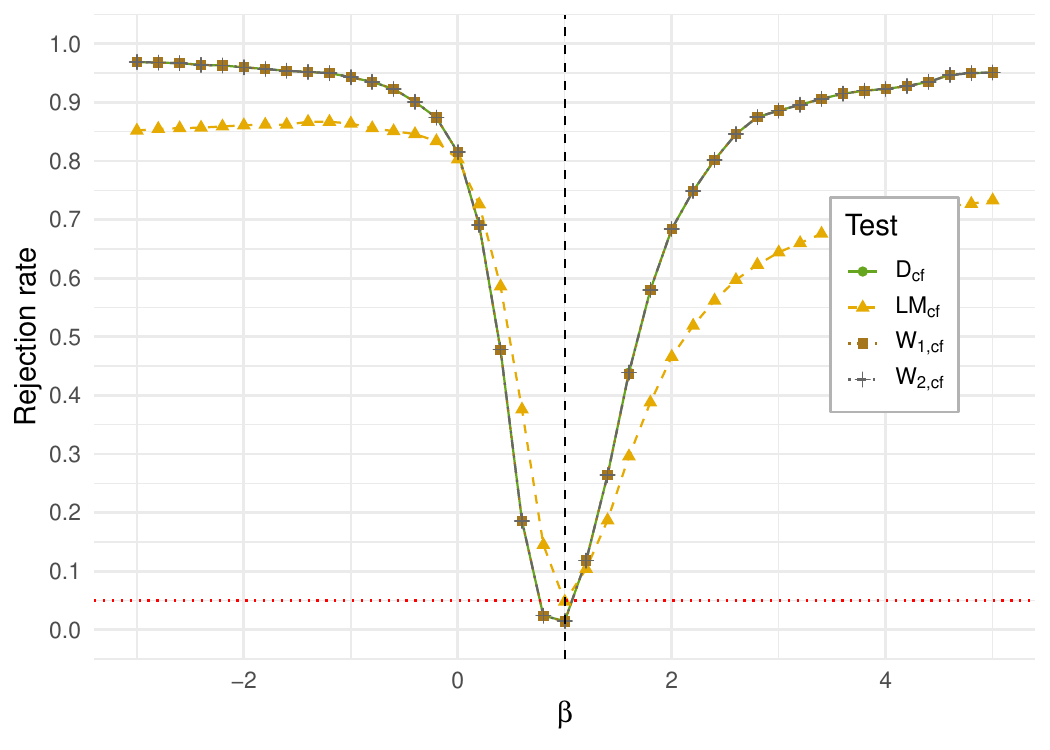}
        \caption{JIVE1}
    \end{subfigure}
    \hfill
    \begin{subfigure}[b]{0.47\textwidth}
        \includegraphics[width=\textwidth]{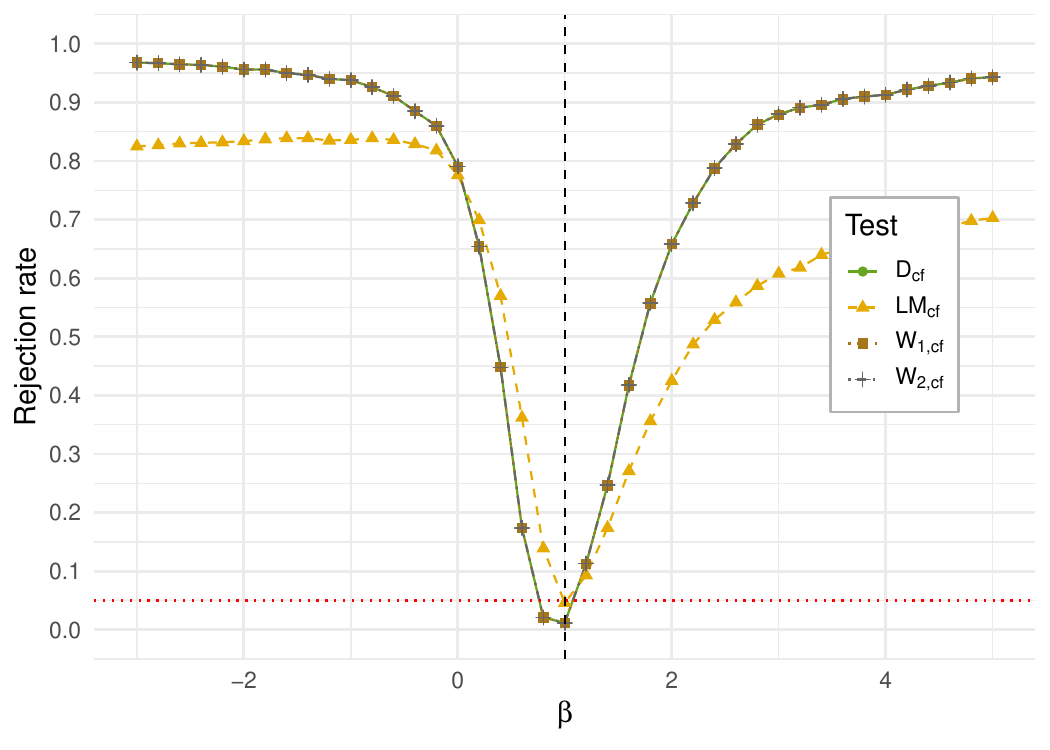}
        \caption{JIVE2}
    \end{subfigure}
    
    \caption{Power curves for DGP1 ($n=200$, $\alpha = 0.1$, $r = 32$). Trinity of test statistics distributed as a $\bar\chi^2$ with cross-fit variance. Results based on 1000 repetitions. The horizontal dotted red line denotes the $5\%$ nominal rejection level, while the vertical dotted black line corresponds to $\beta=1$. Panel (a) plots statistics based on the SJIVE objective function; panel (b) plots statistics based on the HLIM objective function; panel (c) plots statistics based on the JIVE1 objective function; panel (d) plots statistics based on the JIVE2 objective function.}
    \label{fig:Trinity_chibar2_cf_DGP1_200_0.1_32}
\end{figure}

\begin{figure}[ht]
    \centering
    % Replace 'chibar2' with 'chi2' for the second set of figures as needed
    \begin{subfigure}[b]{0.47\textwidth}
        \includegraphics[width=\textwidth]{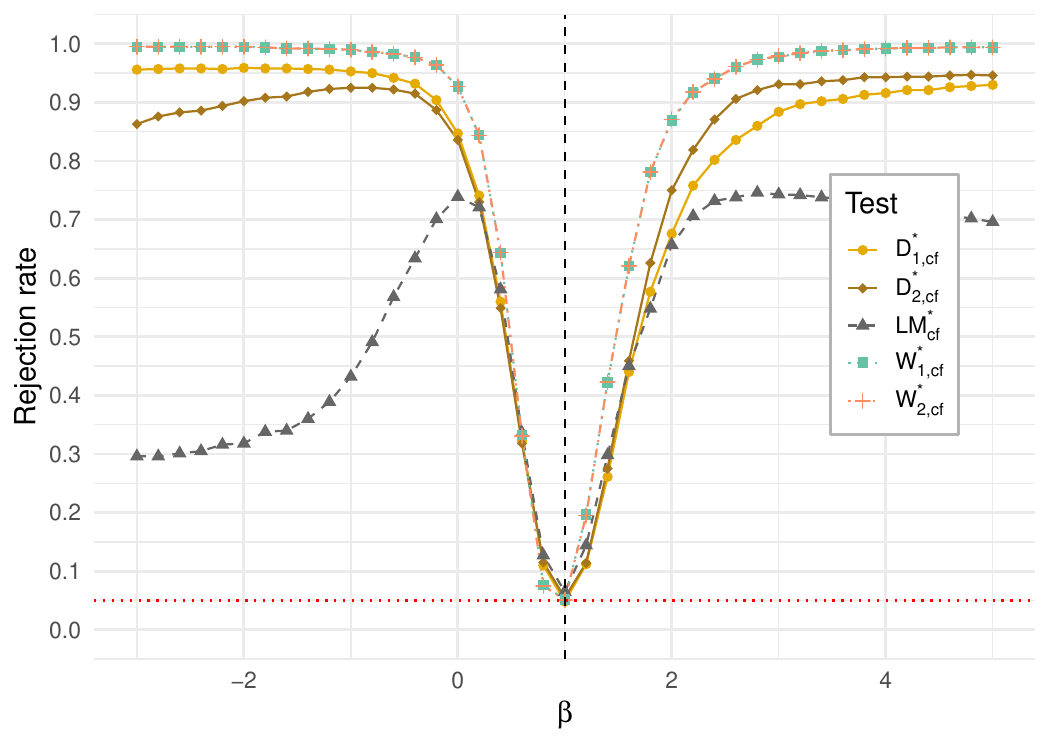}
        \caption{SJIVE}
    \end{subfigure}
    \hfill
    \begin{subfigure}[b]{0.47\textwidth}
        \includegraphics[width=\textwidth]{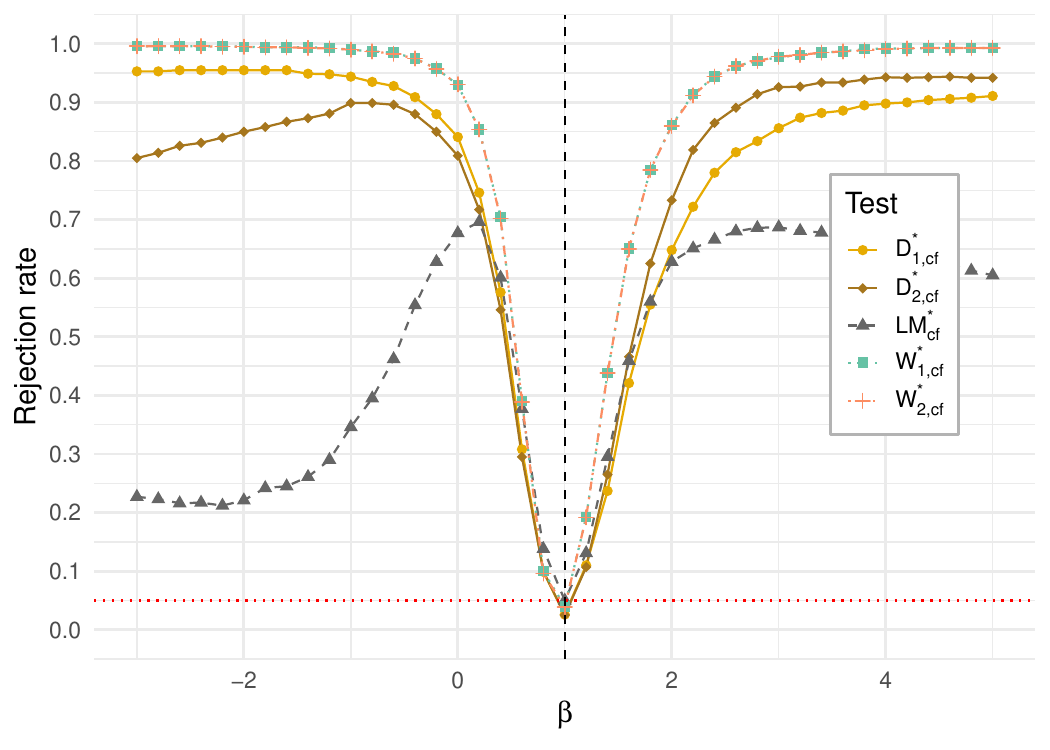}
        \caption{HLIM}
    \end{subfigure}
    
    \begin{subfigure}[b]{0.47\textwidth}
        \includegraphics[width=\textwidth]{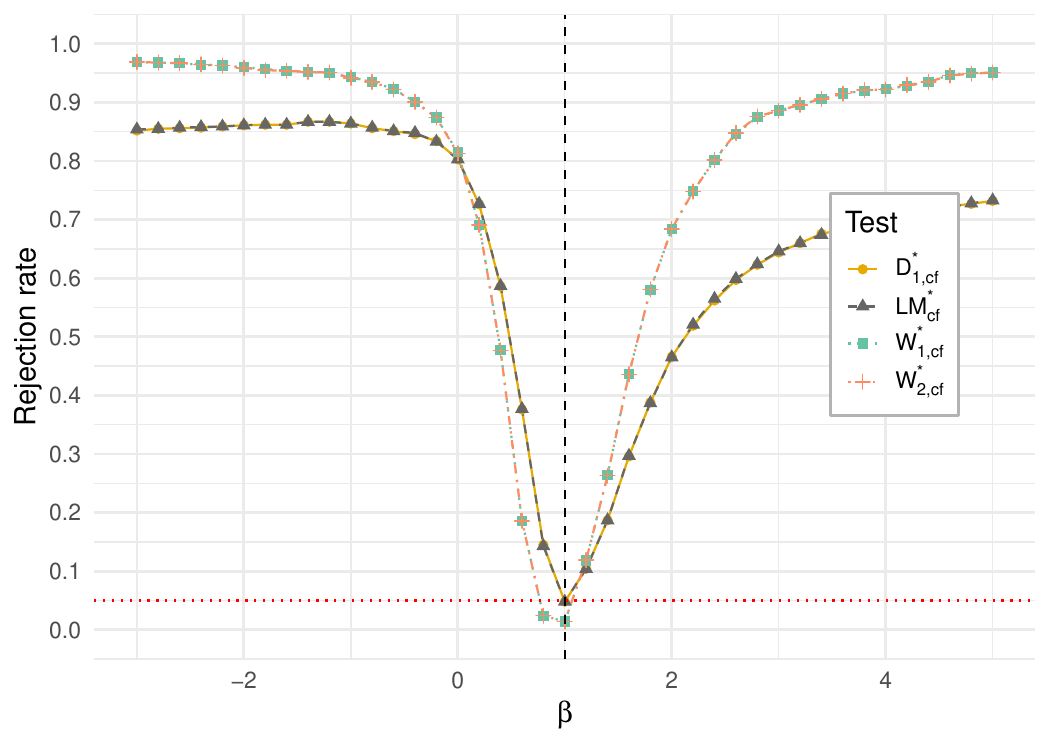}
        \caption{JIVE1}
    \end{subfigure}
    \hfill
    \begin{subfigure}[b]{0.47\textwidth}
        \includegraphics[width=\textwidth]{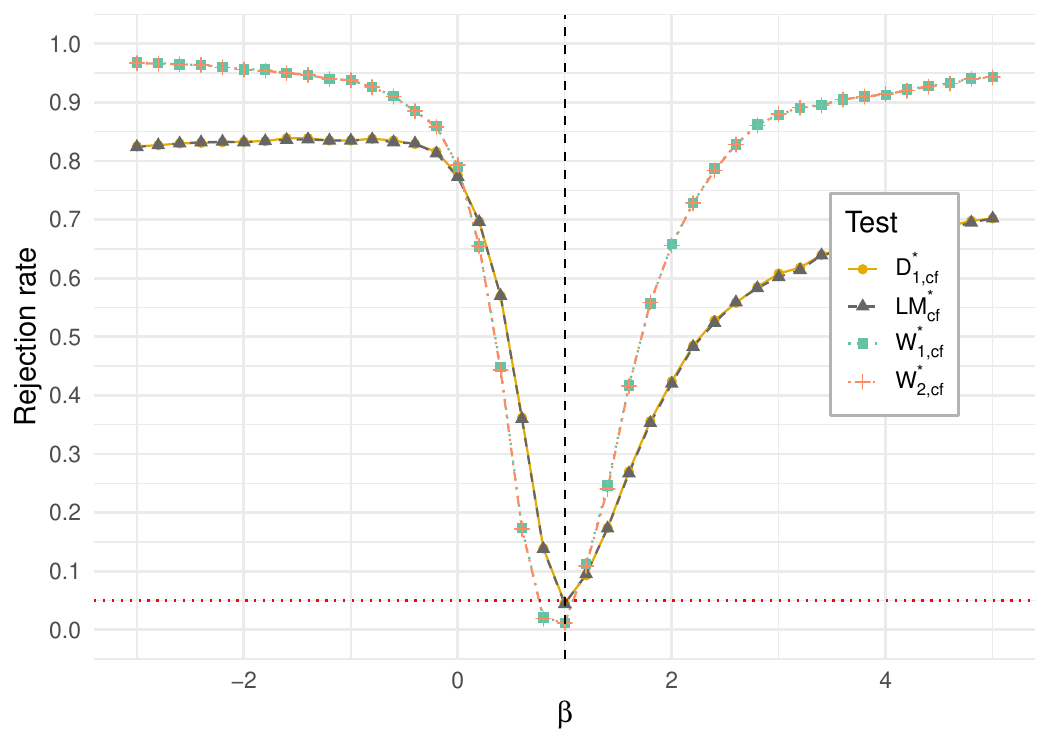}
        \caption{JIVE2}
    \end{subfigure}
    
    \caption{Power curves for DGP1 ($n=200$, $\alpha = 0.1$, $r = 32$). Trinity of test statistics distributed as a $\chi^2$ with cross-fit variance. Results based on 1000 repetitions. The horizontal dotted red line denotes the $5\%$ nominal rejection level, while the vertical dotted black line corresponds to $\beta=1$. Panel (a) plots statistics based on the SJIVE objective function; panel (b) plots statistics based on the HLIM objective function; panel (c) plots statistics based on the JIVE1 objective function; panel (d) plots statistics based on the JIVE2 objective function.}
    \label{fig:Trinity_chi2_cf_DGP1_200_0.1_32}
\end{figure}

%%%%%%%%%%%%%%%%%%%%%%%%%
% Trinity_200_0.1_64 

\begin{figure}[ht]
    \centering
    % Replace 'chibar2' with 'chi2' for the second set of figures as needed
    \begin{subfigure}[b]{0.47\textwidth}
        \includegraphics[width=\textwidth]{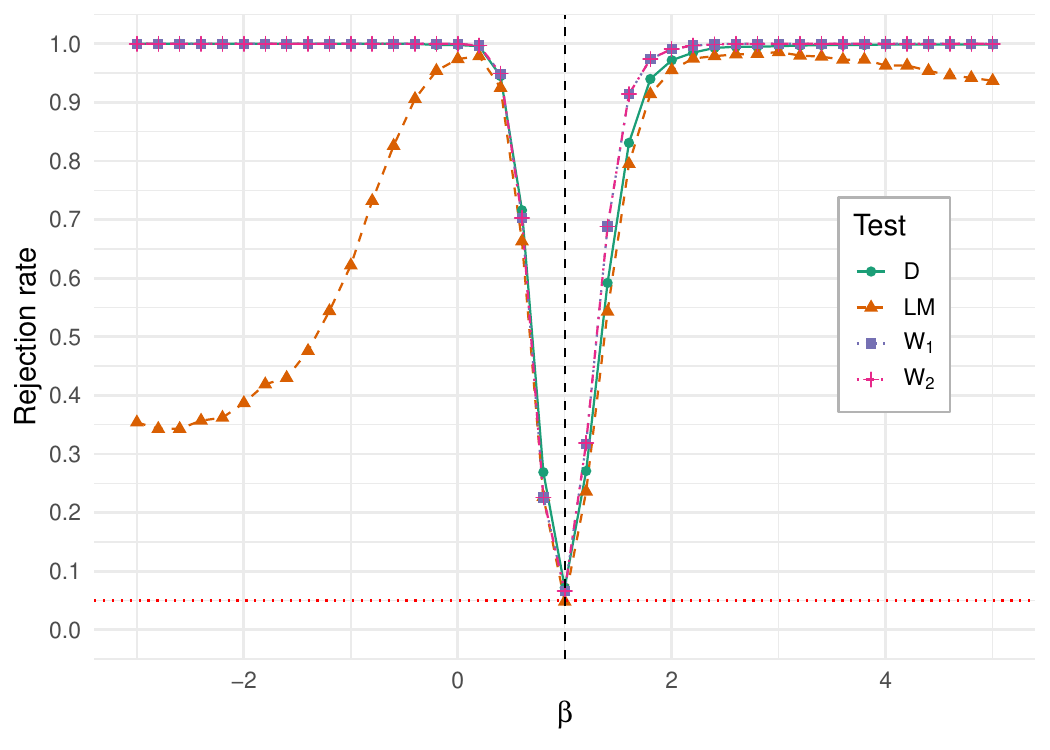}
        \caption{SJIVE}
    \end{subfigure}
    \hfill
    \begin{subfigure}[b]{0.47\textwidth}
        \includegraphics[width=\textwidth]{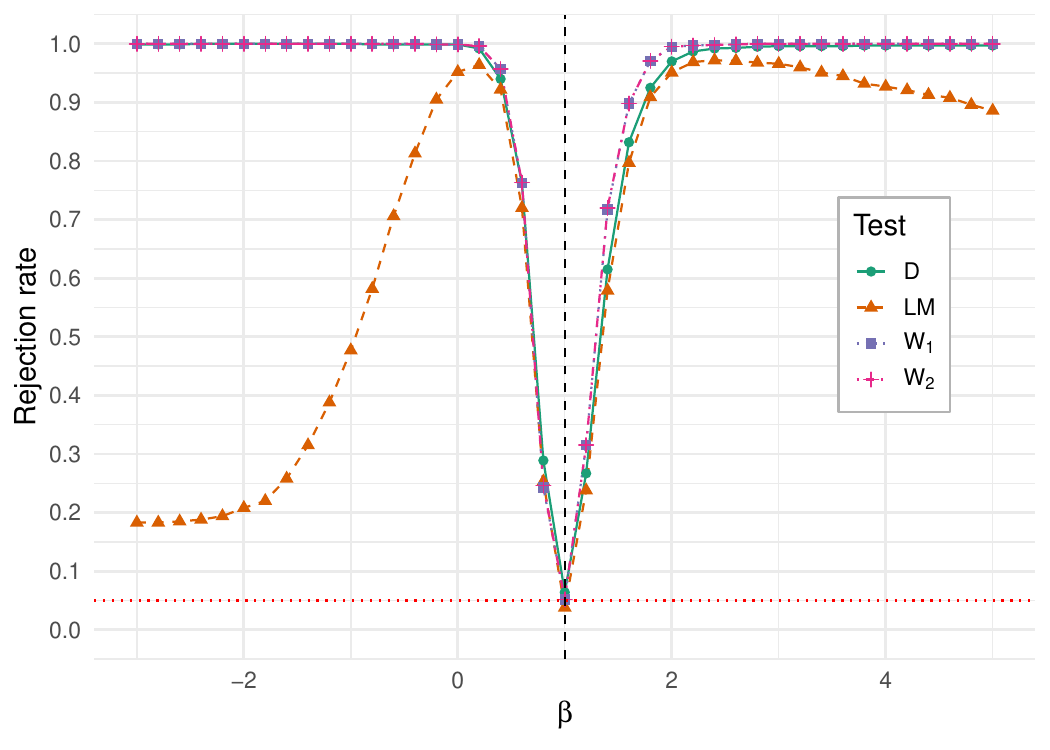}
        \caption{HLIM}
    \end{subfigure}
    
    \begin{subfigure}[b]{0.47\textwidth}
        \includegraphics[width=\textwidth]{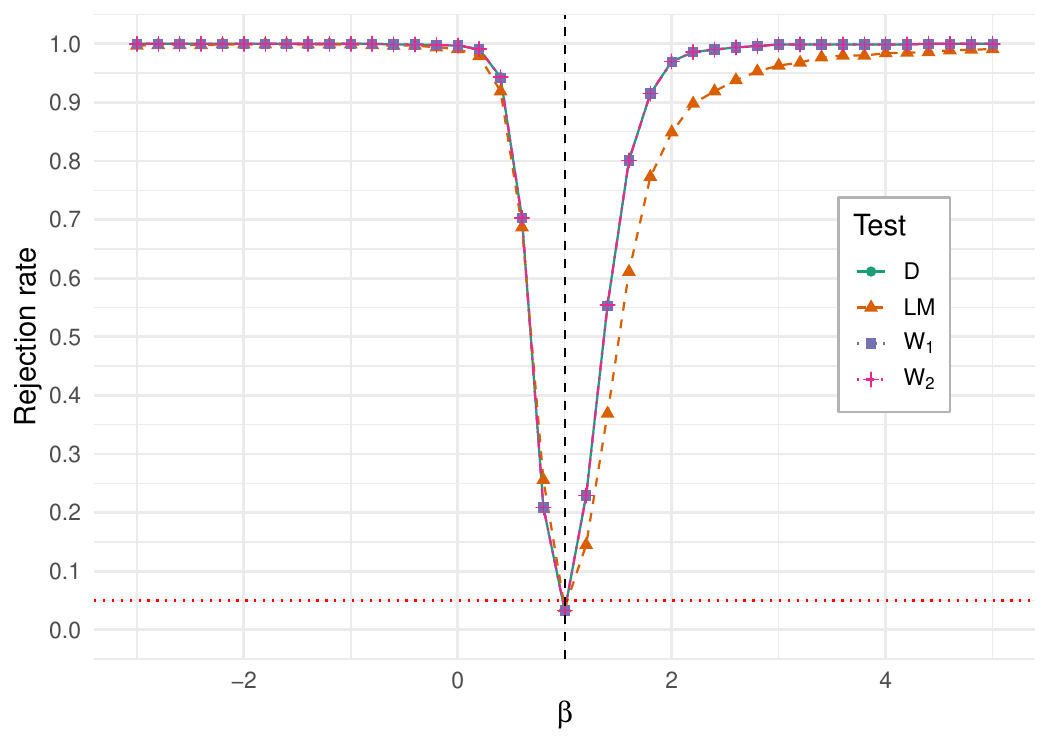}
        \caption{JIVE1}
    \end{subfigure}
    \hfill
    \begin{subfigure}[b]{0.47\textwidth}
        \includegraphics[width=\textwidth]{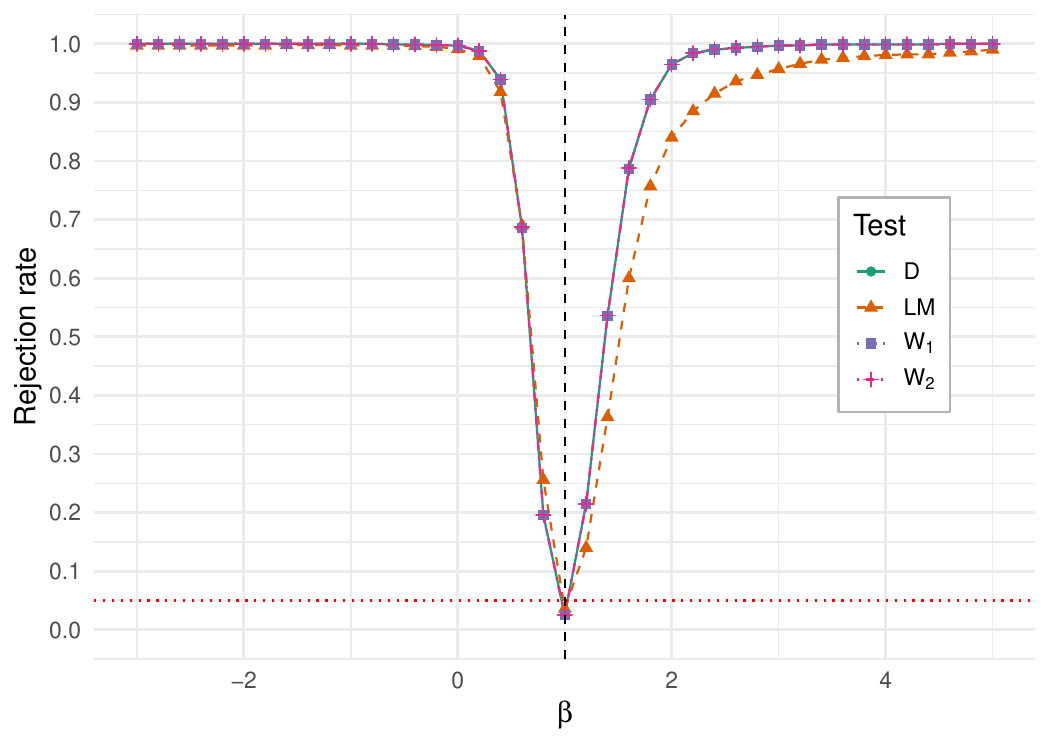}
        \caption{JIVE2}
    \end{subfigure}
    
    \caption{Power curves for DGP1 ($n=200$, $\alpha = 0.1$, $r = 64$). Trinity of test statistics distributed as a $\bar\chi^2$. Results based on 1000 repetitions. The horizontal dotted red line denotes the $5\%$ nominal rejection level, while the vertical dotted black line corresponds to $\beta=1$. Panel (a) plots statistics based on the SJIVE objective function; panel (b) plots statistics based on the HLIM objective function; panel (c) plots statistics based on the JIVE1 objective function; panel (d) plots statistics based on the JIVE2 objective function.}
    \label{fig:Trinity_chibar2_DGP1_200_0.1_64}
\end{figure}

\begin{figure}[ht]
    \centering
    % Replace 'chibar2' with 'chi2' for the second set of figures as needed
    \begin{subfigure}[b]{0.47\textwidth}
        \includegraphics[width=\textwidth]{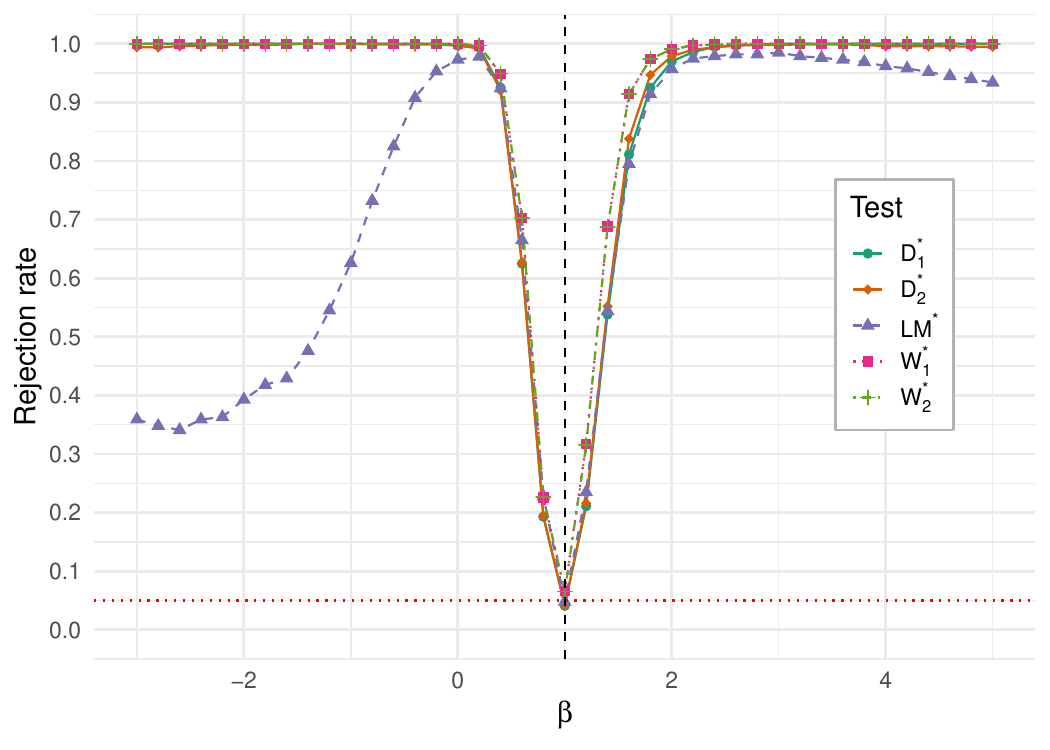}
        \caption{SJIVE}
    \end{subfigure}
    \hfill
    \begin{subfigure}[b]{0.47\textwidth}
        \includegraphics[width=\textwidth]{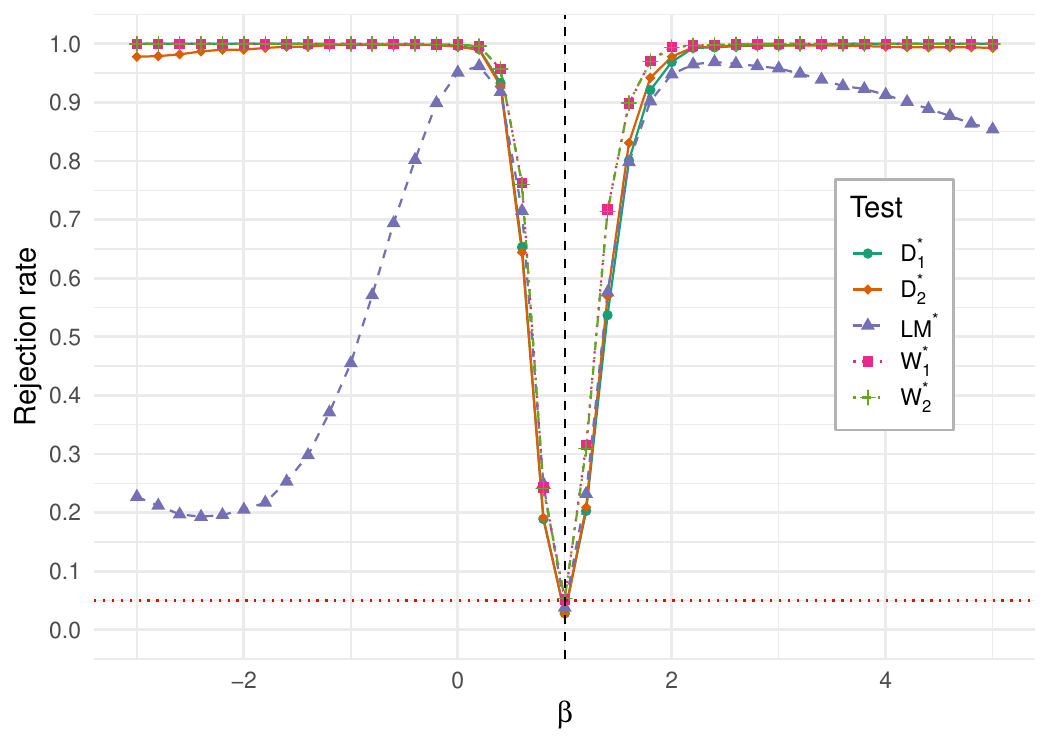}
        \caption{HLIM}
    \end{subfigure}
    
    \begin{subfigure}[b]{0.47\textwidth}
        \includegraphics[width=\textwidth]{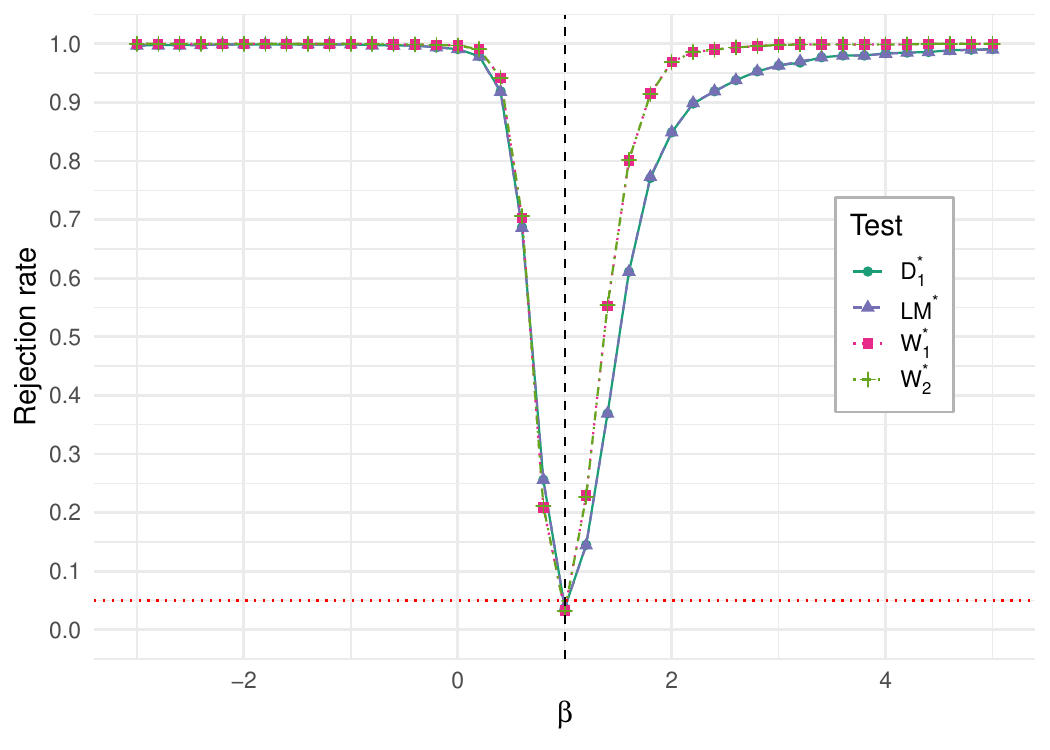}
        \caption{JIVE1}
    \end{subfigure}
    \hfill
    \begin{subfigure}[b]{0.47\textwidth}
        \includegraphics[width=\textwidth]{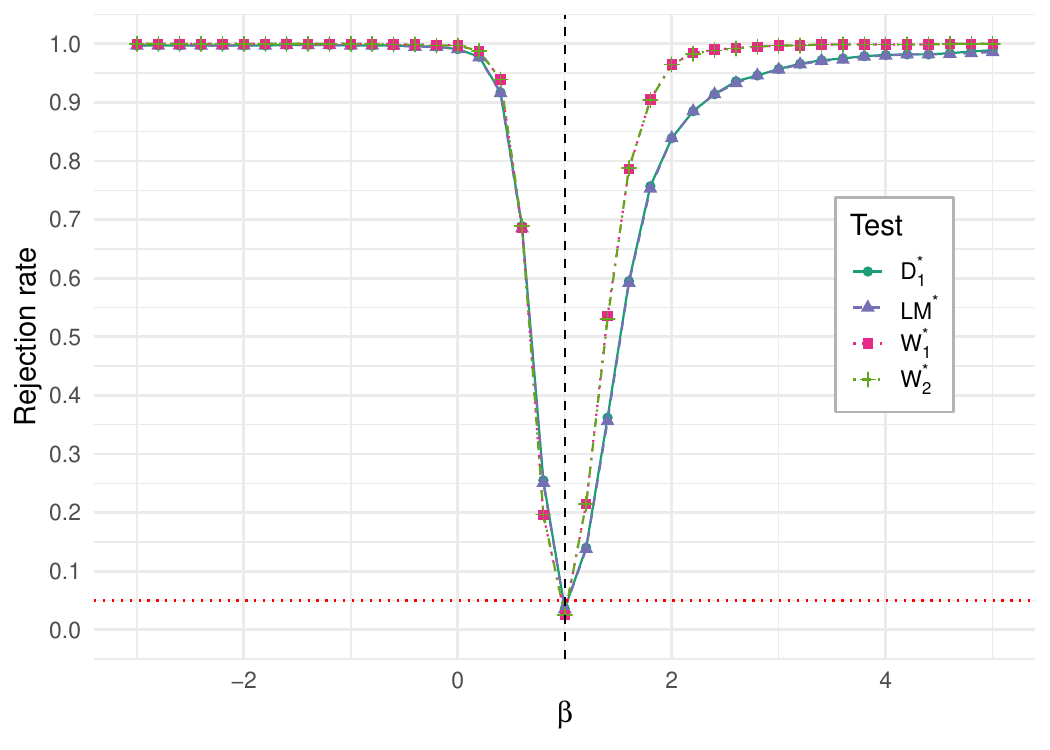}
        \caption{JIVE2}
    \end{subfigure}
    
    \caption{Power curves for DGP1 ($n=200$, $\alpha = 0.1$, $r = 64$). Trinity of test statistics distributed as a $\chi^2$. Results based on 1000 repetitions. The horizontal dotted red line denotes the $5\%$ nominal rejection level, while the vertical dotted black line corresponds to $\beta=1$. Panel (a) plots statistics based on the SJIVE objective function; panel (b) plots statistics based on the HLIM objective function; panel (c) plots statistics based on the JIVE1 objective function; panel (d) plots statistics based on the JIVE2 objective function.}
    \label{fig:Trinity_chi2_DGP1_200_0.1_64}
\end{figure}

% cf variance

\begin{figure}[ht]
    \centering
    % Replace 'chibar2' with 'chi2' for the second set of figures as needed
    \begin{subfigure}[b]{0.47\textwidth}
        \includegraphics[width=\textwidth]{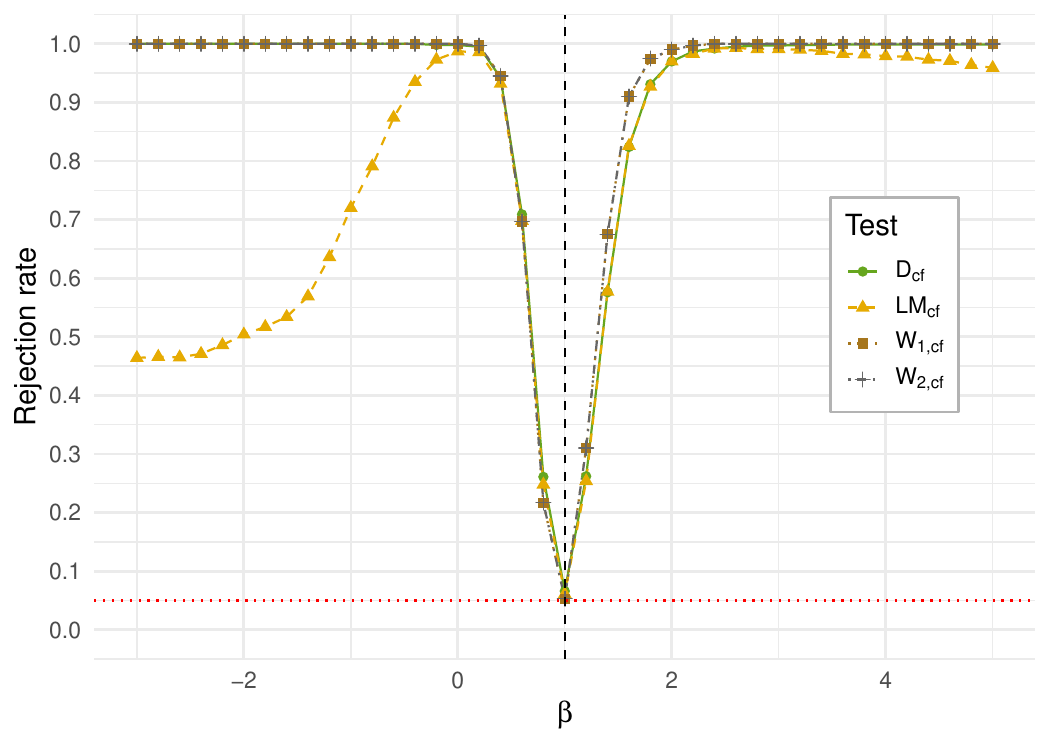}
        \caption{SJIVE}
    \end{subfigure}
    \hfill
    \begin{subfigure}[b]{0.47\textwidth}
        \includegraphics[width=\textwidth]{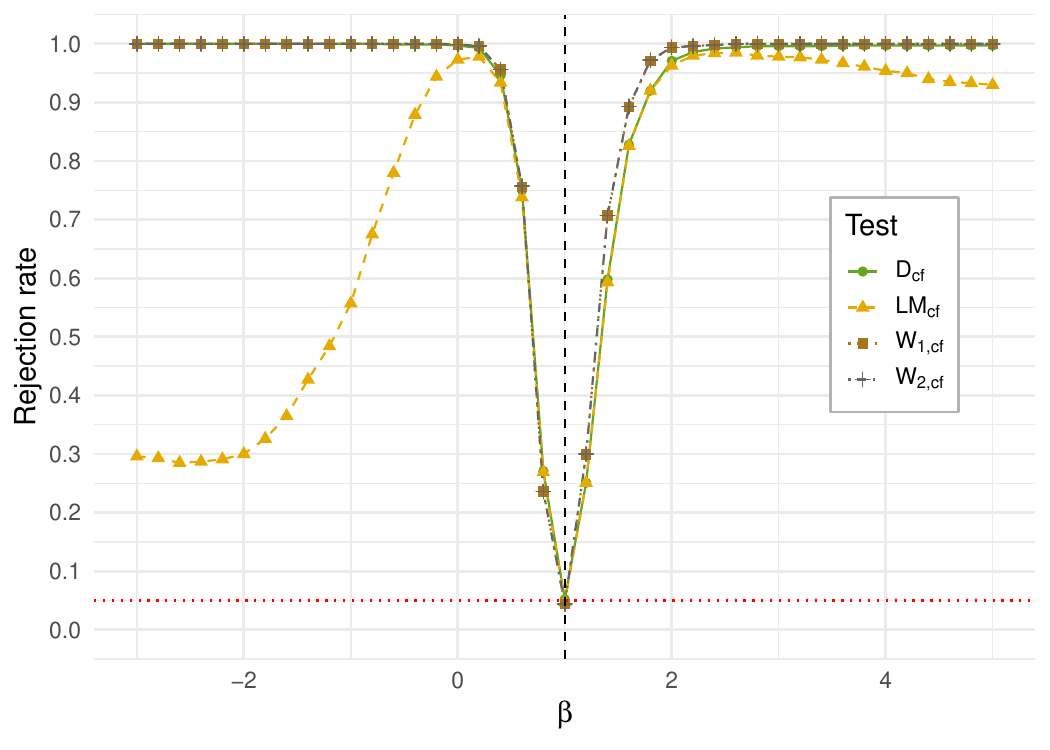}
        \caption{HLIM}
    \end{subfigure}
    
    \begin{subfigure}[b]{0.47\textwidth}
        \includegraphics[width=\textwidth]{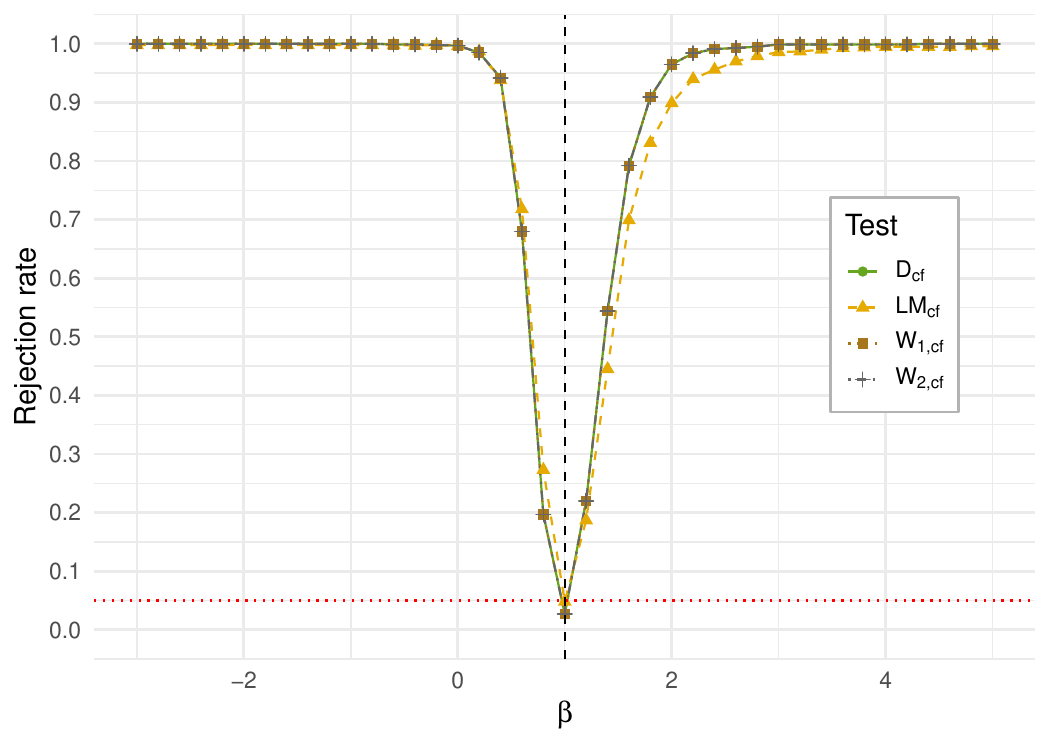}
        \caption{JIVE1}
    \end{subfigure}
    \hfill
    \begin{subfigure}[b]{0.47\textwidth}
        \includegraphics[width=\textwidth]{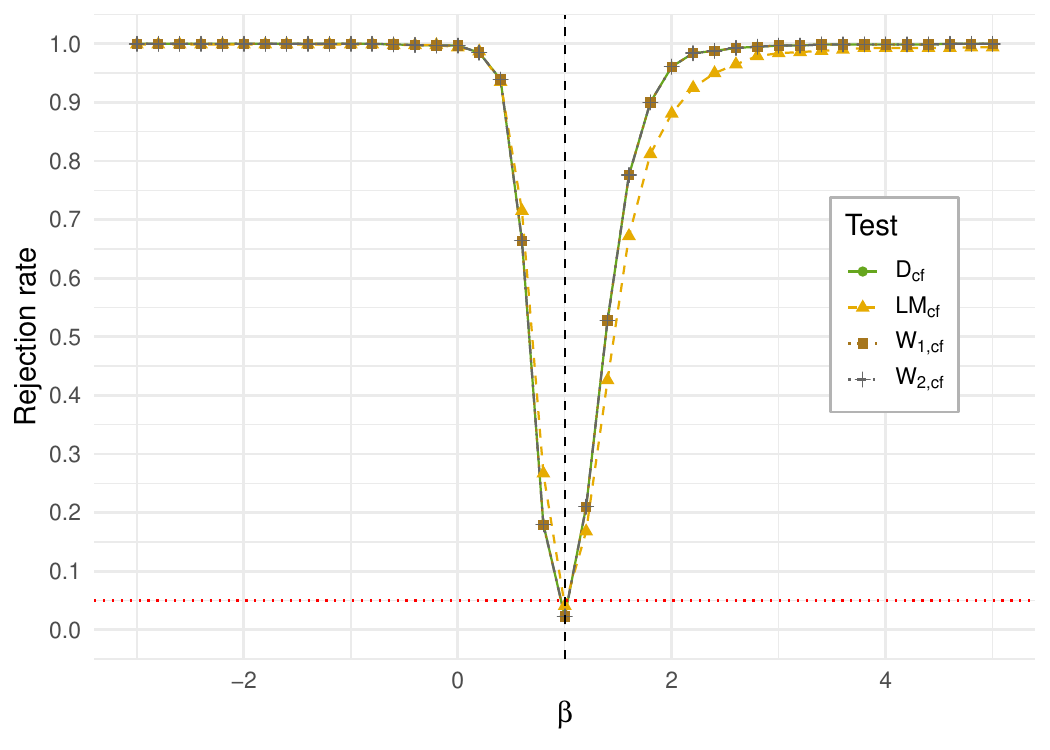}
        \caption{JIVE2}
    \end{subfigure}
    
    \caption{Power curves for DGP1 ($n=200$, $\alpha = 0.1$, $r = 64$). Trinity of test statistics distributed as a $\bar\chi^2$ with cross-fit variance. Results based on 1000 repetitions. The horizontal dotted red line denotes the $5\%$ nominal rejection level, while the vertical dotted black line corresponds to $\beta=1$. Panel (a) plots statistics based on the SJIVE objective function; panel (b) plots statistics based on the HLIM objective function; panel (c) plots statistics based on the JIVE1 objective function; panel (d) plots statistics based on the JIVE2 objective function.}
    \label{fig:Trinity_chibar2_cf_DGP1_200_0.1_64}
\end{figure}

\begin{figure}[ht]
    \centering
    % Replace 'chibar2' with 'chi2' for the second set of figures as needed
    \begin{subfigure}[b]{0.47\textwidth}
        \includegraphics[width=\textwidth]{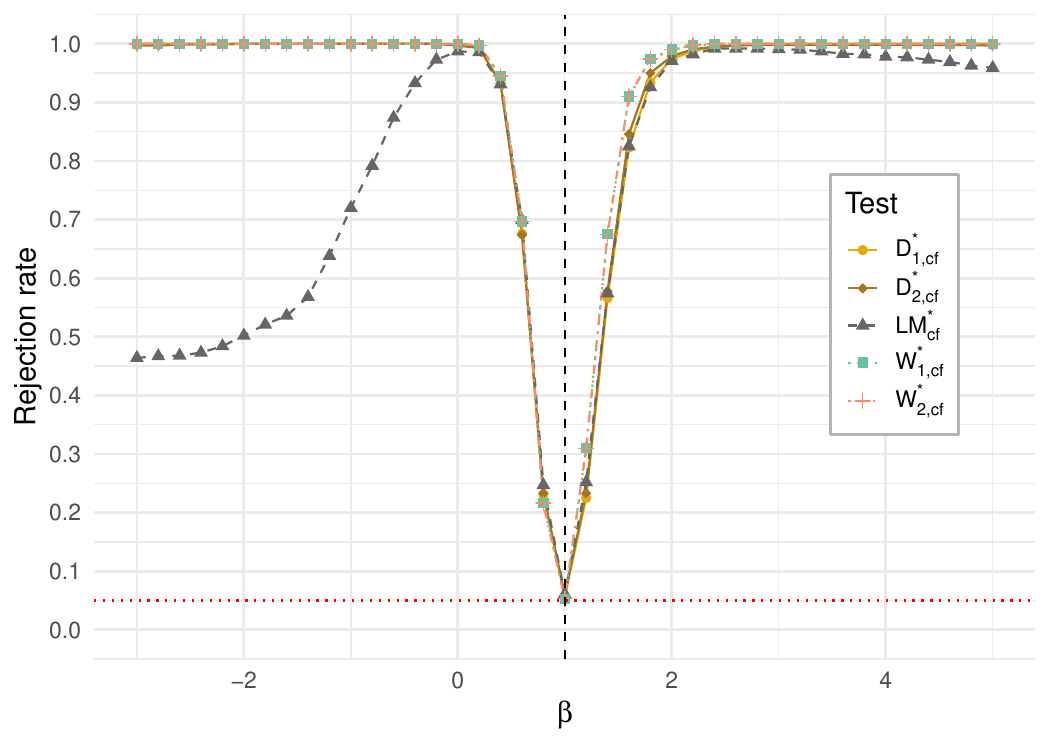}
        \caption{SJIVE}
    \end{subfigure}
    \hfill
    \begin{subfigure}[b]{0.47\textwidth}
        \includegraphics[width=\textwidth]{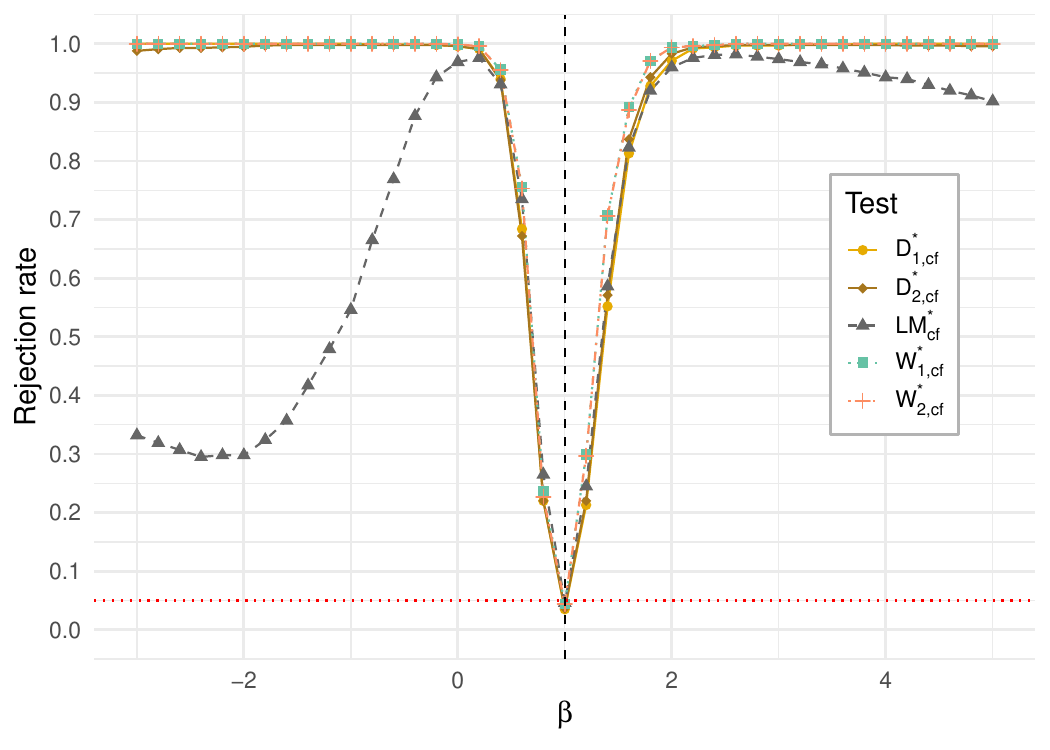}
        \caption{HLIM}
    \end{subfigure}
    
    \begin{subfigure}[b]{0.47\textwidth}
        \includegraphics[width=\textwidth]{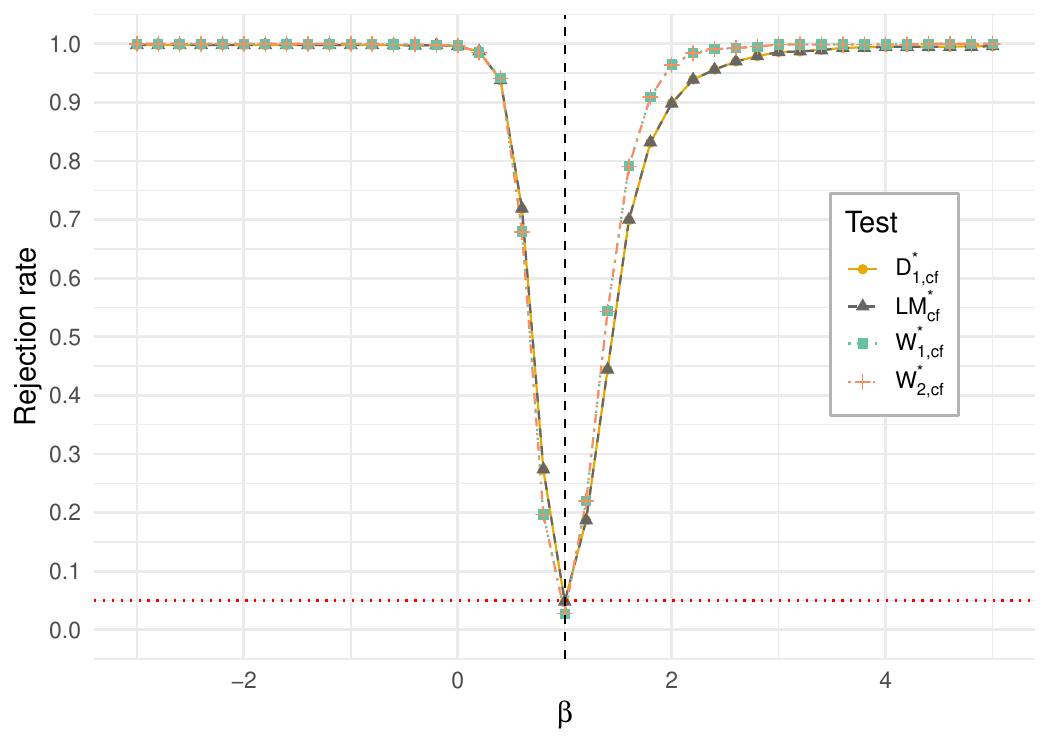}
        \caption{JIVE1}
    \end{subfigure}
    \hfill
    \begin{subfigure}[b]{0.47\textwidth}
        \includegraphics[width=\textwidth]{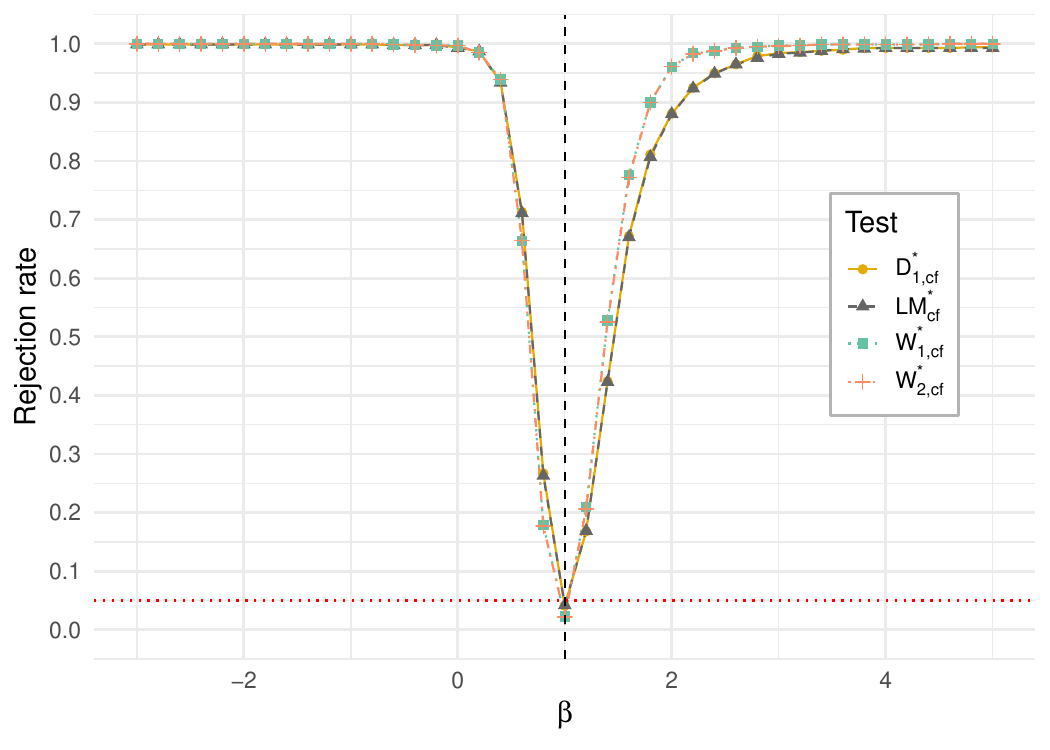}
        \caption{JIVE2}
    \end{subfigure}
    
    \caption{Power curves for DGP1 ($n=200$, $\alpha = 0.1$, $r = 64$). Trinity of test statistics distributed as a $\chi^2$ with cross-fit variance. Results based on 1000 repetitions. The horizontal dotted red line denotes the $5\%$ nominal rejection level, while the vertical dotted black line corresponds to $\beta=1$. Panel (a) plots statistics based on the SJIVE objective function; panel (b) plots statistics based on the HLIM objective function; panel (c) plots statistics based on the JIVE1 objective function; panel (d) plots statistics based on the JIVE2 objective function.}
    \label{fig:Trinity_chi2_cf_DGP1_200_0.1_64}
\end{figure}

%%%%%%%%%%%%%%%%%%%%%%%%%%%%%%%%%%%%%%%%%%%%%%%%%%%%%%%%%%%%%%%%%%%%%%%%%%%%%%%%%%
%%%%%%%%%%%%%%%%%%%%%%%%%%%%%%%%%%%%%%%%%%%%%%%%%%%%%%%%%%%%%%%%%%%%%%%%%%%%%%%%%%

% AR_200_0.05 

\begin{figure}[ht]
    \centering
    \begin{subfigure}[b]{0.47\textwidth}
        \includegraphics[width=\textwidth]{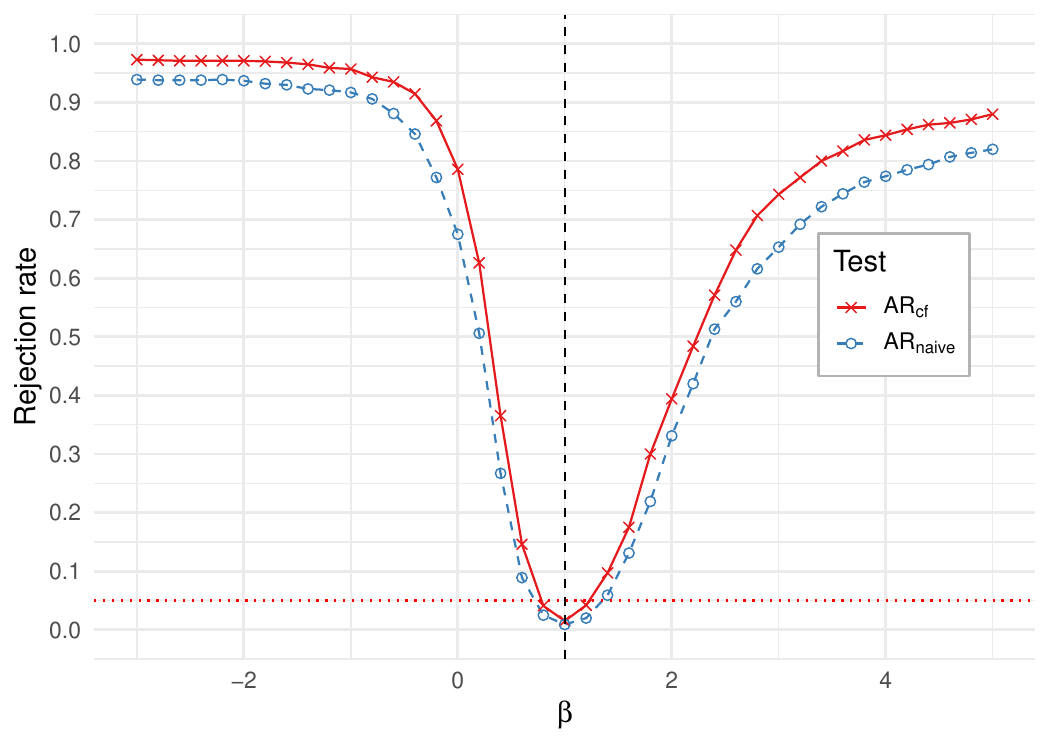}
        \caption{JIVE1, $r=32$}
    \end{subfigure}
    \hfill
    \begin{subfigure}[b]{0.47\textwidth}
        \includegraphics[width=\textwidth]{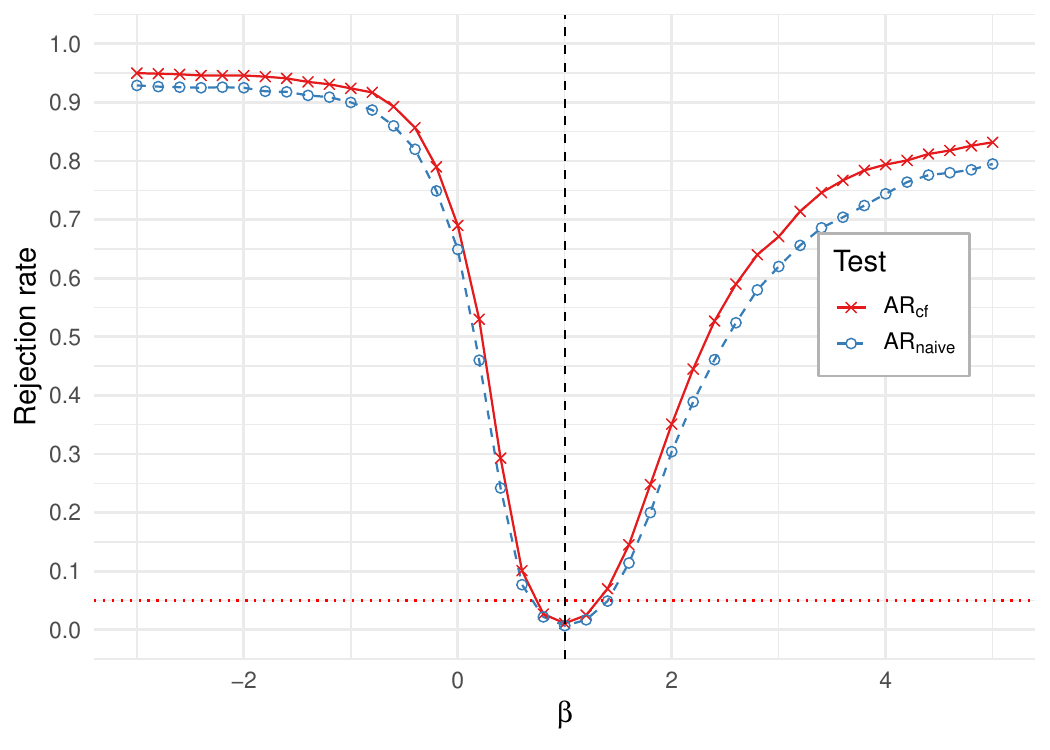}
        \caption{JIVE2, $r=32$}
    \end{subfigure}
    
    \begin{subfigure}[b]{0.47\textwidth}
        \includegraphics[width=\textwidth]{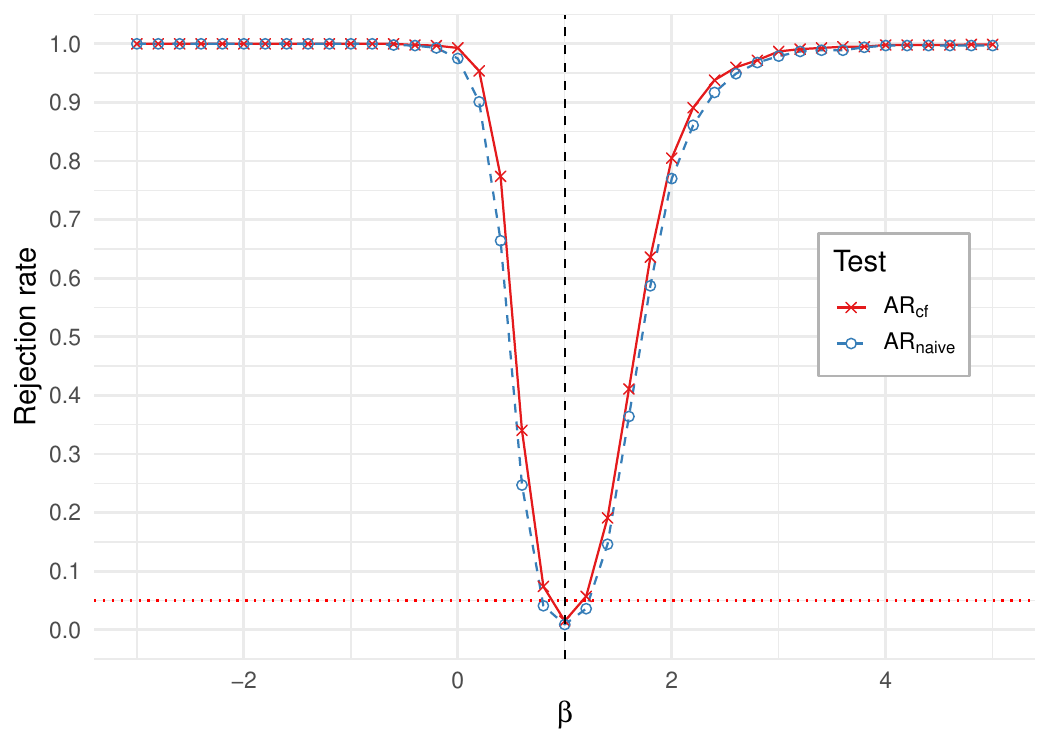}
        \caption{JIVE1, $r=64$}
    \end{subfigure}
    \hfill
    \begin{subfigure}[b]{0.47\textwidth}
        \includegraphics[width=\textwidth]{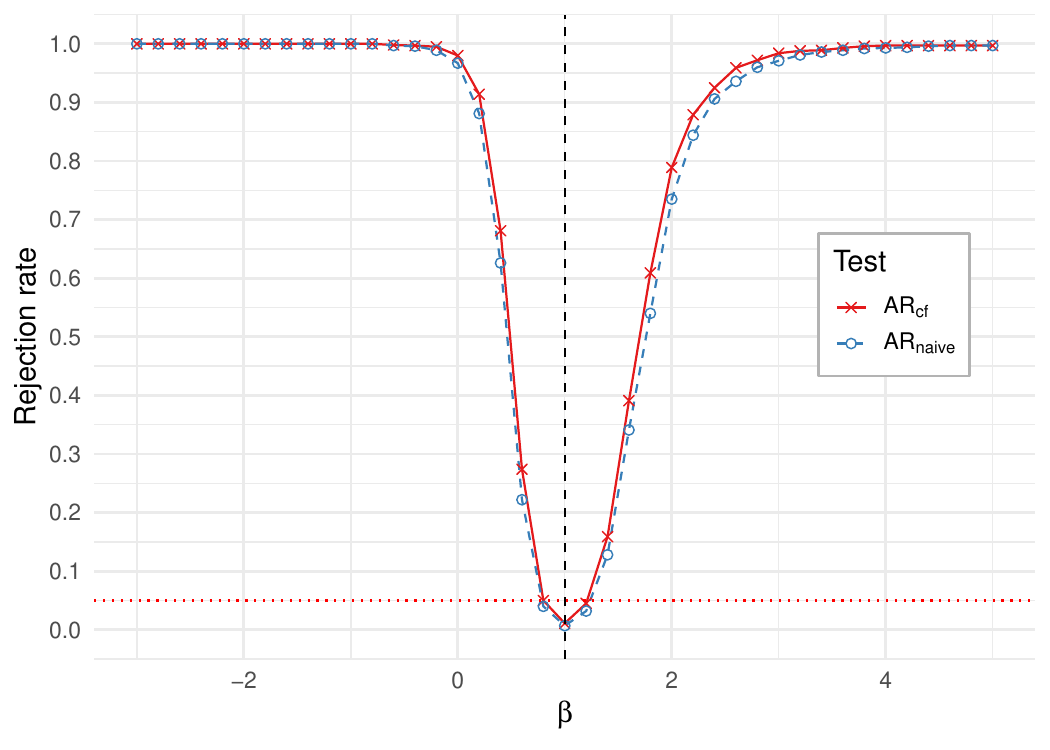}
        \caption{JIVE2, $r=64$}
    \end{subfigure}
    
    \caption{Power curves for DGP1 ($n=200$ and $\alpha = 0.05$). AR tests distributed as $\mathcal{N}(0,1)$. Results based on 1000 repetitions. The horizontal dotted red line denotes the $5\%$ nominal rejection level, while the vertical dotted black line corresponds to $\beta=1$. Panel (a) and panel (c) plot AR statistics based on the JIVE1 objective function with weak and strong instruments, respectively; panel (b)  and panel (d) plot AR statistics based on the JIVE2 objective function for weak and strong instruments, respectively.}
    \label{fig:AR_N_DGP1_200_0.05}
\end{figure}

%%%%%%%%%%%%%%%%%%%%%%%%%
% AR_200_0.1

\begin{figure}[ht]
    \centering
    \begin{subfigure}[b]{0.47\textwidth}
        \includegraphics[width=\textwidth]{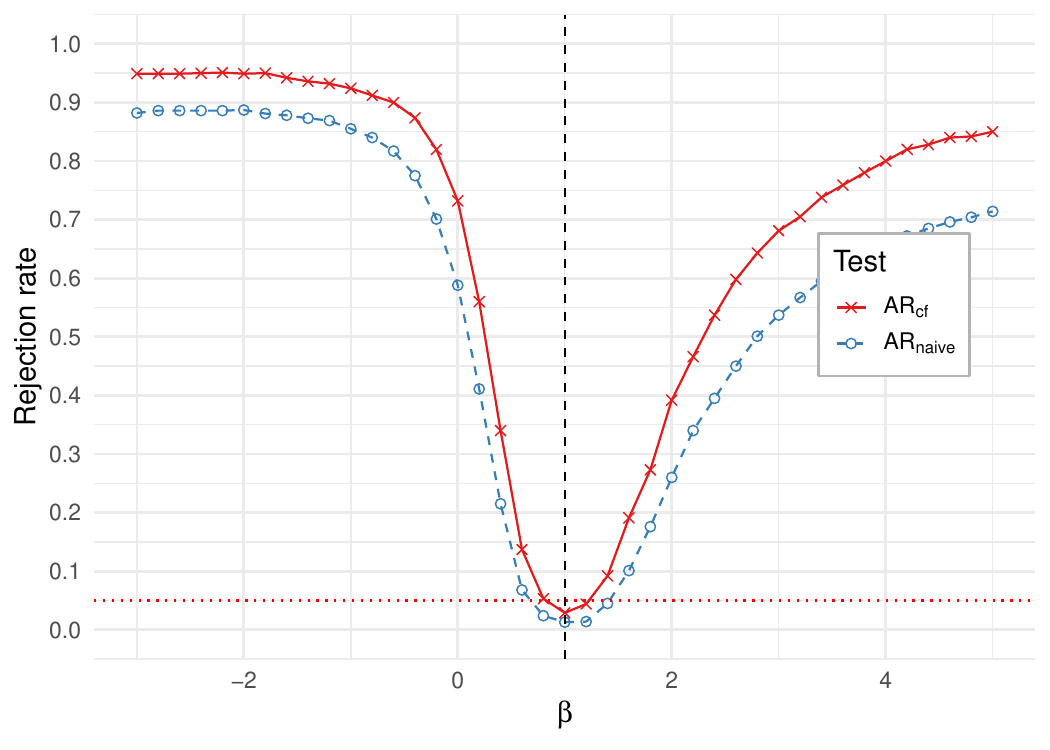}
        \caption{JIVE1, $r=32$}
    \end{subfigure}
    \hfill
    \begin{subfigure}[b]{0.47\textwidth}
        \includegraphics[width=\textwidth]{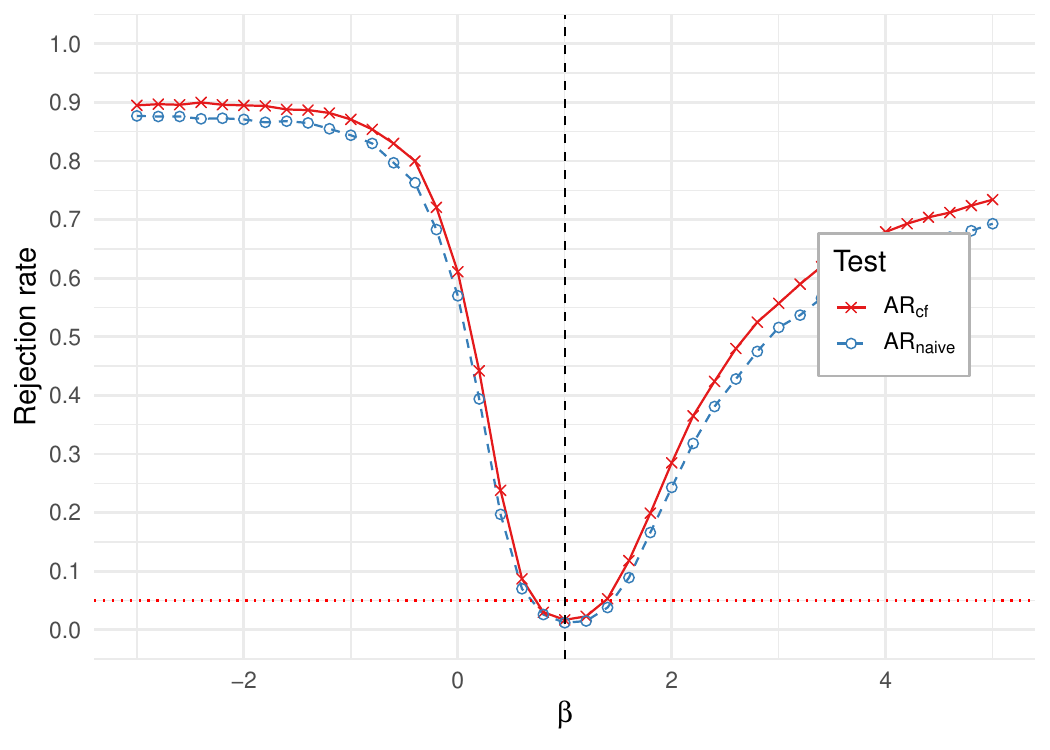}
        \caption{JIVE2, $r=32$}
    \end{subfigure}
    
    \begin{subfigure}[b]{0.47\textwidth}
        \includegraphics[width=\textwidth]{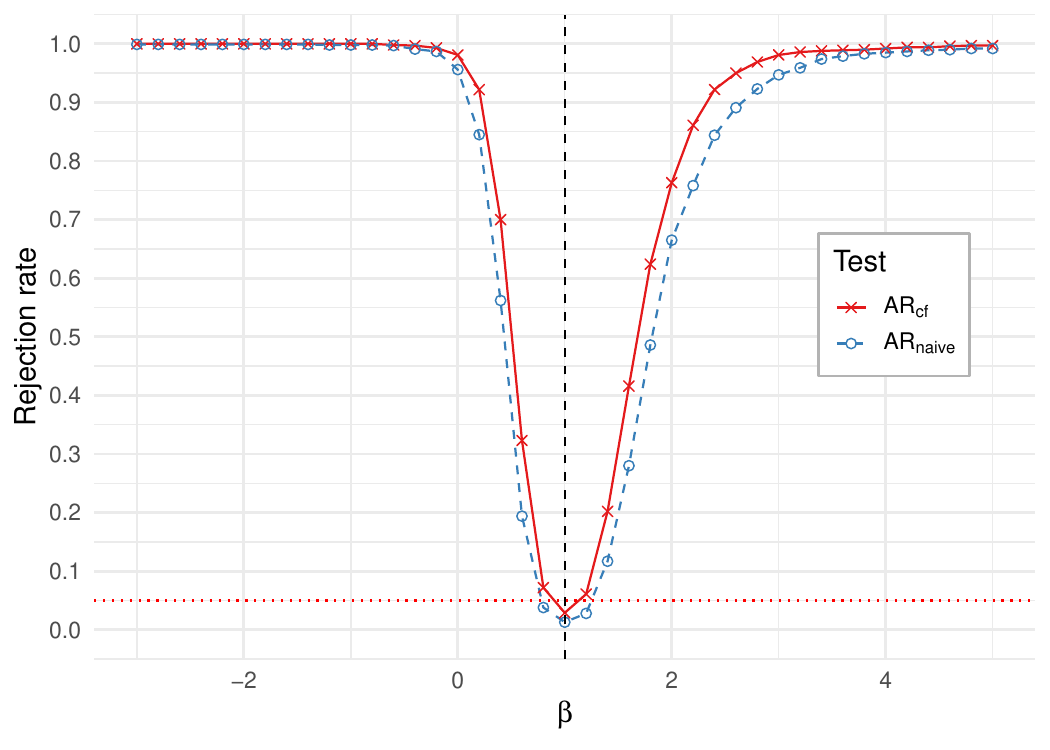}
        \caption{JIVE1, $r=64$}
    \end{subfigure}
    \hfill
    \begin{subfigure}[b]{0.47\textwidth}
        \includegraphics[width=\textwidth]{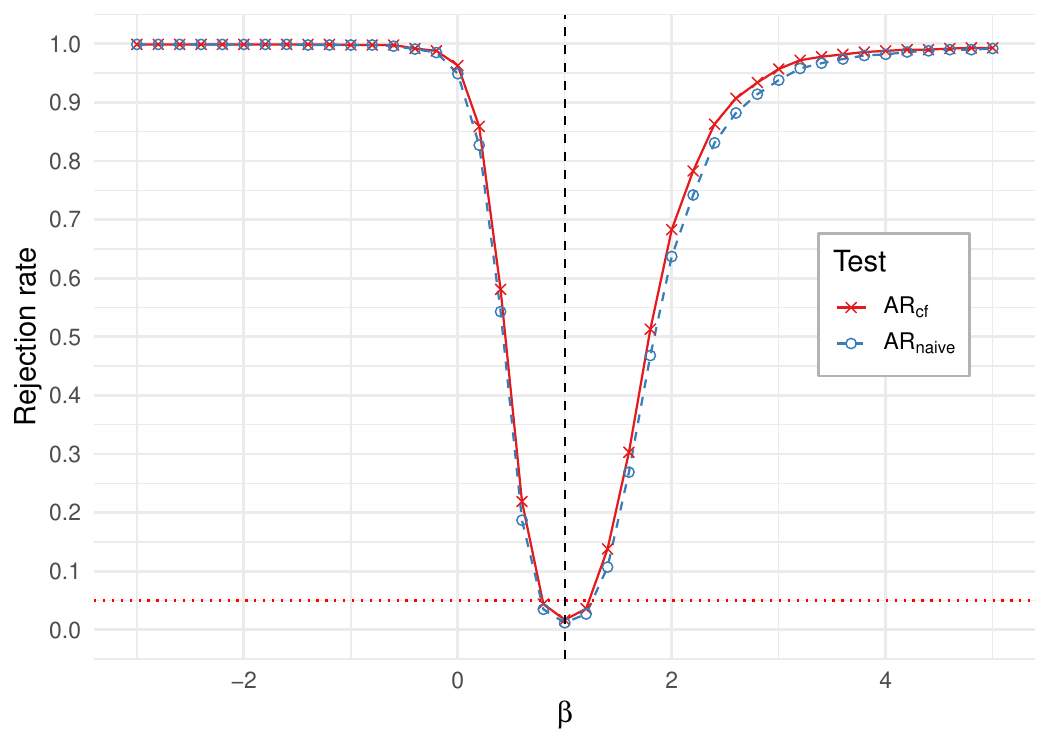}
        \caption{JIVE2, $r=64$}
    \end{subfigure}
    
    \caption{Power curves for DGP1 ($n=200$ and $\alpha = 0.1$). AR tests distributed as $\mathcal{N}(0,1)$. Results based on 1000 repetitions. The horizontal dotted red line denotes the $5\%$ nominal rejection level, while the vertical dotted black line corresponds to $\beta=1$. Panel (a) and panel (c) plot AR statistics based on the JIVE1 objective function with weak and strong instruments, respectively; panel (b)  and panel (d) plot AR statistics based on the JIVE2 objective function for weak and strong instruments, respectively.}
    \label{fig:AR_N_DGP1_200_0.1}
\end{figure}

\clearpage

\subsection{DGP2}

\begin{table}[p]
\centering
\caption{Size results for DGP2, $5\%$ nominal level and $n=200$.}
\label{tab:DGP2_200_full}
\resizebox{\textwidth}{!}{%
\begin{tabular}{lrr rrrr rrrr rrrrr rrrrr rr}
\toprule
Method & $\alpha$ & $r$ & $D$ & $W_1$ & $W_2$ & $LM$ & $D_{cf}$ & $W_{1,cf}$ & $W_{2,cf}$ & $LM_{cf}$ & $D_1^{*}$ & $D_2^{*}$ & $W_1^{*}$ & $W_2^{*}$ & $LM^{*}$ & $D_{1,cf}^{*}$ & $D_{2,cf}^{*}$ & $W_{1,cf}^{*}$ & $W_{2,cf}^{*}$ & $LM_{cf}^{*}$ & $AR_{naive}$ & $AR_{cf}$ \\
\midrule
SJIVE  & 0.05 & 0.1 & 0.066 & 0.055 & 0.055 & 0.054 & 0.062 & 0.052 & 0.052 & 0.059 & 0.047 & 0.051 & 0.055 & 0.055 & 0.054 & 0.053 & 0.058 & 0.052 & 0.052 & 0.059 &      &      \\
SJIVE  & 0.05 & 0.2 & 0.062 & 0.056 & 0.056 & 0.056 & 0.060 & 0.054 & 0.054 & 0.059 & 0.051 & 0.052 & 0.056 & 0.056 & 0.056 & 0.056 & 0.059 & 0.054 & 0.054 & 0.059 &      &      \\
SJIVE  & 0.10 & 0.1 & 0.067 & 0.052 & 0.052 & 0.051 & 0.061 & 0.048 & 0.048 & 0.058 & 0.045 & 0.049 & 0.052 & 0.052 & 0.051 & 0.053 & 0.058 & 0.048 & 0.048 & 0.058 &      &      \\
SJIVE  & 0.10 & 0.2 & 0.067 & 0.052 & 0.052 & 0.051 & 0.061 & 0.048 & 0.048 & 0.058 & 0.045 & 0.049 & 0.052 & 0.052 & 0.051 & 0.053 & 0.058 & 0.048 & 0.048 & 0.058 &      &      \\
\addlinespace
HLIM   & 0.05 & 0.1 & 0.063 & 0.052 & 0.052 & 0.052 & 0.058 & 0.047 & 0.047 & 0.055 & 0.044 & 0.048 & 0.052 & 0.052 & 0.052 & 0.049 & 0.053 & 0.047 & 0.047 & 0.055 &      &      \\
HLIM   & 0.05 & 0.2 & 0.060 & 0.056 & 0.056 & 0.051 & 0.054 & 0.051 & 0.051 & 0.055 & 0.046 & 0.047 & 0.056 & 0.056 & 0.051 & 0.051 & 0.052 & 0.051 & 0.051 & 0.055 &      &      \\
HLIM   & 0.10 & 0.1 & 0.072 & 0.062 & 0.062 & 0.057 & 0.066 & 0.056 & 0.056 & 0.056 & 0.044 & 0.050 & 0.062 & 0.062 & 0.057 & 0.049 & 0.054 & 0.056 & 0.056 & 0.056 &      &      \\
HLIM   & 0.10 & 0.2 & 0.072 & 0.062 & 0.062 & 0.057 & 0.066 & 0.056 & 0.056 & 0.056 & 0.044 & 0.050 & 0.062 & 0.062 & 0.057 & 0.049 & 0.054 & 0.056 & 0.056 & 0.056 &      &      \\
\addlinespace
JIVE1  & 0.05 & 0.1 & 0.030 & 0.030 & 0.030 & 0.056 & 0.031 & 0.031 & 0.031 & 0.057 & 0.056 &      & 0.030 & 0.030 & 0.056 & 0.057 &      & 0.031 & 0.031 & 0.057 & 0.028 & 0.045 \\
JIVE1  & 0.05 & 0.2 & 0.044 & 0.044 & 0.044 & 0.053 & 0.041 & 0.041 & 0.041 & 0.055 & 0.053 &      & 0.044 & 0.044 & 0.053 & 0.055 &      & 0.041 & 0.041 & 0.055 & 0.028 & 0.046 \\
JIVE1  & 0.10 & 0.1 & 0.035 & 0.035 & 0.035 & 0.060 & 0.035 & 0.035 & 0.035 & 0.055 & 0.060 &      & 0.035 & 0.035 & 0.060 & 0.055 &      & 0.035 & 0.035 & 0.055 & 0.036 & 0.085 \\
JIVE1  & 0.10 & 0.2 & 0.035 & 0.035 & 0.035 & 0.060 & 0.035 & 0.035 & 0.035 & 0.055 & 0.060 &      & 0.035 & 0.035 & 0.060 & 0.055 &      & 0.035 & 0.035 & 0.055 & 0.036 & 0.085 \\
\addlinespace
JIVE2  & 0.05 & 0.1 & 0.030 & 0.030 & 0.030 & 0.056 & 0.030 & 0.030 & 0.030 & 0.057 & 0.056 &      & 0.030 & 0.030 & 0.056 & 0.057 &      & 0.030 & 0.030 & 0.057 & 0.028 & 0.034 \\
JIVE2  & 0.05 & 0.2 & 0.043 & 0.043 & 0.043 & 0.053 & 0.040 & 0.040 & 0.040 & 0.054 & 0.053 &      & 0.043 & 0.043 & 0.053 & 0.054 &      & 0.040 & 0.040 & 0.054 & 0.028 & 0.034 \\
JIVE2  & 0.10 & 0.1 & 0.035 & 0.035 & 0.035 & 0.059 & 0.035 & 0.035 & 0.035 & 0.056 & 0.059 &      & 0.035 & 0.035 & 0.059 & 0.056 &      & 0.035 & 0.035 & 0.056 & 0.037 & 0.046 \\
JIVE2  & 0.10 & 0.2 & 0.035 & 0.035 & 0.035 & 0.059 & 0.035 & 0.035 & 0.035 & 0.056 & 0.059 &      & 0.035 & 0.035 & 0.059 & 0.056 &      & 0.035 & 0.035 & 0.056 & 0.037 & 0.046 \\
\bottomrule
\end{tabular}%
}
\end{table}

%%%%%%%%%%%%%%%%%%%%%%%%%
% Power DGP2
%%%%%%%%%%%%%%%%%%%%%%%%%

%%%%%%%%%%%%%%%%%%%%%%%%%
% Trinity_200_0.05_0.1 

\begin{figure}[ht]
    \centering
    % Replace 'chibar2' with 'chi2' for the second set of figures as needed
    \begin{subfigure}[b]{0.47\textwidth}
        \includegraphics[width=\textwidth]{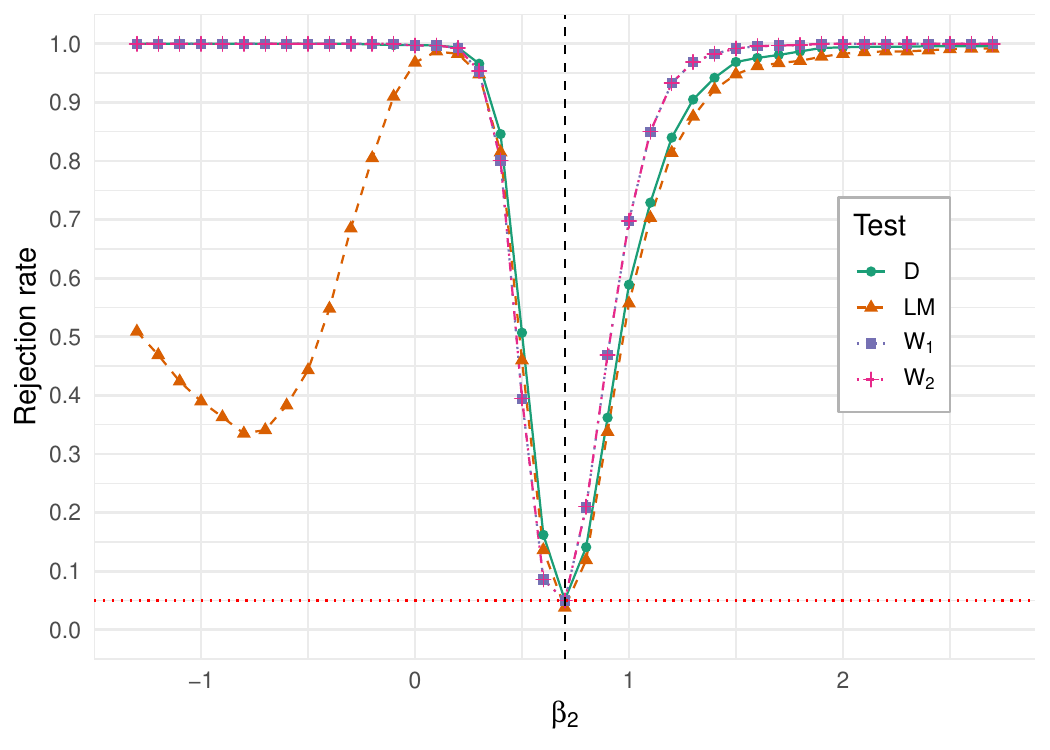}
        \caption{SJIVE}
    \end{subfigure}
    \hfill
    \begin{subfigure}[b]{0.47\textwidth}
        \includegraphics[width=\textwidth]{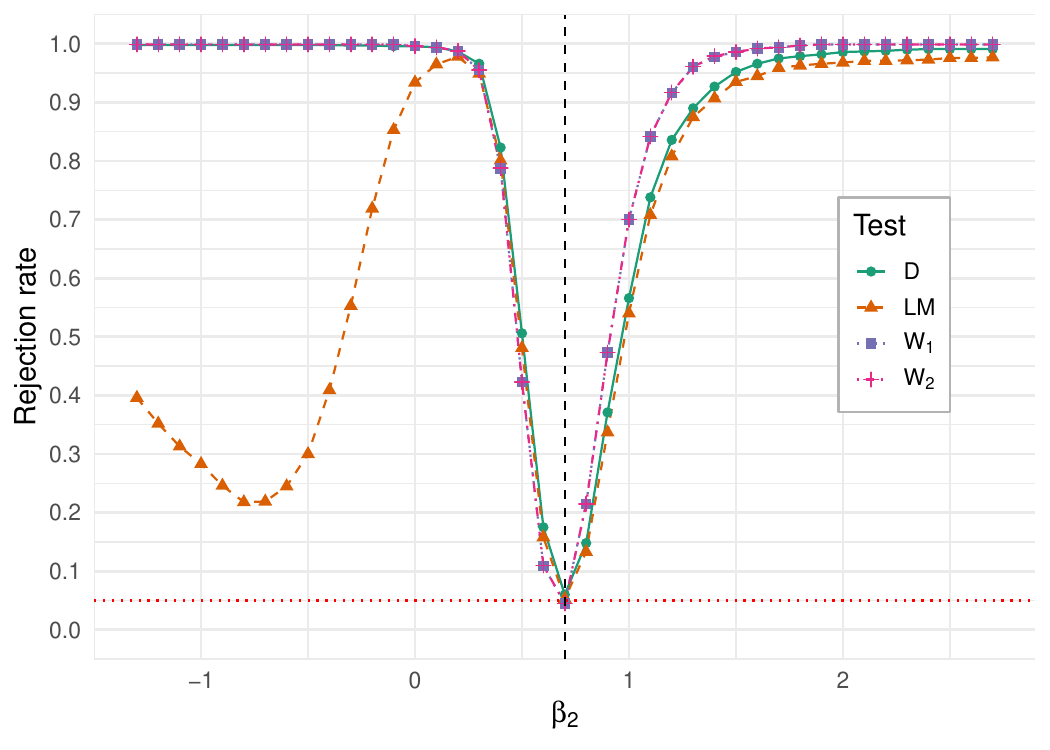}
        \caption{HLIM}
    \end{subfigure}
    
    \begin{subfigure}[b]{0.47\textwidth}
        \includegraphics[width=\textwidth]{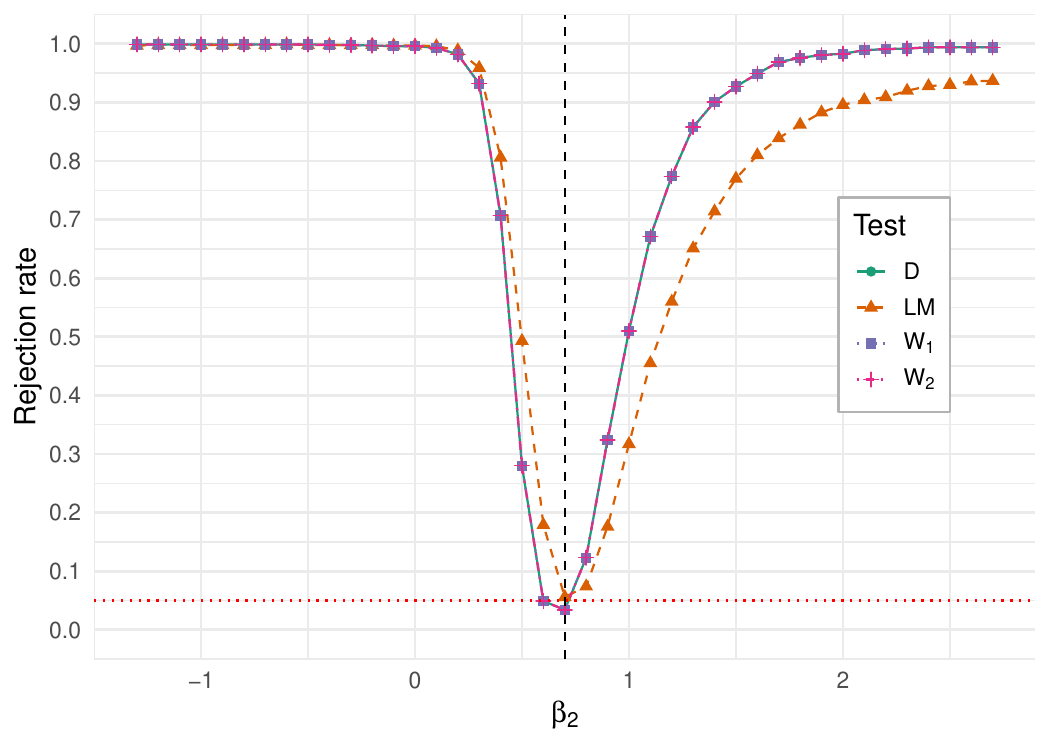}
        \caption{JIVE1}
    \end{subfigure}
    \hfill
    \begin{subfigure}[b]{0.47\textwidth}
        \includegraphics[width=\textwidth]{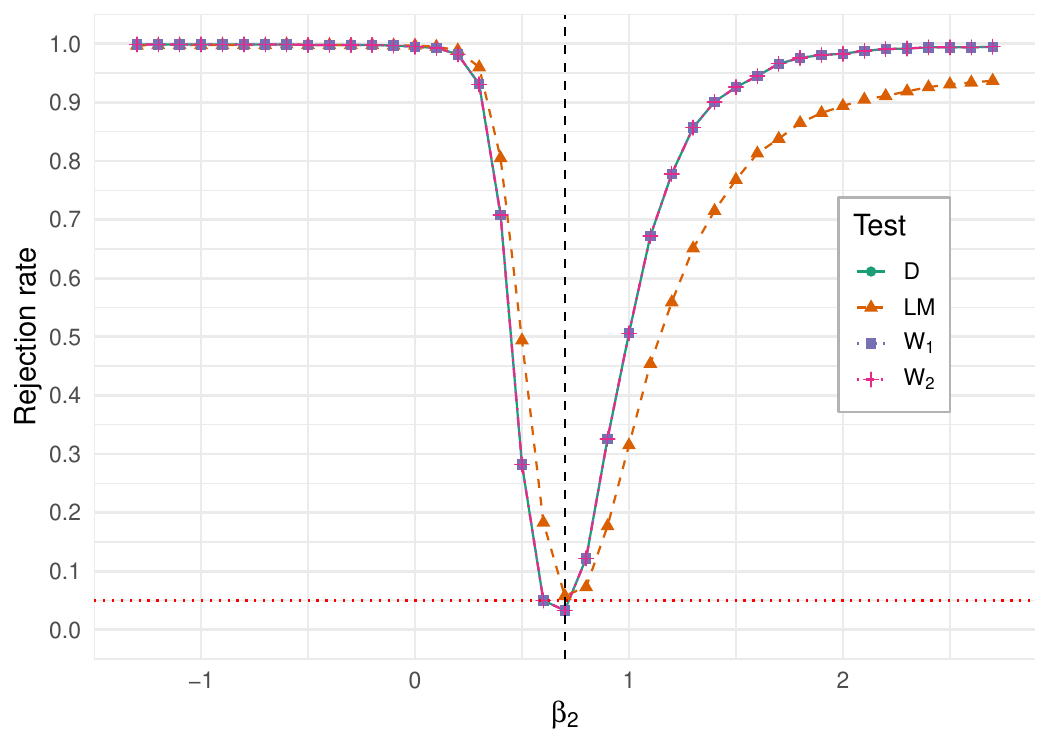}
        \caption{JIVE2}
    \end{subfigure}
    
    \caption{Power curves for DGP2 ($n=200$, $\alpha = 0.05$, $r = 0.1$). Trinity of test statistics distributed as a $\bar\chi^2$. Results based on 1000 repetitions. The horizontal dotted red line denotes the $5\%$ nominal rejection level, while the vertical dotted black line corresponds to $\beta=1$. Panel (a) plots statistics based on the SJIVE objective function; panel (b) plots statistics based on the HLIM objective function; panel (c) plots statistics based on the JIVE1 objective function; panel (d) plots statistics based on the JIVE2 objective function.}
    \label{fig:Trinity_chibar2_DGP2_200_0.05_0.1}
\end{figure}

\begin{figure}[ht]
    \centering
    % Replace 'chibar2' with 'chi2' for the second set of figures as needed
    \begin{subfigure}[b]{0.47\textwidth}
        \includegraphics[width=\textwidth]{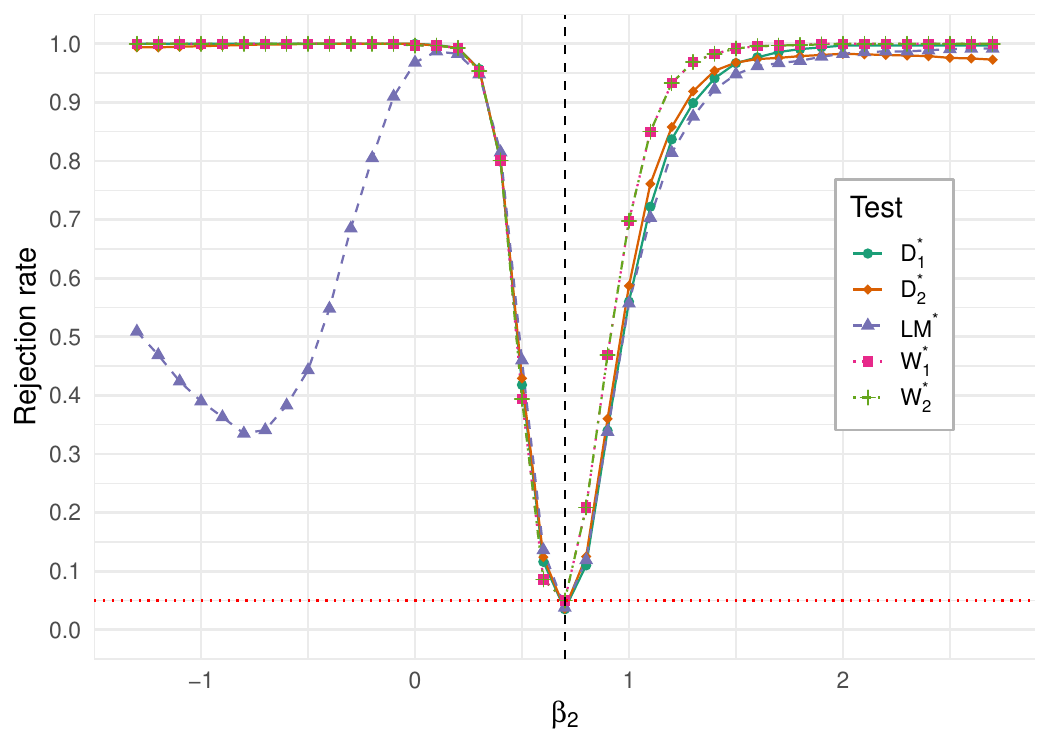}
        \caption{SJIVE}
    \end{subfigure}
    \hfill
    \begin{subfigure}[b]{0.47\textwidth}
        \includegraphics[width=\textwidth]{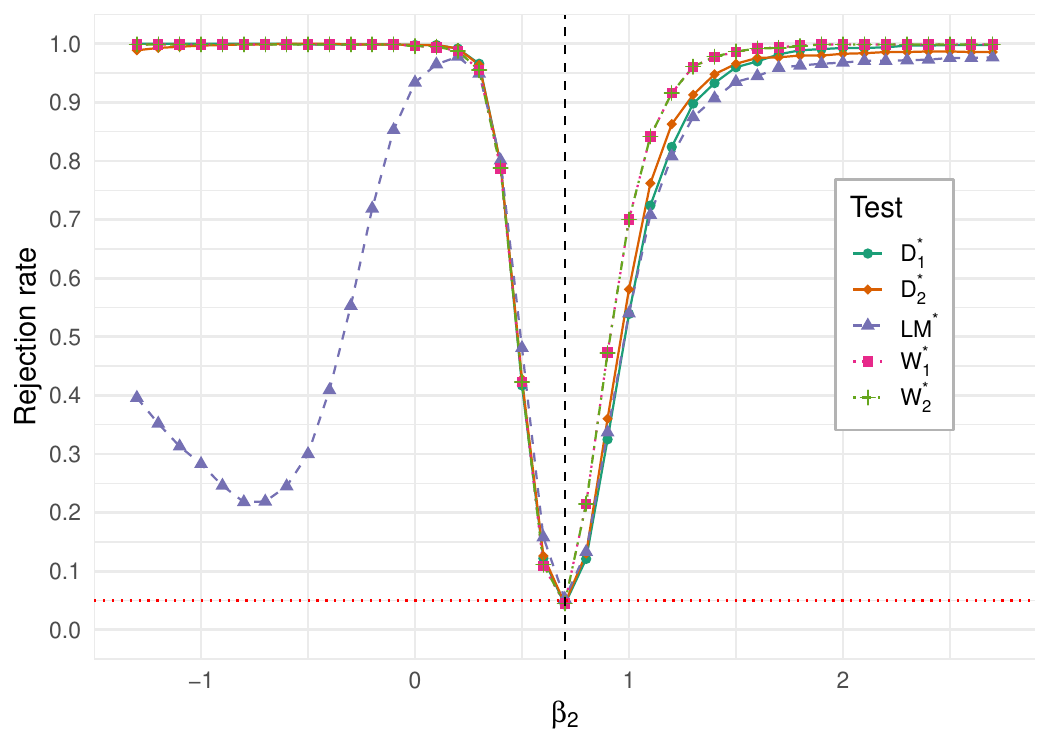}
        \caption{HLIM}
    \end{subfigure}
    
    \begin{subfigure}[b]{0.47\textwidth}
        \includegraphics[width=\textwidth]{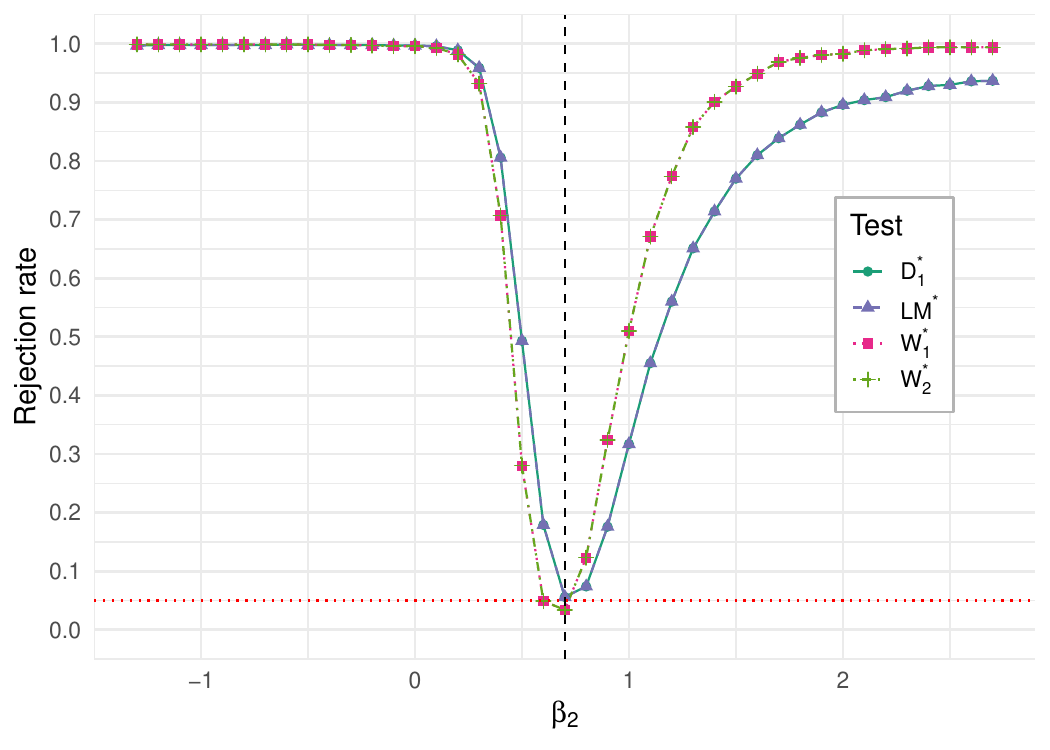}
        \caption{JIVE1}
    \end{subfigure}
    \hfill
    \begin{subfigure}[b]{0.47\textwidth}
        \includegraphics[width=\textwidth]{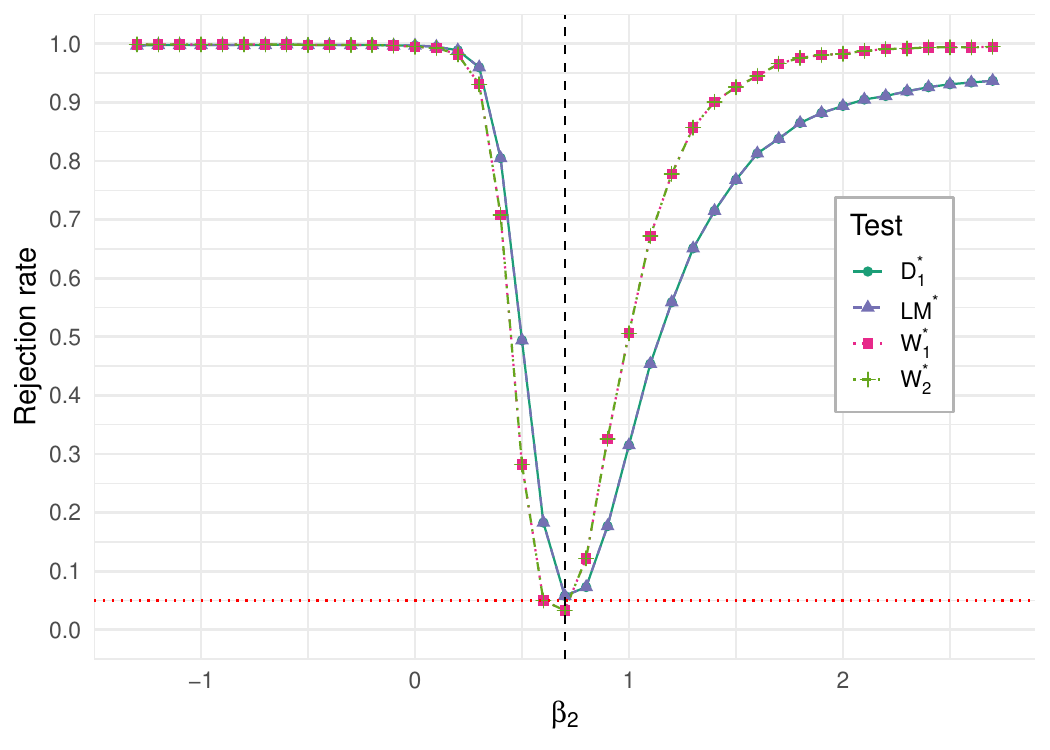}
        \caption{JIVE2}
    \end{subfigure}
    
    \caption{Power curves for DGP2 ($n=200$, $\alpha = 0.05$, $r = 0.1$). Trinity of test statistics distributed as a $\chi^2$. Results based on 1000 repetitions. The horizontal dotted red line denotes the $5\%$ nominal rejection level, while the vertical dotted black line corresponds to $\beta=1$. Panel (a) plots statistics based on the SJIVE objective function; panel (b) plots statistics based on the HLIM objective function; panel (c) plots statistics based on the JIVE1 objective function; panel (d) plots statistics based on the JIVE2 objective function.}
    \label{fig:Trinity_chi2_DGP2_200_0.05_0.1}
\end{figure}

% cf variance

\begin{figure}[ht]
    \centering
    % Replace 'chibar2' with 'chi2' for the second set of figures as needed
    \begin{subfigure}[b]{0.47\textwidth}
        \includegraphics[width=\textwidth]{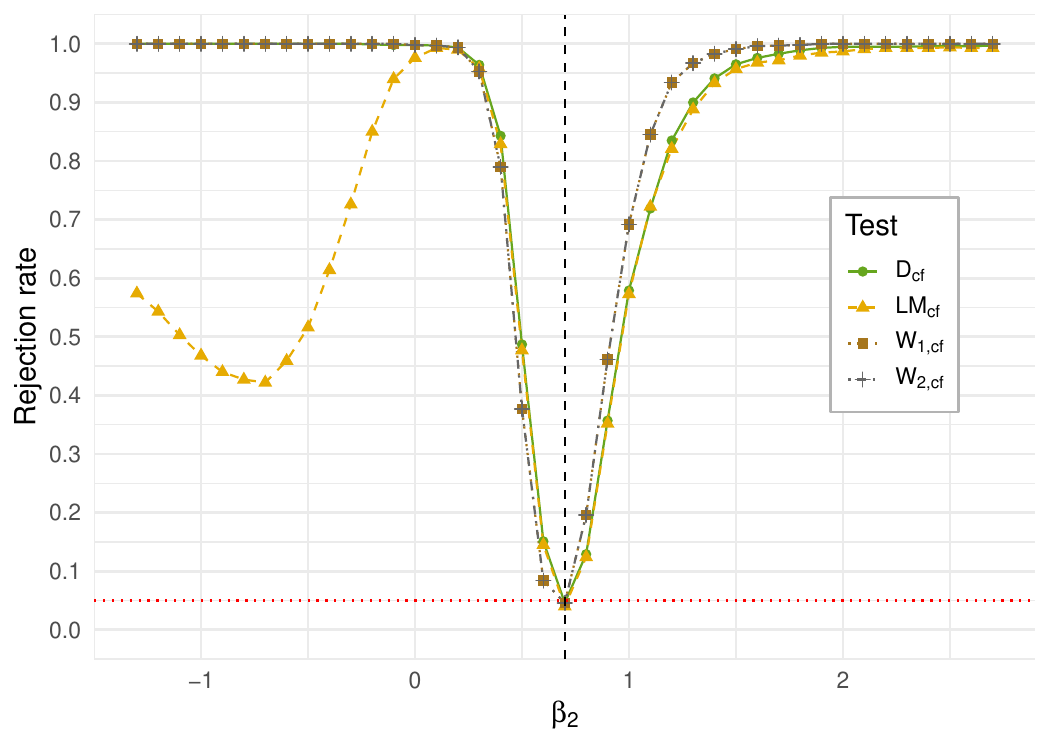}
        \caption{SJIVE}
    \end{subfigure}
    \hfill
    \begin{subfigure}[b]{0.47\textwidth}
       \includegraphics[width=\textwidth]{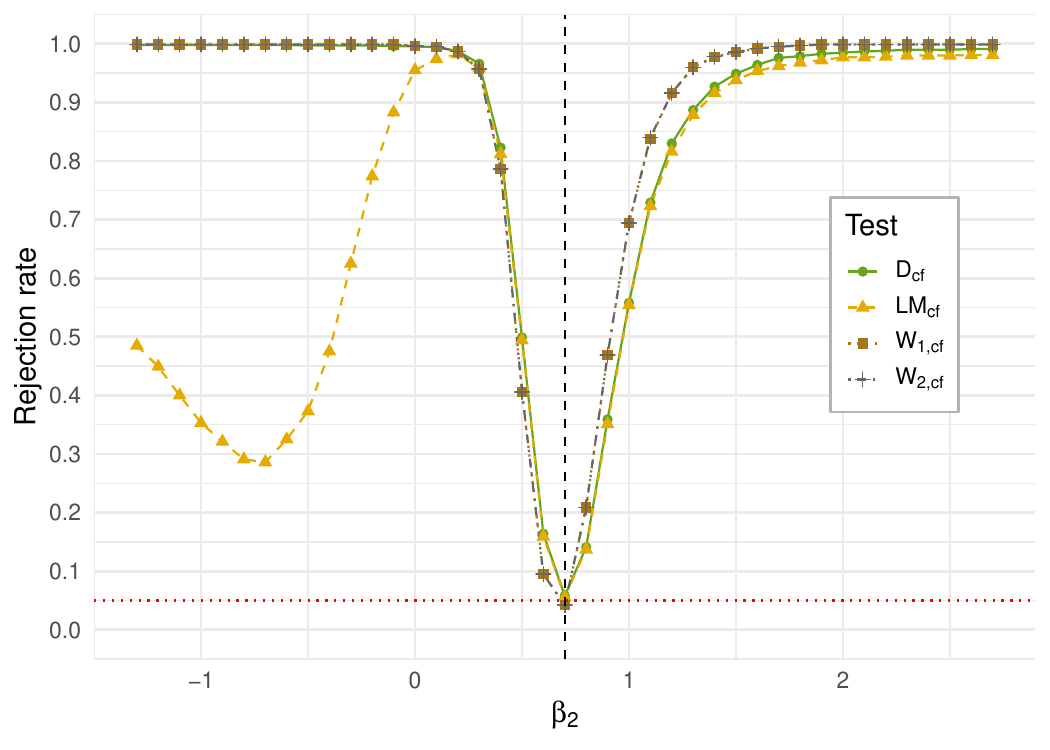}
        \caption{HLIM}
    \end{subfigure}
    
    \begin{subfigure}[b]{0.47\textwidth}
        \includegraphics[width=\textwidth]{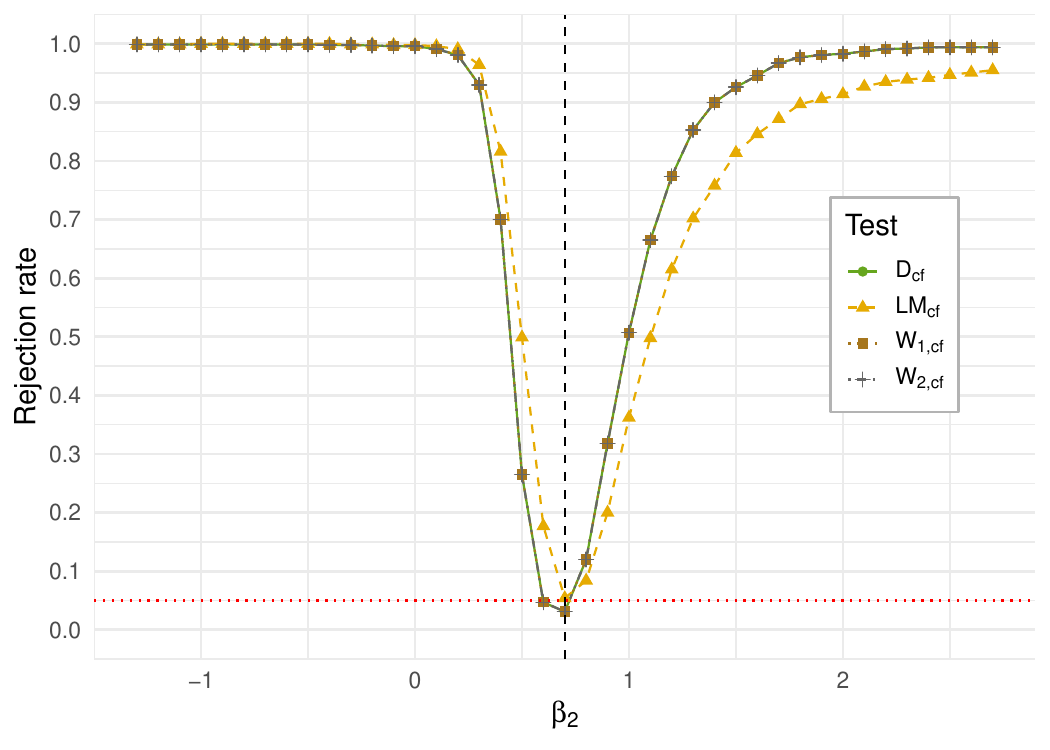}
        \caption{JIVE1}
    \end{subfigure}
    \hfill
    \begin{subfigure}[b]{0.47\textwidth}
        \includegraphics[width=\textwidth]{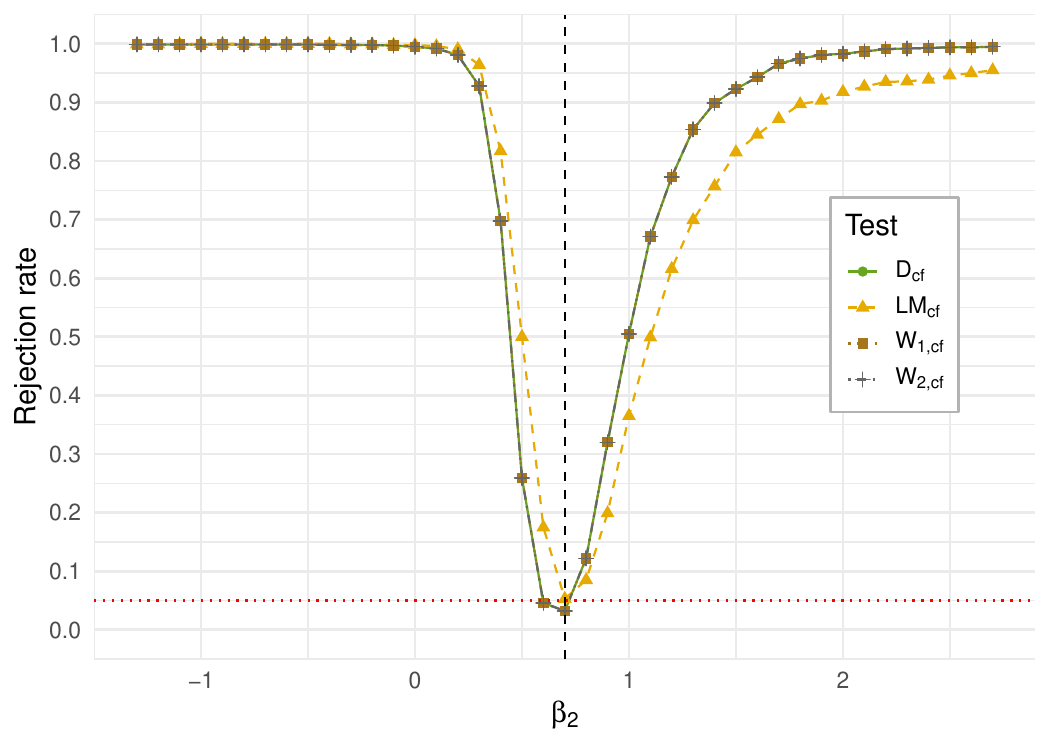}
        \caption{JIVE2}
    \end{subfigure}
    
    \caption{Power curves for DGP2 ($n=200$, $\alpha = 0.05$, $r = 0.1$). Trinity of test statistics distributed as a $\bar\chi^2$ with cross-fit variance. Results based on 1000 repetitions. The horizontal dotted red line denotes the $5\%$ nominal rejection level, while the vertical dotted black line corresponds to $\beta=1$. Panel (a) plots statistics based on the SJIVE objective function; panel (b) plots statistics based on the HLIM objective function; panel (c) plots statistics based on the JIVE1 objective function; panel (d) plots statistics based on the JIVE2 objective function.}
    \label{fig:Trinity_chibar2_cf_DGP2_200_0.05_0.1}
\end{figure}

\begin{figure}[ht]
    \centering
    % Replace 'chibar2' with 'chi2' for the second set of figures as needed
    \begin{subfigure}[b]{0.47\textwidth}
        \includegraphics[width=\textwidth]{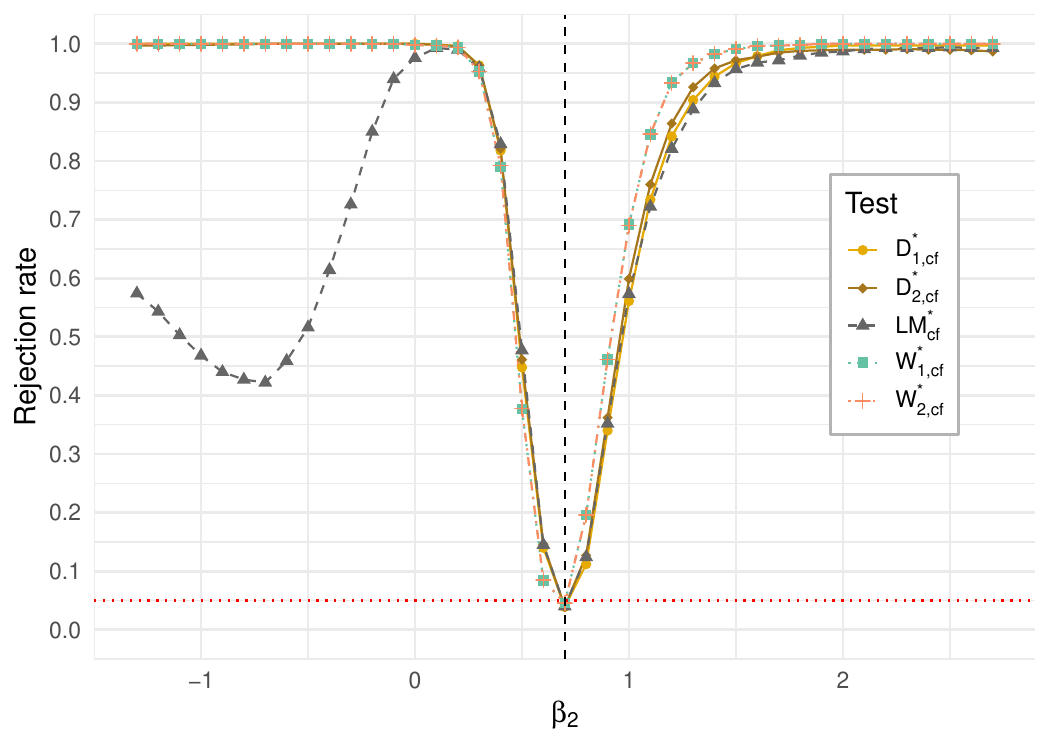}
        \caption{SJIVE}
    \end{subfigure}
    \hfill
    \begin{subfigure}[b]{0.47\textwidth}
        \includegraphics[width=\textwidth]{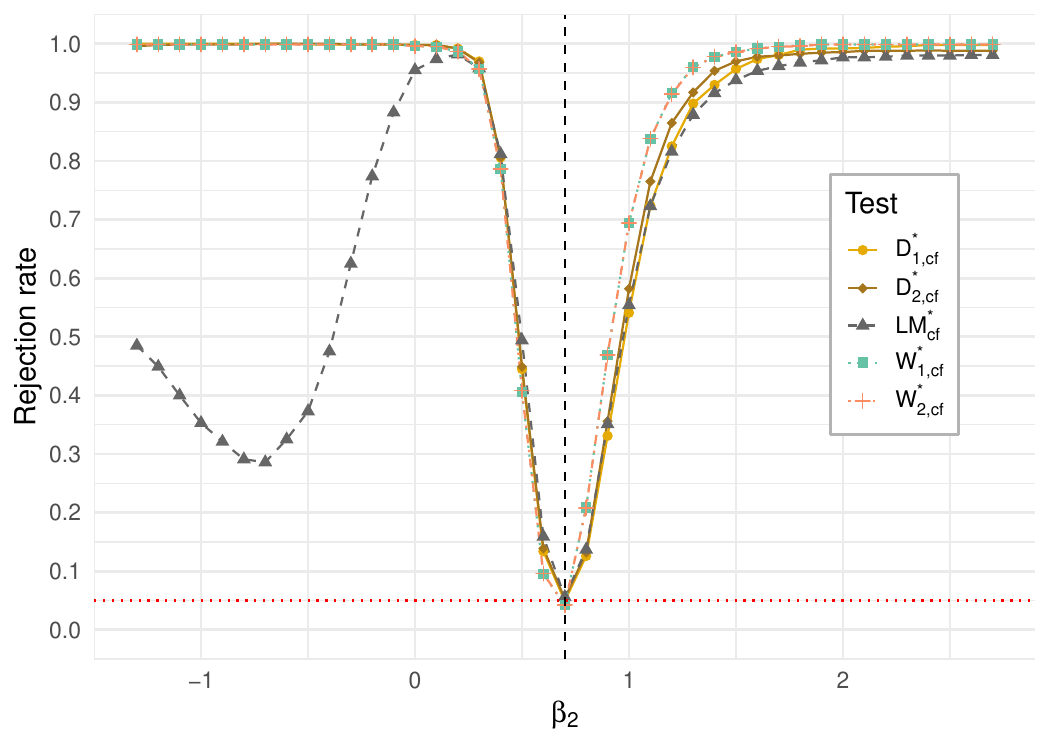}
        \caption{HLIM}
    \end{subfigure}
    
    \begin{subfigure}[b]{0.47\textwidth}
       \includegraphics[width=\textwidth]{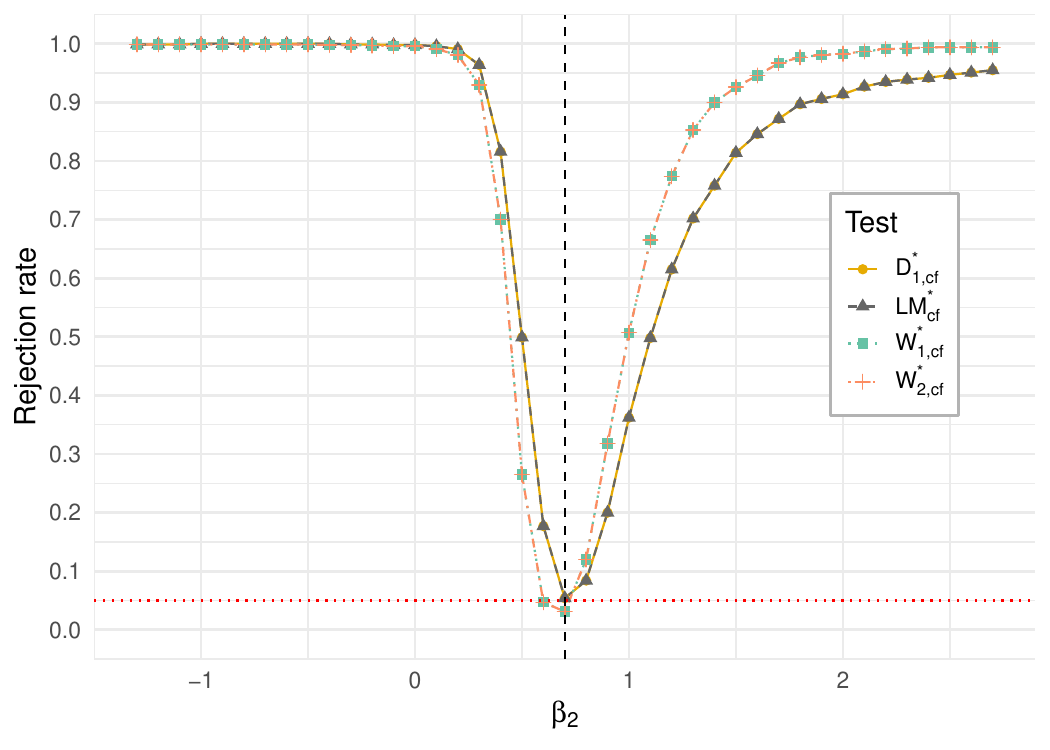}
        \caption{JIVE1}
    \end{subfigure}
    \hfill
    \begin{subfigure}[b]{0.47\textwidth}
        \includegraphics[width=\textwidth]{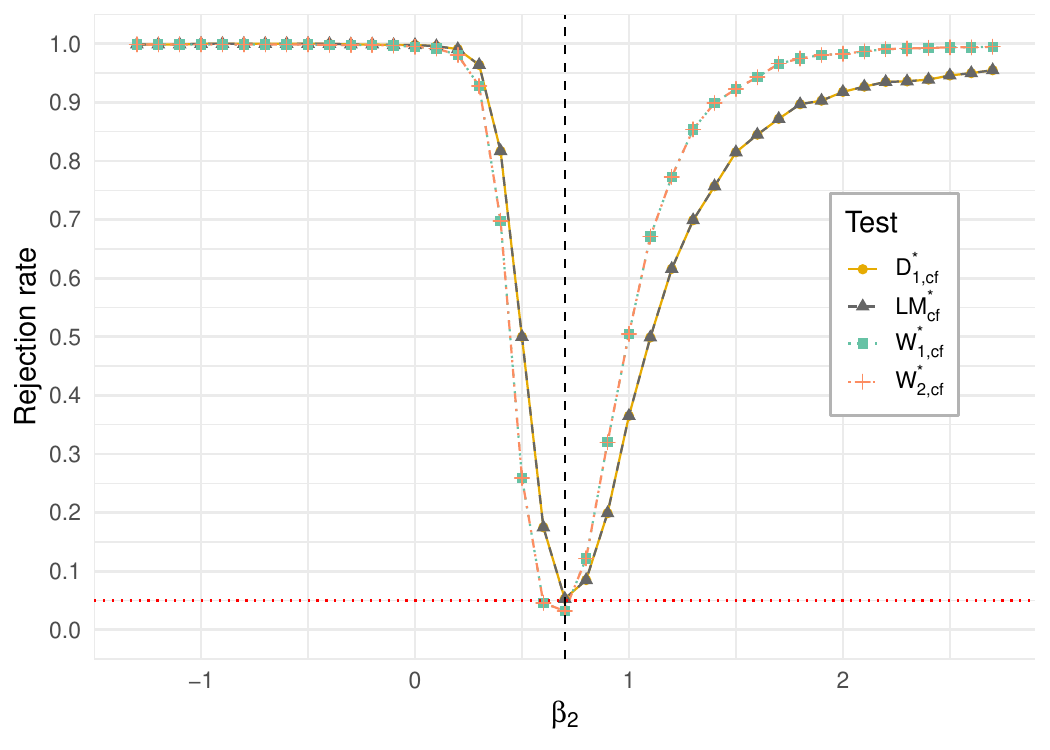}
        \caption{JIVE2}
    \end{subfigure}
    
    \caption{Power curves for DGP2 ($n=200$, $\alpha = 0.05$, $r = 0.1$). Trinity of test statistics distributed as a $\chi^2$ with cross-fit variance. Results based on 1000 repetitions. The horizontal dotted red line denotes the $5\%$ nominal rejection level, while the vertical dotted black line corresponds to $\beta=1$. Panel (a) plots statistics based on the SJIVE objective function; panel (b) plots statistics based on the HLIM objective function; panel (c) plots statistics based on the JIVE1 objective function; panel (d) plots statistics based on the JIVE2 objective function.}
    \label{fig:Trinity_chi2_cf_DGP2_200_0.05_0.1}
\end{figure}

%%%%%%%%%%%%%%%%%%%%%%%%%
% Trinity_200_0.05_0.2 

\begin{figure}[ht]
    \centering
    % Replace 'chibar2' with 'chi2' for the second set of figures as needed
    \begin{subfigure}[b]{0.47\textwidth}
        \includegraphics[width=\textwidth]{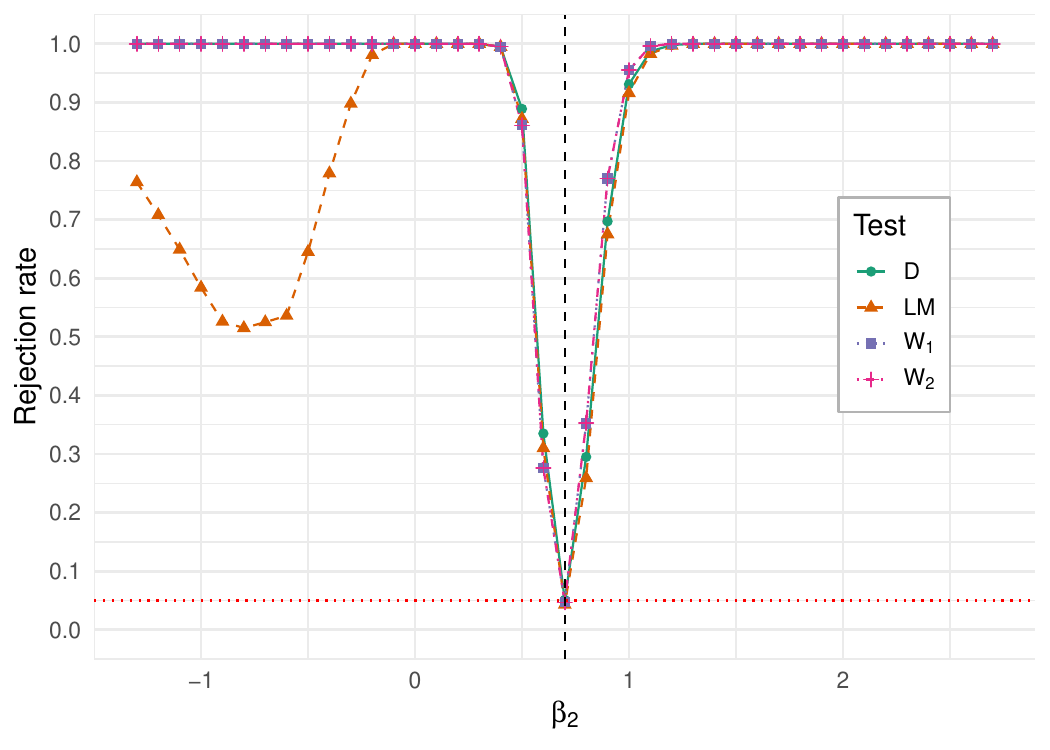}
        \caption{SJIVE}
    \end{subfigure}
    \hfill
    \begin{subfigure}[b]{0.47\textwidth}
        \includegraphics[width=\textwidth]{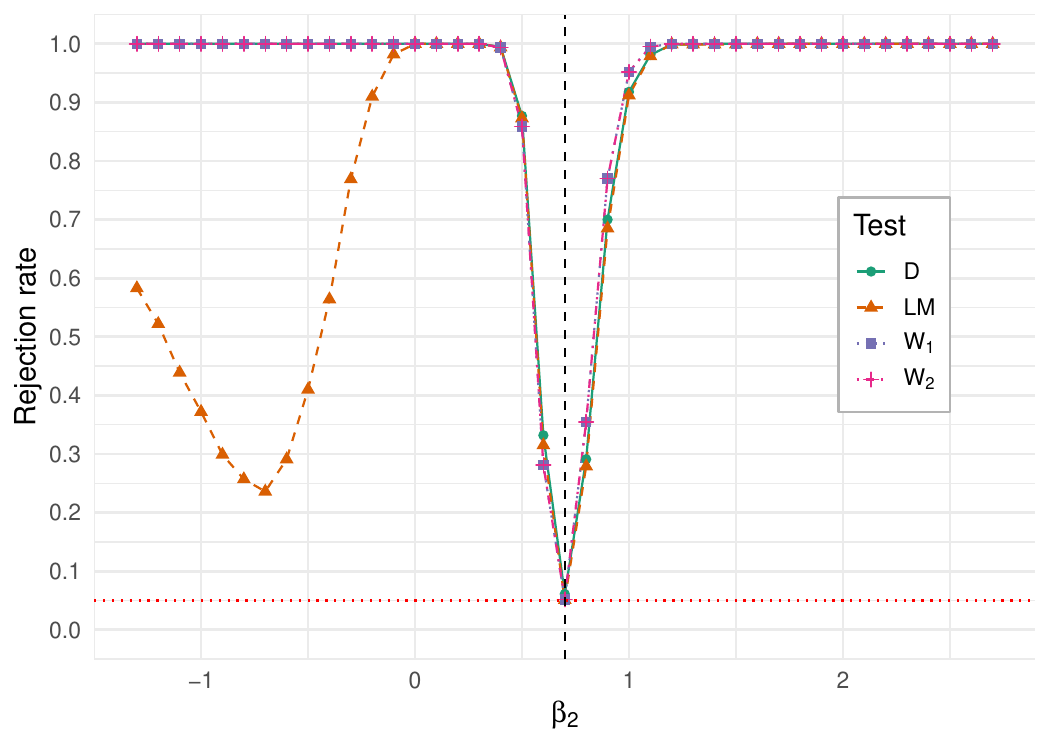}
        \caption{HLIM}
    \end{subfigure}
    
    \begin{subfigure}[b]{0.47\textwidth}
        \includegraphics[width=\textwidth]{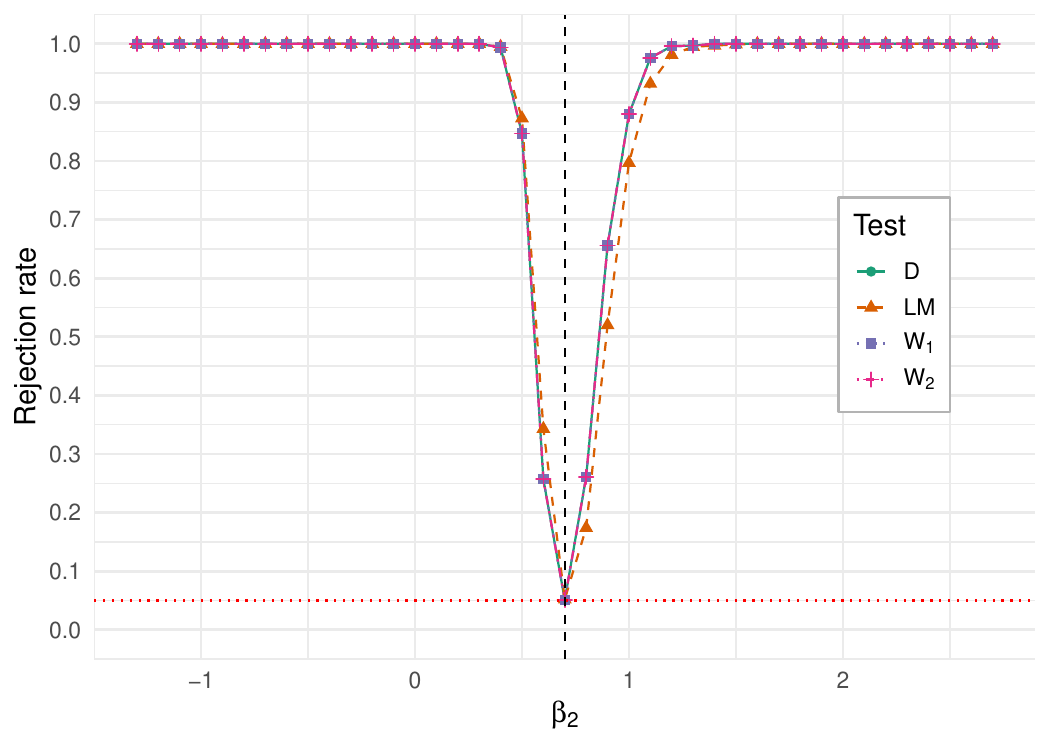}
        \caption{JIVE1}
    \end{subfigure}
    \hfill
    \begin{subfigure}[b]{0.47\textwidth}
        \includegraphics[width=\textwidth]{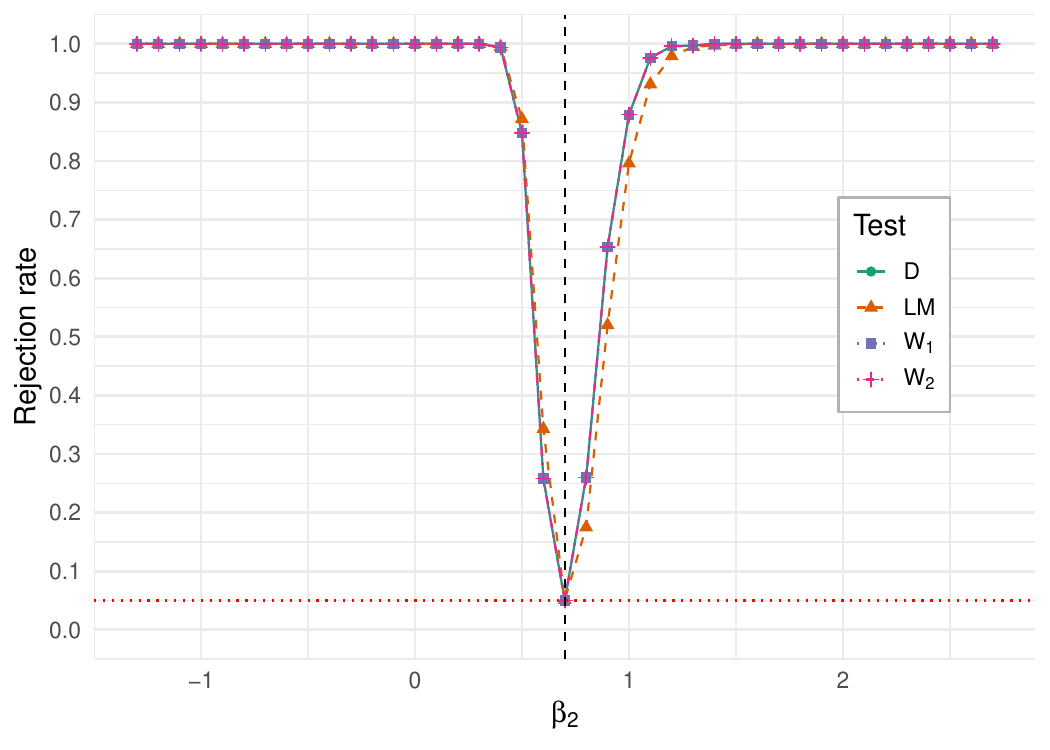}
        \caption{JIVE2}
    \end{subfigure}
    
    \caption{Power curves for DGP2 ($n=200$, $\alpha = 0.05$, $r = 0.2$). Trinity of test statistics distributed as a $\bar\chi^2$. Results based on 1000 repetitions. The horizontal dotted red line denotes the $5\%$ nominal rejection level, while the vertical dotted black line corresponds to $\beta=1$. Panel (a) plots statistics based on the SJIVE objective function; panel (b) plots statistics based on the HLIM objective function; panel (c) plots statistics based on the JIVE1 objective function; panel (d) plots statistics based on the JIVE2 objective function.}
    \label{fig:Trinity_chibar2_DGP2_200_0.05_0.2}
\end{figure}

\begin{figure}[ht]
    \centering
    % Replace 'chibar2' with 'chi2' for the second set of figures as needed
    \begin{subfigure}[b]{0.47\textwidth}
        \includegraphics[width=\textwidth]{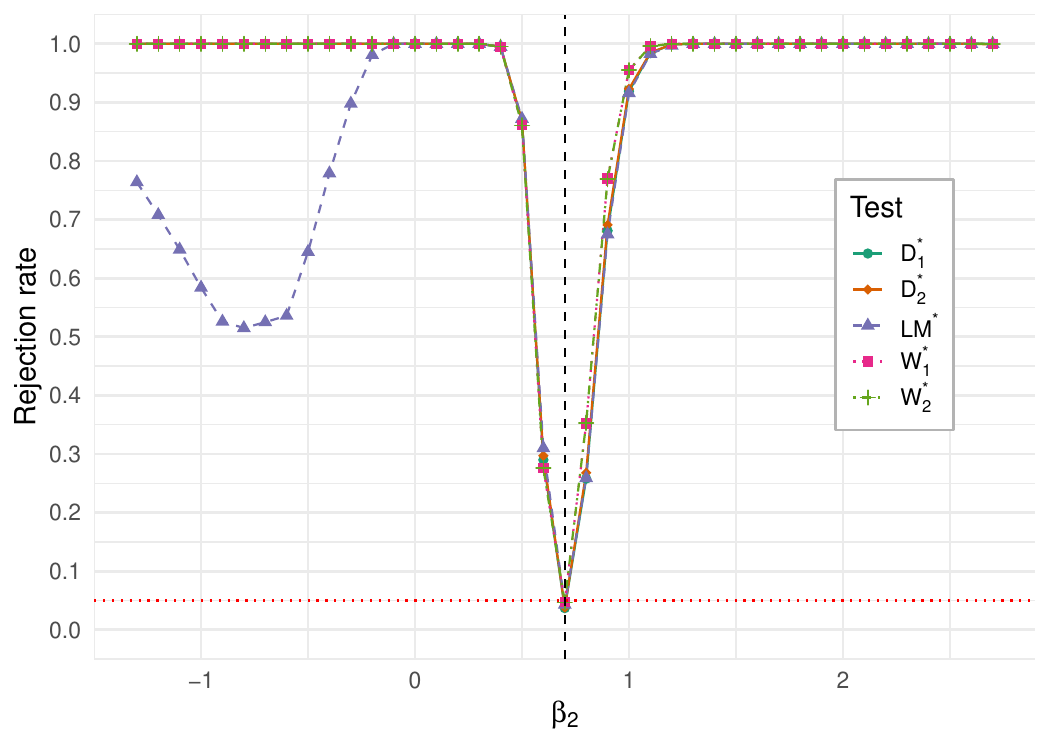}
        \caption{SJIVE}
    \end{subfigure}
    \hfill
    \begin{subfigure}[b]{0.47\textwidth}
        \includegraphics[width=\textwidth]{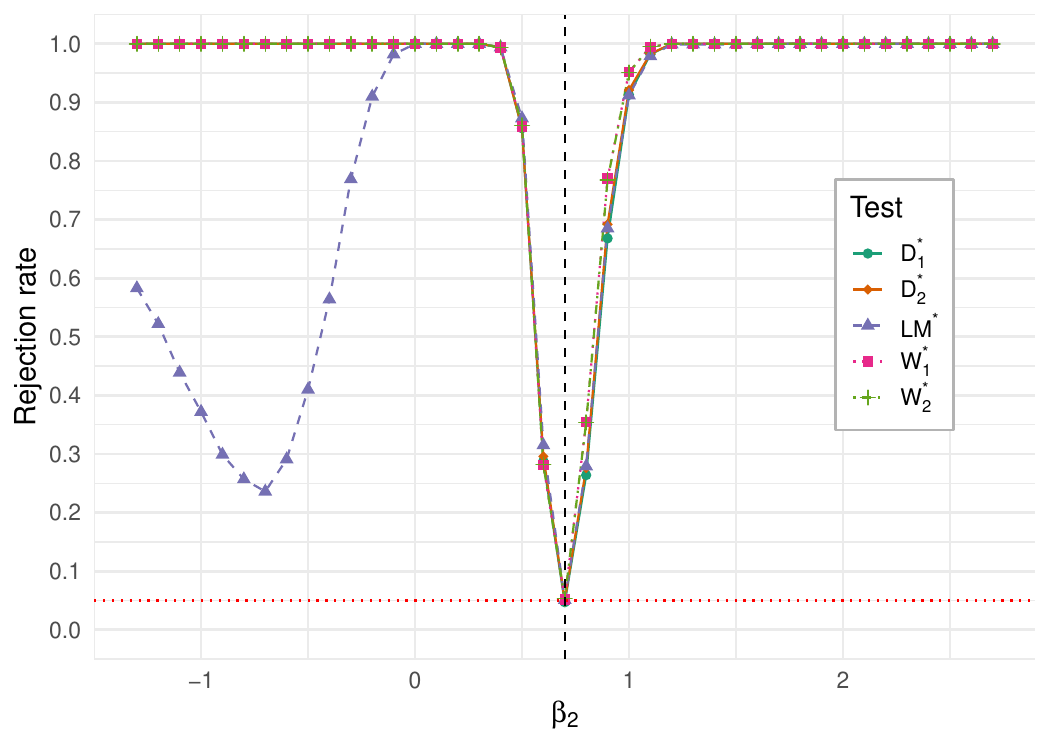}
        \caption{HLIM}
    \end{subfigure}
    
    \begin{subfigure}[b]{0.47\textwidth}
        \includegraphics[width=\textwidth]{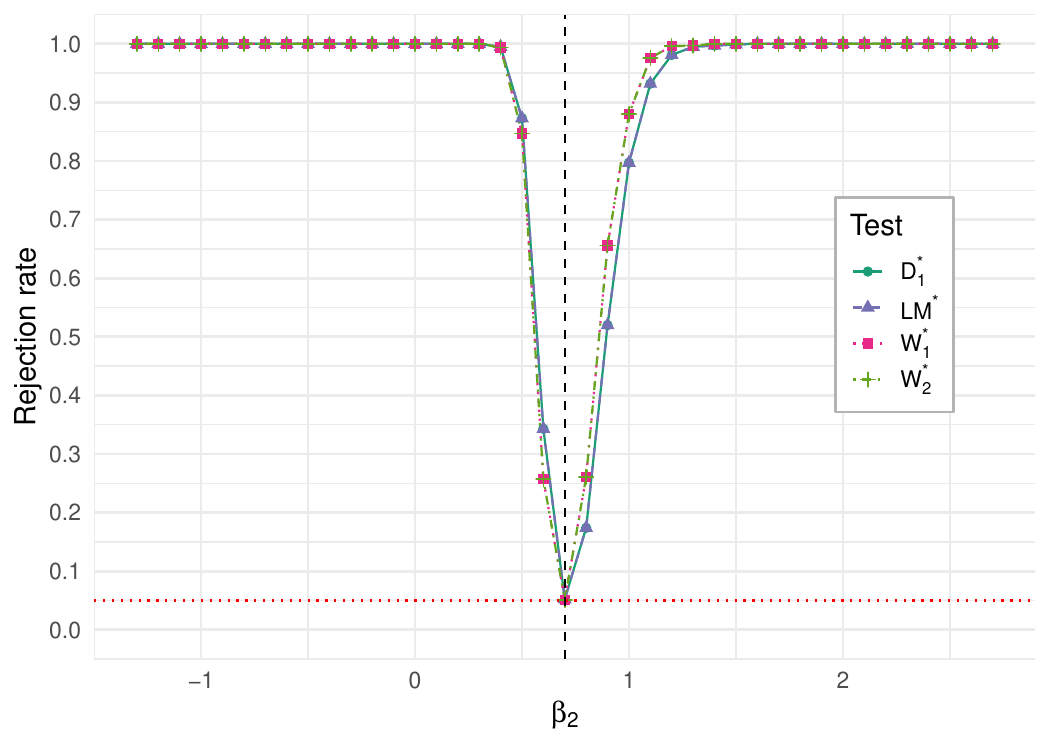}
        \caption{JIVE1}
    \end{subfigure}
    \hfill
    \begin{subfigure}[b]{0.47\textwidth}
        \includegraphics[width=\textwidth]{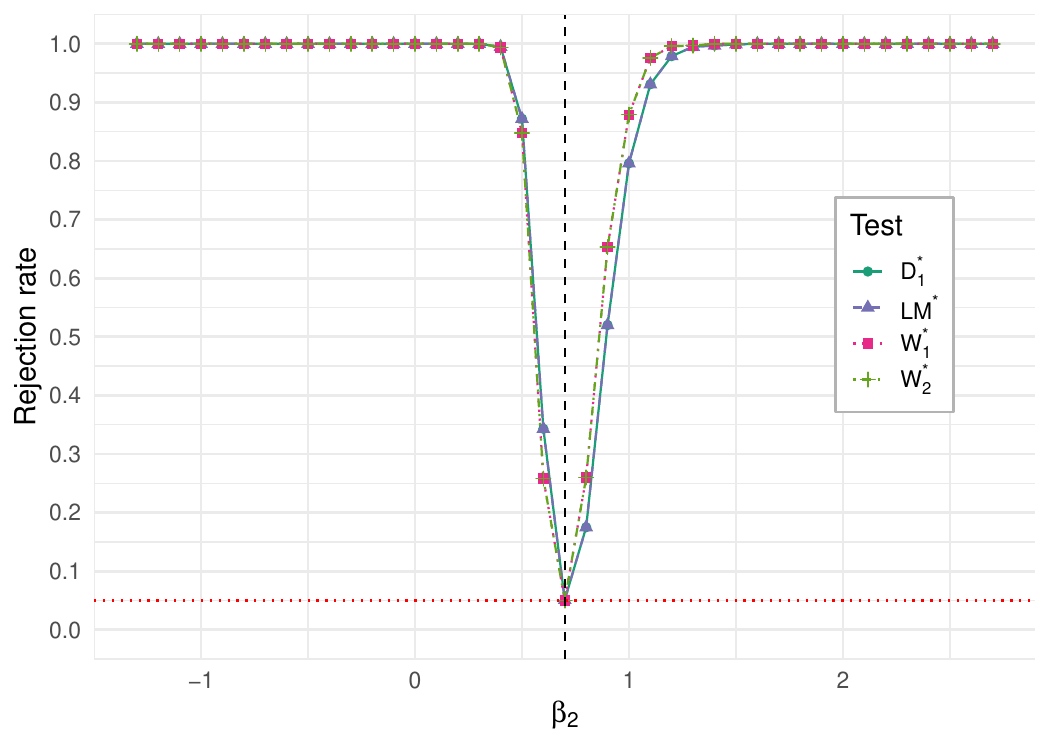}
        \caption{JIVE2}
    \end{subfigure}
    
    \caption{Power curves for DGP2 ($n=200$, $\alpha = 0.05$, $r = 0.2$). Trinity of test statistics distributed as a $\chi^2$. Results based on 1000 repetitions. The horizontal dotted red line denotes the $5\%$ nominal rejection level, while the vertical dotted black line corresponds to $\beta=1$. Panel (a) plots statistics based on the SJIVE objective function; panel (b) plots statistics based on the HLIM objective function; panel (c) plots statistics based on the JIVE1 objective function; panel (d) plots statistics based on the JIVE2 objective function.}
    \label{fig:Trinity_chi2_DGP2_200_0.05_0.2}
\end{figure}

% cf variance

\begin{figure}[ht]
    \centering
    % Replace 'chibar2' with 'chi2' for the second set of figures as needed
    \begin{subfigure}[b]{0.47\textwidth}
        \includegraphics[width=\textwidth]{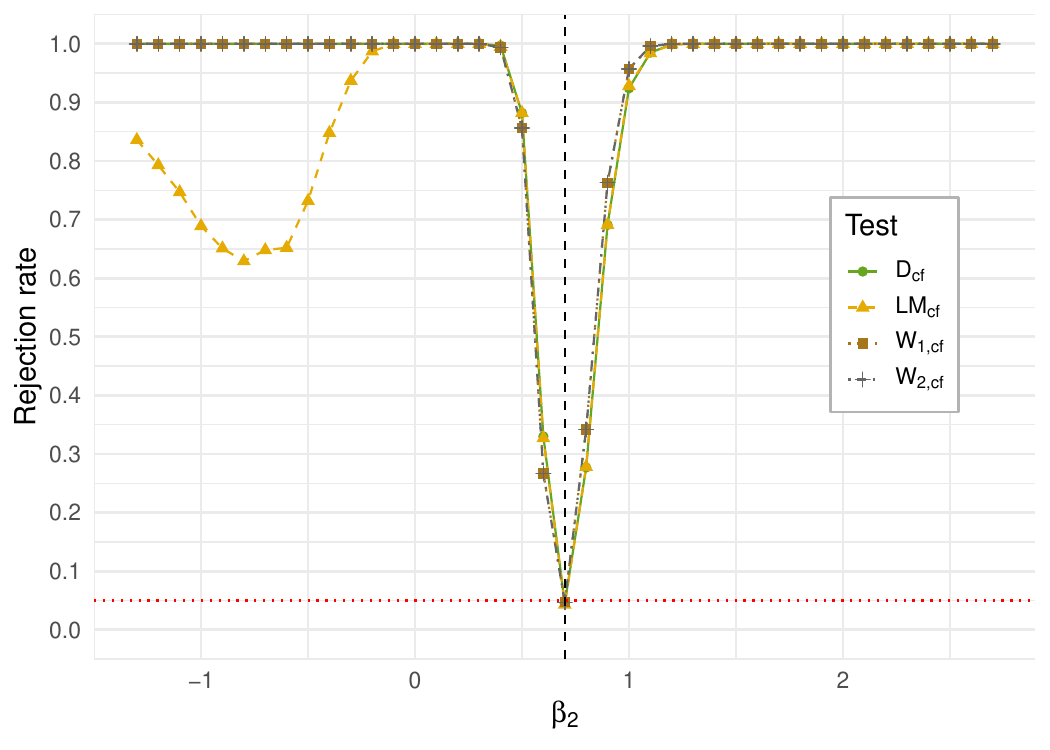}
        \caption{SJIVE}
    \end{subfigure}
    \hfill
    \begin{subfigure}[b]{0.47\textwidth}
        \includegraphics[width=\textwidth]{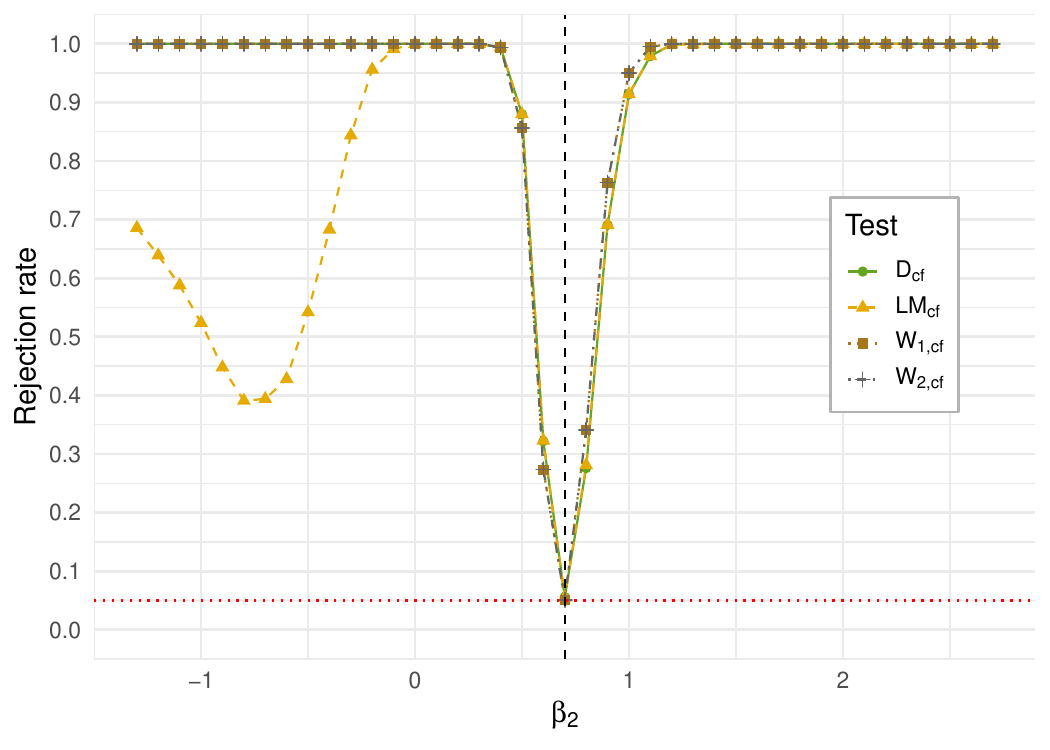}
        \caption{HLIM}
    \end{subfigure}
    
    \begin{subfigure}[b]{0.47\textwidth}
        \includegraphics[width=\textwidth]{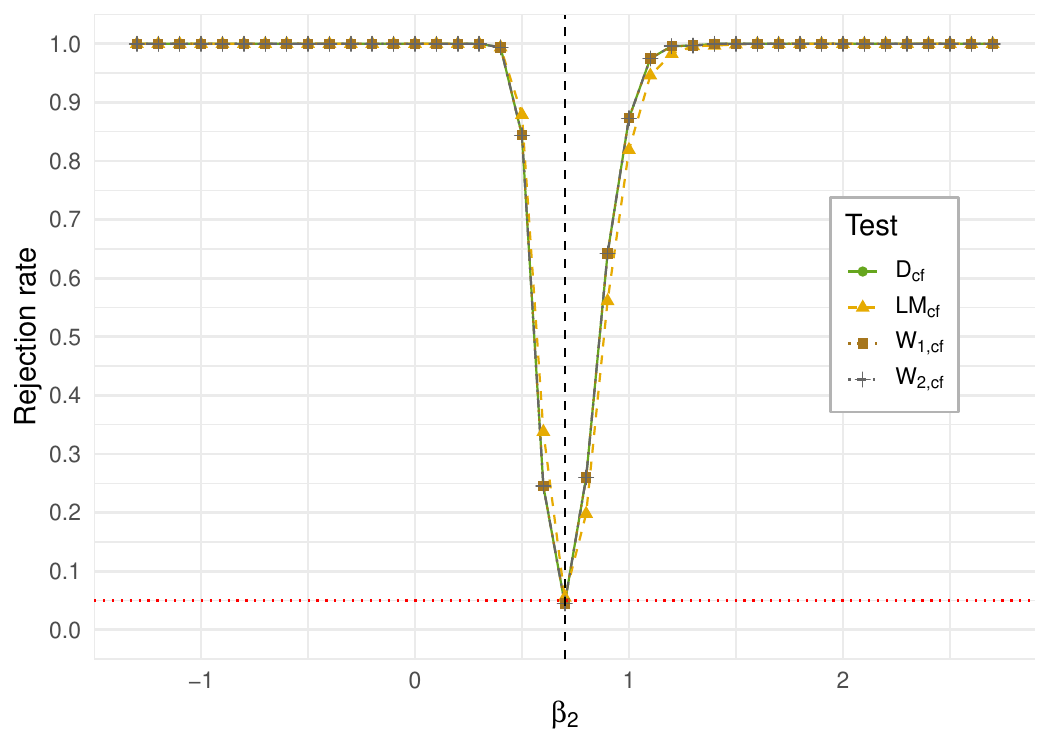}
        \caption{JIVE1}
    \end{subfigure}
    \hfill
    \begin{subfigure}[b]{0.47\textwidth}
        \includegraphics[width=\textwidth]{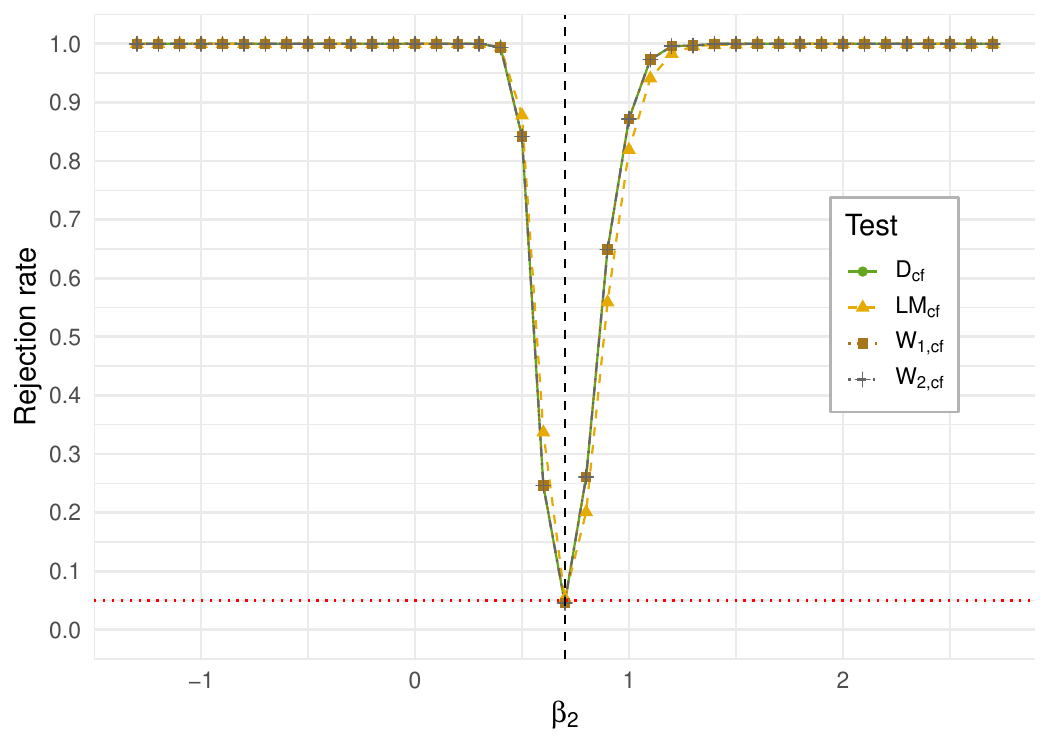}
        \caption{JIVE2}
    \end{subfigure}
    
    \caption{Power curves for DGP2 ($n=200$, $\alpha = 0.05$, $r = 0.2$). Trinity of test statistics distributed as a $\bar\chi^2$ with cross-fit variance. Results based on 1000 repetitions. The horizontal dotted red line denotes the $5\%$ nominal rejection level, while the vertical dotted black line corresponds to $\beta=1$. Panel (a) plots statistics based on the SJIVE objective function; panel (b) plots statistics based on the HLIM objective function; panel (c) plots statistics based on the JIVE1 objective function; panel (d) plots statistics based on the JIVE2 objective function.}
    \label{fig:Trinity_chibar2_cf_DGP2_200_0.05_0.2}
\end{figure}

\begin{figure}[ht]
    \centering
    % Replace 'chibar2' with 'chi2' for the second set of figures as needed
    \begin{subfigure}[b]{0.47\textwidth}
        \includegraphics[width=\textwidth]{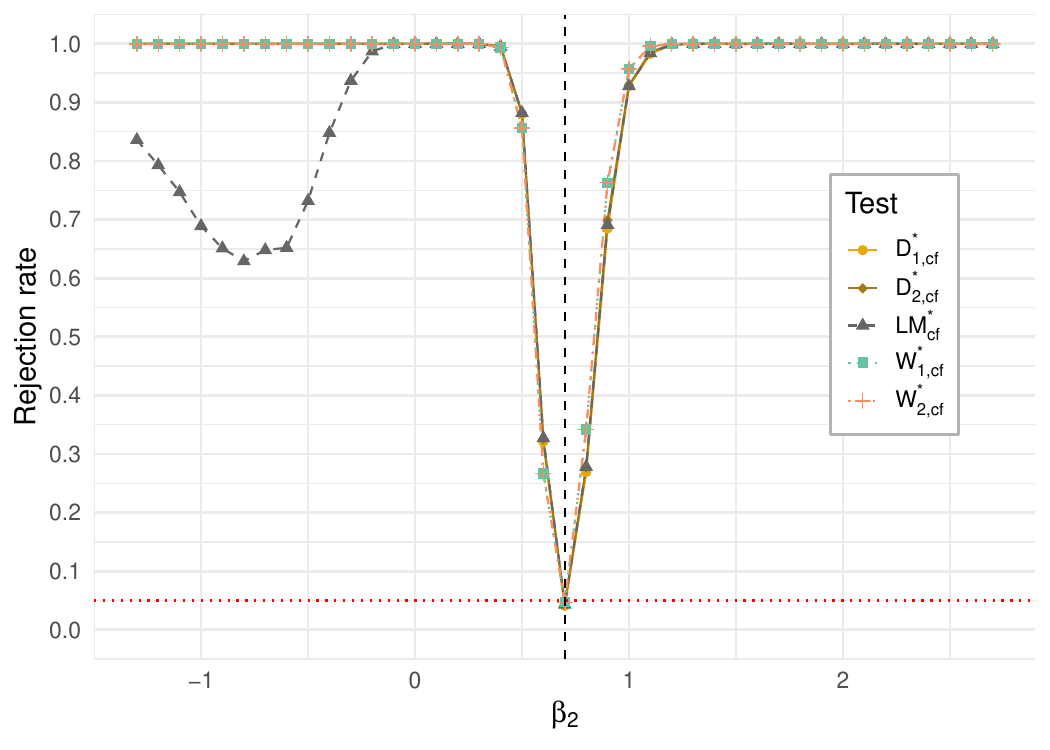}
        \caption{SJIVE}
    \end{subfigure}
    \hfill
    \begin{subfigure}[b]{0.47\textwidth}
        \includegraphics[width=\textwidth]{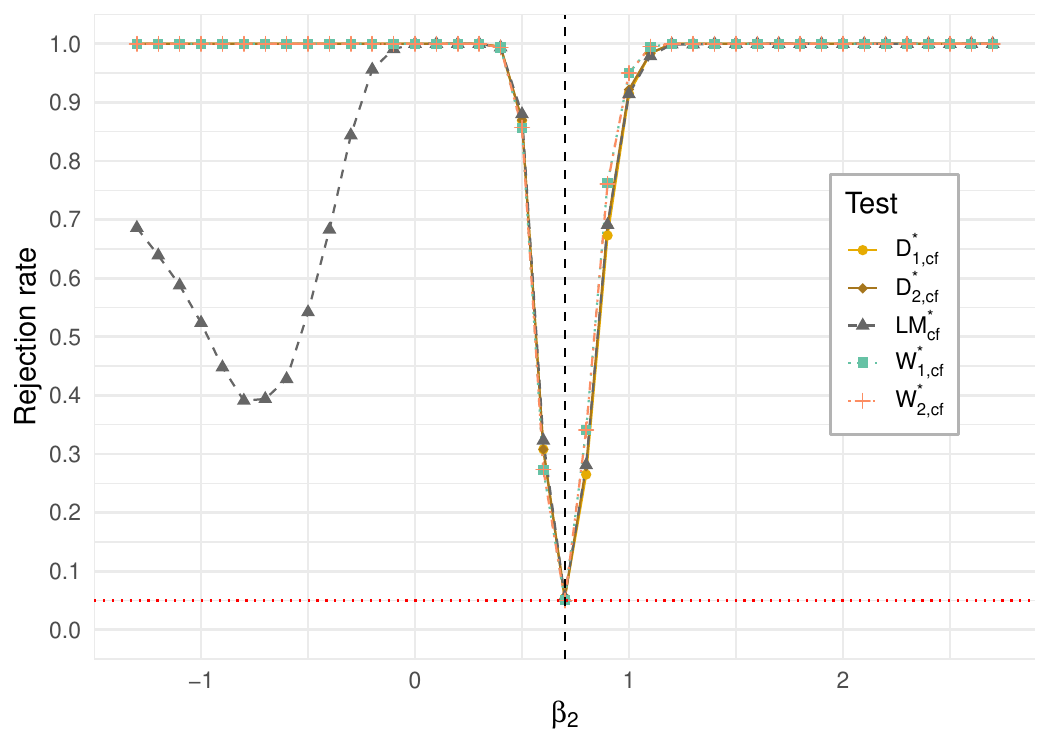}
        \caption{HLIM}
    \end{subfigure}
    
    \begin{subfigure}[b]{0.47\textwidth}
        \includegraphics[width=\textwidth]{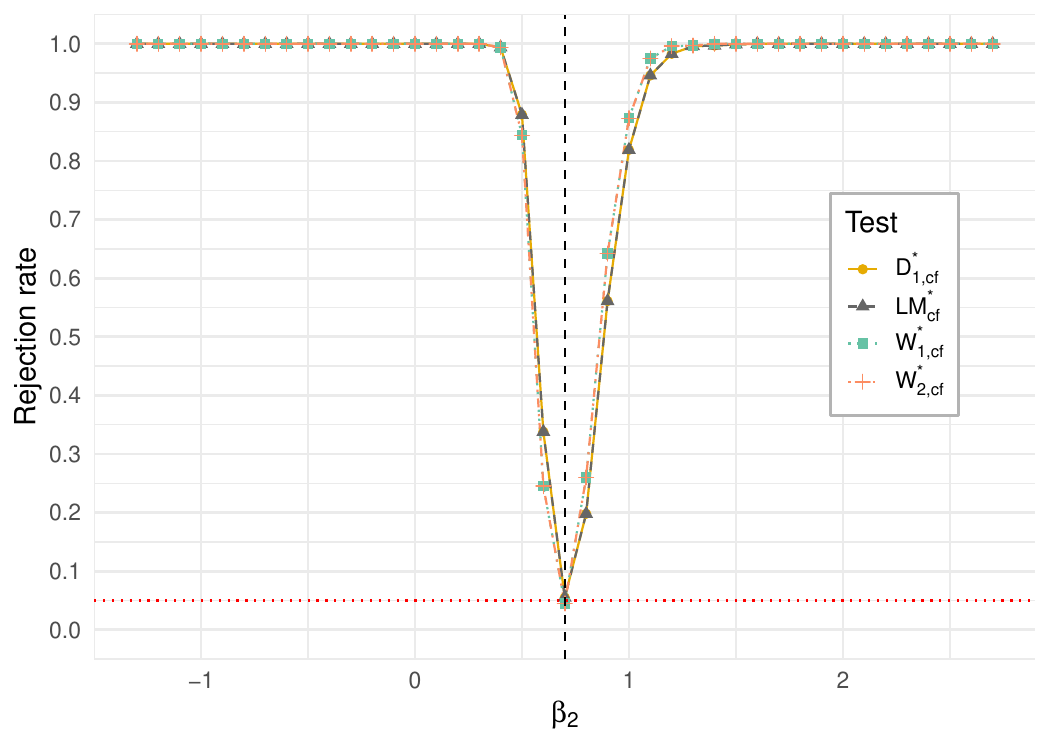}
        \caption{JIVE1}
    \end{subfigure}
    \hfill
    \begin{subfigure}[b]{0.47\textwidth}
        \includegraphics[width=\textwidth]{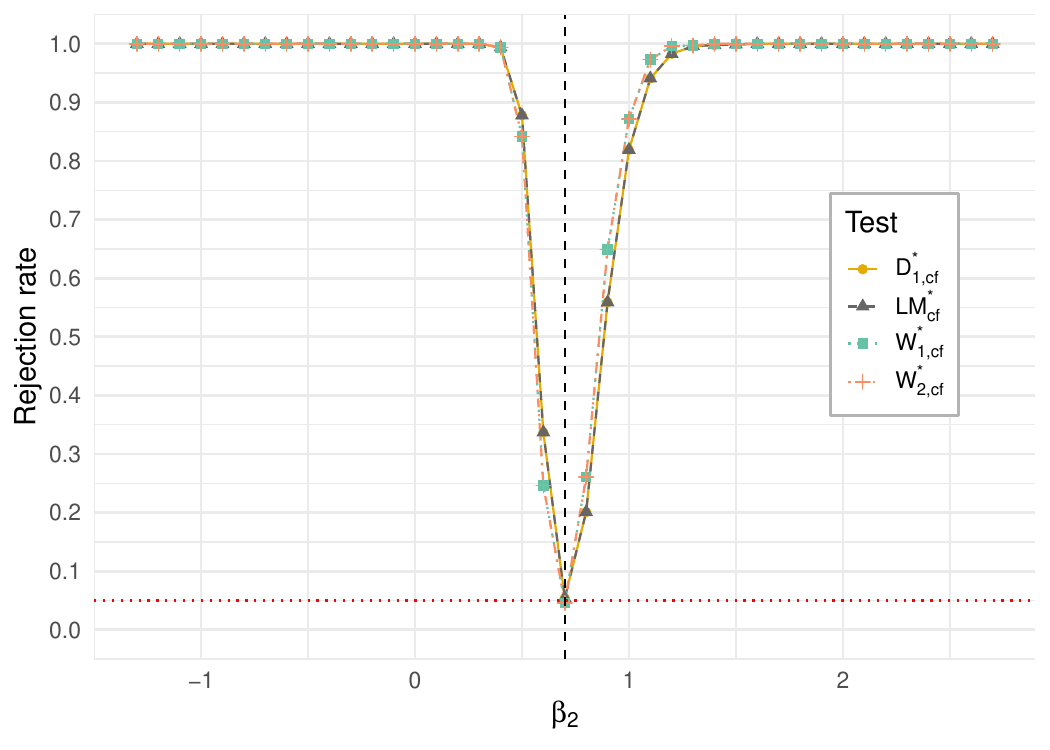}
        \caption{JIVE2}
    \end{subfigure}
    
    \caption{Power curves for DGP2 ($n=200$, $\alpha = 0.05$, $r = 0.2$). Trinity of test statistics distributed as a $\chi^2$ with cross-fit variance. Results based on 1000 repetitions. The horizontal dotted red line denotes the $5\%$ nominal rejection level, while the vertical dotted black line corresponds to $\beta=1$. Panel (a) plots statistics based on the SJIVE objective function; panel (b) plots statistics based on the HLIM objective function; panel (c) plots statistics based on the JIVE1 objective function; panel (d) plots statistics based on the JIVE2 objective function.}
    \label{fig:Trinity_chi2_cf_DGP2_200_0.05_0.2}
\end{figure}

%%%%%%%%%%%%%%%%%%%%%%%%%
% Trinity_200_0.1_0.1 

\begin{figure}[ht]
    \centering
    % Replace 'chibar2' with 'chi2' for the second set of figures as needed
    \begin{subfigure}[b]{0.47\textwidth}
        \includegraphics[width=\textwidth]{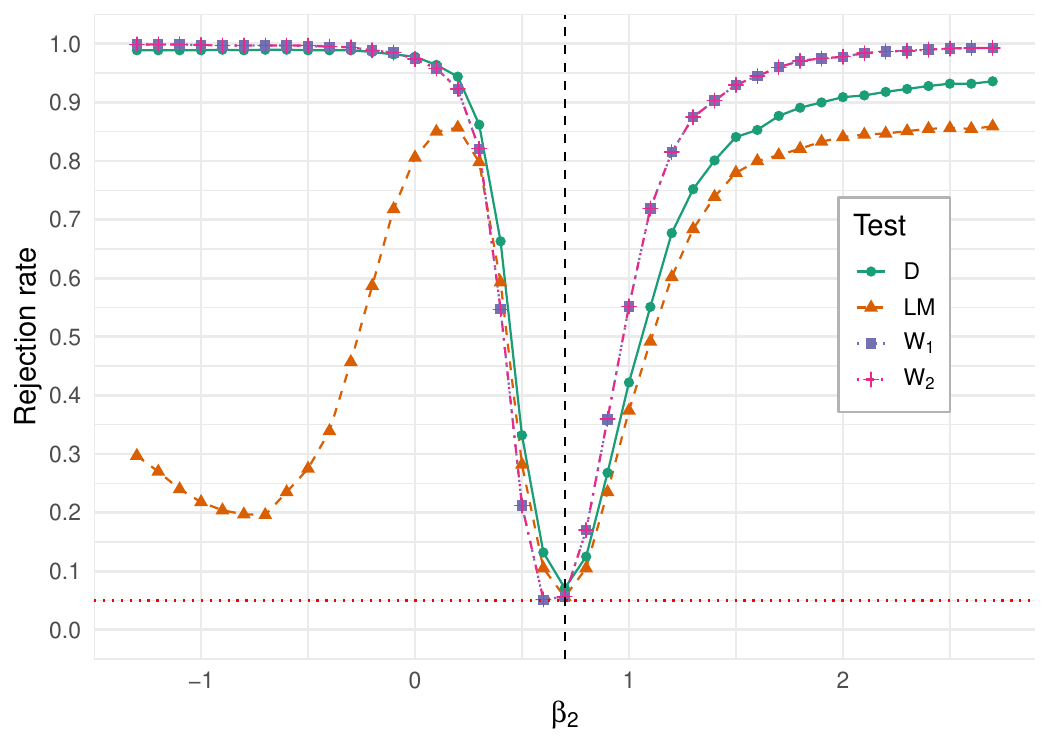}
        \caption{SJIVE}
    \end{subfigure}
    \hfill
    \begin{subfigure}[b]{0.47\textwidth}
        \includegraphics[width=\textwidth]{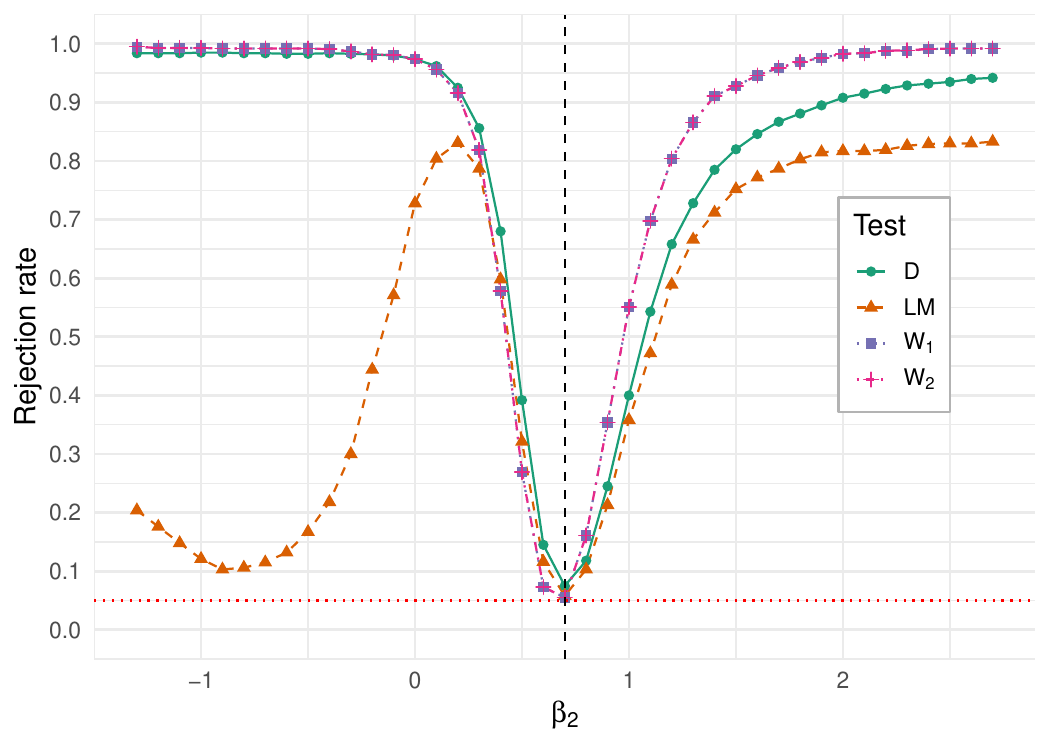}
        \caption{HLIM}
    \end{subfigure}
    
    \begin{subfigure}[b]{0.47\textwidth}
        \includegraphics[width=\textwidth]{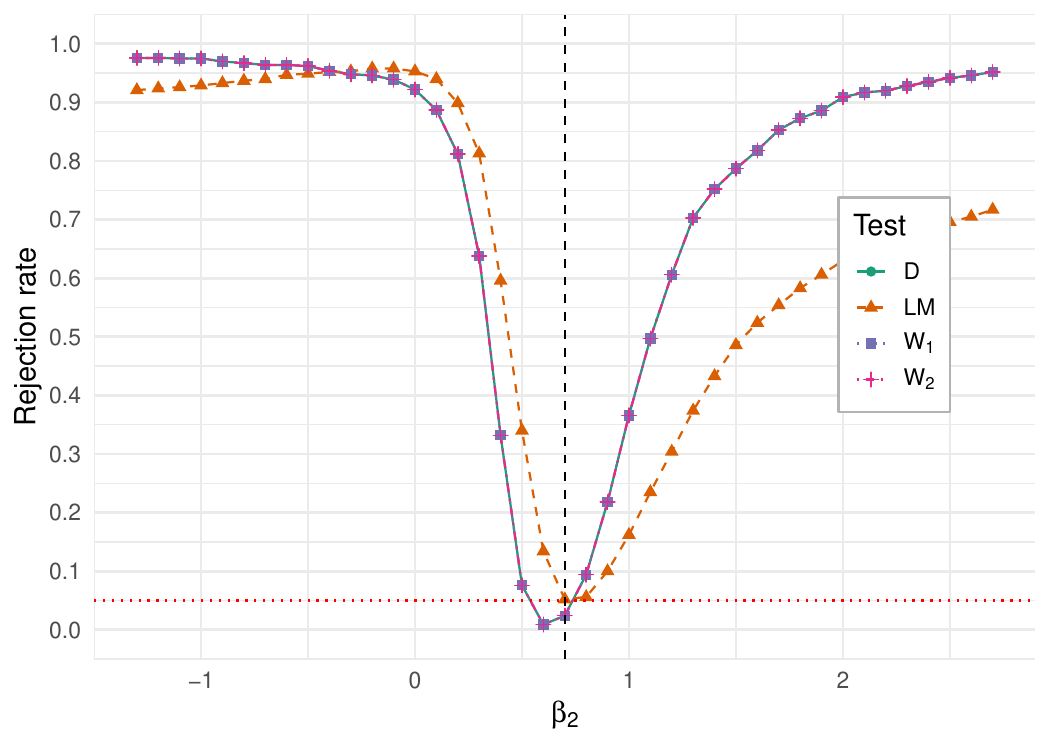}
        \caption{JIVE1}
    \end{subfigure}
    \hfill
    \begin{subfigure}[b]{0.47\textwidth}
        \includegraphics[width=\textwidth]{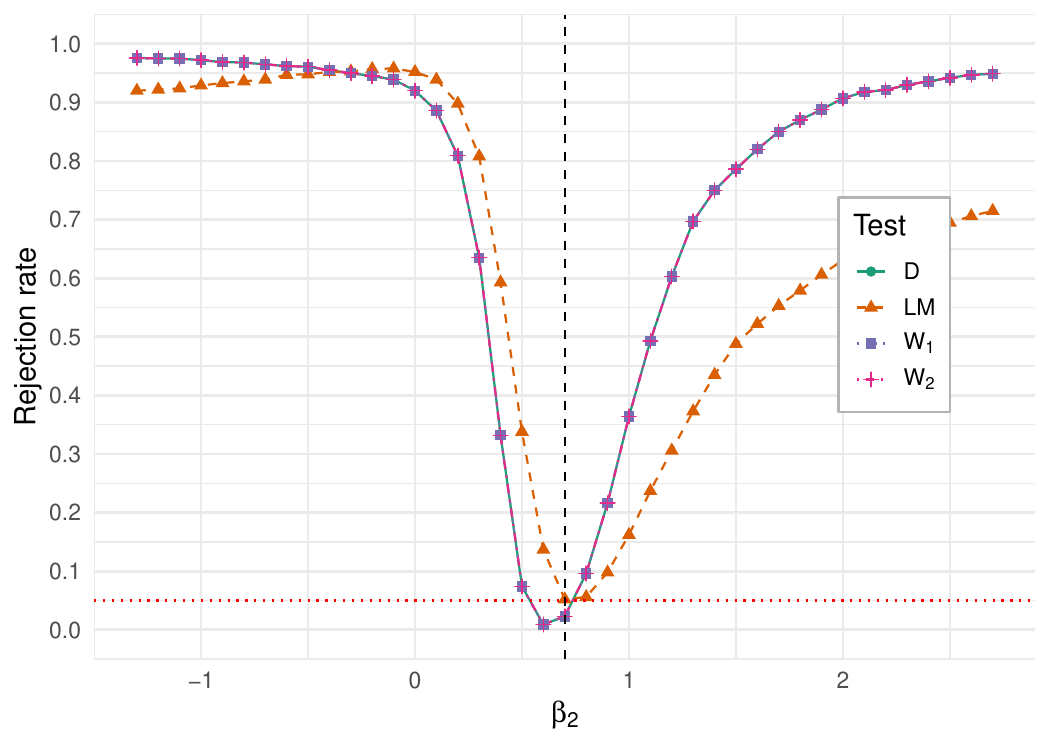}
        \caption{JIVE2}
    \end{subfigure}
    
    \caption{Power curves for DGP2 ($n=200$, $\alpha = 0.1$, $r = 0.1$). Trinity of test statistics distributed as a $\bar\chi^2$. Results based on 1000 repetitions. The horizontal dotted red line denotes the $5\%$ nominal rejection level, while the vertical dotted black line corresponds to $\beta=1$. Panel (a) plots statistics based on the SJIVE objective function; panel (b) plots statistics based on the HLIM objective function; panel (c) plots statistics based on the JIVE1 objective function; panel (d) plots statistics based on the JIVE2 objective function.}
    \label{fig:Trinity_chibar2_DGP2_200_0.1_0.1}
\end{figure}

\begin{figure}[ht]
    \centering
    % Replace 'chibar2' with 'chi2' for the second set of figures as needed
    \begin{subfigure}[b]{0.47\textwidth}
        \includegraphics[width=\textwidth]{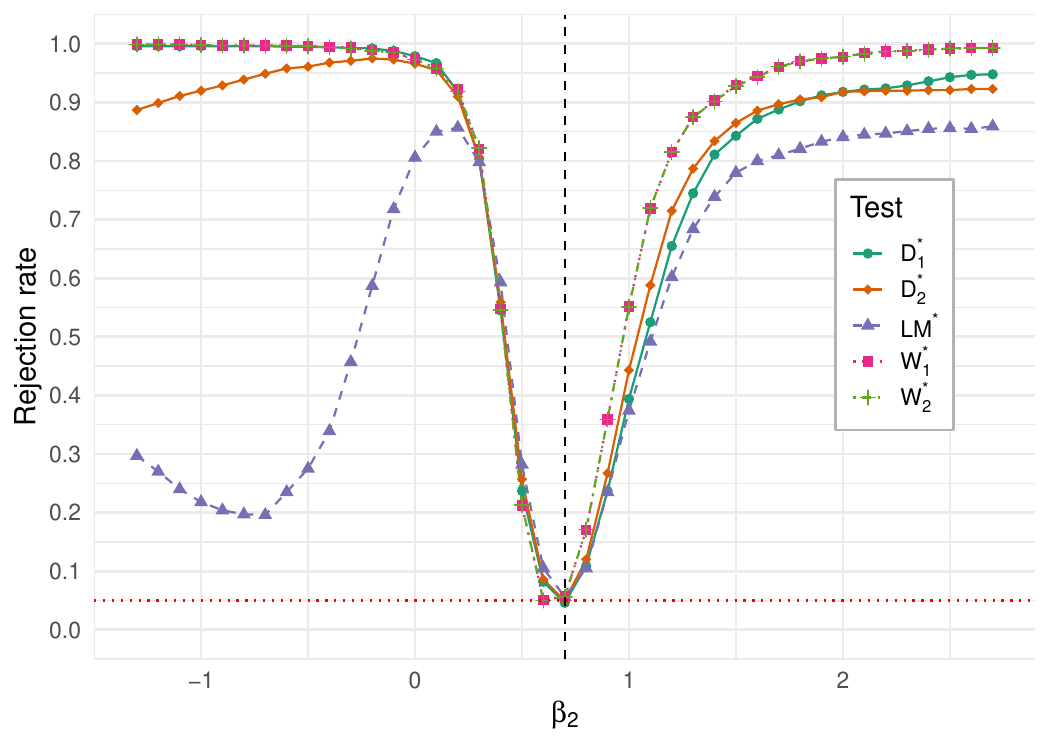}
        \caption{SJIVE}
    \end{subfigure}
    \hfill
    \begin{subfigure}[b]{0.47\textwidth}
        \includegraphics[width=\textwidth]{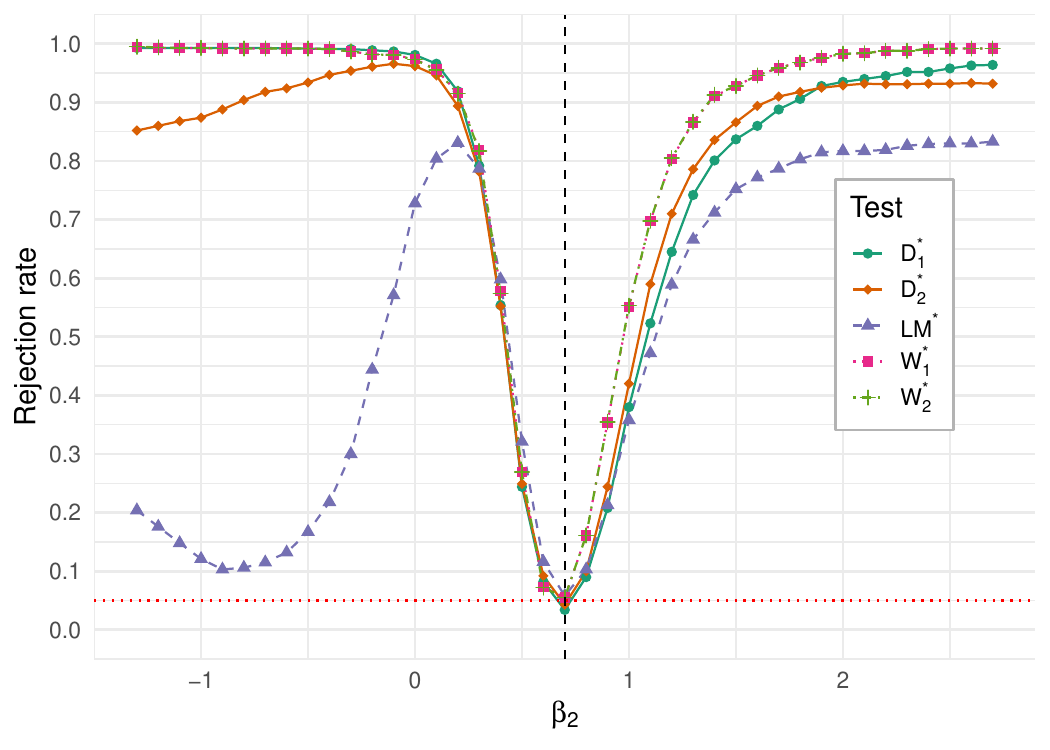}
        \caption{HLIM}
    \end{subfigure}
    
    \begin{subfigure}[b]{0.47\textwidth}
        \includegraphics[width=\textwidth]{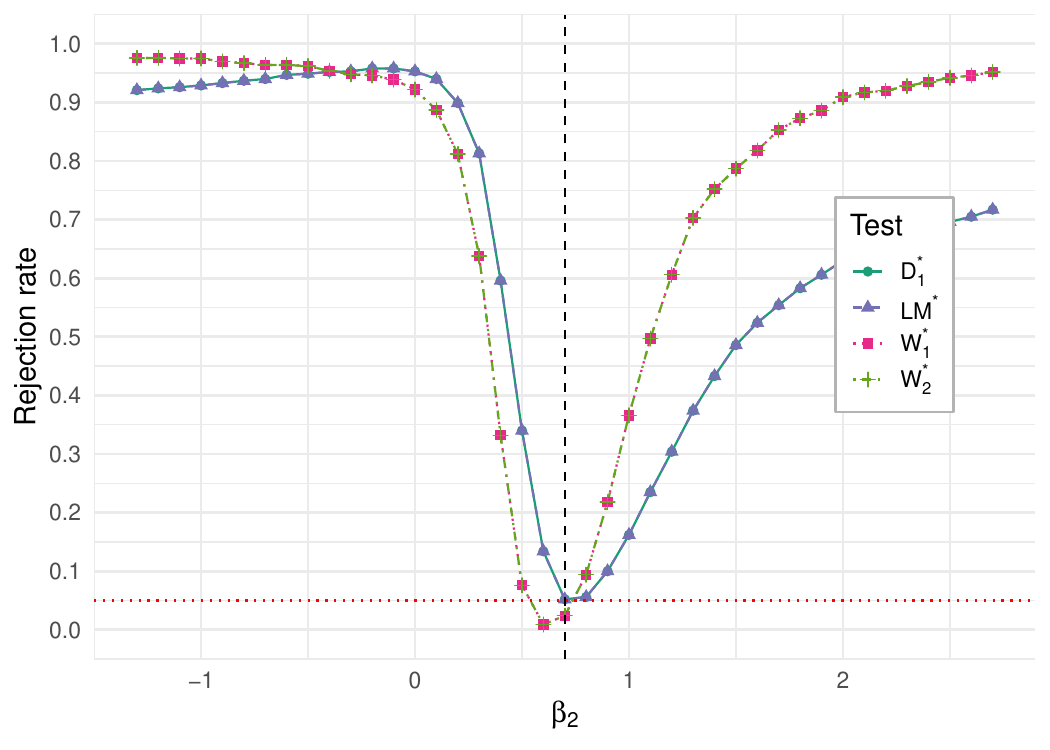}
        \caption{JIVE1}
    \end{subfigure}
    \hfill
    \begin{subfigure}[b]{0.47\textwidth}
        \includegraphics[width=\textwidth]{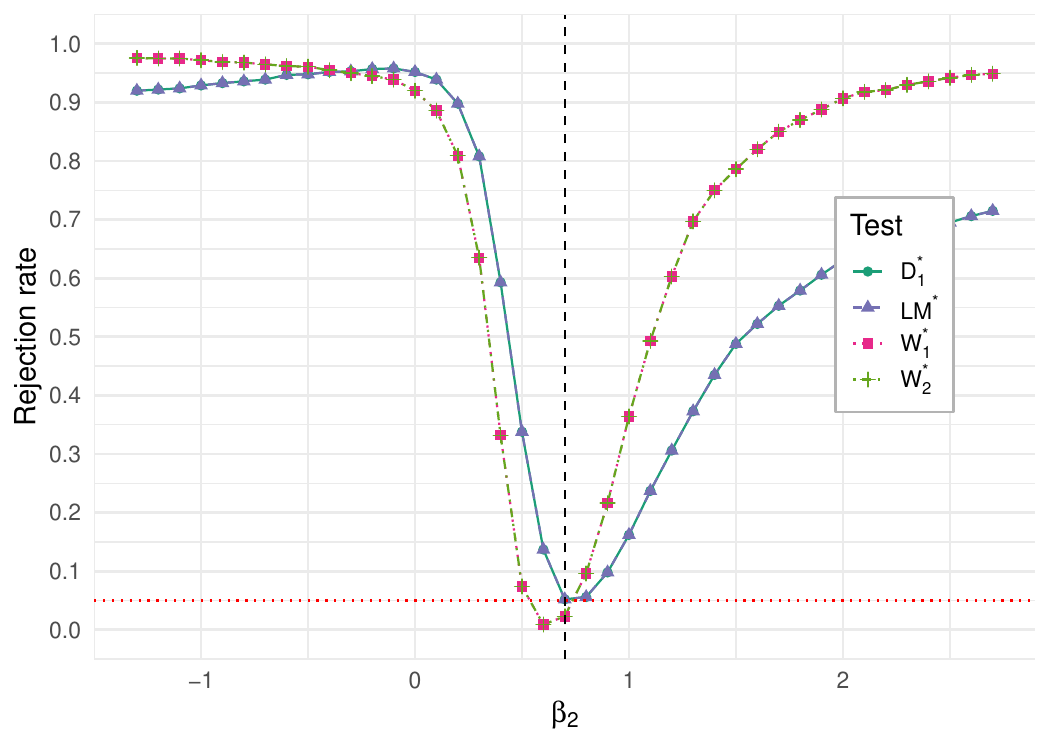}
        \caption{JIVE2}
    \end{subfigure}
    
    \caption{Power curves for DGP2 ($n=200$, $\alpha = 0.1$, $r = 0.1$). Trinity of test statistics distributed as a $\chi^2$. Results based on 1000 repetitions. The horizontal dotted red line denotes the $5\%$ nominal rejection level, while the vertical dotted black line corresponds to $\beta=1$. Panel (a) plots statistics based on the SJIVE objective function; panel (b) plots statistics based on the HLIM objective function; panel (c) plots statistics based on the JIVE1 objective function; panel (d) plots statistics based on the JIVE2 objective function.}
    \label{fig:Trinity_chi2_DGP2_200_0.1_0.1}
\end{figure}

% cf variance

\begin{figure}[ht]
    \centering
    % Replace 'chibar2' with 'chi2' for the second set of figures as needed
    \begin{subfigure}[b]{0.47\textwidth}
        \includegraphics[width=\textwidth]{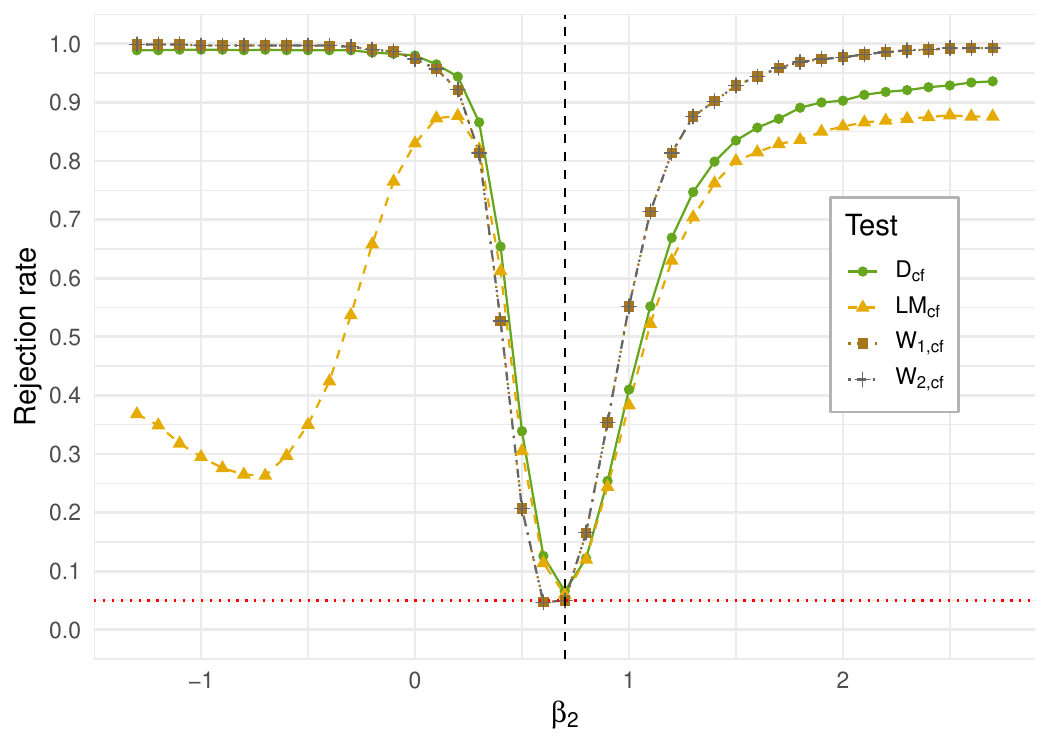}
        \caption{SJIVE}
    \end{subfigure}
    \hfill
    \begin{subfigure}[b]{0.47\textwidth}
        \includegraphics[width=\textwidth]{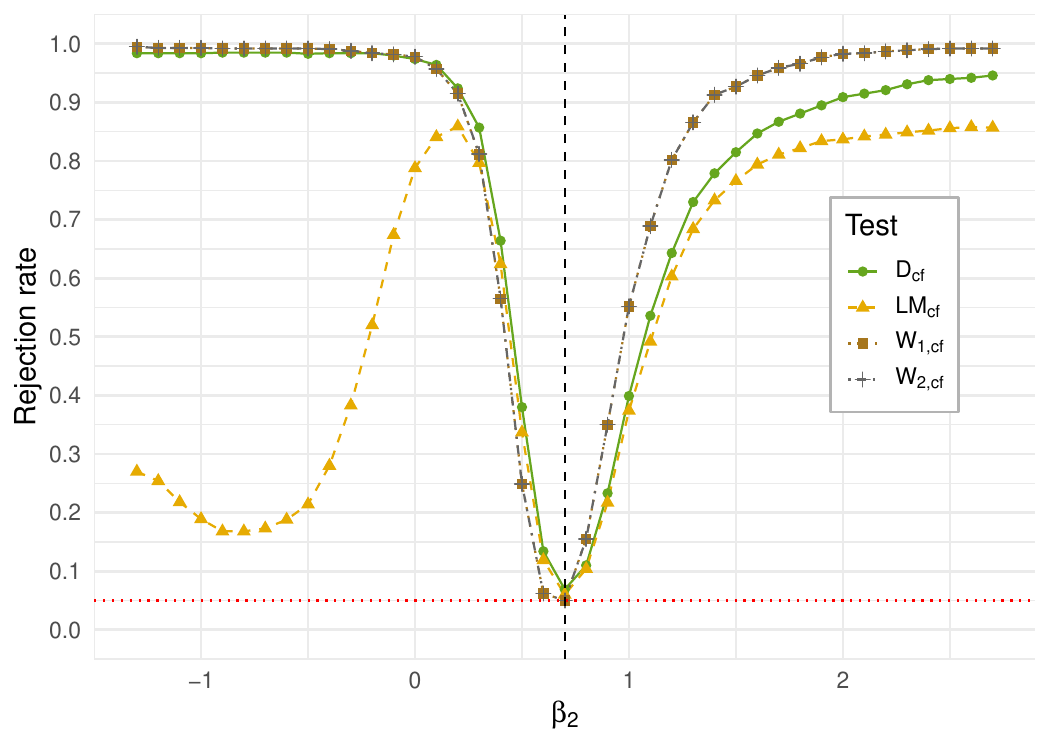}
        \caption{HLIM}
    \end{subfigure}
    
    \begin{subfigure}[b]{0.47\textwidth}
       \includegraphics[width=\textwidth]{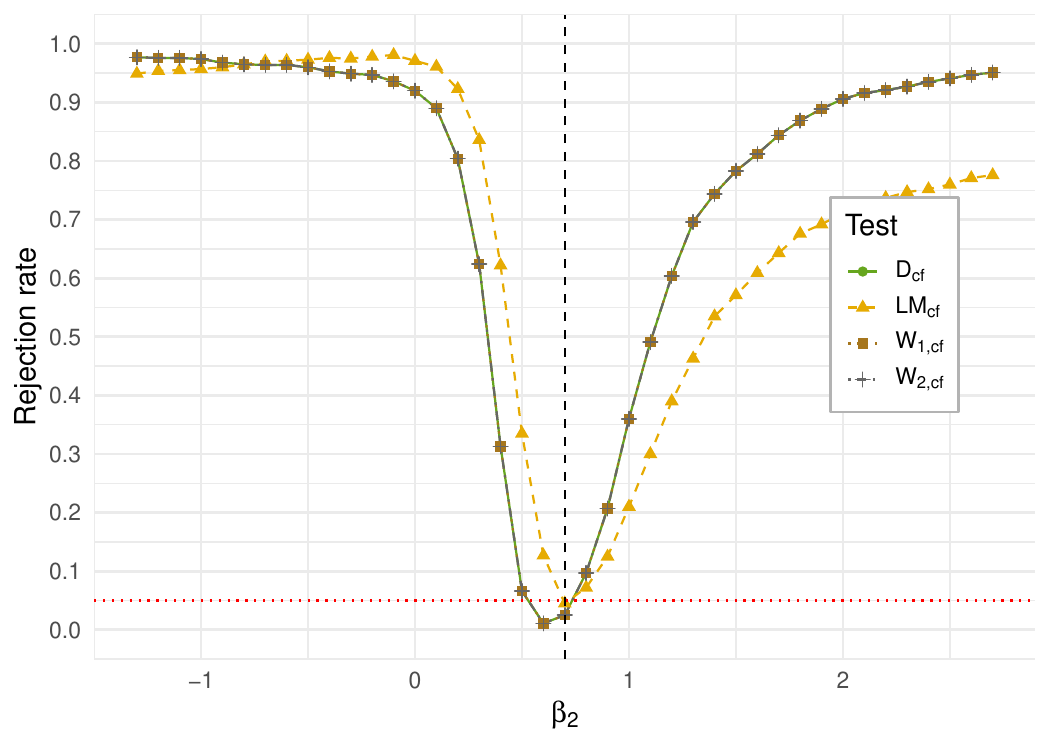}
        \caption{JIVE1}
    \end{subfigure}
    \hfill
    \begin{subfigure}[b]{0.47\textwidth}
        \includegraphics[width=\textwidth]{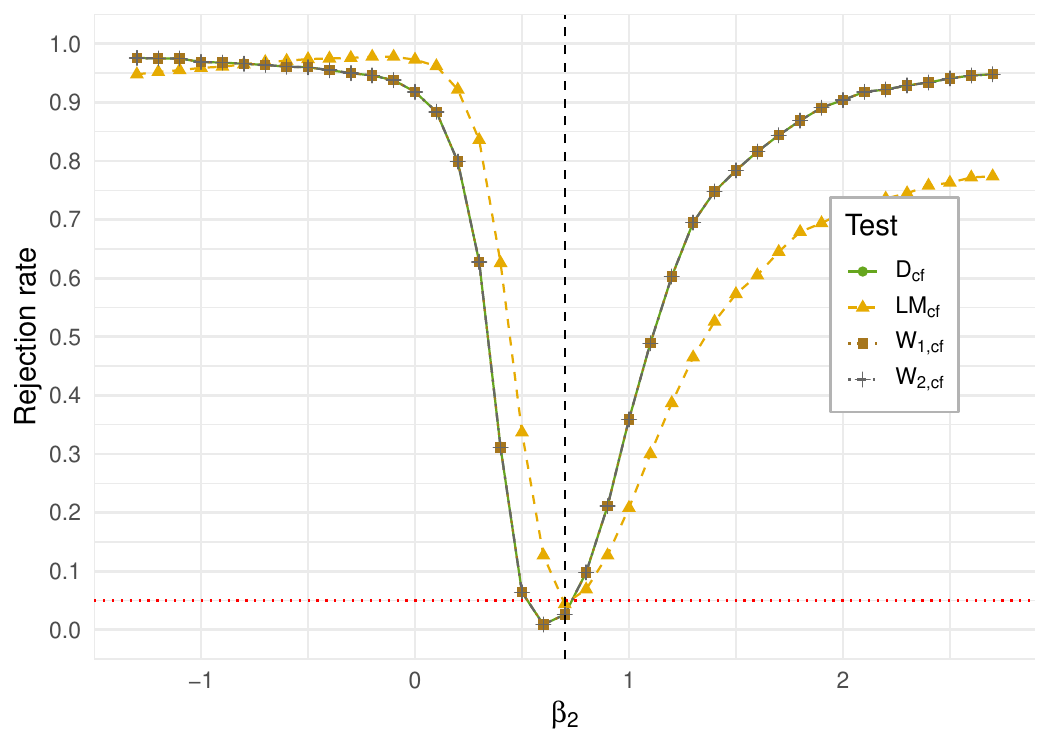}
        \caption{JIVE2}
    \end{subfigure}
    
    \caption{Power curves for DGP2 ($n=200$, $\alpha = 0.1$, $r = 0.1$). Trinity of test statistics distributed as a $\bar\chi^2$ with cross-fit variance. Results based on 1000 repetitions. The horizontal dotted red line denotes the $5\%$ nominal rejection level, while the vertical dotted black line corresponds to $\beta=1$. Panel (a) plots statistics based on the SJIVE objective function; panel (b) plots statistics based on the HLIM objective function; panel (c) plots statistics based on the JIVE1 objective function; panel (d) plots statistics based on the JIVE2 objective function.}
    \label{fig:Trinity_chibar2_cf_DGP2_200_0.1_0.1}
\end{figure}

\begin{figure}[ht]
    \centering
    % Replace 'chibar2' with 'chi2' for the second set of figures as needed
    \begin{subfigure}[b]{0.47\textwidth}
        \includegraphics[width=\textwidth]{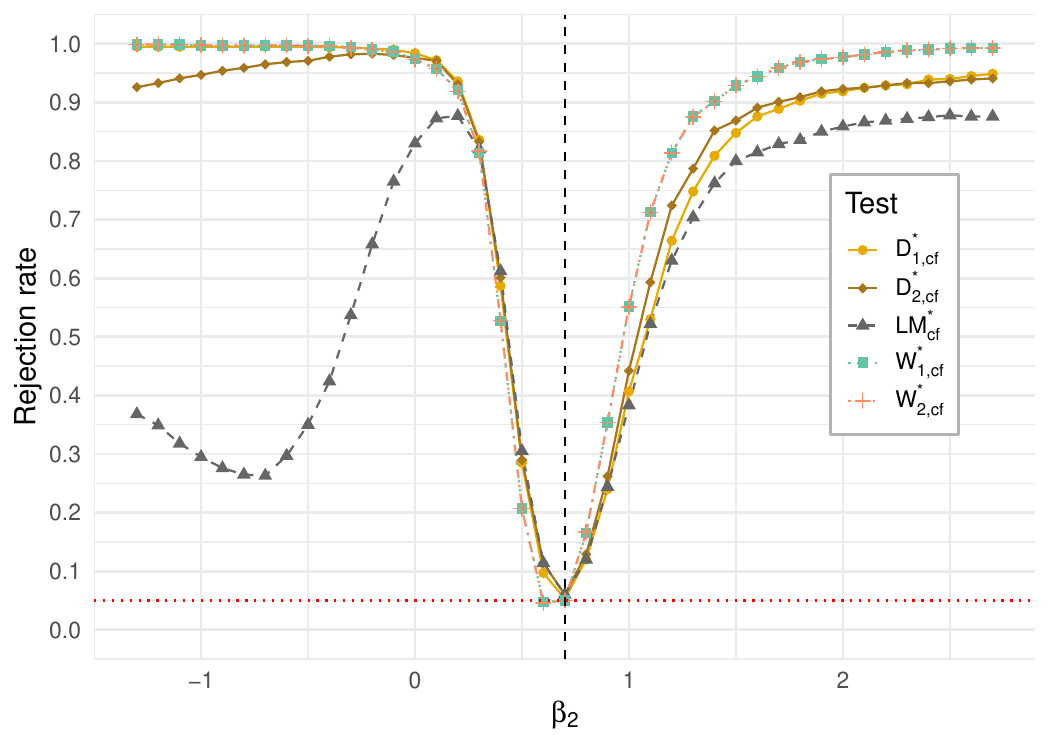}
        \caption{SJIVE}
    \end{subfigure}
    \hfill
    \begin{subfigure}[b]{0.47\textwidth}
        \includegraphics[width=\textwidth]{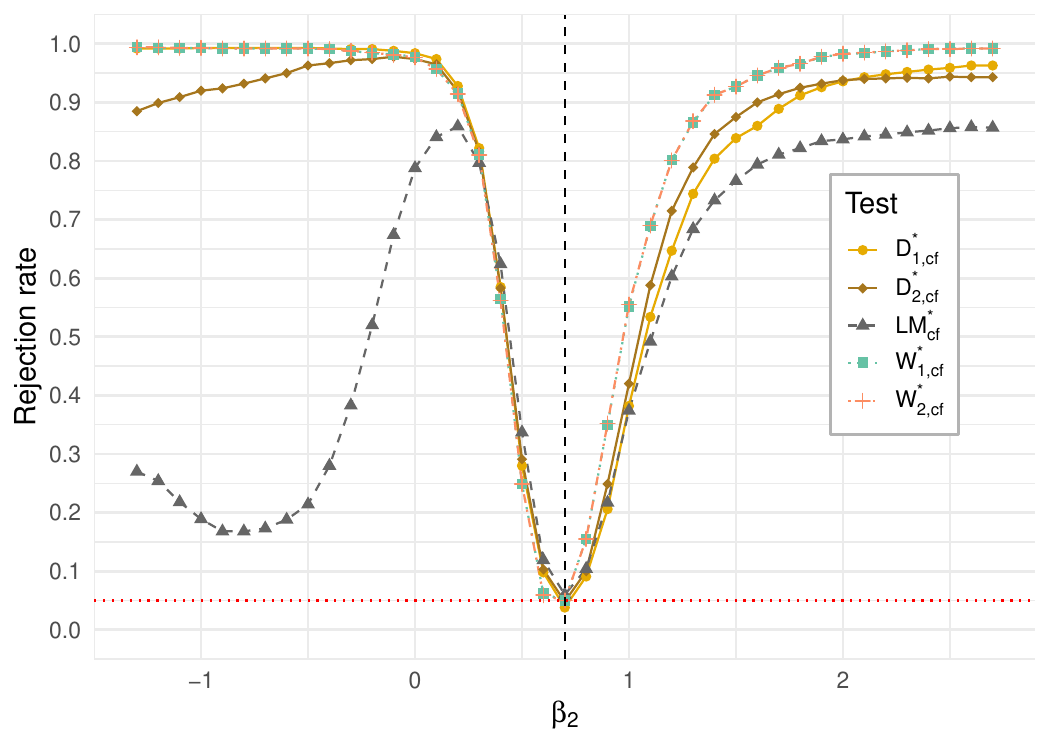}
        \caption{HLIM}
    \end{subfigure}
    
    \begin{subfigure}[b]{0.47\textwidth}
        \includegraphics[width=\textwidth]{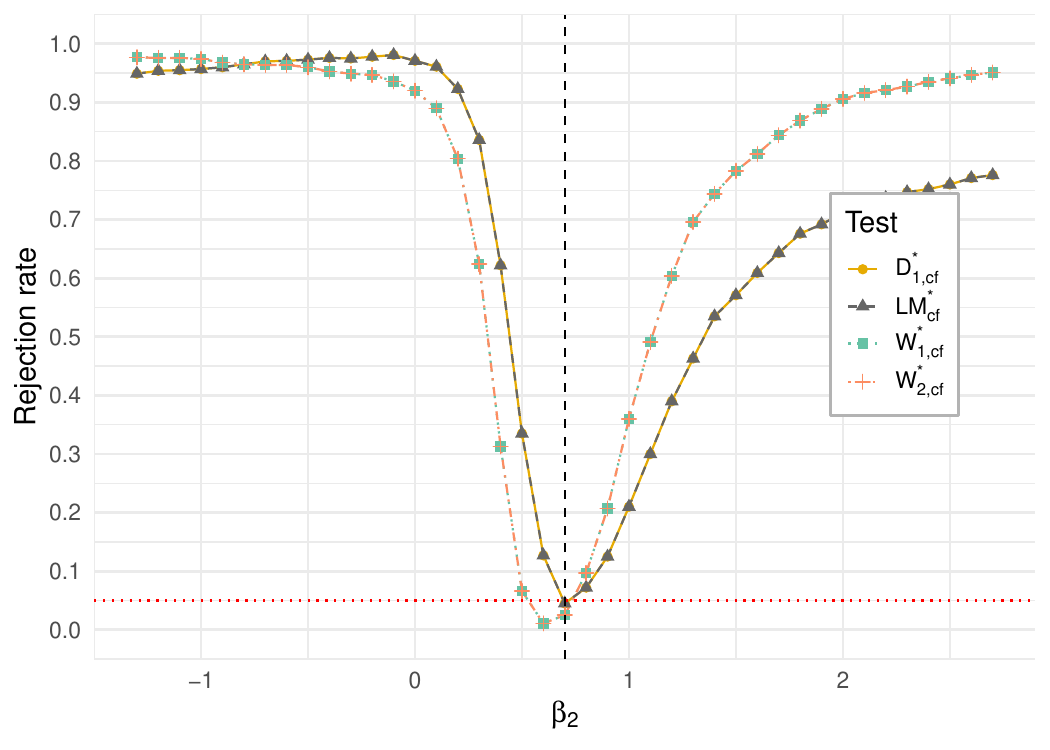}
        \caption{JIVE1}
    \end{subfigure}
    \hfill
    \begin{subfigure}[b]{0.47\textwidth}
        \includegraphics[width=\textwidth]{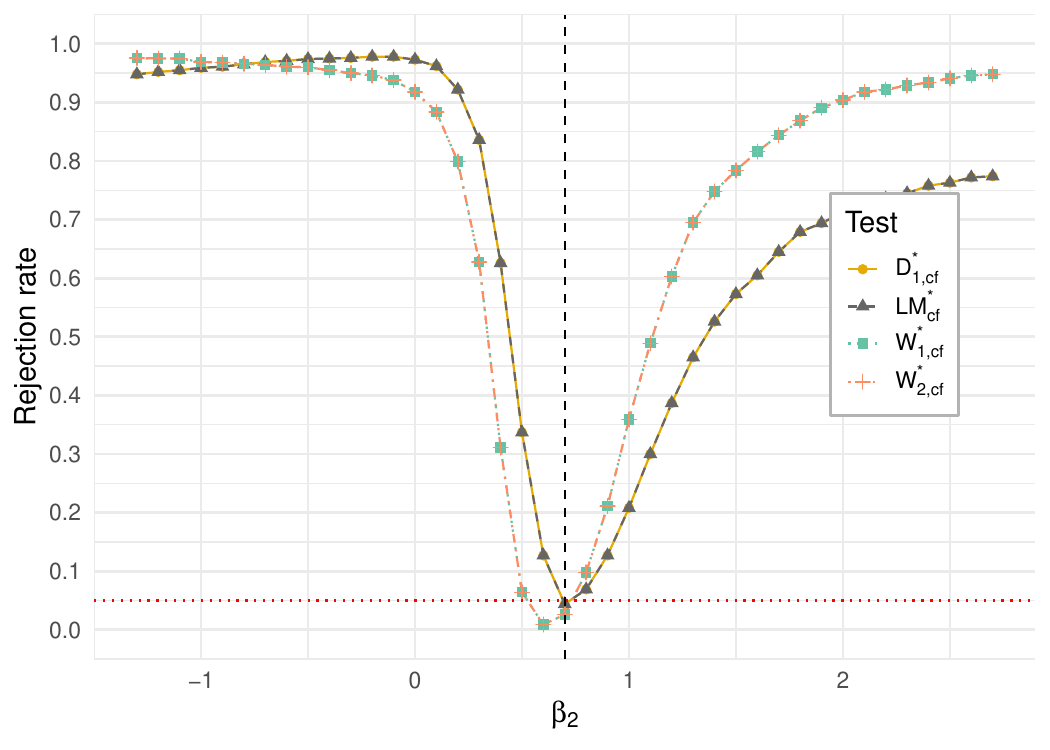}
        \caption{JIVE2}
    \end{subfigure}
    
    \caption{Power curves for DGP2 ($n=200$, $\alpha = 0.1$, $r = 0.1$). Trinity of test statistics distributed as a $\chi^2$ with cross-fit variance. Results based on 1000 repetitions. The horizontal dotted red line denotes the $5\%$ nominal rejection level, while the vertical dotted black line corresponds to $\beta=1$. Panel (a) plots statistics based on the SJIVE objective function; panel (b) plots statistics based on the HLIM objective function; panel (c) plots statistics based on the JIVE1 objective function; panel (d) plots statistics based on the JIVE2 objective function.}
    \label{fig:Trinity_chi2_cf_DGP2_200_0.1_0.1}
\end{figure}

%%%%%%%%%%%%%%%%%%%%%%%%%
% Trinity_100_0.1_0.2 

\begin{figure}[ht]
    \centering
    % Replace 'chibar2' with 'chi2' for the second set of figures as needed
    \begin{subfigure}[b]{0.47\textwidth}
        \includegraphics[width=\textwidth]{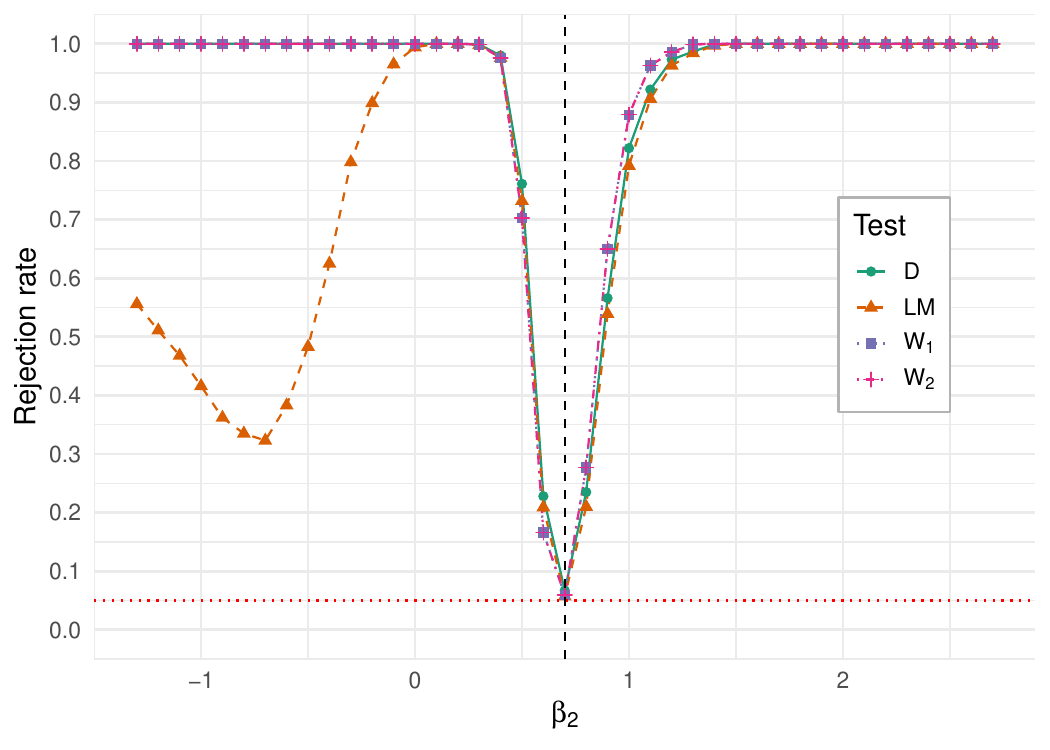}
        \caption{SJIVE}
    \end{subfigure}
    \hfill
    \begin{subfigure}[b]{0.47\textwidth}
        \includegraphics[width=\textwidth]{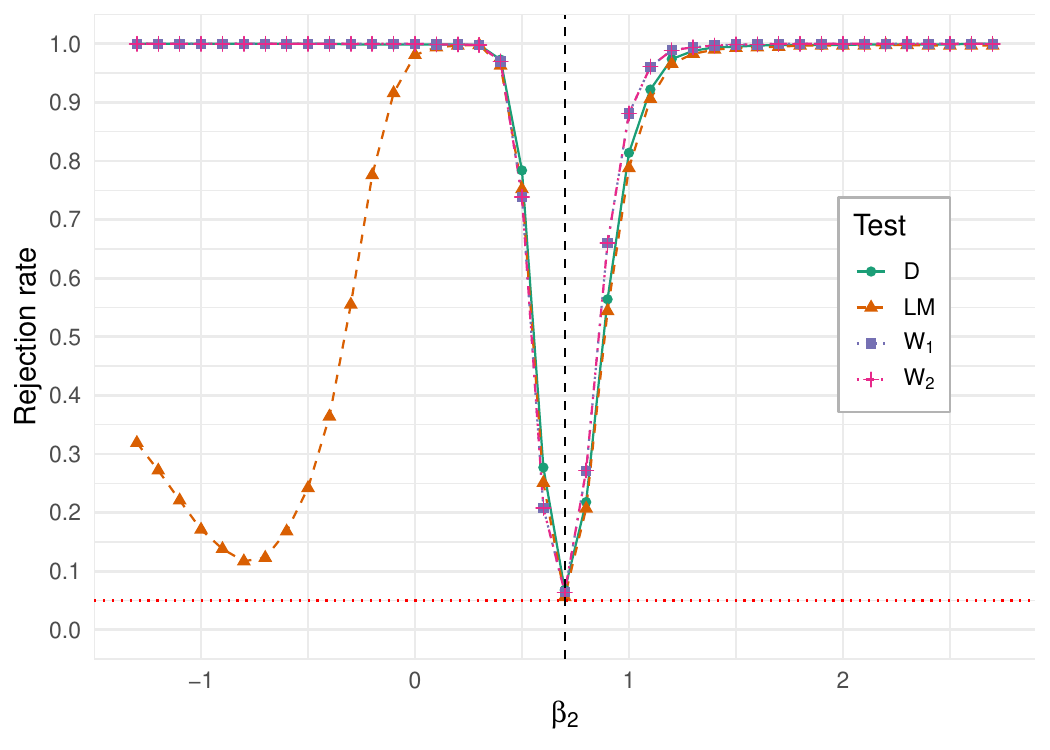}
        \caption{HLIM}
    \end{subfigure}
    
    \begin{subfigure}[b]{0.47\textwidth}
        \includegraphics[width=\textwidth]{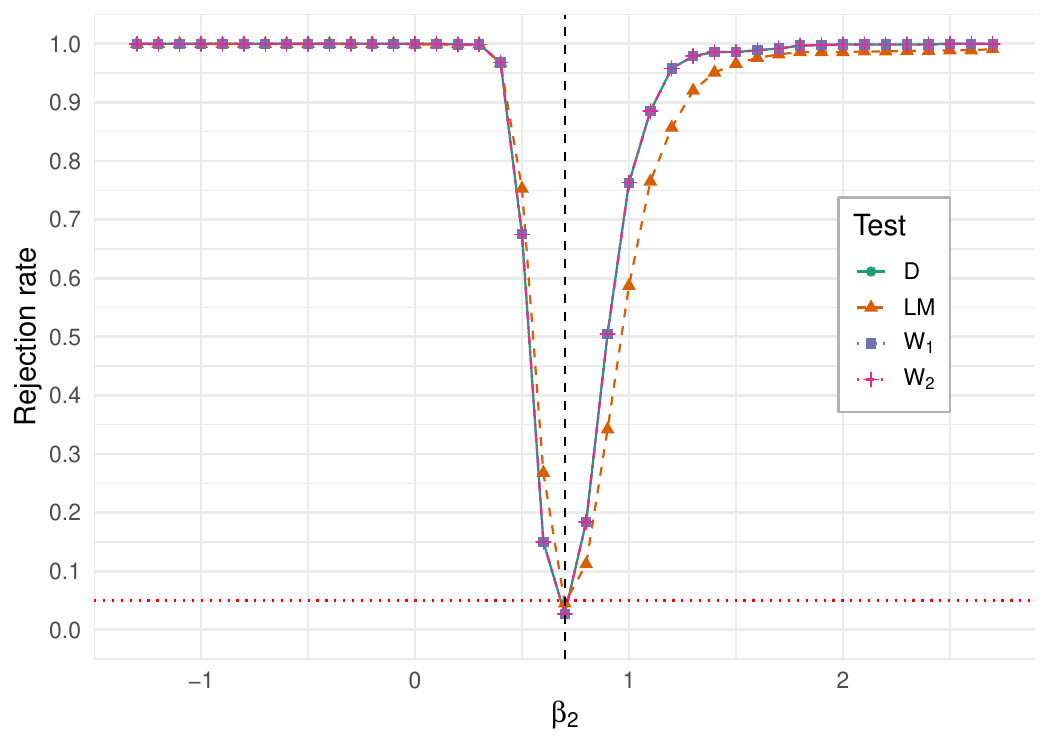}
        \caption{JIVE1}
    \end{subfigure}
    \hfill
    \begin{subfigure}[b]{0.47\textwidth}
        \includegraphics[width=\textwidth]{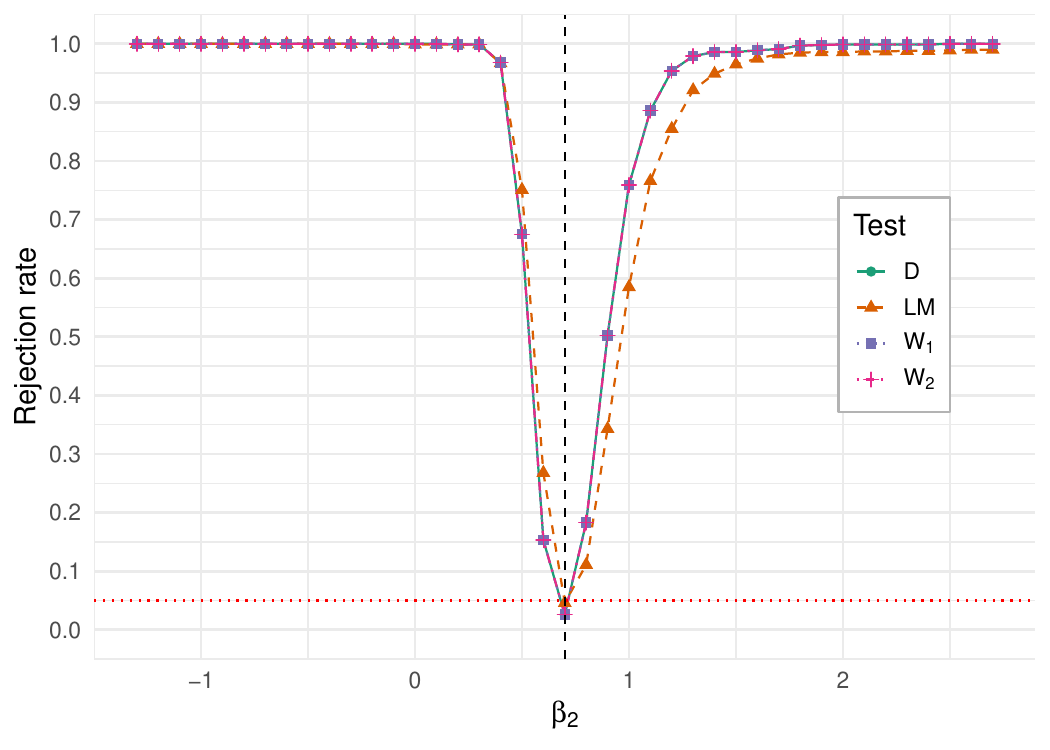}
        \caption{JIVE2}
    \end{subfigure}
    
    \caption{Power curves for DGP2 ($n=200$, $\alpha = 0.1$, $r = 0.2$). Trinity of test statistics distributed as a $\bar\chi^2$. Results based on 1000 repetitions. The horizontal dotted red line denotes the $5\%$ nominal rejection level, while the vertical dotted black line corresponds to $\beta=1$. Panel (a) plots statistics based on the SJIVE objective function; panel (b) plots statistics based on the HLIM objective function; panel (c) plots statistics based on the JIVE1 objective function; panel (d) plots statistics based on the JIVE2 objective function.}
    \label{fig:Trinity_chibar2_DGP2_200_0.1_0.2}
\end{figure}

\begin{figure}[ht]
    \centering
    % Replace 'chibar2' with 'chi2' for the second set of figures as needed
    \begin{subfigure}[b]{0.47\textwidth}
        \includegraphics[width=\textwidth]{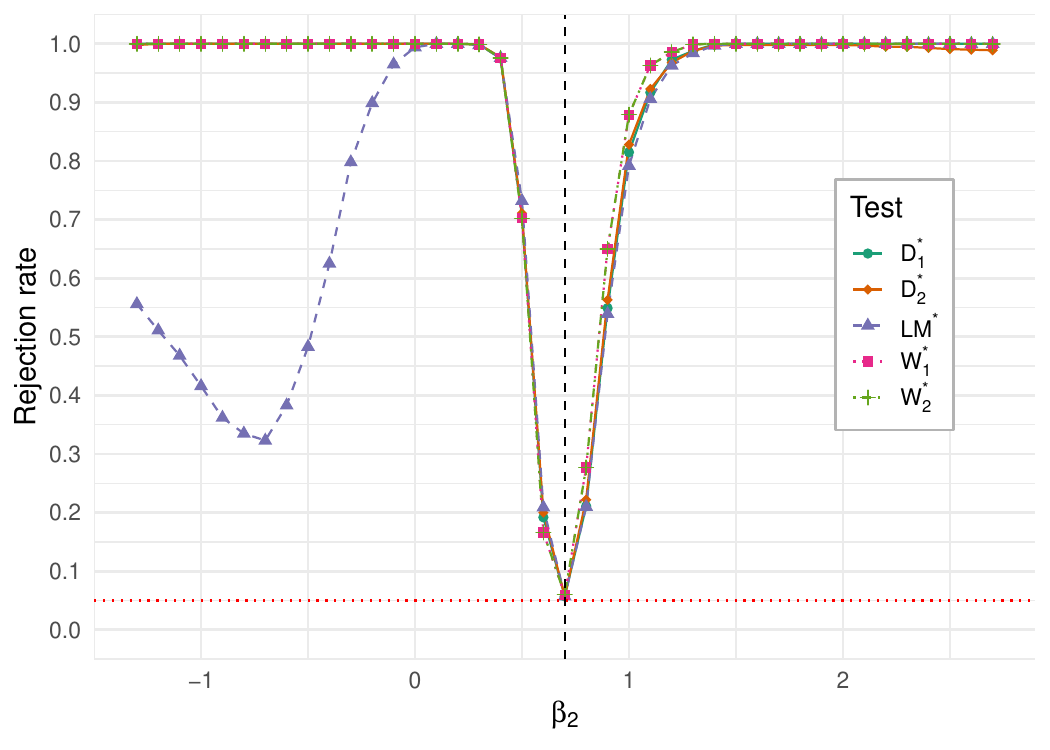}
        \caption{SJIVE}
    \end{subfigure}
    \hfill
    \begin{subfigure}[b]{0.47\textwidth}
        \includegraphics[width=\textwidth]{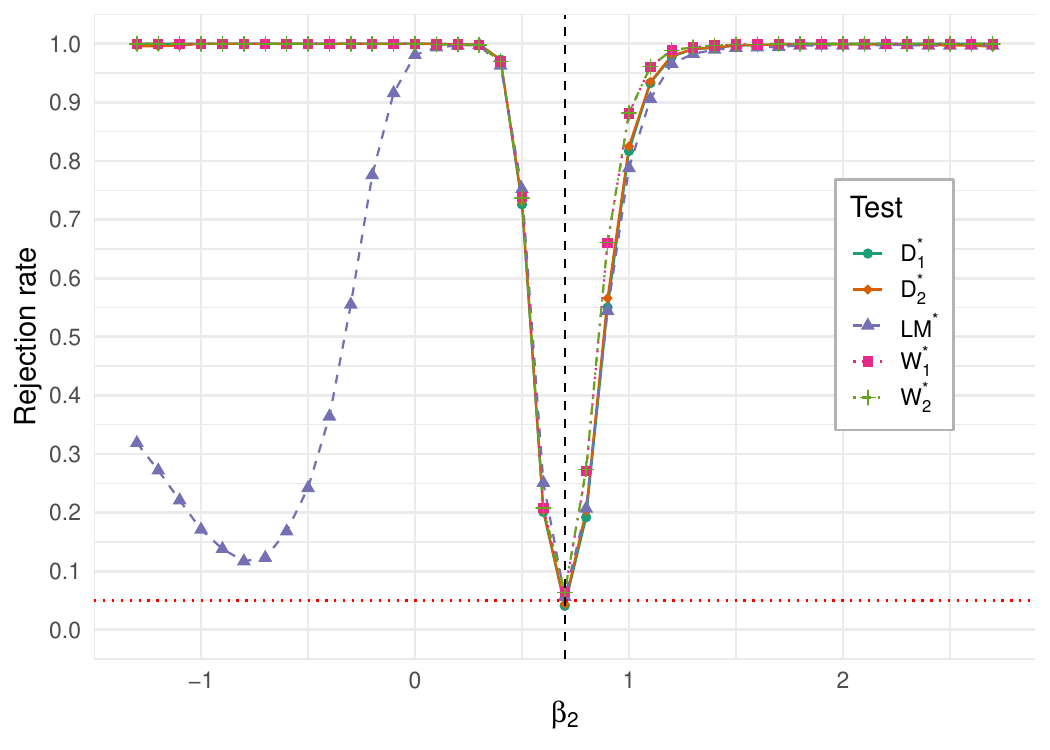}
        \caption{HLIM}
    \end{subfigure}
    
    \begin{subfigure}[b]{0.47\textwidth}
        \includegraphics[width=\textwidth]{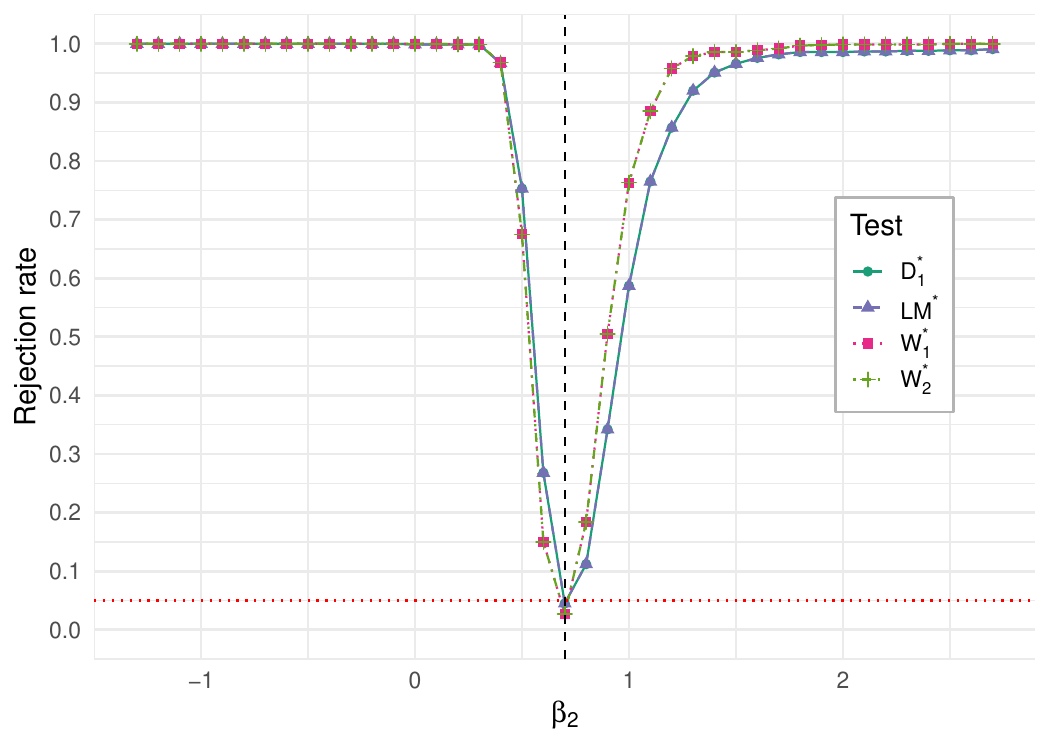}
        \caption{JIVE1}
    \end{subfigure}
    \hfill
    \begin{subfigure}[b]{0.47\textwidth}
        \includegraphics[width=\textwidth]{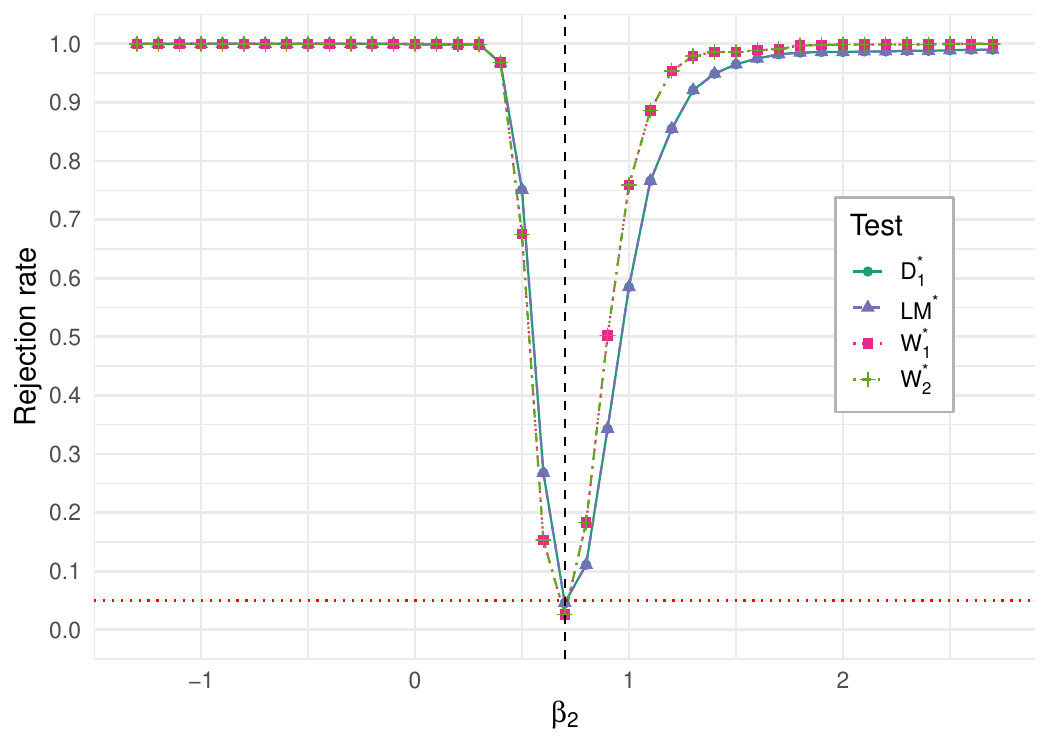}
        \caption{JIVE2}
    \end{subfigure}
    
    \caption{Power curves for DGP2 ($n=200$, $\alpha = 0.1$, $r = 0.2$). Trinity of test statistics distributed as a $\chi^2$. Results based on 1000 repetitions. The horizontal dotted red line denotes the $5\%$ nominal rejection level, while the vertical dotted black line corresponds to $\beta=1$. Panel (a) plots statistics based on the SJIVE objective function; panel (b) plots statistics based on the HLIM objective function; panel (c) plots statistics based on the JIVE1 objective function; panel (d) plots statistics based on the JIVE2 objective function.}
    \label{fig:Trinity_chi2_DGP2_200_0.1_0.2}
\end{figure}

% cf variance

\begin{figure}[ht]
    \centering
    % Replace 'chibar2' with 'chi2' for the second set of figures as needed
    \begin{subfigure}[b]{0.47\textwidth}
        \includegraphics[width=\textwidth]{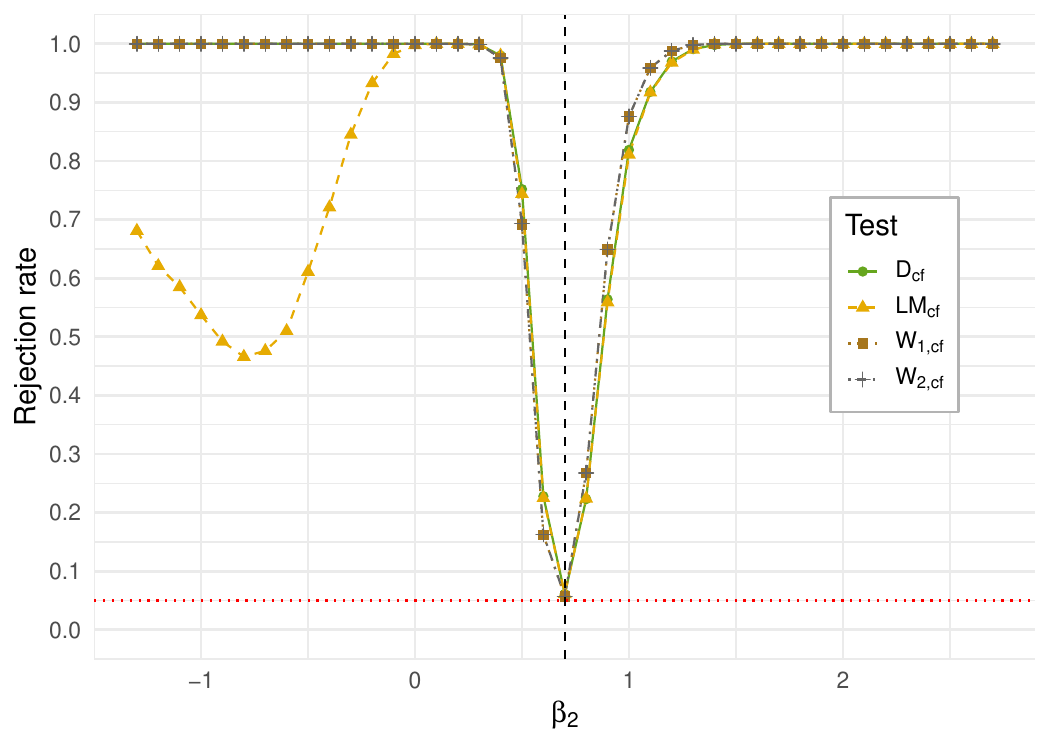}
        \caption{SJIVE}
    \end{subfigure}
    \hfill
    \begin{subfigure}[b]{0.47\textwidth}
        \includegraphics[width=\textwidth]{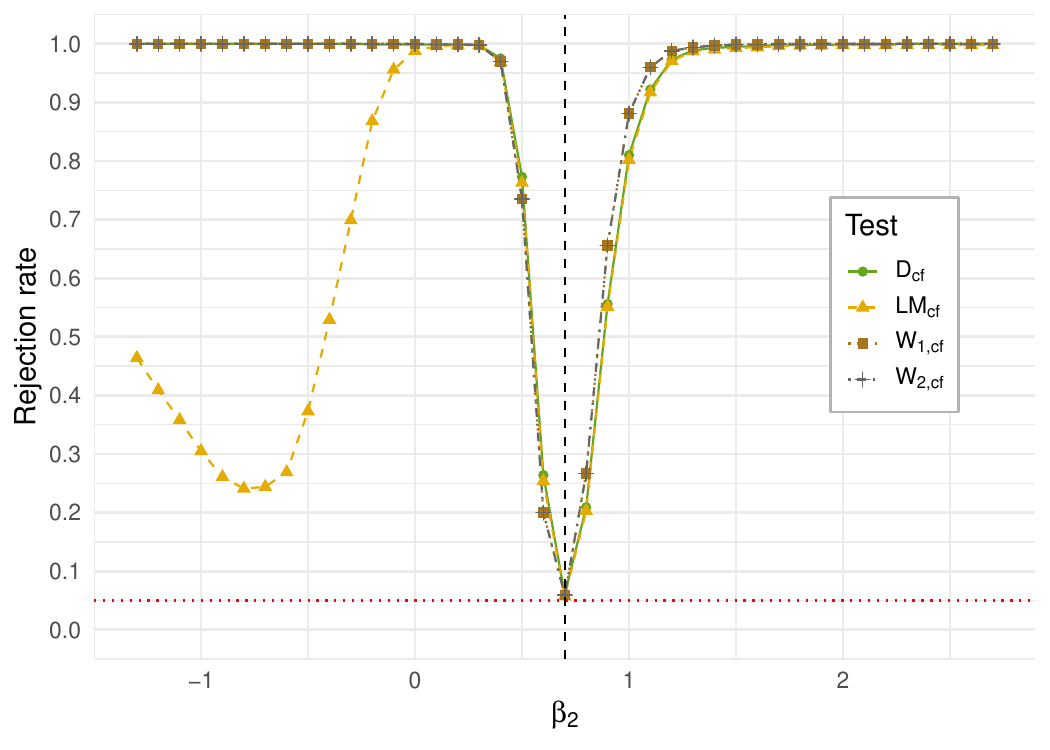}
        \caption{HLIM}
    \end{subfigure}
    
    \begin{subfigure}[b]{0.47\textwidth}
        \includegraphics[width=\textwidth]{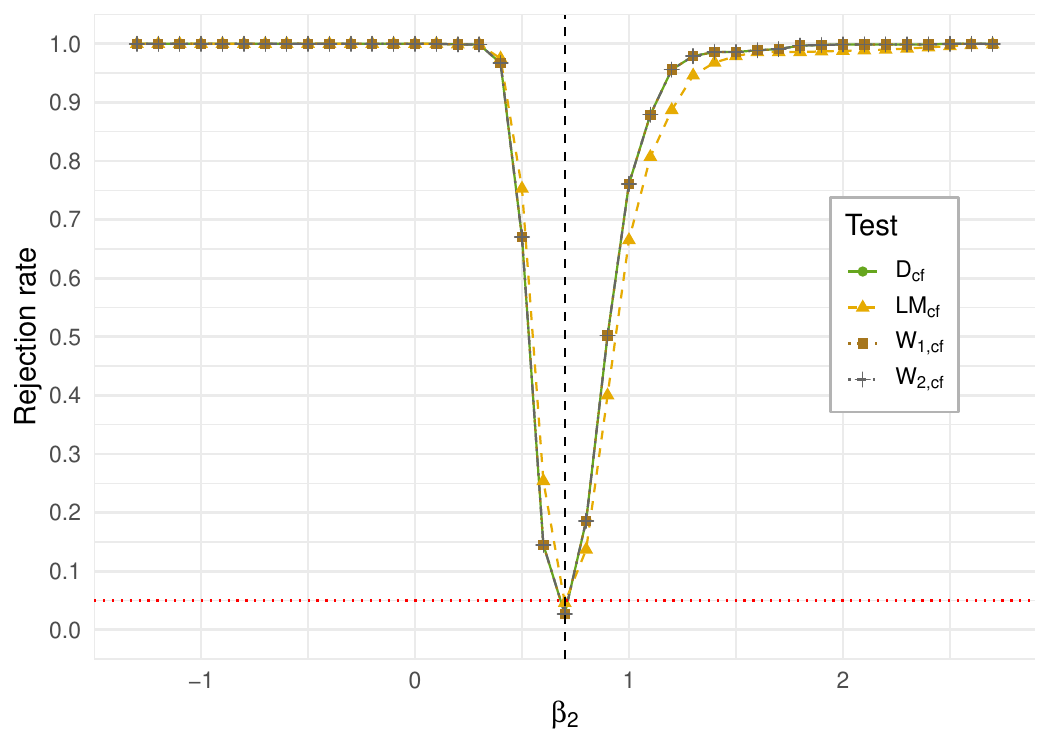}
        \caption{JIVE1}
    \end{subfigure}
    \hfill
    \begin{subfigure}[b]{0.47\textwidth}
        \includegraphics[width=\textwidth]{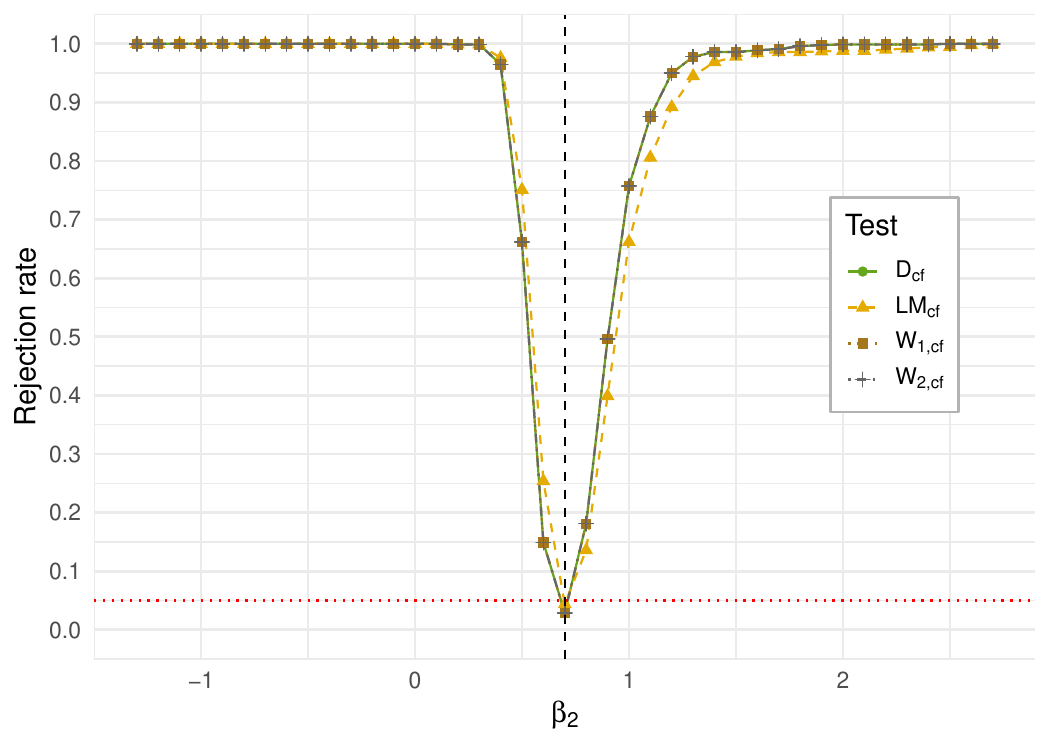}
        \caption{JIVE2}
    \end{subfigure}
    
    \caption{Power curves for DGP2 ($n=200$, $\alpha = 0.1$, $r = 0.2$). Trinity of test statistics distributed as a $\bar\chi^2$ with cross-fit variance. Results based on 1000 repetitions. The horizontal dotted red line denotes the $5\%$ nominal rejection level, while the vertical dotted black line corresponds to $\beta=1$. Panel (a) plots statistics based on the SJIVE objective function; panel (b) plots statistics based on the HLIM objective function; panel (c) plots statistics based on the JIVE1 objective function; panel (d) plots statistics based on the JIVE2 objective function.}
    \label{fig:Trinity_chibar2_cf_DGP2_200_0.1_0.2}
\end{figure}

\begin{figure}[ht]
    \centering
    % Replace 'chibar2' with 'chi2' for the second set of figures as needed
    \begin{subfigure}[b]{0.47\textwidth}
        \includegraphics[width=\textwidth]{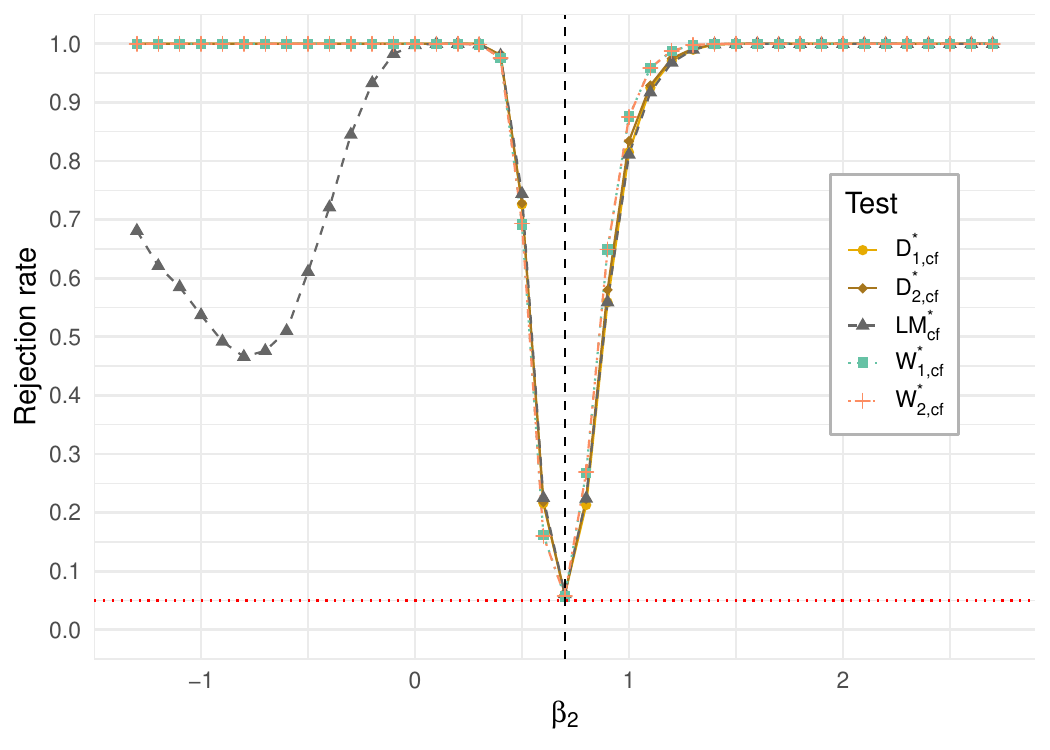}
        \caption{SJIVE}
    \end{subfigure}
    \hfill
    \begin{subfigure}[b]{0.47\textwidth}
        \includegraphics[width=\textwidth]{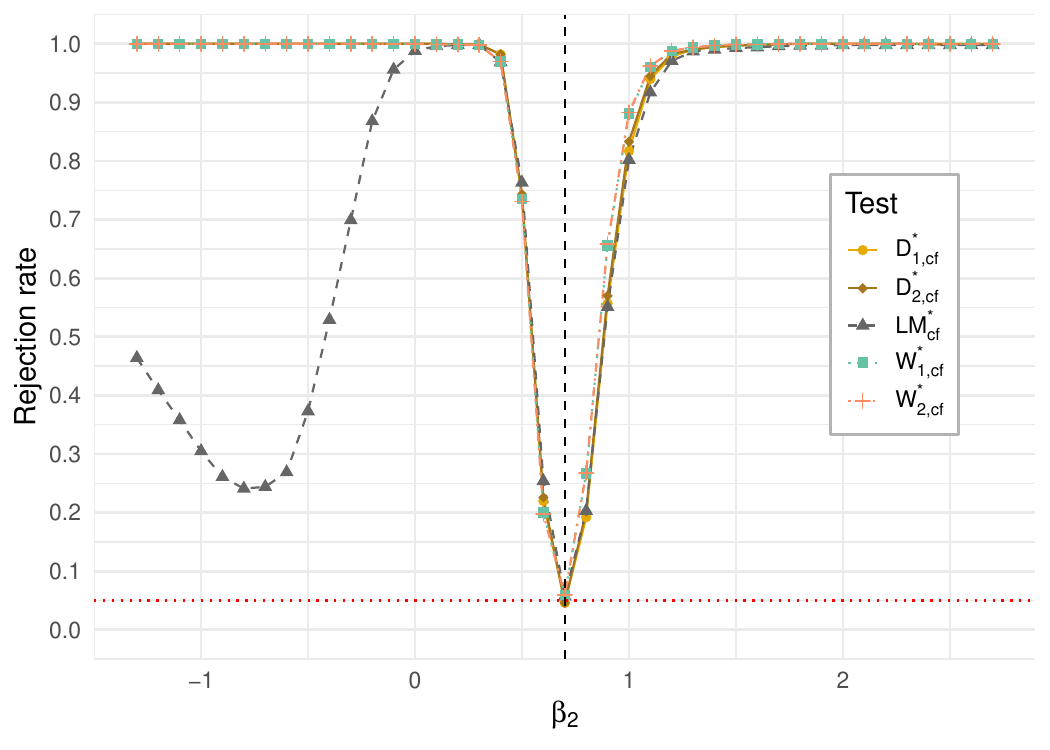}
        \caption{HLIM}
    \end{subfigure}
    
    \begin{subfigure}[b]{0.47\textwidth}
        \includegraphics[width=\textwidth]{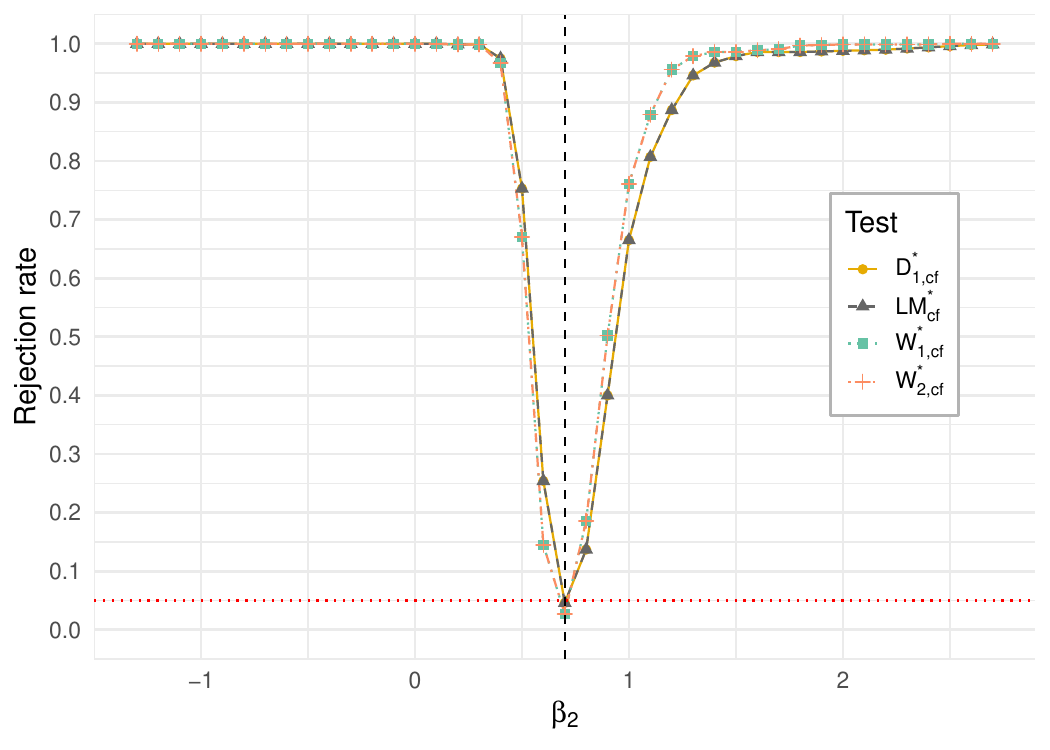}
        \caption{JIVE1}
    \end{subfigure}
    \hfill
    \begin{subfigure}[b]{0.47\textwidth}
        \includegraphics[width=\textwidth]{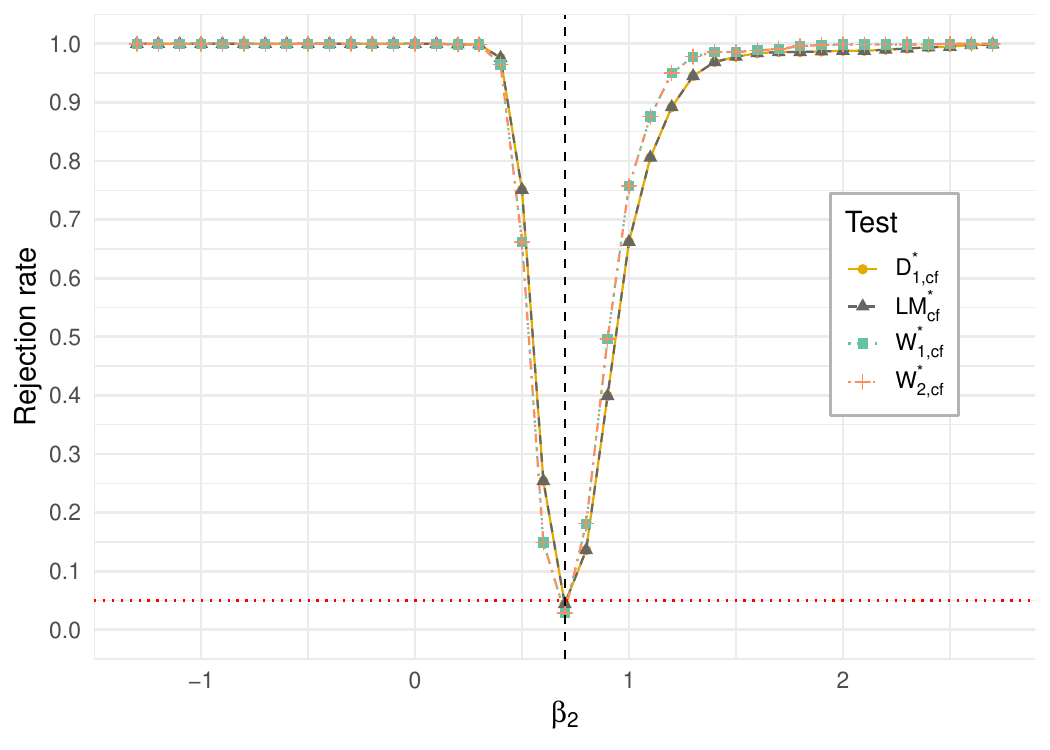}
        \caption{JIVE2}
    \end{subfigure}
    
    \caption{Power curves for DGP2 ($n=200$, $\alpha = 0.1$, $r = 0.2$). Trinity of test statistics distributed as a $\chi^2$ with cross-fit variance. Results based on 1000 repetitions. The horizontal dotted red line denotes the $5\%$ nominal rejection level, while the vertical dotted black line corresponds to $\beta=1$. Panel (a) plots statistics based on the SJIVE objective function; panel (b) plots statistics based on the HLIM objective function; panel (c) plots statistics based on the JIVE1 objective function; panel (d) plots statistics based on the JIVE2 objective function.}
    \label{fig:Trinity_chi2_cf_DGP2_200_0.1_0.2}
\end{figure}

%%%%%%%%%%%%%%%%%%%%%%%%%%%%%%%%%%%%%%%%%%%%%%%%%%%%%%%%%%%%%%%%%%%%%%%%%%%%%%%%%
%%%%%%%%%%%%%%%%%%%%%%%%%%%%%%%%%%%%%%%%%%%%%%%%%%%%%%%%%%%%%%%%%%%%%%%%%%%%%%%%%%

%%%%%%%%%%%%%%%%%%%%%%%%%
% AR_200_0.05 

\begin{figure}[ht]
    \centering
    \begin{subfigure}[b]{0.47\textwidth}
        \includegraphics[width=\textwidth]{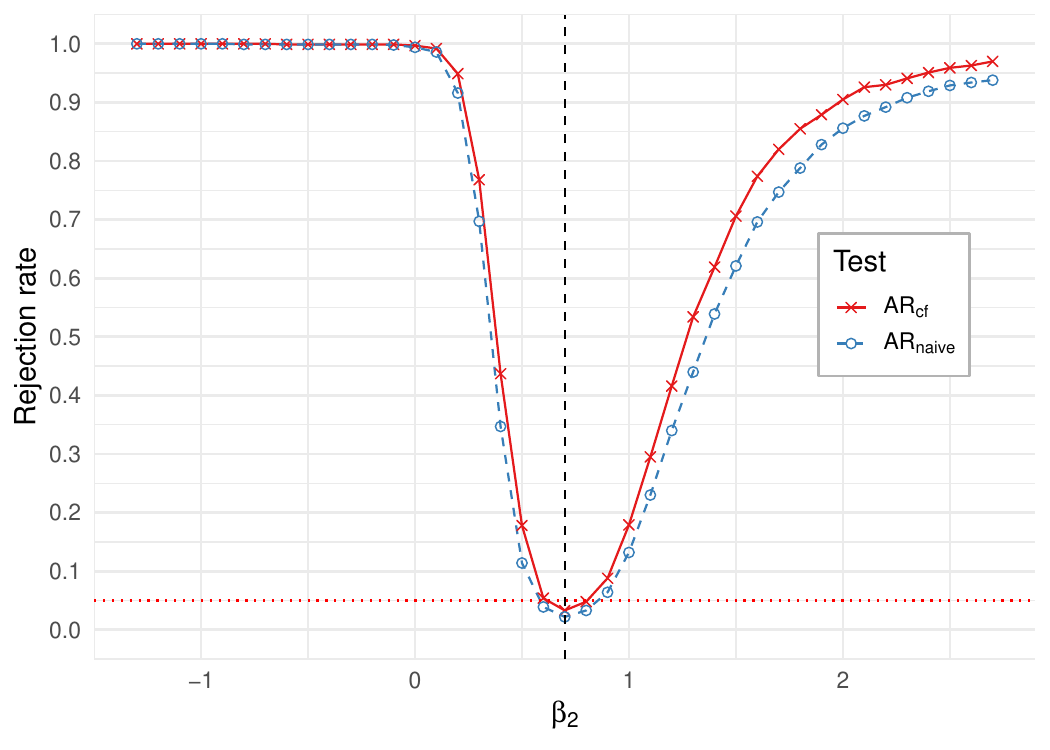}
        \caption{JIVE1, $r=0.1$}
    \end{subfigure}
    \hfill
    \begin{subfigure}[b]{0.47\textwidth}
        \includegraphics[width=\textwidth]{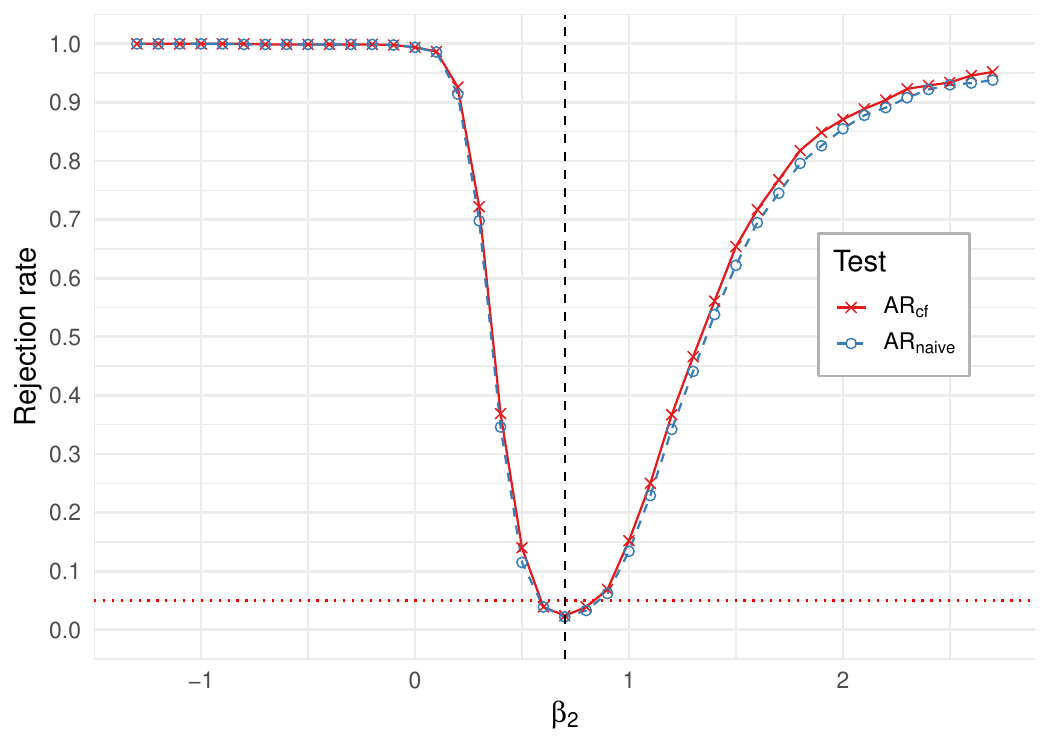}
        \caption{JIVE2, $r=0.1$}
    \end{subfigure}
    
    \begin{subfigure}[b]{0.47\textwidth}
        \includegraphics[width=\textwidth]{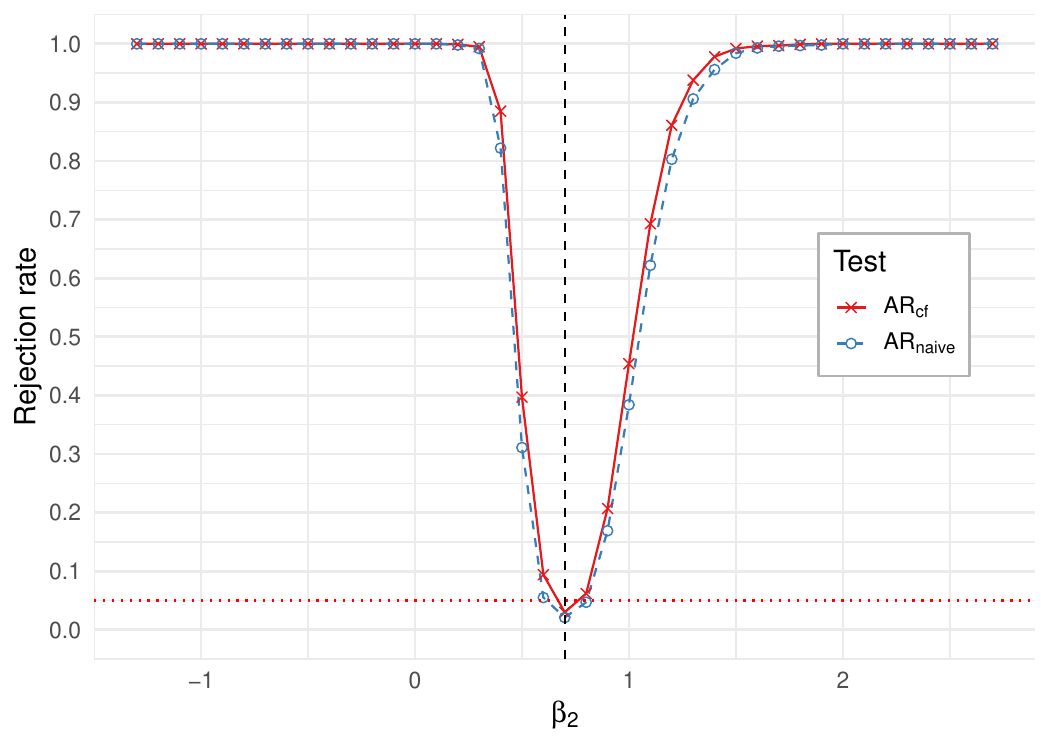}
        \caption{JIVE1, $r=0.2$}
    \end{subfigure}
    \hfill
    \begin{subfigure}[b]{0.47\textwidth}
        \includegraphics[width=\textwidth]{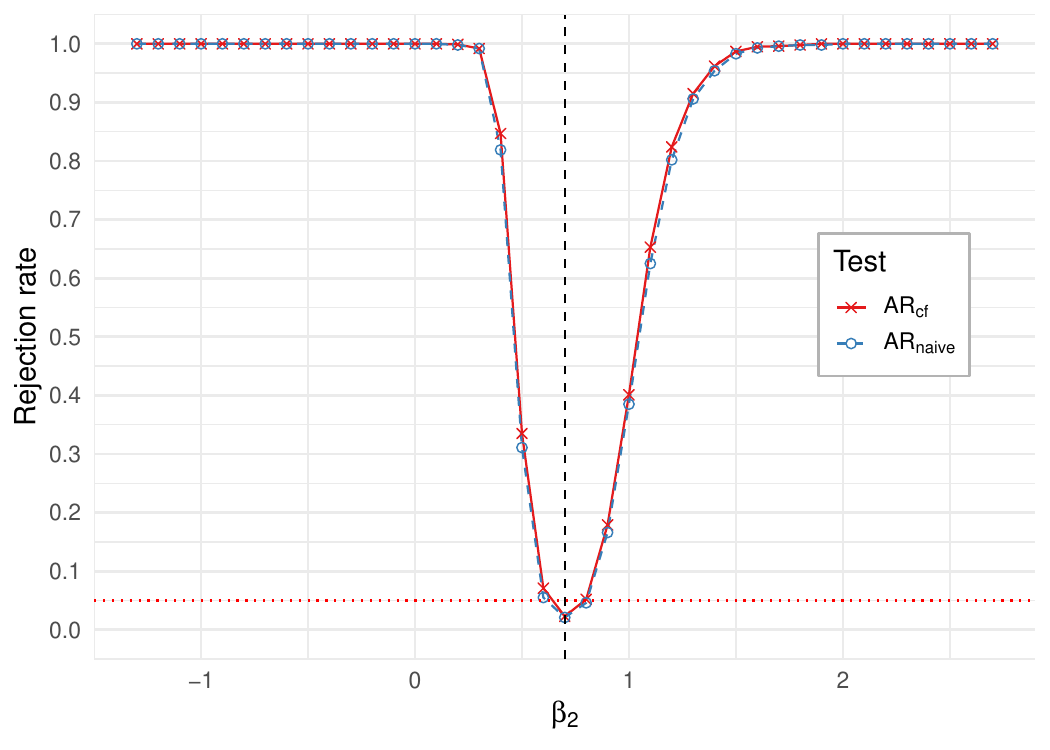}
        \caption{JIVE2, $r=0.2$}
    \end{subfigure}
    
    \caption{Power curves for DGP2 ($n=200$ and $\alpha = 0.05$). AR tests distributed as $\mathcal{N}(0,1)$. Results based on 1000 repetitions. The horizontal dotted red line denotes the $5\%$ nominal rejection level, while the vertical dotted black line corresponds to $\beta=1$. Panel (a) and panel (c) plot AR statistics based on the JIVE1 objective function with weak and strong instruments, respectively; panel (b)  and panel (d) plot AR statistics based on the JIVE2 objective function for weak and strong instruments, respectively.}
    \label{fig:AR_N_DGP2_200_0.05}
\end{figure}

%%%%%%%%%%%%%%%%%%%%%%%%%
% AR_200_0.1

\begin{figure}[ht]
    \centering
    \begin{subfigure}[b]{0.47\textwidth}
        \includegraphics[width=\textwidth]{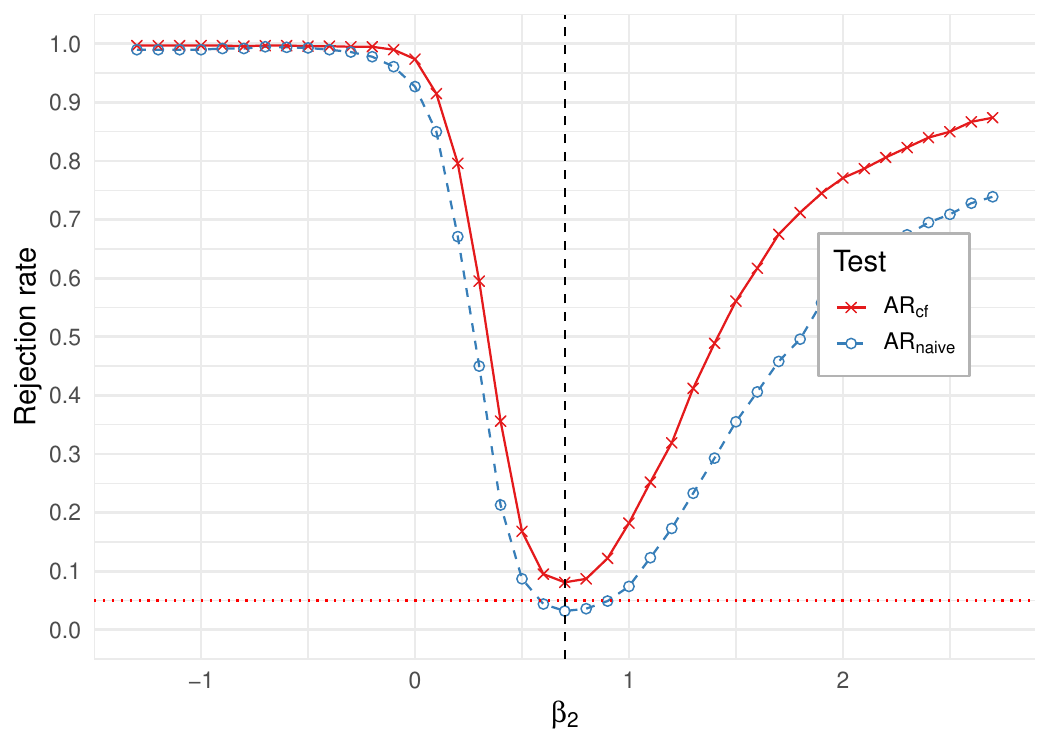}
        \caption{JIVE1, $r=0.1$}
    \end{subfigure}
    \hfill
    \begin{subfigure}[b]{0.47\textwidth}
        \includegraphics[width=\textwidth]{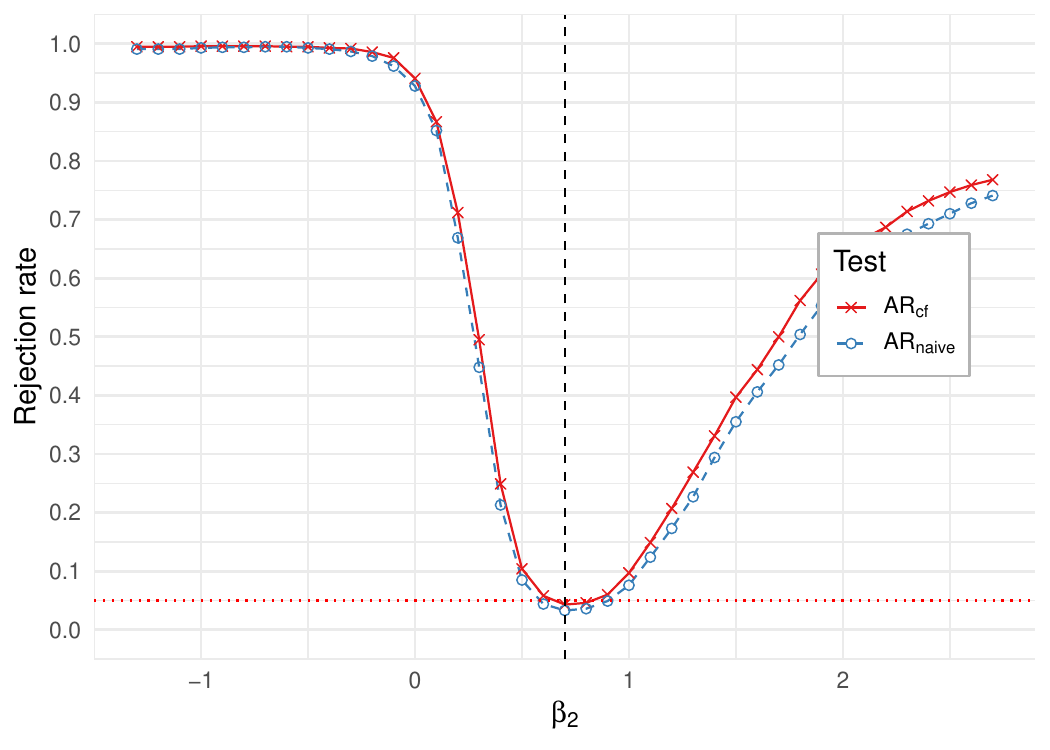}
        \caption{JIVE2, $r=0.1$}
    \end{subfigure}
    
    \begin{subfigure}[b]{0.47\textwidth}
        \includegraphics[width=\textwidth]{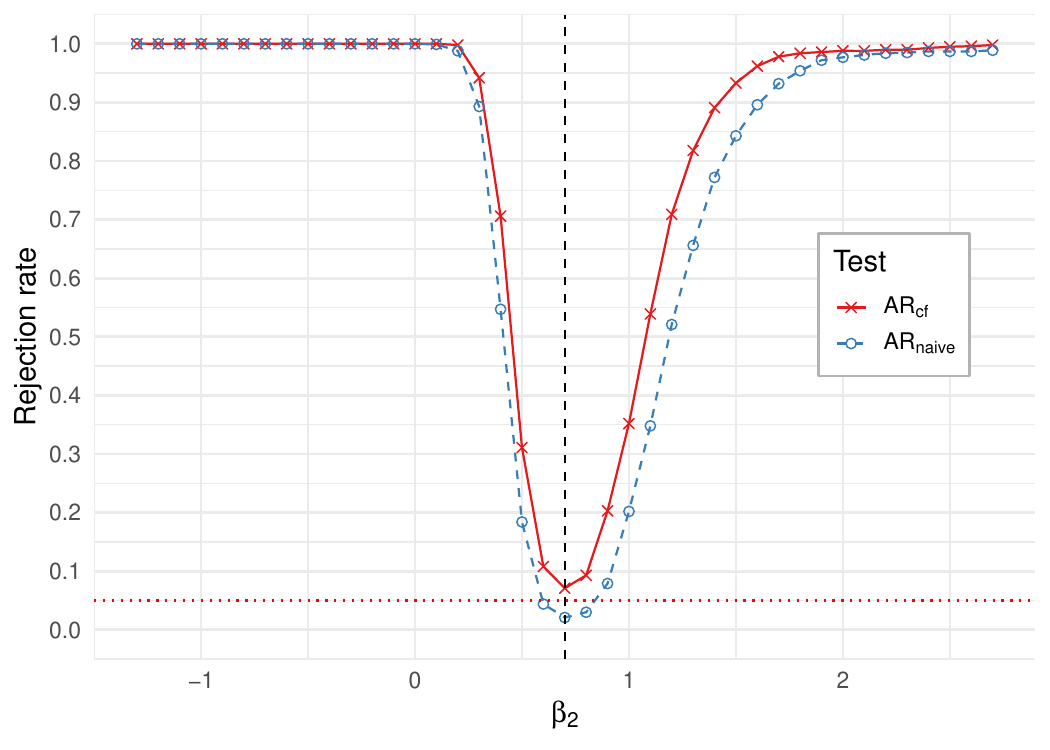}
        \caption{JIVE1, $r=0.2$}
    \end{subfigure}
    \hfill
    \begin{subfigure}[b]{0.47\textwidth}
        \includegraphics[width=\textwidth]{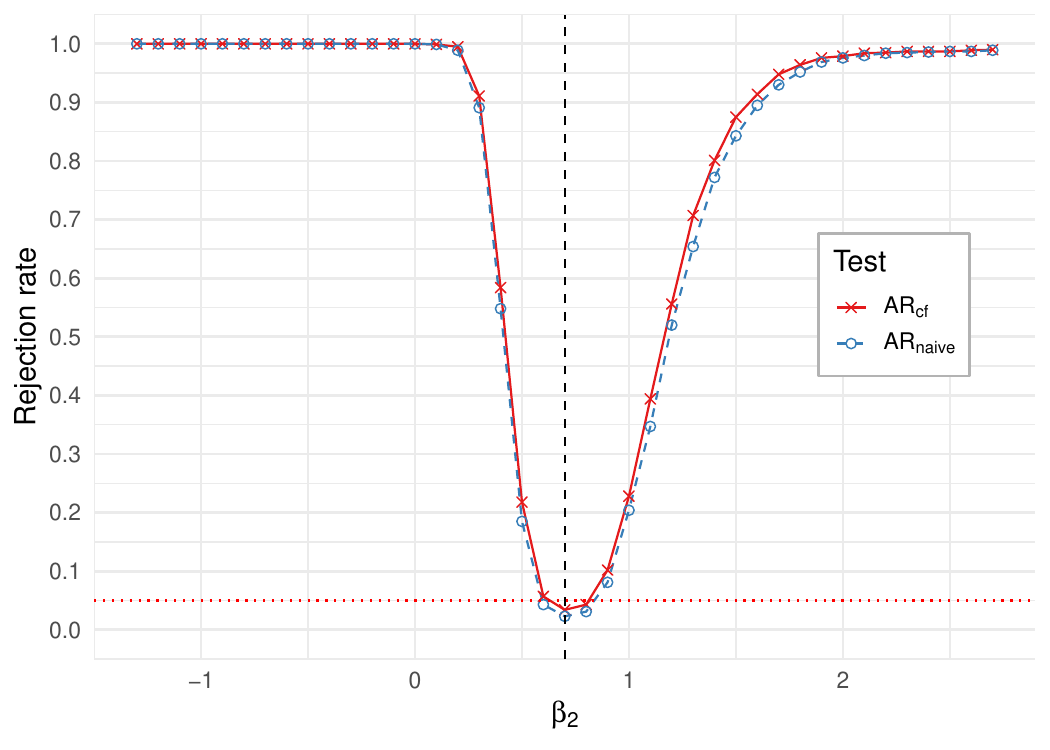}
        \caption{JIVE2, $r=0.2$}
    \end{subfigure}
    
    \caption{Power curves for DGP2 ($n=200$ and $\alpha = 0.1$). AR tests distributed as $\mathcal{N}(0,1)$. Results based on 1000 repetitions. The horizontal dotted red line denotes the $5\%$ nominal rejection level, while the vertical dotted black line corresponds to $\beta=1$. Panel (a) and panel (c) plot AR statistics based on the JIVE1 objective function with weak and strong instruments, respectively; panel (b)  and panel (d) plot AR statistics based on the JIVE2 objective function for weak and strong instruments, respectively.}
    \label{fig:AR_N_DGP2_200_0.1}
\end{figure}

%%%%%%%%%%%%%%%%%%%%%%%%%%%%%%%%%%%%%%%%%%%%%%%%%%%%%%%%%%%%%%%%%%%%%%%%%%%%%%%%%%
%%%%%%%%%%%%%%%%%%%%%%%%%%%%%%%%%%%%%%%%%%%%%%%%%%%%%%%%%%%%%%%%%%%%%%%%%%%%%%%%%%

\clearpage
%\nocite{*}
\bibliographystyle{apalike}
\bibliography{IVLiterature2}
\end{document}